\documentclass{birkjour}

\usepackage{amsmath}
\usepackage{amssymb}
\usepackage{amsfonts}
\usepackage{amsthm}
\usepackage{bm}
\usepackage{graphics}
\usepackage{color}
\usepackage[sans]{dsfont}
\usepackage[toc,page]{appendix}

\usepackage{enumerate}

\usepackage{hyperref}
\usepackage{float}
\usepackage{epsfig}
\usepackage{lscape}
\usepackage[english]{babel}
\usepackage[latin1]{inputenc}
\usepackage[T1]{fontenc}
\usepackage{wasysym}
\usepackage{tikz,pgflibraryshapes}
\usetikzlibrary{arrows}
\usetikzlibrary{snakes}
\usetikzlibrary {shapes}

\newtheorem{theorem}{Theorem}[section]
\newtheorem{lemma}[theorem]{Lemma}

\newtheorem{corollary}[theorem]{Corollary}

\newtheorem*{remark}{Remarks}
\numberwithin{equation}{section}

\DeclareGraphicsRule{.tif}{png}{.png}{`convert #1 `dirname #1`/`basename #1 .tif`.png}

\numberwithin{equation}{section}






\renewcommand{\d}{\mathrm{d}}

\newcommand{\D}{\mathcal{D}}

\def\R{{\mathbb R}}



\title[Isolated systems]{On the probabilistic nature of quantum mechanics and the notion of closed systems}
 
\begin{document}

\author[J. Faupin]{J{\'e}r{\'e}my Faupin}
\address[J. Faupin]{Institut Elie Cartan de Lorraine, 
Universit{\'e} de Lorraine,
57045 Metz Cedex 1, France}
\email{jeremy.faupin@univ-lorraine.fr}
\author[J. Fr\"ohlich]{J\"urg Fr\"ohlich}
\address[J. Fr{\"o}hlich]{Institut f{\"u}r Theoretische Physik, ETH H{\"o}nggerberg, CH-8093 Z{\"u}rich, Switzerland}
\email{juerg@phys.ethz.ch}
\author[B. Schubnel]{Baptiste Schubnel}
\address[B. Schubnel]{Institut Elie Cartan de Lorraine, 
Universit{\'e} de Lorraine,
57045 Metz Cedex 1, France}
\email{baptiste.schubnel@yahoo.fr}

\date{\today}

\small
\begin{abstract}
The notion of ``closed systems'' in Quantum Mechanics  is discussed. For this purpose, we study two models of a quantum-mechanical system $P$ spatially far separated from the ``rest of the universe'' $Q$. Under reasonable assumptions on the interaction between $P$ and $Q$, we show that the system $P$ behaves as a closed system  if the initial state of $P \vee Q$ belongs  to a large class of states, including ones exhibiting entanglement between $P$ and $Q$. We use our results to illustrate the non-deterministic nature of quantum mechanics. Studying a specific example, we show that assigning an initial  state and a unitary time evolution to a quantum system is generally not sufficient  to predict the results of a measurement with certainty.
\end{abstract}

\maketitle

\section{Introduction} 
A key reason why, in science, we are able to successfully describe natural processes quantitatively is that if some process of interest is far isolated from the rest of the world it can be described as if nothing else were present in the universe; i.e., it can be viewed as a process happening in a ``closed system''. This means, for example, that a condensed-matter experimentalist studying a magnetic material does not have to worry about astrophysical processes inside the sun, in order to understand the magnetic properties of the material in his earthly laboratory. Nor, for that matter, does he have to worry about what his colleague in the laboratory next door is doing, provided he is not experimenting with strong magnetic fields. It is the purpose of our paper to show that the notion of ``closed systems'', in the sense just sketched, makes sense in quantum mechanics -- in spite of the phenomena of entanglement and of the ``non-locality'' of Bell correlations.

Rather than engaging in a general, abstract discussion, we propose to study some concrete models of quantum systems, $S:=P \vee Q \vee E$, composed of two spatially far separated subsystems $P$ and $Q$ coupled to a common environment $E$ (that can be empty). We discuss various sufficient conditions implying that, for a large class of initial states of $S$ including ones exhibiting entanglement between $P$ and $Q$, the time-evolution of expectation values of observables, $O_P$, of the subsystem $P$ behaves as if the subsystem $Q$ were absent. The interesting ones among our sufficient conditions turn out to be uniform in the number of degrees of freedom of the subsystem $Q$. Our results can be interpreted as saying that there is ``no signaling'' between $P$ and $Q$, provided that these subsystems are spatially far separated from one another, independently of whether the initial state of $P\vee Q$ is entangled, or not, and independently of the number of degrees of freedom of $Q$. In other words, our conditions guarantee that $P$ can be considered to be a ``closed system'' (or ``isolated system'').
(Absence of signaling has previously been discussed, e.g., in \cite{shimony1984,EbRo,Peres2004}.)

We will discuss two models. In a first model, we choose $P$ to describe a quantum particle moving away from a system $Q$ that may have very many degrees of freedom; the environment $E$ is absent. It will be assumed that, in a sense to be made precise below, interactions between $P$ and $Q$ become weaker and weaker, as the distance between the two subsystems increases. One purpose of our discussion of this model is to show that quantum mechanics does not admit a realistic interpretation -- in the sense that knowing the unitary time evolution of a system and its initial state does not enable one to predict what happens in the future -- and that it is intrinsically probabilistic. In a second, more elaborate model, the subsystems $P$ and $Q$ are allowed to exchange quanta of a quantum field (such as photons or phonons), i.e., $P$ and $Q$ can\textit{``communicate''} by emitting and absorbing field quanta; accordingly, the environment $E$ is  chosen to consist of a quantum field, e.g., the electromagnetic field or a field of lattice vibrations. The goal of our discussion is to isolate conditions that enable us to derive an ``effective dynamics'' of the subsystem $P$ that does not explicitly involve the environment $E$ and is independent of $Q$.

To keep our analysis down to earth, we will only study systems $P$ and $Q$ (with finitely many, albeit arbitrarily many degrees of freedom) that can be described in the usual Hilbert-space framework of non-relativistic quantum mechanics, with the time evolution given by a unitary one-parameter group. The ``observables'' are taken to be bounded selfadjoint operators on a Hilbert space. (For simplicity, the environment $E$ will be assumed to have temperature zero, with pure states corresponding to unit rays in Fock space.)

Concretely, the Hilbert space of pure state vectors of the system $S = P\vee Q \vee E$ is given by
\begin{equation}
\mathcal{H} = \mathcal{H}_P \otimes \mathcal{H}_Q \otimes \mathcal{H}_E,
\end{equation}
where $\mathcal{H}_P$, $\mathcal{H}_Q$ and $\mathcal{H}_E$ are separable Hilbert spaces.
General states of $S$ are given by \textit{density matrices}, i.e, positive trace-class operators, $\rho$, of trace $1$ acting on $\mathcal{H}$.
General observables of the entire system $S = P\vee Q \vee E$ are self-adjoint operators in $B(\mathcal{H}_P \otimes \mathcal {H}_Q \otimes \mathcal{H}_E)$, where $B(\mathcal{H})$ is the algebra of all bounded operators on the Hilbert space $\mathcal{H}$.
Observables refering to the subsystem $P$ are selfadjoint operators of the form
\begin{equation}
O_P = O \otimes \mathds {1}_{\mathcal {H}_{Q\vee E}}, \qquad O = O^{*} \in B(\mathcal{H}_P).
\end{equation}
 
 A state $\rho$ of the entire system $S$ determines a state $\rho_P$ of the subsystem $P$ (a reduced density matrix) by
 \begin{equation}
 Tr_{\mathcal{H}_P}(\rho_P A) := Tr(\rho (A\otimes \mathds{1}_{\mathcal{H}_{Q\vee E}})),
 \end{equation}
 for an arbitrary operator $A \in B(\mathcal{H}_P)$.
 
Time evolution of $S$ is given by a unitary one-parameter group $(U(t))_{t\in \mathbb{R}}$ on 
$\mathcal{H}$.

We are now ready to clarify what we mean by \textit{``closed systems''}:
Informally, $P$  can be viewed as a  closed subsystem of $S$ if there exists a one-parameter unitary group $(U_P(t))_{t\in \mathbb{R}}$ on $\mathcal{H}_P$ such that
\begin{equation}
\label{iso2}
Tr(\rho  U(t)^{*}(A\otimes  \mathds{1}_{\mathcal{H}_{Q\vee E}}) U(t))  \approx Tr_{\mathcal{H_P}}(\rho_PU_P(t)^{*}AU_P(t)),
\end{equation}
for a suitably chosen  subset of density matrices $\rho$  and all times in some  interval contained in $\mathbb{R}$. Mathematically precise notions of ``closed subsystems'' will be proposed in the context of the two models analyzed in this paper, and we will subsequently present sufficient conditions for $P$ to be a closed subsystem of $S$.\\

 The plan of our article is as follows. In  subsections \ref{Sec2-1} and \ref{Sec2-2} we introduce the models analyzed in this paper. The first model describes a quantum particle, $P$, with spin $1/2$ interacting with a large quantum system $Q$ and moving away from $Q$. (The subsystem $Q$ may consist of another quantum particle entangled with $P$ and a ``detector''. The two particles are prepared in an initial state chosen such that they move away from each other, with $P$ moving away from the detector.) This example will be useful in a discussion of some aspects of the foundations of quantum mechanics, in particular of the intrinsically probabilistic nature of quantum mechanics. The second model describes a neutral atom $P$ with a non-vanishing electric dipole moment that interacts with a large  quantum system $Q$. Both $P$ and $Q$ are coupled to the quantized electromagnetic field, $E$. In this model, $S$ corresponds to the composition $P\vee Q\vee E$. The point is to identify an effective dynamics for $P$ that does not make explicit reference to the electromagnetic field $E$. Our results on these models are stated and interpreted in subsections \ref{Sec2-1r} and  \ref{Sec2-2r}, respectively. In subsection \ref{Sec2-1e}, we sketch some concrete experimental situations described, at least approximately, by our models.
 
Proofs of our main results are presented in section \ref{proofs}. Many of the techniques used in our proofs  are inspired by ones used in previous works on scattering theory; see, e.g., \cite{Schi, DeGe99_01,FrGrSc01_01,FrGrSc02_01,FaSi14_01,FaSi14_02}. Some technical lemmas are proven in two appendices.

\subsubsection*{Acknowledgement} J. Fr. thanks P. Pickl and Chr. Schilling for numerous stimulating discussions on models closely related to the first model discussed in our paper. J. Fa. and J. Fr. are grateful to I.M. Sigal for many useful discussions on problems related to the second model and, in particular, on scattering theory. J. Fa.'s research is supported by ANR grant ANR-12-JS01-0008-01.
\vspace{2mm}

\section{Summary and interpretation of main results}

\subsection{Model 1:  A quantum particle $P$ interacting with a large quantum system $Q$ } 
\subsubsection{Description of the model} \label{Sec2-1}
We consider  a quantum particle,  $P$, of mass $m=1$ and  spin $1/2$; (throughout this paper, we employ  units where $\hbar=c=1$).  The particle interacts with a  large quantum system, $Q$, which we keep as general as possible. The pure states of the composed system, $P\vee Q$, correspond to unit rays in the Hilbert space $\mathcal{H}= \mathcal{H}_P \otimes \mathcal{H}_{Q}$, where $ \mathcal{H}_P := L^{2}(\mathbb{R}^3) \otimes \mathbb{C}^2$ and $\mathcal{H}_{Q}$ is a separable Hilbert space. The dynamics of $P\vee Q$ is specified by a selfadjoint Hamiltonian
\begin{equation}\label{Ham}
H=H_{P} \otimes \mathds{1}_{\mathcal{H}_Q} + \mathds{1}_{\mathcal{H}_P} \otimes H_{Q} + H_{P,Q}
\end{equation}
defined on a dense domain $\mathcal{D}(H) \subset \mathcal{H}$. In  \eqref{Ham}, $$H_{P}:=-\frac{\Delta}{2} \otimes \mathds{1}_{\mathbb{C}^2}.$$ 
The operator $H_P$  and $H_Q$, defined on their respective domains, are self-adjoint. \\

\noindent {\bf{Remark}}. \textit{It is not important to exclude the presence of external fields or potentials acting on the particle. All that matters is that the propagation of the particle approaches the one of a free particle as time tends to $\infty$.
To keep our analysis simple we assume that  if the interaction between $P$ and $Q$ is turned off then 
$P$ propagates freely.}\\

To identify $P$ as a closed subsystem of $S= P \vee Q$, one assumes that\\[2pt]
\begin{enumerate} 
\item  the strength of the interaction between $P$ and $Q$ (described by the operator $H_{P,Q}$) decays to zero rapidly as the ``distance'' between $P$ and $Q$ tends to $\infty$; and\\[4pt]
\item the initial state of the system is chosen such that the particle $P$ propagates away from $Q$, the distance between $P$ and $Q$ growing ever larger. (We will actually choose the initial state such that, with very high probability, the particle $P$ is scattered into a cone far separated from the subsystem $Q$.)\\[2pt]
\end{enumerate}
A graphical illustration of  Assumptions (1) and (2) is given below.
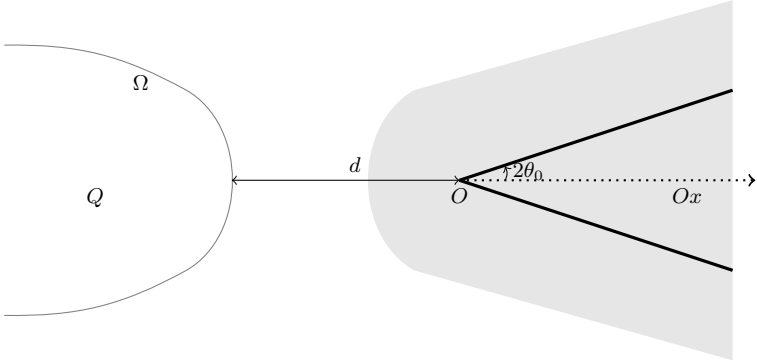
\begin{figure}[H]
\begin{center}
\begin{tikzpicture}[scale=0.6, every node/.style={scale=0.9}]

         \path[draw=gray, fill opacity=0.0001] 
             (-3,5)to[bend left=15]  (1,4) to[bend left=60]  (1,0)to[bend left=15] (-3,-1)   ;
        
        \draw[-, very thick] (7,2)--(13,4);
         \draw[-, very thick] (7,2)--(13,0);
          \draw[->,dotted,  thick] (7,2) -- (13.5,2);
          \draw (12,2) node[below] {$Ox$}; 
         \draw (7,2) node[below] {$O$}; 
       \draw[->](8,2) to[bend right=30] (8,2.3);
           \draw (8,2.2) node[right] {$2\theta_0$};

               \draw (-1,2) node[below] {$Q$};

  \path[fill=black, fill opacity=0.1] 
             (13,6)-- (6,4) to[bend left=-60]  (6,0)-- (13,-2) ;
             
     \draw (0,4.5) node[below] {$\Omega$};            
  
   \draw[<->](2,2) -- (7,2);      
  
   \draw (4.7,2) node[above] {$d$};

        \end{tikzpicture}
\caption{ \footnotesize  The system $Q$ is localized inside the domain $\Omega$. The particle $P$ scatters inside the grey colored set with a probability very close to $1$.  }\label{fig1}
\end{center}
\end{figure}

\noindent  Next, we reformulate Assumptions (1) and (2) in mathematically precise terms. It is convenient to identify $L^{2}(\mathbb{R}^3) \otimes \mathbb{C}^2 \otimes  \mathcal{H}_Q$ with $L^{2}(\mathbb{R}^3;  \mathbb{C}^2 \otimes  \mathcal{H}_Q)$. We denote by $(\vec{e}_x,\vec{e}_y,\vec{e}_z)$ three orthonormal vectors in  $\mathbb{R}^3$.
\vspace{2mm}

\begin{itemize}\label{hyp:part}

\item[ \textbf{(A1)} ] (\textit{Location of $Q$ and properties of the interaction Hamiltonian}) There is an open subset  $\Omega \subset \mathbb{R}^3$ (possibly unbounded), the ``spatial location'' of the subsystem $Q$, separated from the cone 
 \begin{equation}
\mathcal{C}_{2\theta_0}:= \lbrace  \vec{k} \in \mathbb{R}^3 \mid  \vec{k}\cdot \vec{e}_x \geq  \vert \vec{k} \vert \cos(2 \theta_0)\rbrace
\end{equation}
with $\pi/4>\theta_0>0$,  by  a  distance  $d>0$,
and a covering 
$$\Omega  = \bigcup_{n \in I} \Omega_n, \qquad I \subseteq \mathbb{N} $$
of $\Omega$ by open cubes $\Omega_n$ of uniformly bounded diameter such that
\vspace{1mm}

\begin{enumerate}[(i)]
\item  the interaction Hamiltonian $H_{P,Q}$ can  be written  as a strongly convergent sum of  operators,
\begin{equation}
H_{P,Q}=\sum_{n \in I} H_{P,Q_n },
\end{equation}
on the dense domain $ \mathcal{D}(H_{P,Q}) \supseteq \mathcal{D}(H_P) \otimes \mathcal{D}(H_{Q})$.

The operator $H_{P,Q_n}$ encodes the interaction between the particle $P$ and the subsystem of $Q$  located in the cube $\Omega_n$. The distance between $\Omega_n$ and the cone $\mathcal{C}_{2\theta_0}$ is denoted by $d_n$ and is supposed to tends to $+ \infty$, as $n$ tends to $\infty$;
\vspace{1mm}

\item  there is a constant $\alpha>1$ and a sequence $\lbrace N_n \rbrace_{n\in I}$ of  operators on  $\mathcal{H}_Q$ with the properties that
 \begin{equation}\label{2.6}
\| (H_{P,Q_n}   \Psi) (\vec{x})\|_{\mathbb{C}^2 \otimes \mathcal{H}_Q} \leq \frac{\|(\mathds{1}_{\mathcal{H}_P} \otimes N_{n}) \Psi(\vec{x}) \|_{\mathbb{C}^2 \otimes \mathcal{H}_Q}  }{[\text{dist}(\Omega_n,\vec{x})]^{\alpha}}, \qquad \vec{x} \in  \Omega^{c},
\end{equation}
 for all $n \in I$ and for all $\Psi \in  \mathcal{D}(H_P) \otimes \mathcal{D}(H_{Q})$, and
\begin{equation}\label{2.8}
\sum_{n\in I} d_{n}^{\frac{1 - \alpha}{2}} \leq  C d^{-\beta}, \text{ for some }  \beta > 0, C < \infty.
\end{equation}
 Furthermore,
 $[H_{P,Q_n},\vec{x}]=0$ for all $n \in I$.

\end{enumerate}
\vspace{2mm}

\item[ \textbf{(A2)} ]  (\textit{Choice of initial state}) The initial  state
$\Psi_0 \in \mathcal{S}(\mathbb{R}^3;\mathbb{C}^2 \otimes  \mathcal{H}_Q)$, $\| \Psi_0 \|=1$, is a smooth function of 
$\vec{x}$ of rapid decay with values in $\mathbb{C}^2 \otimes  \mathcal{H}_Q$.  Its Fourier transform,
\begin{equation}
\widehat{\Psi}_0(\vec{k}):=\frac{1}{(2 \pi)^{3/2}}\int_{\mathbb{R}^3}   e^{-i\vec{k} \cdot \vec{x}} \Psi_0(\vec{x}) \text{ }d^3 x ,
\end{equation}
has  support  in the conical region $ \mathcal{C}_{\theta_0;v}$ defined by
\begin{equation}
\mathcal{C}_{\theta_0;v}:= \lbrace  \vec{k} \in \mathbb{R}^3 \mid  \vec{k}\cdot \vec{e}_x \geq  \vert \vec{k} \vert \cos(\theta_0),   \vert \vec{k} \vert >v \rbrace
\end{equation}
for some $v>0$.

\vspace{2mm}

\item[ \textbf{(A3)}]  (\textit{Bound on the number of particles in $\Omega_n$}) For $s \in \{1,2\}$,
\label{clas}
\begin{equation} \label{2.7}
\|( \mathds{1}_{\mathcal{H}_P} \otimes N_n) e^{-itH_Q} \Psi_0  \|_{L^s(\mathbb{R}^3; \mathbb{C}^2 \otimes \mathcal{H}_Q)} < C,  \qquad  \forall t \geq 0, \forall n \in I. 
\end{equation} 

\end{itemize}
\vspace{2mm}

\begin{remark}
\noindent  Assumption \textbf{(A2)} guarantees that the distance between $P$ and $Q$ grows in time  with  very high  probability. The hypotheses \textbf{(A1)} and \textbf{(A3)} are  mathematical reformulations of Assumption (1). The operator $N_{n}$ can be thought of as counting the number of  ``particles'' of the system $Q$ contained  in the subset  $\Omega_n$, for all $n \in I$.  If  the system $Q$ is composed   of   identical particles, $ \mathcal{H}_{Q}$  is the bosonic/fermionic Fock space over $L^{2}(\mathbb{R}^3 ; \mathbb{C}^p)$, ($p=1,2,...$) and the operator $N_n$ is  the second quantization of the multiplication operator by the characteristic function $\mathds{1}_{\Omega_n}$. 

The decomposition of $\Omega$ into cubes is used  to get bounds that are  uniform in the number of degrees of freedom of the system $Q$.

 We  observe that the decay in \eqref{2.6}  is faster than the one of  the Coulomb potential. To justify \eqref{2.6} one would have to invoke screening.

\end{remark}
\subsubsection{ Result }\label{Sec2-1r}
The reduced density matrix, $\rho_{P}$, of the particle $P$ corresponding to the state $\Psi_0 \in L^{2}(\mathbb{R}^3 ; \mathbb{C}^2\otimes \mathcal{H}_Q)$ of the entire system $S$ is  defined by  
$$\langle \varphi_1, \rho_{P} \varphi_2 \rangle_{L^2(\mathbb{R}^3 ; \mathbb{C}^2)}: = \sum_{j \in J} \langle  \varphi_1 \otimes e_j,\Psi_0\rangle \langle \Psi_0, \varphi_2 \otimes e_j \rangle,$$
where $\varphi_1,\varphi_2$ are arbitrary vectors in $L^{2}(\mathbb{R}^3 ; \mathbb{C}^2)$ and 
$\lbrace e_j \rbrace_{j \in J}$ is an orthonormal basis in $\mathcal{H}_Q$.
\vspace{2mm}

\begin{lemma}
\label{1.1}
We require assumptions  \textbf{(A1)},   \textbf{(A2)} and   \textbf{(A3)}. Then, for  all $\eta>0$, there exists a length $d(\eta,v)>0$ such that, for any $d>d(\eta,v)$, 
\begin{equation}
\begin{split}
\label{res1}
\big  \vert &\langle e^{-itH}  \Psi_0 , (O_{P} \otimes \mathds{1}_{\mathcal{H}_Q}) e^{-itH}  \Psi_0  \rangle-  \text{Tr}_{\mathcal{H}_P} (\rho_{P} e^{itH_{P}} O_{P} e^{-itH_{P}} )  \big  \vert \leq  \eta \|O_P\|, 
\end{split}
\end{equation}
for all $O_P \in \mathcal{B}(L^{2}(\mathbb{R}^3 ; \mathbb{C}^2))$ and for all $t \geq 0$. 
\end{lemma}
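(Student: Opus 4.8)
The plan is to compare the interacting evolution $e^{-itH}$ with the free, decoupled evolution generated by $H_0:=H_P\otimes\mathds{1}_{\mathcal{H}_Q}+\mathds{1}_{\mathcal{H}_P}\otimes H_Q$, and to show that, \emph{on the fixed initial vector $\Psi_0$}, the two evolutions stay uniformly close in norm once the separation $d$ is large. First, directly from the definition of $\rho_P$ and the factorization $e^{-itH_0}=e^{-itH_P}\otimes e^{-itH_Q}$,
\[
\text{Tr}_{\mathcal{H}_P}\big(\rho_P\,e^{itH_P}O_P e^{-itH_P}\big)=\big\langle\Psi_0,(e^{itH_P}O_P e^{-itH_P}\otimes\mathds{1}_{\mathcal{H}_Q})\Psi_0\big\rangle=\big\langle e^{-itH_0}\Psi_0,(O_P\otimes\mathds{1}_{\mathcal{H}_Q})e^{-itH_0}\Psi_0\big\rangle .
\]
Writing $A:=O_P\otimes\mathds{1}_{\mathcal{H}_Q}$, so that $\|A\|=\|O_P\|$, and using the identity $\langle\varphi,A\varphi\rangle-\langle\psi,A\psi\rangle=\langle\varphi-\psi,A\varphi\rangle+\langle\psi,A(\varphi-\psi)\rangle$ with $\varphi=e^{-itH}\Psi_0$ and $\psi=e^{-itH_0}\Psi_0$ (both unit vectors), the left-hand side of \eqref{res1} is bounded by $2\|O_P\|\,\|(e^{-itH}-e^{-itH_0})\Psi_0\|$. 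It therefore suffices to show that $\sup_{t\ge0}\|(e^{-itH}-e^{-itH_0})\Psi_0\|\to0$ as $d\to\infty$, at a rate of the form $C(v,\Psi_0)\,d^{-\beta}$; choosing $d(\eta,v)$ so large that this supremum is $<\eta/2$ for every $d>d(\eta,v)$ then gives \eqref{res1}.

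By Duhamel's formula (Cook's method), together with the strongly convergent decomposition $H_{P,Q}=\sum_{n\in I}H_{P,Q_n}$ from \textbf{(A1)}(i),
\[
\big\|(e^{-itH}-e^{-itH_0})\Psi_0\big\|\le\int_0^t\big\|H_{P,Q}\,e^{-isH_0}\Psi_0\big\|\,\d s\le\sum_{n\in I}\int_0^\infty\big\|H_{P,Q_n}\,e^{-isH_0}\Psi_0\big\|\,\d s ,
\]
so everything reduces to a bound on $\|H_{P,Q_n}e^{-isH_0}\Psi_0\|$ that decays fast enough both in $s$ (for the time integral to converge) and in $d_n$ (so that the sum over $n$ is controlled by \eqref{2.8}). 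A routine approximation of $\Psi_0$ by vectors in $\mathcal{D}(H_P)\otimes\mathcal{D}(H_Q)$ makes Duhamel's formula legitimate; the needed regularity of $\Psi_0$ is part of the quantitative content of \textbf{(A3)}.

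To estimate $\|H_{P,Q_n}e^{-isH_0}\Psi_0\|_{L^2(\mathbb{R}^3;\mathbb{C}^2\otimes\mathcal{H}_Q)}$ I would split $\mathbb{R}^3=\Omega^c\cup\Omega$. On $\Omega^c$, since $[H_{P,Q_n},\vec x]=0$ the operator acts fibrewise in $\vec x$ and \eqref{2.6} gives the pointwise bound $\|(H_{P,Q_n}e^{-isH_0}\Psi_0)(\vec x)\|\le[\text{dist}(\Omega_n,\vec x)]^{-\alpha}\|g_s(\vec x)\|$, where
\[
g_s:=(\mathds{1}_{\mathcal{H}_P}\otimes N_n)e^{-isH_0}\Psi_0=(e^{-isH_P}\otimes\mathds{1}_{\mathcal{H}_Q})(\mathds{1}_{\mathcal{H}_P}\otimes N_n e^{-isH_Q})\Psi_0 .
\]
Unitarity of $e^{-isH_P}$ on $L^2$ and the $L^2$ case of \eqref{2.7} give $\|g_s\|_{L^2}\le C$, while the free dispersive bound $\|e^{-isH_P}\|_{L^1\to L^\infty}\le C|s|^{-3/2}$ and the $L^1$ case of \eqref{2.7} give $\|g_s\|_{L^\infty}\le C|s|^{-3/2}$, both constants uniform in $n$. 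Because $\widehat{\Psi_0}$ is supported in the forward cone $\mathcal{C}_{\theta_0;v}$, with $|\vec k|>v$ and $\theta_0<2\theta_0$, the wave packet $e^{-isH_P}\Psi_0$ is, up to rapidly decaying tails, concentrated on $\{s\vec k:\vec k\in\mathcal{C}_{\theta_0;v}\}\subset\mathcal{C}_{2\theta_0}$, which is separated from $\Omega_n$ by distance $\ge d_n$; feeding this localization into a Cauchy--Schwarz estimate of $\int_{\Omega^c}\|g_s(\vec x)\|^2[\text{dist}(\Omega_n,\vec x)]^{-2\alpha}\,\d^3x$ produces a bound of the form $\|H_{P,Q_n}e^{-isH_0}\Psi_0\|_{L^2(\Omega^c)}\le C\,(1+s)^{-\gamma}\,d_n^{(1-\alpha)/2}$ with some $\gamma>1$ --- precisely the exponent appearing in \eqref{2.8}. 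On $\Omega$, where \eqref{2.6} carries no information, one uses instead a non-stationary-phase (propagation) estimate: for $\vec x\in\Omega$ and $\vec k\in\operatorname{supp}\widehat{\Psi_0}\subset\mathcal{C}_{\theta_0;v}$ the phase $\vec k\cdot\vec x-\tfrac{s}{2}|\vec k|^2$ has gradient $\vec x-s\vec k$ of length $\gtrsim|\vec x|+s$ (since $s\vec k\in\mathcal{C}_{2\theta_0}$, $\text{dist}(\Omega,\mathcal{C}_{2\theta_0})\ge d$, and there is an angular gap $\ge\theta_0$ between $\vec x$ and $\vec k$), so repeated integration by parts gives $\|(H_{P,Q_n}e^{-isH_0}\Psi_0)(\vec x)\|\le C_N(1+s+|\vec x|)^{-N}$ for every $N$, uniformly in $n$; this is one of the technical lemmas of the appendix.

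Integrating in $s$ and summing over $n$, the $\Omega^c$-part is $\le C\big(\int_0^\infty(1+s)^{-\gamma}\,\d s\big)\sum_{n\in I}d_n^{(1-\alpha)/2}\le C' d^{-\beta}$ by \eqref{2.8}, while the $\Omega$-part is $O(d^{-N})$ for every $N$; hence $\sup_{t\ge0}\|(e^{-itH}-e^{-itH_0})\Psi_0\|\le C'' d^{-\beta}$, which completes the proof. The main obstacle is exactly these two estimates and their balancing: one must extract enough time decay from the free dispersion (the $|s|^{-3/2}$ factor coming from the $L^1$-bound of \textbf{(A3)}) for the $s$-integral to converge, while \emph{at the same time} extracting enough decay in $d_n$ from the spatial separation for the $n$-sum to converge \emph{uniformly in the number of degrees of freedom of $Q$} (i.e.\ uniformly in $|I|$); these two demands pull against each other, and the square root in the exponent $(1-\alpha)/2$ of \eqref{2.8} is what a single Cauchy--Schwarz can afford. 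The other delicate point is the region $\Omega$, where \eqref{2.6} is silent and one relies solely on the particle's wave packet escaping into the cone $\mathcal{C}_{2\theta_0}$, with all constants uniform in $n$.
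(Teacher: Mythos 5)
Your overall Cook-method strategy and the reduction
\[
\big|\langle e^{-itH}\Psi_0,A\,e^{-itH}\Psi_0\rangle-\langle e^{-itH_0}\Psi_0,A\,e^{-itH_0}\Psi_0\rangle\big|\le 2\|O_P\|\,\|(e^{-itH}-e^{-itH_0})\Psi_0\|
\]
are the same as in the paper, and your handling of the $\Omega^c$-region (combining the pointwise bound \eqref{2.6}, the $L^1$/$L^2$ bounds from \textbf{(A3)}, the dispersive decay $|s|^{-3/2}$ of $e^{-isH_P}$, H\"older, and the summability \eqref{2.8}) is in the right spirit. However, there is a genuine gap in how you treat the region $\Omega$. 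You apply Duhamel naively to $e^{-itH}-e^{-itH_0}$, so the integrand $H_{P,Q_n}e^{-isH_0}\Psi_0$ must be estimated on \emph{all} of $\mathbb{R}^3$, including $\Omega$. But Assumption \eqref{2.6} bounds the fiberwise operator $H_{P,Q_n}(\vec x)$ only for $\vec x\in\Omega^c$; for $\vec x\in\Omega$ no bound at all is assumed (and, physically, $H_{P,Q_n}$ may be singular where $Q_n$ lives). Your claim that ``repeated integration by parts gives $\|(H_{P,Q_n}e^{-isH_0}\Psi_0)(\vec x)\|\le C_N(1+s+|\vec x|)^{-N}$ for $\vec x\in\Omega$'' does not follow: stationary phase controls $(e^{-isH_0}\Psi_0)(\vec x)$ pointwise, but since $[H_{P,Q_n},\vec x]=0$ only tells you that $(H_{P,Q_n}\Phi)(\vec x)$ depends on $\Phi(\vec x)$ alone, you would still need a pointwise operator bound on $H_{P,Q_n}(\vec x)$ on $\Omega$ to convert smallness of $\Phi(\vec x)$ into smallness of $(H_{P,Q_n}\Phi)(\vec x)$, and you have none.

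The paper circumvents exactly this issue by inserting a smooth cutoff $\chi_{\Omega'}$ supported in $\Omega^c$ (equal to $1$ on a dilated cone $\Omega'\supset\mathcal{C}_{2\theta_0}$) before applying Cook: it writes
\[
(e^{-itH}-e^{-itH_0})\Psi_0 = (e^{-itH}\chi_{\Omega'}-\chi_{\Omega'}e^{-itH_0})\Psi_0 + (\chi_{\Omega'}-1)e^{-itH_0}\Psi_0 + e^{-itH}(1-\chi_{\Omega'})\Psi_0,
\]
and differentiates $e^{-isH}\chi_{\Omega'}e^{-i(t-s)H_0}$. This produces $H_{P,Q}\chi_{\Omega'}$ (which only ever meets the wavefunction in $\Omega^c$, so \eqref{2.6} applies everywhere it is needed), at the cost of extra commutator terms $\tfrac12\Delta(\chi_{\Omega'})+\vec\nabla(\chi_{\Omega'})\cdot\vec\nabla$. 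These commutators are supported in the transition region $\Omega^c\setminus\Omega'$, at distance $\gtrsim d$ from the propagation cone, and are controlled by the stationary-phase bound (Lemma \ref{B.1}); the remaining boundary terms $(\chi_{\Omega'}-1)e^{-itH_0}\Psi_0$ and $(1-\chi_{\Omega'})\Psi_0$ are handled the same way. To repair your proof you should introduce this cutoff; without it the $\Omega$-part of your decomposition cannot be estimated from the assumptions.
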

\vspace{2mm}

Lemma \ref{1.1}  justifies considering $P$ as a closed subsystem:  any  observable of the subsystem $P$ evolves as if $Q$ were absent,  up to an error term  that can be made arbitrarily  small by increasing the   separation between $P$ and $Q$. A similar  result   was already discussed in \cite{Schi}, but  with a finite range interaction between $P$ and $Q$. The proof of Lemma \ref{1.1} is given in Appendix \ref{D}.

\subsubsection{A concrete example where Lemma \ref{1.1}  can be applied} \label{Sec2-1e}
We choose $Q$  to be composed of   a particle $P'$  of spin $1/2$ (electron) and of a spin filter $D$.  The particles $P$ and $P'$ are scattered into  opposite cones, and the spin filter $D$ selects  the particle $P'$ according to  its   spin component along  the  axis corresponding to a  unit vector $\vec{n}$.   A Stern-Gerlach-type experiment is  added to the setup  to measure  a component of the spin of the particle $P$.
\begin{center}
\begin{tikzpicture}[scale=0.8]
\draw  [fill=black!0](-1,-1.5) rectangle (0,1.5); 
\draw  [ultra thick,color=black,->](-0.6,-0.5) -- (-0.6,0.5); 
\draw[thick,dotted,color=black,-] (-2,0) -- (12,0);
\draw[-] (0,1.3) -- (4.07,0.5);
\draw[-] (0,-1.3) -- (4.1,-0.5);
\draw[-] (5.7,0.5) -- (10.5,1.3);
\draw[-] (5.73,-0.5) -- (9.2,-1.2);

\draw[very thick,color=gray!50,->] (8,0) -- (8,0.5);
\draw[very thick,color=black,->] (7,0) -- (7,-0.5);
\draw[very thick,color=black,->] (-1.5,0) -- (-1.5,0.5);
\draw[very thick,color=gray!50,->] (-0.45,0) -- (-0.45,-0.5);
\draw(8,0.25) node[right]{$P$};
\draw(7,-0.25) node[right]{$P$};
\draw(-1.5,0.25) node[right]{$P'$};
\draw(-0.45,-0.25) node[right]{$P'$};
\draw(11.5,+0.3) node[right]{\small $50\%$};
\draw(11.5,-0.3) node[right]{\small $50\%$};
\draw(-0.5,-1.8) node[below]{\small Spin filter};

\begin{scope}[rotate=0,xshift=9.6cm,yshift=-1.35cm]
\draw (0,0,0)--(1,0,0)--(1,0.5,0)--(0,0.5,0)--cycle; 
\draw  (0,0,1)--(1,0,1)--(1,0.5,1)--(0,0.5,1)--cycle; 
\draw (0,0,0) -- (0,0,1); 
\draw (1,0,0) -- (1,0,1); 
\draw (1,0.5,0) -- (1,0.5,1); 
\draw(0,0.5,0) -- (0,0.5,1); 
\end{scope}

\begin{scope}[rotate=0,xshift=11cm,yshift=1.2cm]
\draw (0,0,0)--(1,0,0)--(1,0.5,0)--(0,0.5,0)--cycle; 
\draw  (0,0,1)--(1,0,1)--(1,0.5,1)--(0,0.5,1)--cycle; 
\draw (0,0,0) -- (0,0,1); 
\draw (1,0,0) -- (1,0,1); 
\draw (1,0.5,0) -- (1,0.5,1); 
\draw(0,0.5,0) -- (0,0.5,1); 
\end{scope}

\draw [->,color=gray!70](8.5,0) .. controls (11,0.05) and (10.25,0) .. (12.5,1);
\draw [->,color=black](8.5,0) .. controls (11,-0.05) and (10.25,0) .. (11.5,-1);

\draw [->, very thick](5.5,0)--(6.3,0);
\draw [->, very thick](4.5,0)--(3.7,0);
\draw(6,0) node[above]{\tiny $ \text{ } particle$ $ P$};
\draw(4.1,0) node[above]{\tiny $particle$ $ P'$};

\draw [-,dashed,color=gray](10.4,-0.85) .. controls (11.2,-0.2) and (11.2,-0.2) .. (11.2,0.85);
\draw [-,dashed,color=gray](10.35,-0.85) .. controls (11,-0.2) and (11,-0.2) .. (11.2,0.95);
\draw [-,dashed,color=gray](10.3,-0.85) .. controls (10.8,-0.2) and (10.8,-0.2) .. (11.2,1.05);
\draw [-,dashed,color=gray](10.1,-0.85) .. controls (10.1,0.3) and (10.1,0.3) .. (11.2,1.1);
\draw [-,dashed,color=gray](10.05,-0.85) .. controls (9.9,0.3) and (9.9,0.3) .. (11.2,1.2);
\draw [-,dashed,color=gray](10.,-0.85) .. controls (9.8,0.3) and (9.8,0.3) .. (11.2,1.3);

\end{tikzpicture}
\end{center}
The Hilbert space of  the system   $S=P \vee P' \vee D$  is $\mathcal{H}_P \otimes \mathcal{H}_{P'} \otimes \mathcal{H}_{D}$, where $\mathcal{H}_P=\mathcal{H}_{P'}=L^2(\mathbb{R}^3 ; \mathbb{C}^2)$.  We set $$\vert \uparrow \rangle := \left( \begin{array}{c}  1 \\ 0 \end{array} \right), \qquad  \vert \downarrow \rangle := \left( \begin{array}{c}  0 \\ 1 \end{array} \right).$$

\noindent We assume that the initial state  is an entangled state of the form 
\begin{equation}
\Psi_0=\frac{1}{\sqrt{2n}}  \sum_{j=1}^{n} \left(  \vert  \phi_{j,P} , \downarrow; \psi_{j,P'},\uparrow  \rangle -   \vert  \phi_{j,P}, \uparrow; \psi_{j,P'} , \downarrow \rangle  \right) \otimes  \vert  \chi_{j}  \rangle,
\end{equation}
where  $\| \phi_{j,P}\|_{L^2} = \| \psi_{j,P'}\|_{L^2}=1$, $\langle \chi_{i} ,\chi_j \rangle_{ \mathcal{H}_D } = \delta_{ij}$ for all $i,j=1,...,n$.  We assume that  Assumptions \textbf{(A1)}, \textbf{(A2)} and \textbf{(A3)} of Section  \ref{Sec2-1} are  fulfilled  by the Hamiltonian $H$ and by the initial state $\Psi_0$ of  the composed system $S=P \vee Q$, with $Q=P' \vee D$.  For simplicity, we neglect interactions between $P'$ and $P$. However, it is not hard to generalize Lemma \ref{1.1} to a situation where $P$ and $P'$ interact via a repulsive two-body potential, assuming that the momentum space support of the wave function $\psi_{j,P'}$ is contained in a cone opposit to $\mathcal{C}_{2 \theta_0}$, for all $j=1,...n$.

 We denote by  $\vec{S}_{P} =  (\sigma_x,\sigma_y,\sigma_z)$ the spin operator of the particle $P$, where $\sigma_x$, $\sigma_y$, $\sigma_z$ are the Pauli matrices. The next corollary is a direct consequence of Lemma \ref{1.1}.  
\begin{corollary}[No-signaling]  \label{coco}
Let $\eta >0$.  We require  assumptions  \textbf{(A1)}, \textbf{(A2)} and \textbf{(A3)} of Paragraph \ref{Sec2-1}. Then there is a distance $d(\eta,v)>0$ such that, for any $d>d(\eta,v)$,
\begin{equation}
\label{right}
\vert \langle e^{-itH} \Psi_0, \vec{S}_{P} e^{-itH} \Psi_0 \rangle \vert < \eta, \qquad \forall t \geq 0.
\end{equation}
\end{corollary}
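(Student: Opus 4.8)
The plan is to reduce Corollary~\ref{coco} directly to Lemma~\ref{1.1}. Since $\vec S_P=(\sigma_x,\sigma_y,\sigma_z)$ and each Pauli matrix, viewed as the operator $\sigma_i:=\mathds{1}_{L^2(\mathbb{R}^3)}\otimes\sigma_i\in\mathcal{B}(L^2(\mathbb{R}^3;\mathbb{C}^2))$, has operator norm $1$, I would apply Lemma~\ref{1.1} three times, once for each $O_P=\sigma_i$, with $\eta$ replaced by $\eta'<\eta/\sqrt3$. This produces a length $d(\eta,v)>0$ such that, for every $d>d(\eta,v)$,
\[
\big|\langle e^{-itH}\Psi_0,\sigma_i\,e^{-itH}\Psi_0\rangle-\mathrm{Tr}_{\mathcal{H}_P}\!\big(\rho_P\,e^{itH_P}\sigma_i e^{-itH_P}\big)\big|\le \tfrac{\eta}{\sqrt3},\qquad i\in\{x,y,z\},\ t\ge 0,
\]
where $\rho_P$ is the reduced density matrix of $P$ in the state $\Psi_0$. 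It then suffices to show that the ``free'' comparison term vanishes.

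For this I would exploit the special structure of $H_P$ and of $\Psi_0$. Since $H_P=-\tfrac12\Delta\otimes\mathds{1}_{\mathbb{C}^2}$ acts trivially on the spin factor, it commutes with every $\sigma_i$, so $e^{itH_P}\sigma_i e^{-itH_P}=\sigma_i$ and the comparison term equals the time‑independent quantity $\mathrm{Tr}_{\mathcal{H}_P}(\rho_P\sigma_i)=\langle\Psi_0,(\sigma_i\otimes\mathds{1}_{\mathcal{H}_Q})\Psi_0\rangle$. To evaluate it I would compute $\rho_P$ by tracing $|\Psi_0\rangle\langle\Psi_0|$ over $\mathcal{H}_{P'}\otimes\mathcal{H}_D$: the orthonormality $\langle\chi_i,\chi_j\rangle=\delta_{ij}$ kills all off‑diagonal terms in the summation index $j$, while $\langle\uparrow|\downarrow\rangle=0$ together with $\|\psi_{j,P'}\|=1$ kills the cross terms between the two summands of the singlet. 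One is left with
\[
\rho_P=\frac{1}{2n}\sum_{j=1}^{n}|\phi_{j,P}\rangle\langle\phi_{j,P}|\otimes\mathds{1}_{\mathbb{C}^2},
\]
i.e. the spin part of $P$ is maximally mixed. Because $\mathrm{Tr}_{\mathbb{C}^2}\sigma_i=0$ and $\|\phi_{j,P}\|=1$, this gives $\mathrm{Tr}_{\mathcal{H}_P}(\rho_P\sigma_i)=\frac{1}{2n}\sum_{j}\mathrm{Tr}_{\mathbb{C}^2}\sigma_i=0$ for each $i$.

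Combining the two displays yields $|\langle e^{-itH}\Psi_0,\sigma_i\,e^{-itH}\Psi_0\rangle|\le\eta/\sqrt3$ for every component and all $t\ge0$, hence $|\langle e^{-itH}\Psi_0,\vec S_P\,e^{-itH}\Psi_0\rangle|<\eta$ for all $t\ge0$ (the strict inequality coming from having chosen $\eta'<\eta/\sqrt3$ at the start). I do not anticipate any real obstacle: the entire analytic content is carried by Lemma~\ref{1.1}, and the only things to verify are the routine partial‑trace computation of $\rho_P$ — where it is precisely the singlet structure that forces the spin expectation to vanish — and the trivial fact that the free kinetic Hamiltonian commutes with the spin observables. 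The mildly fiddly points are only bookkeeping: the ordering of the tensor factors in $\Psi_0$ and the passage from the three componentwise estimates to the bound on the vector $\vec S_P$.
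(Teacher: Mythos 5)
Your proof is correct and follows exactly the route the paper has in mind: the authors state only that Corollary~\ref{coco} is ``a direct consequence of Lemma~\ref{1.1},'' and your argument is the detailed execution of that reduction — apply Lemma~\ref{1.1} componentwise with $O_P=\sigma_i$, use that $H_P=-\tfrac12\Delta\otimes\mathds 1_{\mathbb{C}^2}$ commutes with the spin operators so the free term is time-independent, and compute $\rho_P$ from the singlet structure of $\Psi_0$ to see that its spin part is maximally mixed, whence $\mathrm{Tr}_{\mathcal H_P}(\rho_P\sigma_i)=0$. The bookkeeping (orthonormality of the $\chi_j$, the $\eta'<\eta/\sqrt3$ slack to absorb the $\ell^2$ sum over components) is handled correctly.
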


 For sufficiently large values of  $d$, Corollary \ref{coco} shows that the mean value of the spin operator of the particle $P$  very nearly vanishes  for  \textit{all} times $t$, regardless of  the initial state vectors $\lbrace\chi_{i}\rbrace$ of the filter $D$. In particular, the expectation value of the spin operator of $P$ is  independent of the kind of measurement on $P'$ performed by the spin filter $D$.  Here we assume that a particle $P'$ with  $\vec{S}_{P'} \cdot \vec{n}=1/2$ passes the filter, while a particle with $\vec{S}_{P'} \cdot \vec{n}=-1/2$ is absorbed by $D$,  with probability very close to $1$.  A realistic interpretation of quantum mechanics, in the sense that the time evolution of pure states in the Schr\"odinger picture would completely predict  what  will happen, necessary fails. It would  lead to the prediction that 
 \begin{equation}
\label{wrong}
\langle e^{-itH} \Psi_0 , (\vec{S}_{P} \cdot \vec{n})  e^{-itH} \Psi_0 \rangle \approx-\frac{1}{2}
\end{equation}
 for  sufficiently large times $t$ if the particle $P'$ has passed the filter $D$. This  contradicts  Eq. \eqref{right}.   It shows that  choosing a unitary time evolution and specifying an initial state does not predict the results of measurement, but only probabilities for  the outcomes of such  measurement.   Our  conclusion remains valid if the particles $P$ and $P'$ are indistinguishable particles; see \cite{FS0,Schi}.

  \subsection{Model 2: A neutral atom coupled to a   quantum system $Q$ and to the  quantized electromagnetic field}
 \subsubsection{The model} \label{Sec2-2}
 We consider a neutral atom $P$ that interacts with a quantum system $Q$ and the quantized electromagnetic field. The atom  either moves freely  or moves in a slowly varying external potential. We assume that, initially, it is localized (with a probability close to one)  far away from the system $Q$, and we allow the system $Q$ to create and annihilate photons. Our aim is to prove a result of the form of  \eqref{iso2}.  Our estimates for this model are however \textit{not}  uniform in the number of degrees of freedom of the subsystem $Q$.  This problem could be  solved by decomposing $Q$ into small subsystems. This complication is avoided to  keep our exposition as simple as possible.  We do not specify the nature of $Q$, but we emphasize that it could represent  another atom or a  molecule.    The internal degrees of freedom of the atom $P$   are  described by a two-level system. The total Hilbert space of the system $S$ is the tensor product space
\begin{equation*}
\mathcal{H} := \mathcal{H}_{P} \otimes \mathcal{H}_{Q} \otimes \mathcal{H}_E,
\end{equation*}
where  
\begin{equation*}
\mathcal{H}_{P}:=L^{2}(\mathbb{R}^{3}) \otimes \mathbb{C}^2  \qquad  \text{ and }  \qquad \mathcal{H}_E := \mathcal{F}_{+}(L^{2}( \underline{\mathbb{R}}^3 )) 
\end{equation*}
are  the Hilbert spaces associated to  the atom and to the electromagnetic field,  respectively. Here  $\mathcal{F}_{+}(L^{2}( \underline{\mathbb{R}}^3 )) $ is the (symmetric) Fock space over  $L^{2}( \underline{\mathbb{R}}^3 )$. We use  the notation
\begin{equation*}
\underline{\mathbb{R}}^3:= \mathbb{R}^3\times\{1,2\} = \left\{\underline{k} := (\vec{k},\lambda)\in\mathbb{R}^3\times\{1,2\} \right\}, \qquad d \underline{k}= \sum_{\lambda=1,2} d^3k,
\end{equation*}
where $\vec{k}$ is the photon momentum  and $\lambda$ denotes the polarization of the photon.  Any element $\Phi \in \mathcal{F}_{+}(L^{2}( \underline{\mathbb{R}}^3 ))$ can be represented as a sequence $(\Phi^{(n)})$ of  totally symmetric $n$-photons functions. The scalar product on $\mathcal{F}_{+}(L^{2}( \underline{\mathbb{R}}^3 ))$ is defined by 
$$\langle \Phi, \Psi \rangle = \sum_{n \geq 0} \int_{\underline{\mathbb{R}}^{3n}} \overline{\Phi}^{(n)}(\underline{k}_1,..., \underline{k}_n) \Psi^{(n)}(\underline{k}_1,..., \underline{k}_n)  d\underline{k}_1...d\underline{k}_n $$
for all $\Phi,\Psi \in \mathcal{F}_{+}(L^{2}( \underline{\mathbb{R}}^3 ))$.

The Hamiltonian of the total system is written as
 \begin{align*}
H := & H_{P} \otimes \mathds{1}_{\mathcal{H}_Q} \otimes \mathds{1}_{\mathcal{H}_E} + \mathds{1}_{\mathcal{H}_P} \otimes H_Q \otimes \mathds{1}_{\mathcal{H}_E} + \mathds{1}_{\mathcal{H}_P} \otimes \mathds{1}_{\mathcal{H}_Q} \otimes H_{E} \\
& + H_{P , E} + H_{ P , Q} + H_{ Q, E } ,
\end{align*}
where
\begin{equation*}
H_{P}:=-\frac{\Delta}{2}  +  \left( \begin{array}{cc} \omega_0 &0\\0&0 \end{array}   \right) 
\end{equation*}
is the free atomic Hamiltonian, with $\omega_0$ the energy of the excited internal state of the atom, $H_Q$ the Hamiltonian for the system $Q$, and
\begin{equation*}
H_{E} := \d \Gamma( | \vec{k} | ) \equiv \int_{ \underline{\mathbb{R}}^3 } \vert \vec{k} \vert a^{*}( \underline{k} ) a ( \underline{k} ) d \underline{k}
\end{equation*}
is the second quantized Hamiltonian of the free electromagnetic field. The operator-valued distributions $a (\underline{k}):=a _{\lambda}(\vec{k})$ and $a ^*(\underline{k}):=a^{*}_{\lambda}(\vec{k})$ are the photon  annihilation and creation operators. We suppose that $H_Q$ is a semi-bounded self-adjoint operator on $\mathcal{H}_Q$. In what follows, we write $H_{P}$ for $H_{P} \otimes \mathds{1}_{\mathcal{H}_Q} \otimes \mathds{1}_{\mathcal{H}_E}$, and likewise for $H_Q$ and $H_E$, unless  confusion may arise.

The interaction Hamiltonians, $H_{P , E}$, $H_{ P , Q}$ and $H_{ Q, E }$ describe the interactions between the atom, the system $Q$, and the quantized field. The atom-field interaction is of the form $H_{P,E} = - \vec{d} \cdot \vec{E}$,  where $\vec{d}= - \lambda_0 \vec{\sigma}$ is the dipole moment of the atom,  $\vec{\sigma}$ is the vector of Pauli matrices, and $\vec{E}$ is the quantized electric field, i.e.,
\begin{equation*}
H_{P,E} := i \lambda_0 \int_{ \underline{\mathbb{R}}^3 } \chi( \vec{k} ) \vert \vec{k} \vert^{\frac12} \vec{\varepsilon} (\underline{k} ) \cdot \vec{\sigma} \left(  e^{i \vec{ k} \cdot \vec{x}}  a ( \underline{k} ) - e^{ - i \vec{ k} \cdot \vec{x}}  a^* ( \underline{k} ) \right) d \underline{k},
\end{equation*}
 where $\chi \in \mathrm{C}_0^\infty( \mathbb{R}^3 ; [0,1] )$ is an ultraviolet-cutoff function such that $\chi \equiv 1$ on $\{ \vec{k} \in \mathbb{R}^3 , | \vec{k} | \le 1/2 \}$ and $\chi \equiv 0$ on $\{ \vec{k} \in \mathbb{R}^3 , | \vec{k} | \ge 1 \}$, and $\vec{\varepsilon}(\underline{k}):=\vec{\varepsilon}_{\lambda}(\vec{k})$ are polarization vectors of the electromagnetic field in the Coulomb gauge. With the usual notations, $H_{P,E}$ can be rewritten in  the form
\begin{equation}
H_{P,E} = \Phi( h_x ) \equiv  a^*( h_x ) + a( h_x )  , \label{eq:def_hx1}
\end{equation}
with
\begin{equation}
h_x( \underline{k} ) := - i \lambda_0 \chi( \vec{k} ) \vert \vec{k} \vert^{\frac12} \vec{\varepsilon} (\underline{k} ) \cdot \vec{\sigma} e^{ - i \vec{ k} \cdot \vec{x}} .  \label{eq:def_hx2}
\end{equation}

By standard estimates (see Lemma \ref{lm:B1}), $H_{P,E}$ is $H_{P}+H_E$-bounded   with relative bound $0$. We suppose that $H_{P , Q}$ and $H_{Q,E}$ are symmetric operators  relatively bounded with respect to $H_{P} + H_Q$ and $H_Q + H_E$, respectively, and that $H$ is a self-adjoint operator with domain $\mathcal{D}( H ) = \mathcal{D}( H_{P} + H_Q + H_E) \supset H^{2}(\mathbb{R}^3) \otimes \mathbb{C}^2 \otimes \mathcal{D}(H_Q) \otimes \mathcal{D}(H_E)$. Further technical assumptions on $H_{P,Q}$ and $H_{Q,E}$ needed to state our main theorem will be  described below.

\subsubsection{Assumptions} \label{subsection:main}
 We assume that: \\[2pt]
\begin{enumerate}
\item   The support of the initial atomic wave function (at time $t=0$) is contained inside a ball $B_R$ of radius $R$.\\[4pt]
\item There is a large distance $d > R$ such that the interaction Hamiltonian \linebreak[4] $H_{P,Q} \mathds{1}_{\mathrm{dist}( B_d , \Omega ) \ge 2d }$ between the ball of radius $d$, $B_d$, centered at the same point as the ball $B_R$ containing the support of the initial wave function of the atom and the region $\Omega$ containing $Q$ is bounded in norm by $Cd^{-\beta}$, for some finite constant $C$ and some exponent $\beta>0$.\\[4pt]
\item With very high probability, there aren't any photons emitted by the subsystem $Q$ towards, nor absorbed by $Q$ from the ball of radius $3d$ centered at the same point as the ball, $B_R$, containing the support of the initial wave function of the atom.  \\[2pt]
\end{enumerate}
\begin{figure}[H]
\label{fig2}
\begin{center}
\begin{tikzpicture}[scale=0.7, every node/.style={scale=0.9}]

   
    \fill[black,opacity=0.1]     (10,3)to[bend left=45] (13,2)  to[bend left=60] (14,-1)   to[bend left=40] (10,0.1)  to[bend left=60] (9,-0.2)  to[bend left=60] (8.5,0.15)  to[bend left=60] (10,3);

      \fill[black,opacity=0.1]     (2,1)to[bend left=45] (3.5,0)    to[bend left=40] (2,-2.1)  to[bend left=60] (1,-2.2)  to[bend left=60] (-0.5,-1.85)  to[bend left=60] (2,1);       
         
       \fill[white] (6cm,0cm) circle(2cm);

        \fill[black,opacity=0.25]     (6,0.3)to[bend left=45] (6.2,0.25)  to[bend left=-60] (6.3,0.2)   to[bend left=40] (6.4,0.1)  to[bend left=60] (6,-0.2) to[bend left=-40] (5.8,-0.19) to[bend left=60] (5.7,-0.15) to[bend left=-10] (5.8,0.2) to[bend left=60] (6,0.3);

 \draw[thick] (6cm,0cm) circle(2cm);

  \draw (10,2) node[below] {$Q$}; 
    \draw (1,0) node[below] {$Q$}; 
        
    \draw[<->](6,0.25) -- (6,2);      
   \draw (6,1) node[right] {\footnotesize$ 3d$};  
   
      \draw[<->](6,0) -- (6.4,0);      
   \draw (6.2,0) node[above] {\footnotesize $R$};  
   
         \end{tikzpicture}
\end{center}
\end{figure}
 
To simplify the analysis, we suppose that the initial atomic wave function is contained inside a ball (of radius $R$) centered \emph{at the origin}, and that $Q$ is located outside the ball of radius $3d$ centered at $0$, for some \emph{fixed} $d > R$. Assumptions (2) and (3) are then replaced by the hypotheses that $H_{P,Q} \mathds{1}_{ | \vec{x} | \le d }$ is bounded by $Cd^{-\beta}$, and that $Q$  does not emit nor absorb photons inside the ball of radius $3d$ centered at the origin. These assumptions imply that the system $Q$ does not  penetrate into  the ball of  radius $3d$ centered at the origin. This hypothesis can  be weakened for concrete choices of the subsystem  $Q$.

 We recall the definition of the scattering \emph{identification operator} (see \cite{HuSp95_01}, \cite{DeGe99_01} or \cite{FrGrSc02_01} for more details) and a few other standard tools from scattering theory to rewrite  Assumptions (1) through (3) in   mathematically precise terms. Let $\mathcal{F}_{ \mathrm{fin} }$ denote the set of all vectors $ \Phi = ( \Phi^{(n)} ) \in \mathcal{F}_+( L^2( \underline{ \mathbb{R} }^3 ) )$ such that $\Phi^{(n)} = 0$ for all but finitely many $n$'s. The map $I : \mathcal{F}_{ \mathrm{fin} } \otimes \mathcal{F}_{ \mathrm{fin} } \to \mathcal{F}_{ \mathrm{fin} }$ is defined as the extension by linearity of the map
\begin{align}
I : a^*( g_1 ) \cdots a^*( g_m ) \Omega \otimes a^*( h_1 ) \cdots a^*( h_n ) \Omega \mapsto a^*( h_1 ) \cdots a^*( h_n ) a^*( g_1 ) \cdots a^*( g_m ) \Omega , \label{eq:defI}
\end{align}
for all $g_1, \dots , g_m , h_1, \dots , h_n \in L^2( \underline{ \mathbb{R} }^3 )$. The closure of $I$ on $\mathcal{H}_E \otimes \mathcal{H}_E$ is denoted by the same symbol and is called the scattering identification operator.  Observe that $I$ is unbounded.
Let 
\begin{equation*}
\mathcal{H}_0 := \mathcal{H}_{P} \otimes \mathcal{H}_E , \qquad \mathcal{H}_\infty := \mathcal{H}_Q \otimes \mathcal{H}_E .
\end{equation*}
The Hilbert space $\mathcal{H}_0$ corresponds to the atom together with photons located near the origin, whereas $\mathcal{H}_\infty$ corresponds to the system $Q$ together with photons located far from the origin. We extend the operator $I$ to the space $\mathcal{H}_0 \otimes \mathcal{H}_\infty$  by setting
\begin{equation*}
I : \mathcal{H}_0 \otimes \mathcal{H}_\infty \to \mathcal{H}.
\end{equation*}
We use $I$ to ``amalgamate'' $\mathcal{H}_0$ with $\mathcal{H}_{\infty}$. 

We recall that the Hamiltonian $H_{P \vee E}$ on $\mathcal{H}_0 = \mathcal{H}_{P} \otimes \mathcal{H}_E$ associated with the atom and the quantized radiation field,
\begin{equation*}
H_{P \vee E} := H_{P} + H_{E} + H_{P,E} ,
\end{equation*}
 is translation-invariant, in the sense that  $H_{P \vee E}$ commutes with each component of  the total momentum operator
\begin{equation*}
\vec{P}_{P \vee E}:= \vec{P}_P + \vec{P}_E = -i \vec{\nabla}_{x} + \int_{ \underline{ \mathbb{R} }^3 } \vec{k} a^{*}  (\underline{k}) a  (\underline{k}) d \underline{k}.
\end{equation*}
This implies (see e.g. \cite{BaChFaFrSi13_01} or \cite{FaFrSc13_01} for more details) that there exists a unitary map
\begin{equation*}
U : \mathcal{H}_{P} \otimes \mathcal{H}_E \rightarrow \int_{ \mathbb{R}^3 }^\oplus \mathbb{C}^{2 } \otimes \mathcal{H}_E \, d^3 p ,
\end{equation*}
such that
\begin{equation*}
U H_{P \vee E} U^{-1} = \int_{ \mathbb{R}^3 }^\oplus H( \vec{p} ) d^3 p .
\end{equation*}
For any fixed total momentum $\vec{p} \in \mathbb{R}^3$, the Hamiltonian $H( \vec{p} )$ is a self-adjoint, semi-bounded operator on $\mathbb{C}^{2 } \otimes \mathcal{H}_E$. Its expression is given in Appendix \ref{B}. It turns out that, for $| \vec{p} | < 1$ and for small coupling $ \vert \lambda_0 \vert$,  $H(\vec{p})$ has a ground state with associated eigenvalue $E(\vec{p})$, and that this ground state, $\psi(\vec{p})$, is real analytic in $\vec{p}$, for $\vert \vec{p} \vert < 1$; see \cite{FaFrSc13_01} and Theorem \ref{thm:anal} for a more precise statement.  Given $0<\nu<1$, we assume in the rest of this text that $\vert \lambda_0 \vert < \lambda_c(\nu)$, where $\lambda_c(\nu)>0$ is the critical coupling constant such that $H(\vec{p})$ has a ground state for all $\vec{p}$ with $\vert \vec{p} \vert< \nu$.  We  introduce a dressing transformation  $\mathcal{J}: L^{2}(\mathbb{R}^3) \rightarrow \mathcal{H}_{P} \otimes \mathcal{H}_E$, defined, for all $u \in L^{2}(\mathbb{R}^3)$ and for a.e. $\vec{x} \in \mathbb{R}^3$, by the expression
\begin{equation} \label{Ju}
\mathcal{J}(u)(\vec{x}):=\frac{1}{(2 \pi)^{\frac32}} \int_{ \mathbb{R}^3 } \hat{u}(\vec{p})  e^{i\vec{x} \cdot (\vec{p}-\vec{P}_E)} \chi_{\bar{B}_{\nu/2}} (\vec{p}) \psi ( {\vec{p}} ) \, d^3 p ,
\end{equation}
where $\chi_{\bar{B}_{\nu/2}} \in \mathrm{C}_0^\infty( \mathbb{R}^3 ; [0,1] )$ is such that $\chi_{\bar{B}_{\nu/2}} \equiv 1$ on $B_{\nu/4} = \{ \vec{p} \in \mathbb{R}^3 , | \vec{p} | < \nu / 4 \}$, and $\chi_{\bar{B}_{\nu/2}} \equiv 0$, outside $\bar{B}_{\nu/2} := \{ \vec{p} \in \mathbb{R}^3 , | \vec{p} | \le \nu / 2 \}$.  The state $\mathcal{J}(u)$ describes a  dressed single-atom  state.
We recall that, for any operator $a$ on $L^2( \underline{ \mathbb{R} }^3 )$, the second quantization of $a$, $\Gamma( a )$, is the operator defined on $\mathcal{H}_E$ by its restriction to the $n$-photons Hilbert space, which is given by 
\begin{equation}\label{eq:defGamma}
\Gamma( a ) |_{ L^2( \underline{ \mathbb{R} }^3 ) ^{ \otimes_s^n } } := \otimes^n a, \qquad n=0, 1,2,...
\end{equation}
and $\otimes^{0} a =\mathds{1}$.
We denote by 
$$
N:=\int_{\underline{\mathbb{R}}^3} a^{*}(\underline{k})  a(\underline{k})  d\underline{k}
$$
 the photon number operator on Fock Space. We are ready to state our main assumptions.\\[2pt]

\begin{itemize}\label{hyp:atom}
\item[ \textbf{(B1)} ]  (\emph{Initial state of atom}) 
Let $v \in L^2( \mathbb{R}^3 )$ be such that $\mathrm{supp}( v ) \subset \{ \vec{x} \in \mathbb{R}^3 , | \vec{x} | \le 1 \}$. The initial orbital wave function of the atom is supposed to be of the form
\begin{equation*}
u( \vec{x} ) =   R^{-3/2}   v ( R^{-1} \vec{x} ) ,
\end{equation*}
for some   $R \geq 1$. In particular,
\begin{equation*}
\mathrm{supp}( u ) \subset \{ \vec{x} \in \mathbb{R}^3 , | \vec{x} | \le R \},
\end{equation*}
 and  $\|u \|_{L^2}=\|v \|_{L^2}$ is independent of $R$.
\end{itemize}
\begin{itemize}
\item[ \textbf{(B2)} ] (\emph{Initial state of photons far from the atom}) \label{hyp:photons}
The state $\varphi \in \mathcal{H}_\infty = \mathcal{H}_Q \otimes \mathcal{H}_E$ satisfies 
\begin{equation*}
\big ( \mathds{1}_{ \mathcal{H}_Q } \otimes \Gamma( \mathds{1}_{ | \vec{y} | \ge 3d } ) \big ) \varphi = \varphi ,
\end{equation*}
for some $d > 0$, where $\vec{y} := i \vec{ \nabla }_k$ denotes the ``photon position variable'', and
\begin{equation*}
\varphi \in \mathcal{D}( H_{Q \vee E } ) \cap \mathcal{D}( \mathds{1}_{\mathcal{H}_Q} \otimes e^{ \delta N } ) ,
\end{equation*} 
for some $\delta > 0$.
\end{itemize}
\vspace{1mm}
\begin{itemize}
\item[ \textbf{(B3)} ] (\emph{Interaction $P-Q$}) \label{hyp:int-atS}
The interaction Hamiltonian between the atom and the subsystem $Q$, $H_{P,Q}$, is a symmetric operator on $\mathcal{H}_{P} \otimes \mathcal{H}_Q$, relatively bounded with respect to $H$, and satisfying 
\begin{equation} \label{HPQ}
\| H_{P,Q} \mathds{1}_{ | \vec{x} | \le d } \| \leq C d^{-\beta}
\end{equation}
for some constants $C$ and $\beta>0$.
\end{itemize}
\vspace{1mm}
\begin{itemize} 
\item[ \textbf{(B4)} ] (\emph{Interaction $Q-E$}) \label{hyp:int-Sph}
The interaction Hamiltonian between the subsystem $Q$ and the radiation field, $H_{Q,E}$, is a symmetric operator on $\mathcal{H}_{Q} \otimes \mathcal{H}_E$  such that $ H_{Q \vee E}= H_Q+H_E+H_{Q,E}$ is self-adjoint on  
\begin{equation*}
\mathcal{D} ( H_{Q \vee E} ) = \mathcal{D}( H_Q + H_E ).
\end{equation*}
Moreover, in the sense of quadratic forms, $H_{Q,E}$ satisfies
\begin{equation} \label{com}
\big [ H_{Q,E} , a^{\sharp}( \mathds{1}_{ | \vec{y} | \le 3 d } h ) \big ] = 0 , \qquad \big [ H_{Q,E} , \Gamma( j( \vec{y} ) ) \big ] = 0 ,
\end{equation}
for all $h \in L^2( \underline{ \mathbb{R} }^3 )$ and for all Fourier multiplication operators $j( \vec{y} )$ on $L^2( \underline{ \mathbb{R} }^3 )$ such that $j( \vec{y} ) \mathds{1}_{ | \vec{y} | \ge 3d } = \mathds{1}_{ | \vec{y} | \ge 3d }$, where $a^\sharp$ stands for $a$ or $a^*$.
\end{itemize}
\vspace{1mm}

\begin{itemize}
\item[ \textbf{(B5)} ] (\emph{Number of photons emitted by $Q$}) \label{hyp:number-photons}  The initial state $\varphi$ satisfies $e^{ -i t H_{Q \vee E} } \varphi \in \mathcal{D}( N )$ for all times $t \ge 0$, and
\begin{align}
\big \| \mathrm{d} \Gamma(\mathds{1}_{\vert \vec{y} \vert \ge c t}) e^{ -i t H_{Q \vee E} } \varphi \big \| &\le C \langle t \rangle, \label{2.20}
\end{align}
for some $c > 1$, where $C$ is a positive constant depending on $\varphi$ and $\langle t \rangle:=(1+t^2)^{1/2}$.
\end{itemize}
\begin{remark}
\textbf{(B1)}, \textbf{(B3)} and \textbf{(B4)} are direct  mathematical reformulations of the hypotheses (1), (2) and (3) above.  In \textbf{(B2)}, we assume that, initially, photons  \textit{in contact} with $Q$  are ``localized'' outside the ball of radius $3d$ centered at the origin. 

The constant $C$ in \eqref{HPQ}  depends  a priori  on the number of degrees of freedom of the subsystem $Q$. This problem could be circumvented  by decomposing $Q$ into subsystems located ever further away from $P$,  as we did for the first model. 

Assumption \eqref{com}  is very strong and can  be relaxed in concrete examples for the subsystem  $Q$. For instance, if  $H_{Q,E}$ is linear in annihilation and creation operators, Eq. \eqref{com} is not relevant and the estimate of the norm of the commutator of  $H_{Q,E}$ with other operators on Fock space can be  carried  out directly. The calculations are the same as for the operator $H_{P,E}$.

  Assumption  \textbf{(B5)} implies that the number of  photons created by $Q$  does not grow faster than linearly in time. Indeed, using Hardy's inequality  and the fact that $\mathcal{D}( H_{Q \vee E} ) \subset \mathcal{D}( H_E )$, we have that
\begin{align*}
\big \| \mathrm{d} \Gamma(\mathds{1}_{\vert \vec{y} \vert  \leq c t}) e^{ -i t H_{Q \vee E} } \varphi \big \|  \leq  c t  \big \| \mathrm{d} \Gamma \big (   \vert \vec{y} \vert^{-1} \big) e^{ -i t H_{Q \vee E} } \varphi \big \| \leq  c t \|   H_E e^{ -i t H_{Q \vee E} } \varphi \big \| \le C t .
\end{align*}
\eqref{2.20} says that the number of photons emitted by $Q$ and traveling faster than light grows at most linearly in time.  Eq.  \eqref{2.20} could be weakened  by  a polynomial growth.  This would lead to worse estimates in Theorem \ref{thm:isolated} below.  Assumption (\textbf{B5}) is not  fully satisfactory,  since the upper bound may depend on the number of degrees of freedom of $Q$. The main reason why we impose  \eqref{2.20} is that  photons are massless. The operator $N$ is not $H_E$-bounded, and some of our estimates cannot be proven if we do not control the time evolution of the total number of photons.

 For massive particles,  the dispersion law $\omega( \vec k)= \vert \vec k \vert $ is replaced by $\omega( \vec k)= \sqrt{ { \vec k}^2+ m^2}$, where $m>0$ is the mass of the particles of the field. Since $N$ is $H_E$-bounded, and since, under our assumptions, $\mathcal{D}( H_{Q \vee E} ) \subset \mathcal{D}( H_E )$, we have that
\begin{equation*}
\big \| N e^{ -i t H_{Q \vee E} } \varphi \big \| \le C \big ( \| H_{Q \vee E} \varphi \| + \| \varphi \| \big ) .
\end{equation*}
Hypothesis \textbf{(B5)} is therefore obviously satisfied for massive particles.  To simplify our  presentation, we only state and prove our main result for  photons.
\end{remark}

\subsubsection{Main Result } \label{Sec2-2r}  Our aim is to show that, under Assumptions \textbf{(B1)-(B5)},  $P$ behaves as  a closed system  over a finite  interval of times. For $a,b >0$,  we write $a=\mathcal{O}(b)$
if there is a constant $C>0$ independent of  $t$, $d$ and $R$,  such that $a \le C b$.
\begin{theorem}\label{thm:isolated}
 Consider an initial state $\psi \in \mathcal{H}$ of the form
\begin{equation*}
\psi =  \frac{1}{\| \sum_{i=1}^{l} I ( \mathcal{J}( u_i ) \otimes \varphi_i ) \|} \sum_{j=1}^{l} I ( \mathcal{J}( u_j ) \otimes \varphi_j ) ,
\end{equation*}
where $u_i$ and $\varphi_i$ satisfy Assumptions \textbf{(B1)},  \textbf{(B2)} and \textbf{(B5)} with $d >R^2 \ge 1$, for  $i=1,\dots, l$, and $\langle \varphi_i, \varphi_j \rangle=\delta_{ij}$ for all $i,j=1,\dots, l$. We  introduce the density matrix
$$\rho_{P}:=  \frac{1}{ \sum_{i=1}^{l} \| \mathcal{J}( u_i ) \|^2}  \sum_{j=1}^{l} \vert \mathcal{J}( u_j ) \rangle   \langle \mathcal{J}( u_j )\vert \in \mathcal{B}(\mathcal{H}_0).$$
 Suppose, moreover,  that Asumptions \textbf{(B3)} and \textbf{(B4)}  are satisfied.  Then 
 \begin{align*}
\big \langle e^{ - i t H} \psi ,  &(O_{P} \otimes \mathds{1}_{\mathcal{H}_Q} \otimes \mathds{1}_{\mathcal{H}_E}) e^{ - i t H} \psi \big \rangle = \text{Tr}_{\mathcal{H}_0}(\rho_P  e^{  i t H_{P \vee E}} (O_{P} \otimes \mathds{1}_{\mathcal{H}_E}) e^{ - i t H_{P \vee E}}  ) \\
& + \|O_{P} \| \left(\mathcal{O} \big ( (d/R)^{ \frac{-1+\gamma}{2} } \big ) + \mathcal{O} \big ( \langle t \rangle (d / R^2 )^{ - \frac{1}{2} } \big ) + \mathcal{O} ( t^2 d^{-\frac12} ) +  \mathcal{O} (t d^{-  \beta  })\right) 
\end{align*}
 for all  $0 < \gamma \le 1$, all $t \ge 0$, and all  $O_{P} \in B(\mathcal{H}_{P})$.

\end{theorem}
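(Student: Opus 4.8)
The idea is to use the scattering identification operator $I$ to compare the full dynamics $e^{-itH}$, applied to each channel vector $I(\mathcal{J}(u_j)\otimes\varphi_j)$, with the \emph{decoupled} dynamics generated by $H_{P\vee E}\otimes\mathds{1}_{\mathcal{H}_\infty}+\mathds{1}_{\mathcal{H}_0}\otimes H_{Q\vee E}$ on $\mathcal{H}_0\otimes\mathcal{H}_\infty$. The central step is the approximate intertwining estimate
\[
\Big\|\, e^{-itH}\,I\big(\mathcal{J}(u_j)\otimes\varphi_j\big)-I\big(e^{-itH_{P\vee E}}\mathcal{J}(u_j)\otimes e^{-itH_{Q\vee E}}\varphi_j\big)\,\Big\|\le C_j\Big((d/R)^{\frac{-1+\gamma}{2}}+\langle t\rangle(d/R^2)^{-\frac{1}{2}}+t^2d^{-\frac{1}{2}}+td^{-\beta}\Big),
\]
with $C_j$ depending on $u_j$ and $\varphi_j$. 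Granting it, one expands $\langle e^{-itH}\psi,(O_P\otimes\mathds{1}_{\mathcal{H}_Q}\otimes\mathds{1}_{\mathcal{H}_E})e^{-itH}\psi\rangle$ into a finite double sum over $i,j$, pushes $O_P$ through $I$ (legitimate since $O_P$ acts only on the atom factor, which $I$ does not touch), and uses a quantitative version of the near-isometry of $I$ on vectors whose photon content is localized near the origin, respectively in $\{|\vec{y}|\ge 3d\}$, together with the unitarity of $e^{-itH_{Q\vee E}}$ and the orthonormality $\langle\varphi_i,\varphi_j\rangle=\delta_{ij}$, to collapse the double sum to its diagonal; applying the same manipulation to $\|\sum_i I(\mathcal{J}(u_i)\otimes\varphi_i)\|^2$ reproduces exactly the prefactor in $\rho_P$. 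One then recognizes the main term as $\mathrm{Tr}_{\mathcal{H}_0}(\rho_P\,e^{itH_{P\vee E}}(O_P\otimes\mathds{1}_{\mathcal{H}_E})e^{-itH_{P\vee E}})$ and uses $\|O_P\otimes\mathds{1}_{\mathcal{H}_E}\|=\|O_P\|$ to factor out $\|O_P\|$. Since the error terms exceed $2\|O_P\|$ once $t$ is of order $d^{1/4}$ or larger, one may assume throughout that $t$ is small enough that, by the propagation bound below, the atom stays inside $\{|\vec{x}|\le d\}$.

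The intertwining estimate follows from Duhamel's formula, which expresses the left-hand side above as $-i\int_0^t e^{-i(t-s)H}\,\mathcal{E}\,\big(e^{-isH_{P\vee E}}\mathcal{J}(u_j)\otimes e^{-isH_{Q\vee E}}\varphi_j\big)\,\mathrm{d}s$, with error operator $\mathcal{E}:=HI-I(H_{P\vee E}\otimes\mathds{1}+\mathds{1}\otimes H_{Q\vee E})$. The kinetic terms $H_P$, $H_Q$ and the field energy $H_E=\mathrm{d}\Gamma(|\vec{k}|)$ are intertwined exactly by $I$, so $\mathcal{E}$ splits into three contributions: $(i)$ the bare term $H_{P,Q}I$, with no counterpart on the decoupled side; $(ii)$ the atom--field mismatch $H_{P,E}I-I(H_{P,E}\otimes\mathds{1})$; and $(iii)$ the $Q$--field mismatch $H_{Q,E}I-I(\mathds{1}\otimes H_{Q,E})$. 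For $(i)$ one inserts $\mathds{1}_{|\vec{x}|\le d}$ (justified by the propagation bound) and applies \eqref{HPQ}; after integration in $s$ this yields the $\mathcal{O}(td^{-\beta})$ term. For $(ii)$ the creation part of $\Phi(h_x)$ is intertwined exactly, while the annihilation part only contracts the form factor $h_x$ of \eqref{eq:def_hx2} --- localized near the current atom position --- against the photons carried by $e^{-isH_{Q\vee E}}\varphi_j$, which by \textbf{(B2)} and the commutation relations \eqref{com} remain in $\{|\vec{y}|\ge 3d\}$; the resulting overlap is small, and the number of photons it involves is bounded by \eqref{2.20}. For $(iii)$ the relations \eqref{com} make $H_{Q,E}$ commute with creation and annihilation of photons near the atom and with $\Gamma(j(\vec{y}))$ for $j$ supported in $\{|\vec{y}|\ge 3d\}$, so the mismatch is produced only by photons that have migrated between the two spatial regions; the atom-side contribution is negligible because $e^{-isH_{P\vee E}}\mathcal{J}(u_j)$ stays confined (see below), while the $Q$-side contribution --- $Q$-emitted photons entering the ball of radius $3d$ --- is controlled by \eqref{2.20}, which permits a number of such photons growing at most linearly in $s$. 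Organizing these estimates with a photon-position partition of unity and integrating in $s$, and combining with the near-isometry step, one obtains the remaining $\mathcal{O}(t^2d^{-\frac12})$ and $\mathcal{O}(\langle t\rangle(d/R^2)^{-\frac12})$ contributions.

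Two time-independent ingredients remain. First, the propagation bound for the dressed atom: since $\mathcal{J}$ in \eqref{Ju} is built from the ground-state bundle $\{\psi(\vec{p})\}$ and $\psi(\vec{p})$ is an eigenvector of the fibre Hamiltonian $H(\vec{p})$ with eigenvalue $E(\vec{p})$, one has the exact identity $e^{-isH_{P\vee E}}\mathcal{J}(u)=\mathcal{J}(u_s)$ with $\widehat{u_s}=e^{-isE(\cdot)}\widehat{u}$ on the support of the momentum cutoff $\chi_{\bar{B}_{\nu/2}}$; this reduces the estimate to a dispersive bound for the one-body dispersion relation $E$, and since the cutoff restricts $|\vec{p}|\le\nu/2$, where $E$ is real analytic with group velocity $|\nabla E(\vec{p})|<1$ (Theorem \ref{thm:anal}), it confines both the atom and its (translated) photon cloud to a ball of radius $\mathcal{O}(R+s)$ up to a tail; using the scaling $u(\vec{x})=R^{-3/2}v(R^{-1}\vec{x})$ and the hypothesis $d>R^2\ge1$ this tail is of order $(d/R)^{\frac{-1+\gamma}{2}}$ for every $0<\gamma\le1$, the loss in $\gamma$ being the usual $\varepsilon$-loss in such estimates. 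Second, the quantitative near-isometry of $I$: $\langle I(\Phi_0\otimes\Phi_\infty),I(\Psi_0\otimes\Psi_\infty)\rangle=\langle\Phi_0,\Psi_0\rangle\,\langle\Phi_\infty,\Psi_\infty\rangle$ up to an error controlled at the same rate whenever $\Phi_0,\Psi_0$ carry photons localized near the origin at scale $\sim R$ and $\Phi_\infty,\Psi_\infty$ carry photons in $\{|\vec{y}|\ge 3d\}$; this is obtained by a geometric expansion in the photon localizations, using the exponential decay in photon number of $\psi(\vec{p})$ and, via \textbf{(B2)}, of $\varphi_j$ (through $\mathcal{D}(e^{\delta N})$) to make the remainder summable. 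I expect the main difficulty to lie in contributions $(ii)$--$(iii)$ of the Duhamel step: because the photons are massless, $N$ is not $H_E$-bounded, so every field-mismatch estimate must pass through \eqref{2.20} and through carefully matched space--time photon localizations, and one has to verify that all bounds are uniform in $t$, $d$ and $R$ (and in the number of degrees of freedom of the photon cloud) precisely in the form encoded by the $\mathcal{O}$-notation.
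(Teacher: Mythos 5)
Your plan is correct in its broad outline and identifies all the ingredients the paper actually uses -- localization of photons near the origin and near infinity, Cook/Duhamel comparison of the coupled and decoupled dynamics, finite propagation speed, and control of photon number via \textbf{(B5)} -- but it takes a genuinely different technical route, and there are two points worth flagging.

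First, you work directly with the \emph{unbounded} identification operator $I$ and a single Duhamel formula with error operator $\mathcal{E}=HI-I(H_{P\vee E}\otimes\mathds{1}+\mathds{1}\otimes H_{Q\vee E})$. The paper instead replaces $I$ throughout by the bounded partial-isometry adjoint $\check{\Gamma}(\mathbf{j})^*$ arising from a smooth photon-position partition of unity ($j_0^2+j_\infty^2=1$), and organizes the proof in three steps: (a) a localization lemma (Lemma \ref{lm:b1}) replaces $\psi=I\mathcal{J}(u)\otimes\varphi$ by $\psi_{\mathrm{loc}}=\check{\Gamma}(\mathbf{j})^*(\Gamma(\chi_{|\vec{y}|\le d})\mathcal{J}(u))\otimes(\Gamma(\mathds{1}_{|\vec{y}|\ge 3d})\varphi)$ at cost $(d/R)^{(-1+\gamma)/2}$; (b) Lemma \ref{lm:e1} removes $H_{P,Q}$ using $\chi_{|\vec{x}|\le d}$ and \textbf{(B3)} at cost $t d^{-\beta}+t d^{-1}+(d/R)^{-\infty}$; (c) the intertwining Lemma \ref{lm:b2} writes $\tilde{H}^\varepsilon\check{\Gamma}(\mathbf{j})^*=\check{\Gamma}(\mathbf{j})^*(H_{P\vee E}^\varepsilon\otimes\mathds{1}+\mathds{1}\otimes H_{Q\vee E})+\mathrm{Rem}_1+\mathrm{Rem}_2$ with explicit remainders relatively bounded by $N_0+N_\infty+\langle\vec{x}\rangle^{4-2\delta}$. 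Your ``near-isometry of $I$'' step is the place where the unboundedness bites, and what the paper buys with $\check{\Gamma}(\mathbf{j})^*$ is precisely that boundedness and the clean identity $\check{\Gamma}(\mathbf{j})\check{\Gamma}(\mathbf{j})^*=\mathds{1}_{\mathrm{Ran}}$, which is then used again at the end to pull $O_P$ through. Your route is viable (indeed $I=\check{\Gamma}(\mathbf{j})^*$ on the range of $\Gamma(\chi_{|\vec{y}|\le d})\otimes\Gamma(\mathds{1}_{|\vec{y}|\ge 3d})$), but the estimates are cleaner through the partial isometry and you would end up re-proving something equivalent to Lemma \ref{lm:b2} anyway.

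Second, two smaller discrepancies. (i) You treat the $Q$--field mismatch $H_{Q,E}I-I(\mathds{1}\otimes H_{Q,E})$ as an \emph{error source} controlled by \eqref{2.20}. In the paper, Hypothesis \textbf{(B4)} makes $H_{Q,E}$ intertwine \emph{exactly} with $\check{\Gamma}(\mathbf{j})^*$ (see \eqref{eq:e2}); all the time-dependent field errors come from $H_E$ and $H_{P,E}$ (Lemmas \ref{lm:B8}, \ref{lm:B9}) and from the propagation lemmas \ref{lm:f1}--\ref{lm:g1}. Your account would still close (photons initially in $\{|\vec{y}|\ge 3d\}$ remain in $\{|\vec{y}|\ge 2d\}$ up to $\mathcal{O}(t^2 d^{-1})$ by Lemma \ref{lm:g1}), but it assigns the role of \textbf{(B4)} incorrectly. (ii) For the dressed-atom propagation bound you appeal to the exact spectral identity $e^{-isH_{P\vee E}}\mathcal{J}(u)=\mathcal{J}(u_s)$, $\widehat{u_s}=e^{-isE(\cdot)}\widehat{u}$, and a one-body dispersive estimate. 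This is a nice alternative, and is correct for $H_{P\vee E}$ (it is exactly the identity exploited in Lemma \ref{thm:eff-dyn}); but the paper proves the needed photon localization after time $t$ (Lemma \ref{lm:f1}) by Cook's method with commutators $[\Gamma(\chi_{|\vec{y}|\le d}),H_{P\vee E}^\varepsilon]$, controlled by $\|N e^{-isH_{P\vee E}^\varepsilon}\mathcal{J}(u)\|=\mathcal{O}(\langle s\rangle)$ and $\||\vec{x}|e^{-isH_{P\vee E}^\varepsilon}\mathcal{J}(u)\|=\mathcal{O}(s)+\mathcal{O}(R)$, which is what feeds $\mathrm{Rem}_1,\mathrm{Rem}_2$ and produces the time-growing factors $\langle t\rangle$, $t^2$ you also list. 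Your sketch should make explicit that the photon tail after time $t$ is not the static $(d/R)^{(-1+\gamma)/2}$ but degrades in $t$; otherwise the time-dependent error terms in your plan would have no source.
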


\subsubsection{A corollary: the dressed atom in a slowly varying external potential}
\label{Model2pot}

We now assume that the atom is placed in a slowly varying external potential, $V_{\varepsilon}( \vec{x} ) \equiv V( \varepsilon \vec{x} )$, with $V_\varepsilon \in L^\infty( \mathbb{R}^3 ; \mathbb{R} )$. We set
\begin{equation*}
 H_{P}^{\varepsilon} := H_{P} + V_{\varepsilon}( \vec{x} ), \qquad H_{P \vee E}^{\varepsilon} := H_{P \vee E} + V_{\varepsilon}( \vec{x} ), \qquad H^{\varepsilon} := H + V_{\varepsilon}( \vec{x} ) .
 \end{equation*}
 We define the effective Hamiltonian $H^{\varepsilon}_{P,\mathrm{eff}}$ on $L^{2}(\mathbb{R}^3)$ as
\begin{equation*}
H^{\varepsilon}_{P,\mathrm{eff}} := E( -i \vec{\nabla}_{x}) + V_{\varepsilon} ( \vec{x} ),
\end{equation*}
where $E(\vec{p})$ is the ground state energy of the fiber Hamiltonian $H(\vec{p})$.  Since  $V_{\varepsilon}$ is bounded, $H_{P}^{\varepsilon} $, $H_{P \vee E}^{\varepsilon} $,  and $H^{\varepsilon}$   are self-adjoint on $\mathcal{D}(H_{P})$,  $\mathcal{D}(H_{P \vee E})$ and $\mathcal{D}(H)$, respectively.   

\begin{corollary}\label{cor:effective}
Suppose that $V \in L^\infty( \mathbb{R}^3 ; \mathbb{R} )$ satisfies $\mathrm{supp} (\hat{V}) \subset B_{1} = \{ \vec{x} \in \mathbb{R}^3 , | \vec{x} | < 1 \}$. Set   $u_{i}(t):=e^{ - i t H_{P,\mathrm{eff}}^\varepsilon }  u_i $ with $u_i \in H^2( \mathbb{R}^3 )$, and 
$$\rho_{\varepsilon}(t):=  \frac{1}{ \sum_{i=1}^{l}  \| \mathcal{J}( u_{i}(t) )   \|^2}   \sum_{j=1}^{l}  \vert \mathcal{J}( u_{j}(t) ) \rangle   \langle \mathcal{J}(u_{j}(t) )\vert$$
for all $t \geq 0$. Under the assumptions of Theorem \ref{thm:isolated},  we have that
\begin{align*}
\big \langle e^{ - i t H^\varepsilon } \psi , &(O_{P} \otimes \mathds{1}_{\mathcal{H}_Q} \otimes \mathds{1}_{\mathcal{H}_E})  e^{ - i t H^\varepsilon } \psi \big \rangle = \text{Tr}_{\mathcal{H}_0}(\rho_{\varepsilon}(t)  (O_{P} \otimes \mathds{1}_{\mathcal{H}_E}) )   + \|O_{P} \| \Big(  \mathcal{O}( t \varepsilon    )  \\
& + \mathcal{O} \big ( (d/R)^{ \frac{-1+\gamma}{2} } \big ) + \mathcal{O} \big ( \langle t \rangle (d / R^2 )^{ - \frac{1}{2} } \big ) + \mathcal{O} ( t^2 d^{-\frac12} ) + \mathcal{O} (t d^{-  \beta  })\Big) ,
\end{align*}
for all $0 < \gamma \le 1$, all $t \geq 0$ and all $O_{P} \in \mathcal{B}(\mathcal{H}_{P})$.
\end{corollary}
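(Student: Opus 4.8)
The plan is to deduce Corollary~\ref{cor:effective} from Theorem~\ref{thm:isolated} applied to the perturbed Hamiltonian $H^{\varepsilon}=H+V_{\varepsilon}(\vec{x})$, and then to trade the dynamics generated by $H^{\varepsilon}_{P\vee E}$ on dressed one-atom states for the effective dynamics generated by $H^{\varepsilon}_{P,\mathrm{eff}}=E(-i\vec{\nabla}_{x})+V_{\varepsilon}(\vec{x})$, at the price of an error of order $t\varepsilon$. First I would note that $V_{\varepsilon}\in L^{\infty}(\mathbb{R}^{3};\mathbb{R})$ and that $V_{\varepsilon}(\vec{x})$ commutes with $\vec{x}$ and with every operator acting on $\mathcal{H}_{Q}\otimes\mathcal{H}_{E}$; hence $H^{\varepsilon}$, $H^{\varepsilon}_{P\vee E}$ are self-adjoint on $\mathcal{D}(H)$, $\mathcal{D}(H_{P\vee E})$, and none of Assumptions \textbf{(B1)}--\textbf{(B5)} (which concern only the initial states, the interactions $H_{P,Q}$, $H_{Q,E}$, and the photon field) is affected. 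A bounded perturbation changes neither the domains, nor the propagation/scattering estimates for the atom over the relevant time window, nor the interaction bounds used in the proof of Theorem~\ref{thm:isolated}, so that proof carries over with only straightforward modifications, yielding
\begin{equation*}
\big\langle e^{-itH^{\varepsilon}}\psi,(O_{P}\otimes\mathds{1})e^{-itH^{\varepsilon}}\psi\big\rangle=\mathrm{Tr}_{\mathcal{H}_{0}}\!\big(\rho_{P}\,e^{itH^{\varepsilon}_{P\vee E}}(O_{P}\otimes\mathds{1}_{\mathcal{H}_{E}})e^{-itH^{\varepsilon}_{P\vee E}}\big)+\|O_{P}\|\,\mathcal{R}(t,d,R),
\end{equation*}
where $\mathcal{R}(t,d,R)$ collects the four error terms of Theorem~\ref{thm:isolated}. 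Moving the time evolution onto the state on the factor $\mathcal{H}_{0}$, the first term equals $\mathrm{Tr}_{\mathcal{H}_{0}}(\tilde{\rho}(t)(O_{P}\otimes\mathds{1}_{\mathcal{H}_{E}}))$, where $\tilde{\rho}(t)$ is built from the vectors $e^{-itH^{\varepsilon}_{P\vee E}}\mathcal{J}(u_{j})$ exactly as $\rho_{\varepsilon}(t)$ is built from the $\mathcal{J}(u_{j}(t))$.

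The heart of the argument is then the effective-dynamics estimate
\begin{equation*}
\big\|e^{-itH^{\varepsilon}_{P\vee E}}\mathcal{J}(u)-\mathcal{J}\big(e^{-itH^{\varepsilon}_{P,\mathrm{eff}}}u\big)\big\|=\mathcal{O}(t\varepsilon)\,\|u\|_{H^{2}},\qquad u\in H^{2}(\mathbb{R}^{3}).
\end{equation*}
I would prove it by Duhamel's formula. On the one hand, the fiber decomposition recalled before Theorem~\ref{thm:isolated} gives $H_{P\vee E}\mathcal{J}(u)=\mathcal{J}(E(-i\vec{\nabla}_{x})u)$ on the momentum region singled out by $\chi_{\bar{B}_{\nu/2}}$, since there $\psi(\vec{p})$ is the ground state of $H(\vec{p})$ with eigenvalue $E(\vec{p})$; on the other hand $V_{\varepsilon}(\vec{x})$ is bounded, so the generator of the difference reduces to the commutator $[V_{\varepsilon}(\vec{x}),\mathcal{J}]$. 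Writing $\widehat{V_{\varepsilon}}(\vec{r})=\varepsilon^{-3}\widehat{V}(\vec{r}/\varepsilon)$, supported in $\{|\vec{r}|\le\varepsilon\}$ with $L^{1}$-norm $\mathcal{O}(1)$ thanks to $\mathrm{supp}(\widehat{V})\subset B_{1}$, a direct computation from \eqref{Ju} expresses $[V_{\varepsilon},\mathcal{J}]u$ as an integral against $\widehat{V_{\varepsilon}}(\vec{r})\widehat{u}(\vec{p}-\vec{r})$ of the increment $\chi_{\bar{B}_{\nu/2}}(\vec{p}-\vec{r})\psi(\vec{p}-\vec{r})-\chi_{\bar{B}_{\nu/2}}(\vec{p})\psi(\vec{p})$; since $\vec{p}\mapsto\chi_{\bar{B}_{\nu/2}}(\vec{p})\psi(\vec{p})$ is Lipschitz into $\mathcal{H}_{E}$ (real-analyticity of $\psi(\vec{p})$ on $\{|\vec{p}|<\nu<1\}$, Theorem~\ref{thm:anal}, together with the compact support of $\chi_{\bar{B}_{\nu/2}}$), this increment is $\mathcal{O}(|\vec{r}|)=\mathcal{O}(\varepsilon)$ uniformly on $\mathrm{supp}(\widehat{V_{\varepsilon}})$, whence $\|[V_{\varepsilon},\mathcal{J}]u\|=\mathcal{O}(\varepsilon)\|u\|$. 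Feeding this into Duhamel and using $\|e^{-isH^{\varepsilon}_{P,\mathrm{eff}}}u\|_{H^{2}}\le C\|u\|_{H^{2}}$ over the range of times considered (the generator is a bounded perturbation of a Fourier multiplier smooth on the relevant momentum region, and the momentum spreading caused by $V_{\varepsilon}$ occurs at rate $\mathcal{O}(\varepsilon)$, which is already absorbed into the $\mathcal{O}(t\varepsilon)$ budget) yields the claim.

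Finally I would combine the pieces. From the effective-dynamics estimate, the unitarity of $e^{-itH^{\varepsilon}_{P\vee E}}$ and $\|\psi(\vec{p})\|=1$, one gets $\big|\|\mathcal{J}(u_{j}(t))\|-\|\mathcal{J}(u_{j})\|\big|=\mathcal{O}(t\varepsilon)$ and $\big\|\,|\mathcal{J}(u_{j}(t))\rangle\langle\mathcal{J}(u_{j}(t))|-|e^{-itH^{\varepsilon}_{P\vee E}}\mathcal{J}(u_{j})\rangle\langle e^{-itH^{\varepsilon}_{P\vee E}}\mathcal{J}(u_{j})|\,\big\|_{\mathrm{tr}}=\mathcal{O}(t\varepsilon)$; summing over $j=1,\dots,l$ and comparing the normalizations (which agree up to $\mathcal{O}(t\varepsilon)$) gives $\|\tilde{\rho}(t)-\rho_{\varepsilon}(t)\|_{\mathrm{tr}}=\mathcal{O}(t\varepsilon)$, hence $|\mathrm{Tr}_{\mathcal{H}_{0}}((\tilde{\rho}(t)-\rho_{\varepsilon}(t))(O_{P}\otimes\mathds{1}_{\mathcal{H}_{E}}))|\le\|O_{P}\|\,\mathcal{O}(t\varepsilon)$; added to the first step this produces the stated identity with the announced error terms. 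The main obstacle is the effective-dynamics estimate of the second paragraph: the commutator bound relies on the Lipschitz regularity of the dressed ground state $\psi(\vec{p})$, and one must moreover check that the effective evolution does not push the momentum support out of the region $\{|\vec{p}|<\nu\}$ where $\psi(\vec{p})$ lives faster than the $\mathcal{O}(t\varepsilon)$ error allows --- this is precisely where the slow-variation hypothesis $\mathrm{supp}(\widehat{V})\subset B_{1}$ is genuinely used.
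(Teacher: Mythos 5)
Your proposal is correct and follows essentially the same route as the paper. The paper first notes that its entire proof of Theorem~\ref{thm:isolated} is already carried out with $H^{\varepsilon}$ and $H^{\varepsilon}_{P\vee E}$ in place of $H$ and $H_{P\vee E}$, so that theorem applies directly; it then reduces the corollary to the one-body effective-dynamics estimate $\|e^{-itH^{\varepsilon}_{P\vee E}}\mathcal{J}(u)-\mathcal{J}(e^{-itH^{\varepsilon}_{P,\mathrm{eff}}}u)\|\le Ct\varepsilon$ (Lemma~\ref{thm:eff-dyn}), which it proves exactly as you propose: Duhamel/Cook, cancellation of the $E(-i\vec{\nabla}_x)$ part via the fiber decomposition, leaving the commutator $V_{\varepsilon}\mathcal{J}(u_s)-\mathcal{J}(V_{\varepsilon}u_s)$, which is bounded by Young's inequality together with the Lipschitz continuity of $\vec{p}\mapsto\chi_{\bar{B}_{\nu/2}}(\vec{p})\psi(\vec{p})$ (a consequence of Theorem~\ref{thm:anal}) and the scaling $\mathrm{supp}(\hat{V}_{\varepsilon})\subset B_{\varepsilon}$. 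Two small remarks: the paper's bound actually needs only $\|u\|_{L^2}$ (not $\|u\|_{H^2}$), since the effective evolution is unitary on $L^2$ and the Lipschitz constant $M_{\nu}$ is uniform; and the worry about momentum escaping $B_{\nu}$ is automatically handled because $\Psi=\chi_{\bar{B}_{\nu/2}}\psi$ is globally defined (extended by zero), so the Lipschitz estimate holds on all of $\mathbb{R}^3$ regardless of how $\hat{u}_s$ spreads.
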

\vspace{3mm}
This result is similar to the one proven in \cite{BaChFaFrSi13_01}.


\section{Proof of Theorem \ref{thm:isolated} and Corollary \ref{cor:effective}} \label{proofs}
\subsection{Plan of the proof}

The estimates used in the proof of Theorem \ref{thm:isolated}  are insensitive to the presence of the potential $V_{\varepsilon}$ (see Corollary \ref{cor:effective}).  The bounds   derived in the next sections are valid for  both $H^{\varepsilon}$   and $H$. To keep consistent notations, we prove Theorem \ref{thm:isolated} with $H$ replaced by $H^{\varepsilon}$ and $H_{P \vee E}$ by $H_{P \vee E}^{\varepsilon}$.  
In Section \ref{section:loc}, we prove that, in the dressed atom state $\mathcal{J}(u)$, with $u$ as in Hypothesis \textbf{(B1)}, \emph{most} photons are localized in the ball of radius $d \gg R$ centered at the origin. Using the fact that the propagation velocity of photons is finite, we show, in addition, that after time $t$, for the dynamics generated by the atom-field Hamiltonian $H_{P \vee E}^\varepsilon$, most photons in the state $e^{-it H_{P \vee E}^\varepsilon} \mathcal{J}(u)$ remain localized in the ball of radius $d$ centered at the origin.

In Section \ref{sec:partition}, we introduce a partition of unity in Fock space (see \cite{DeGe99_01}) separating photons localized near the origin from photons localized near infinity. We rewrite the Hamiltonian $H^\varepsilon$ in the factorization of the Fock space determined by this partition of unity.

In Section \ref{section:main_proof}, we prove Theorem \ref{thm:isolated}, using  Cook's method, the partition of unity  of Section \ref{sec:partition} and the localization lemmas of Section \ref{section:loc}.

Proofs of some  technical lemmas are postponed to the appendix.

\subsection{Notations and conventions}
We remind the reader that for $a,b >0$,  we write
\begin{equation*}
a=\mathcal{O}(b)
\end{equation*}
if there is a constant $\mathrm{C}>0$ independent of  $t$, $d$ and $R$  such that $a \le \mathrm{C} b$.  For two vectors $\Psi_1, \Psi_2 \in \mathcal{H}$ and a constant $b>0$, we write $\Psi_1=\Psi_2 +\mathcal{O}(b)$ if $\|\Psi_1 -\Psi_2\|=\mathcal{O}(b)$.

Given two self-adjoint operators $A$ and $B$, the commutator $[ A , B ]$ is defined in the sense of quadratic forms on $\mathcal{D}(A) \cap \mathcal{D}( B )$ by
\begin{equation*}
\langle u , [ A , B ] v \rangle = \langle A u , B v \rangle - \langle B u , A v \rangle.
\end{equation*}
In our proof, we will encounter such a commutator that extends continuously to some suitable domain. The corresponding extension will be denoted by the same symbol, unless  confusion may arise. In the same spirit, we will often make use of ``Cook's method'' to compare two different dynamics. Suppose, for instance, that $B$ is $A$-bounded. Then we will write
\begin{equation*}
\big \| e^{ -i t B } u - e^{ - i t A } u \big \| \le \int_0^t \big \| ( A - B ) e^{ - i s A } u \big \| d s ,
\end{equation*}
for $u \in \D ( A )$. A proper justification of the previous inequality would be
\begin{align*}
\big \| e^{ -i t B } u - e^{ - i t A } u \big \| &= \sup_{ v \in \mathcal{D}( B ) , \| v \| = 1 } \big \vert  \big \langle v , u - e^{ i t B } e^{ - i t A } u \big \rangle  \big \vert \\
&= \sup_{ v \in \mathcal{D}( B ) , \| v \| = 1 }  \big \vert  \int_0^t  \langle v , e^{ i s B } ( A - B ) e^{ - i s A } u \big \rangle ds \big \vert \\
&= \Big \| \int_0^t  e^{ i s B } ( A - B ) e^{ - i s A } u \text{ }  ds \Big \| ,
\end{align*}
the last equality being a consequence of the fact that $A-B$ extends to an $A$-bounded operator. We will proceed similarly to estimate quantities like $\| B e^{ - i t A } u \| = \| e^{ i t A } B e^{ - i t A } u \|$ assuming for instance that $B$ is bounded and that the commutator $[ A , B ]$ extends to an $A$-bounded operator. Since  such arguments are standard, we will not repeat them in the rest of the paper.

\subsection{Localization of photons}\label{section:loc}

In this section, we begin by verifying that in the dressed atom state 
\begin{equation}\label{eq:defJ}
\mathcal{J}(u)(\vec{x}) = \frac{1}{(2 \pi)^{\frac32}} \int_{ \mathbb{R}^3 } \hat{u}(\vec{p})  e^{i\vec{x} \cdot (\vec{p}-\vec{P}_E)} \chi_{\bar{B}_{\nu/2}} (\vec{p}) \psi ( {\vec{p}} ) \, d^3 p ,
\end{equation}
(with $u \in L^2( \mathbb{R}^3 )$ as in Hypothesis \textbf{(B1)}), \emph{most} photons are localized near the origin. Here $\psi( \vec{p} )$ is a non-degenerate ground state of $H( \vec{p} )$,  and  $\vec{p} \mapsto \psi( \vec{p} )$ is real analytic on $\{ \vec{p} \in \mathbb{R}^3 , | \vec{p} | < \nu \}$, for any $0 < \nu <1$; see Theorem \ref{thm:anal} of Appendix \ref{B}  for more details.

Next, we consider the evolution of the state $\mathcal{J}( u )$  under  the dynamics generated by $H_{P \vee E}^\varepsilon$; (we recall that $H_{P \vee E}^\varepsilon$ is the Hamiltonian for the atom in the external potential $V_\varepsilon$ and interacting with the photon field). Using that the propagation velocity of photons is finite, we are able to prove that, for times not too large, most photons remain localized in the ball of radius $d \gg 1$ centered at the origin. This property will be important in the proof of our main theorem (see Section \ref{section:main_proof}), since it will allow us to show that the interaction between photons close to the atom and the system $Q$ remains small for times not too large.

We begin with three lemmas whose proofs are postponed to Appendix \ref{B}. The first one establishes polynomial decay in $\vert \vec{x} \vert$ in the state $\mathcal{J}(u)$, assuming that $u$ is compactly supported. It is a simple consequence of standard properties of the Fourier transform combined with the analyticity of $\vec{p} \mapsto \psi(\vec{p})$ (where, recall, $\psi(\vec{p})$ is a ground state of the fiber Hamiltonian $H(\vec{p})$). In what follows, we use the identification $\mathcal{H}_{P} \otimes \mathcal{H}_{E} \simeq L^2 ( \mathbb{R}^3 ; \mathbb{C}^2 \otimes \mathcal{H}_{E} )$.
\begin{lemma}\label{lm:a1}
Let $u \in L^2( \mathbb{R}^3 )$ be as in Hypothesis \textbf{(B1)}. Then
\begin{equation*}
\big \| ( 1 + | \vec{x} | )^{\mu} Ê\mathcal{J}( u )  \big \|_{ \mathcal{H}_{P} \otimes \mathcal{H}_{E} } \le \mathrm{C}_{\mu} R^\mu,
\end{equation*} 
for all $\mu \ge 0$, where $\mathrm{C}_\mu$ is a positive constant independent of $R>1$.
\end{lemma}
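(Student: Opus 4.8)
The plan is to prove the polynomial decay estimate
$\big \| ( 1 + | \vec{x} | )^{\mu} \mathcal{J}( u ) \big \|_{ \mathcal{H}_{P} \otimes \mathcal{H}_{E} } \le \mathrm{C}_{\mu} R^\mu$
by working in Fourier space and exploiting the real analyticity of the ground state map $\vec{p} \mapsto \psi(\vec{p})$ on $\{ |\vec{p}| < \nu \}$. The key point is that, although $\mathcal{J}(u)(\vec{x})$ involves the operator $e^{i\vec{x} \cdot (\vec{p} - \vec{P}_E)}$ rather than a scalar plane wave, the $\vec{x}$-dependence is still entirely through a unitary multiplier in Fourier variables, so multiplication by $(1+|\vec{x}|)^\mu$ can be traded for differentiation in $\vec{p}$ via the standard Paley--Wiener/integration-by-parts mechanism.

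First I would reduce to integer $\mu$: it suffices to prove the bound for $\mu \in \mathbb{N}$, since for general $\mu \ge 0$ one interpolates (or simply uses $(1+|\vec{x}|)^\mu \le (1+|\vec{x}|)^{\lceil \mu \rceil}$ and $R^{\lceil \mu \rceil} \le R^{\lceil \mu \rceil}$, absorbing the harmless power of $R$ since $R \ge 1$; more cleanly, $\| (1+|\vec x|)^\mu \mathcal{J}(u)\| \le \| (1+|\vec x|)^{\lceil \mu\rceil}\mathcal{J}(u)\|^{\mu/\lceil\mu\rceil}\|\mathcal{J}(u)\|^{1-\mu/\lceil\mu\rceil}$ by H\"older). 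Next, using $u(\vec{x}) = R^{-3/2} v(R^{-1}\vec{x})$ from Hypothesis \textbf{(B1)}, we have $\hat{u}(\vec{p}) = R^{3/2} \hat{v}(R\vec{p})$, and a change of variables $\vec{q} = R\vec{p}$ shows that proving the estimate for general $u$ of this scaled form reduces, after tracking the powers of $R$, to the case $R = 1$ with the extra factor $R^\mu$ appearing exactly as claimed — this is where the right-hand side power $R^\mu$ comes from. The remaining task is the $R=1$ bound: $\| (1+|\vec{x}|)^\mu \mathcal{J}(v) \| \le \mathrm{C}_\mu$ for $v$ supported in the unit ball.

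For that bound, I would write $\big[(1+|\vec x|)^\mu \mathcal{J}(u)\big](\vec x)$ and note that it suffices to bound $\||\vec{x}|^k \mathcal{J}(u)\|$ for $0 \le k \le \mu$. Writing $|\vec x|^{2k} = (-\Delta_{\vec p})^k$ acting on the Fourier kernel — more precisely, expressing powers of $x_j$ as $i\partial_{p_j}$ acting on $e^{i\vec x\cdot(\vec p - \vec P_E)}$ and integrating by parts in $\vec p$ — one gets $|\vec x|^{2k}\mathcal{J}(u)(\vec x)$ as an integral of $e^{i\vec x\cdot(\vec p-\vec P_E)}$ against $\vec p$-derivatives up to order $2k$ of $\hat u(\vec p)\chi_{\bar B_{\nu/2}}(\vec p)\psi(\vec p)$. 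Here the boundary terms vanish because $\chi_{\bar B_{\nu/2}}$ is compactly supported inside $\{|\vec p|<\nu\}$, and crucially all the $\vec p$-derivatives of $\psi(\vec p)$ up to any order are uniformly bounded on $\mathrm{supp}(\chi_{\bar B_{\nu/2}}) \subset \bar B_{\nu/2}$ thanks to the analyticity (hence smoothness with locally bounded derivatives) of $\vec p \mapsto \psi(\vec p)$ asserted in Theorem \ref{thm:anal}; $\chi_{\bar B_{\nu/2}}$ itself is $C_0^\infty$, and $\hat u \in \mathcal{S}$ since $u$ is compactly supported and smooth—actually $u$ need only be $L^2$ with compact support for the scaled case, and in the $R=1$ reduction $v\in L^2$ suffices because only finitely many derivatives hit the (bounded, compactly supported, smooth in $\vec p$) factors $\chi_{\bar B_{\nu/2}}\psi$ while $\hat v$ itself stays inside an $L^2$ norm. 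Then Plancherel in $\vec p$ together with $\| e^{i\vec x\cdot(\vec p-\vec P_E)}\|=1$ (unitarity on $\mathbb{C}^2\otimes\mathcal{H}_E$) and the Leibniz rule yields $\||\vec x|^{2k}\mathcal{J}(u)\| \le \mathrm{C}_k \sum_{|\alpha|\le 2k}\|\partial^\alpha(\hat u \chi_{\bar B_{\nu/2}}\psi)\| \le \mathrm{C}'_k$.

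The main obstacle — and the only genuinely nontrivial input — is justifying the uniform boundedness of the $\vec p$-derivatives of $\psi(\vec p)$ on the support of $\chi_{\bar B_{\nu/2}}$; this is precisely what real analyticity on the open ball $\{|\vec p|<\nu\}$ buys us, via Theorem \ref{thm:anal}, since $\mathrm{supp}(\chi_{\bar B_{\nu/2}}) \subset \bar B_{\nu/2}$ is a compact subset of that open ball. Everything else — the scaling reduction extracting $R^\mu$, the integration by parts in $\vec p$, Plancherel, and the Leibniz rule bookkeeping — is routine. I would therefore structure the proof as: (1) scaling reduction to $R=1$; (2) integration by parts in Fourier space to convert $|\vec x|^{2k}$ into a differential operator; (3) Plancherel plus unitarity plus the smoothness/boundedness of $\chi_{\bar B_{\nu/2}}\psi$ from analyticity to close the estimate.
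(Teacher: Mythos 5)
Your proposal attacks the lemma with the same basic idea as the paper: reduce to integer $\mu$, integrate by parts in Fourier space to trade powers of $|\vec x|$ for $\vec p$-derivatives, and use boundedness of all derivatives of $\psi(\vec p)$ on the compact set $\mathrm{supp}(\chi_{\bar B_{\nu/2}})$, which follows from Theorem \ref{thm:anal}. However, your $R$-bookkeeping is wrong at two related points, and this is a genuine gap. First, the claimed ``reduction to $R=1$'' via the substitution $\vec q = R\vec p$ does not hold: the integrand becomes $\hat v(\vec q)\,\chi_{\bar B_{\nu/2}}(\vec q/R)\,\psi(\vec q/R)$, not $\hat v(\vec q)\,\chi_{\bar B_{\nu/2}}(\vec q)\,\psi(\vec q)$, because neither the momentum cutoff nor the fiber ground state rescales with $R$. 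Second, and as a consequence, your final integration-by-parts computation is carried out for $\mathcal{J}(u)$ (not $\mathcal{J}(v)$) and yet terminates in an $R$-independent constant $\mathrm{C}_k'$, which cannot be correct since the lemma's bound genuinely grows like $R^\mu$. The paper's proof instead keeps $R$ explicit: after Leibniz, any derivative landing on $\hat u(\vec p)=R^{3/2}\hat v(R\vec p)$ obeys $\partial^\alpha_{\vec p}\big[\hat v(R\vec p)\big]=R^{|\alpha|}(\partial^\alpha\hat v)(R\vec p)$ and therefore contributes a factor $R^{|\alpha|}$, while derivatives landing on $\chi_{\bar B_{\nu/2}}\psi$ contribute $R$-independent constants; together with the normalization identity $\|(\partial^\alpha\hat v)(R\,\cdot)\|_{L^2}=R^{-3/2}\|\partial^\alpha\hat v\|_{L^2}$, which cancels the $R^{3/2}$ prefactor, this produces exactly $R^{|\alpha|}\le R^\mu$.

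A smaller inaccuracy: your remark that ``$\hat v$ stays inside an $L^2$ norm while only the factors $\chi_{\bar B_{\nu/2}}\psi$ get differentiated'' is not how the Leibniz rule distributes derivatives—terms with $\partial^\beta\hat v$, $\beta\neq 0$, do appear. What actually makes $v\in L^2$ with compact support sufficient is that $\partial^\beta\hat v=\widehat{(-i\vec x)^\beta v}\in L^2\cap L^\infty$ for every multi-index $\beta$ by Paley--Wiener, so those terms are controlled anyway; your conclusion is right but the stated reason is wrong.
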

In the next lemma, we control the number of photons in the fibered ground state $\psi( \vec{p} )$. Based on the pull-through formula, the proof of Lemma \ref{lm:a2} follows the one of Lemma 1.5 in \cite{juerg}.
\begin{lemma}\label{lm:a2}
Let $0 < \nu < 1$. For all $\vec{p} \in \mathbb{R}^3$ such that $| \vec{ p } | < \nu$ and $\delta \in \mathbb{R}$, we have that
\begin{equation*}
\psi( \vec{p} ) \in \mathcal{D} ( e^{ \delta N } ). 
\end{equation*}
Moreover,
\begin{equation*}
\sup_{ \vec{p} \, \in \bar{B}_{ \nu/2 } } \| e^{ \delta N } \psi( \vec{p} ) \| \le \mathrm{C}_\delta .
\end{equation*}
\end{lemma}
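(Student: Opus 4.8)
The plan is to establish Lemma~\ref{lm:a2} via a standard pull-through argument for the number operator, uniformly in $\vec p$ over the compact set $\bar B_{\nu/2}$. First I would recall the pull-through formula for the fibered Hamiltonian $H(\vec p)$: since $\psi(\vec p)$ is the ground state with eigenvalue $E(\vec p)$, one has, in the sense of distributions in $\underline k$,
\begin{equation*}
a(\underline k)\,\psi(\vec p) = -\bigl(H(\vec p - \vec k) + |\vec k| - E(\vec p)\bigr)^{-1} \bigl[a(\underline k),H_{P,E}(\vec p)\bigr]\,\psi(\vec p),
\end{equation*}
where the commutator $[a(\underline k),H_{P,E}(\vec p)]$ is the multiplication/coupling operator with coefficient $\overline{h(\underline k)}$ (of the form $-i\lambda_0\chi(\vec k)|\vec k|^{1/2}\vec\varepsilon(\underline k)\cdot\vec\sigma$ up to the fiber momentum shift), which is square-integrable with an $L^2$-norm bounded uniformly in $\vec p$. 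The key input is the resolvent bound: for $|\vec p|<\nu<1$ and $|\lambda_0|<\lambda_c(\nu)$, the gap estimate $H(\vec p - \vec k) + |\vec k| - E(\vec p) \ge c|\vec k| - C|\lambda_0|^2 \gtrsim |\vec k|$ for small coupling (uniformly in $\vec p \in \bar B_{\nu/2}$ and in $\vec k$ on the support of $\chi$, using $|\vec k|\le 1$ and that $E(\vec p)$ stays near $0$), so that $\|a(\underline k)\psi(\vec p)\| \le C|\vec k|^{-1/2}\,|h(\underline k)| \le C'\chi(\vec k)$, which is in $L^1\cap L^2$.

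Next I would iterate this to control arbitrary moments of $N$. Applying the pull-through $n$ times and using the resolvent bounds at each stage gives pointwise estimates on $\|a(\underline k_1)\cdots a(\underline k_n)\psi(\vec p)\|$ by a product $\prod_j G(\underline k_j)$ with $G \in L^2(\underline{\mathbb R}^3)$ and $\|G\|_{L^2}\le C_0$ uniformly in $\vec p$, where $C_0$ is proportional to $|\lambda_0|$ times a fixed constant. This yields $\|N^{n/2}\psi(\vec p)\|^2 = \int \|a(\underline k_1)\cdots a(\underline k_n)\psi(\vec p)\|^2\, d\underline k_1\cdots d\underline k_n \le n!\, C_0^{2n}$ (the $n!$ coming from symmetrization), hence $\langle\psi(\vec p), N^n \psi(\vec p)\rangle \le n!\,C_0^{2n}$ for all $n$. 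Summing the exponential series $\sum_n \frac{\delta^n}{n!}\langle\psi(\vec p),N^n\psi(\vec p)\rangle \le \sum_n (\delta C_0^2)^n$ shows that for $|\delta| C_0^2 < 1$, i.e., for $|\lambda_0|$ sufficiently small relative to $\delta^{-1/2}$, we get $\psi(\vec p)\in\mathcal D(e^{\delta N/2})$ with $\|e^{\delta N/2}\psi(\vec p)\|^2 \le (1-\delta C_0^2)^{-1}$. Since the statement allows $\delta\in\mathbb R$ arbitrary but $|\lambda_0|<\lambda_c(\nu)$ is already fixed, one either absorbs this into the smallness of $\lambda_0$ or notes that for the values of $\delta$ actually used later (Assumption \textbf{(B2)}) the coupling is small enough; to get all $\delta$ one replaces $\delta$ by $2\delta$ throughout, which is harmless. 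The uniform bound $\sup_{\vec p\in\bar B_{\nu/2}}\|e^{\delta N}\psi(\vec p)\| \le C_\delta$ then follows because every estimate above was uniform in $\vec p$ on that compact set (the resolvent gap, the coupling function norm, and $E(\vec p)$ are all continuous, hence bounded, on $\bar B_{\nu/2}$).

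The main obstacle is the uniform-in-$\vec p$ control of the resolvent $(H(\vec p - \vec k)+|\vec k|-E(\vec p))^{-1}$ appearing in each pull-through step, and more precisely showing that its operator norm is bounded by $C|\vec k|^{-1}$ with a constant independent of $\vec p\in\bar B_{\nu/2}$ and of $\vec k$ in the cutoff region. This requires knowing that $\inf\mathrm{spec}(H(\vec q)) = E(\vec q)$ varies continuously (indeed real-analytically, by Theorem~\ref{thm:anal}) and stays within a small neighborhood of $0$ for $|\vec q|<\nu$, so that $H(\vec p-\vec k)-E(\vec p) \ge E(\vec p - \vec k) - E(\vec p) \ge -C|\vec k|$ while the added $|\vec k|$ term dominates once one also accounts for the relative bound of $H_{P,E}$ — this is exactly the type of infrared-regular estimate carried out for the ground state in \cite{juerg}, Lemma~1.5, which is why the proof can be said to follow that reference. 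I would also need to be a little careful that the pull-through formula is initially justified on a form core and the resulting identities extend by density, but this is routine. The rest — iterating, symmetrizing, summing the exponential series — is bookkeeping that I would not grind through in detail.
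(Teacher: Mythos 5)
Your overall route --- pull $a(\underline{k})$ through the fiber resolvent using the gap estimate of Lemma~\ref{app:GS}, iterate, and sum a moment generating function --- is the same one the paper follows (after \cite[Lemma~1.5]{juerg}), but a bookkeeping error in the final summation step invalidates the conclusion for general $\delta$. You write $\|N^{n/2}\psi(\vec p)\|^2 = \int \|a(\underline{k}_1)\cdots a(\underline{k}_n)\psi(\vec p)\|^2\, d\underline{k}_1\cdots d\underline{k}_n$ and insert a factor $n!$ ``from symmetrization'' to get $\langle\psi(\vec p),N^n\psi(\vec p)\rangle \le n!\,C_0^{2n}$. That integral is not $\langle N^n\rangle$; it equals the falling-factorial moment $\langle\psi(\vec p),N(N-1)\cdots(N-n+1)\psi(\vec p)\rangle$, and the iterated pull-through bounds it by $C_0^{2n}$ with no extra $n!$. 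The spurious $n!$ is fatal: your exponential series becomes $\sum_n (\delta C_0^2)^n$, which converges only when $|\delta| C_0^2 < 1$, i.e.\ only when $|\lambda_0|$ is small compared to $|\delta|^{-1/2}$. The lemma, however, asserts $\psi(\vec p)\in\mathcal{D}(e^{\delta N})$ for \emph{every} $\delta\in\mathbb{R}$ under the fixed hypothesis $|\lambda_0|<\lambda_c(\nu)$, where $\lambda_c(\nu)$ does not shrink with $|\delta|$. The patches you suggest (replacing $\delta$ by $2\delta$, or appealing to the specific $\delta$ used in Hypothesis \textbf{(B2)}) do not remove this $\delta$-dependent smallness requirement.

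The repair is exactly what the paper does: estimate the $n$-photon wave function $\psi^{(n)}(\underline{k}_1,\dots,\underline{k}_n)(\vec p) = (n!)^{-1/2}\langle\Omega,\prod_i b(\underline{k}_i)\psi(\vec p)\rangle$ directly. The Fock normalization places $\frac{1}{n!}$ in the \emph{denominator}, giving $\int\|\psi^{(n)}\|^2 \le \frac{(4\pi)^n (D(\vec p)|\lambda_0|)^{2n}}{n!}\|\psi(\vec p)\|^2$, so that $\|e^{\delta N}\psi(\vec p)\|^2 = \sum_n e^{2\delta n}\int\|\psi^{(n)}\|^2$ converges for every $\delta$, uniformly over $\bar{B}_{\nu/2}$ since $D(\vec p)$ is bounded there. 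Equivalently, in your moment language, keep $\langle N(N-1)\cdots(N-n+1)\rangle\le C_0^{2n}$ and use $\langle e^{2\delta N}\rangle = \sum_n \frac{(e^{2\delta}-1)^n}{n!}\langle N(N-1)\cdots(N-n+1)\rangle$, which converges for all $\delta\ge 0$ (for $\delta<0$, $e^{\delta N}$ is a contraction and there is nothing to prove). A secondary, non-fatal slip: the resolvent gain is $\mathcal{O}(|\vec k|^{-1})$, so combined with the coupling function $\sim\chi(\vec k)|\vec k|^{1/2}$ the one-photon estimate should read $\chi(\vec k)|\vec k|^{-1/2}$ rather than $\chi(\vec k)$; this is still square-integrable and does not affect the structure of the argument.
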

Next, using Lemmas \ref{lm:a1} and \ref{lm:a2}, we prove that, in the dressed atom state $\mathcal{J}(u)$ (with $u$ as in Hypothesis \textbf{(B1)}), most photons are localized in the ball $\{ \vec{y} \in \mathbb{R}^3 , | \vec{y} | \le d \}$. 
\begin{lemma}\label{lm:a3}
Let $u \in L^2( \mathbb{R}^3 )$ be as in Hypothesis \textbf{(B1)}. Then, for all $d > R \ge 1$ and $0 < \gamma \le 1$,
\begin{equation*}
\big \| \Gamma( \mathds{1}_{ | \vec{y} | \le d } ) \mathcal{J} ( u ) - \mathcal{J} ( u ) \big \| = \mathcal{O} \big ( (d/R)^{-1+\gamma} \big ).
\end{equation*}
\end{lemma}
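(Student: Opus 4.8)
The plan is to peel off two elementary localization errors and reduce the bound to a uniform soft-photon estimate on the fibered ground states $\psi(\vec{p})$. First, replace $\mathcal{J}(u)$ by the atom-localized vector $\Psi:=\mathds{1}_{|\vec{x}|\le d/2}\,\mathcal{J}(u)$. Since $\mathrm{supp}(u)\subset B_R$, Lemma~\ref{lm:a1} with $\mu=1$ gives $\|\mathcal{J}(u)-\Psi\|\le (2/d)\|(1+|\vec{x}|)\mathcal{J}(u)\|=\mathcal{O}(R/d)$, and because $\Gamma(\mathds{1}_{|\vec{y}|\le d})$ is a contraction,
\[
\big\|\mathcal{J}(u)-\Gamma(\mathds{1}_{|\vec{y}|\le d})\mathcal{J}(u)\big\|\le \big\|\Psi-\Gamma(\mathds{1}_{|\vec{y}|\le d})\Psi\big\|+\mathcal{O}(R/d).
\]
Since $\Gamma$ of an orthogonal projection $p$ is again an orthogonal projection and, by induction on the photon number, $\mathds{1}-\Gamma(p)\le \mathrm{d}\Gamma(\mathds{1}-p)$, the first term obeys $\|\Psi-\Gamma(\mathds{1}_{|\vec{y}|\le d})\Psi\|^2=\langle\Psi,(\mathds{1}-\Gamma(\mathds{1}_{|\vec{y}|\le d}))\Psi\rangle\le \langle\Psi,\mathrm{d}\Gamma(\mathds{1}_{|\vec{y}|> d})\Psi\rangle$.

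On $\mathrm{supp}(\Psi)$ one has $|\vec{x}|\le d/2$, so $\mathds{1}_{|\vec{y}|>d}\le\mathds{1}_{|\vec{y}-\vec{x}|>d/2}$ as one-photon operators; monotonicity of $\mathrm{d}\Gamma$ together with $0\le\mathds{1}_{|\vec{x}|\le d/2}B\,\mathds{1}_{|\vec{x}|\le d/2}\le B$ give $\langle\Psi,\mathrm{d}\Gamma(\mathds{1}_{|\vec{y}|>d})\Psi\rangle\le\langle\mathcal{J}(u),\mathrm{d}\Gamma(\mathds{1}_{|\vec{y}-\vec{x}|>d/2})\mathcal{J}(u)\rangle$. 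The shifted cutoff is exactly what the dressing transformation absorbs: inserting the definition \eqref{Ju} of $\mathcal{J}(u)$, conjugating the $\vec{x}$-dependent cutoff by $e^{\pm i\vec{x}\cdot\vec{P}_E}$ (which turns $\mathds{1}_{|\vec{y}-\vec{x}|>d/2}$ into the $\vec{x}$-independent operator $\mathds{1}_{|\vec{y}|>d/2}$, since $e^{i\vec{x}\cdot\vec{P}_E}$ translates $\vec{y}$), and carrying out the $\vec{x}$-integral (which produces $\delta(\vec{p}-\vec{p}')$), one gets
\[
\big\langle\mathcal{J}(u),\mathrm{d}\Gamma(\mathds{1}_{|\vec{y}-\vec{x}|>d/2})\mathcal{J}(u)\big\rangle=\int_{\R^3}|\hat{u}(\vec{p})|^2\,\chi_{\bar{B}_{\nu/2}}(\vec{p})^2\,\big\langle\psi(\vec{p}),\mathrm{d}\Gamma(\mathds{1}_{|\vec{y}|>d/2})\psi(\vec{p})\big\rangle\,d^3p .
\]
With $\int|\hat{u}|^2=\|u\|_{L^2}^2=\|v\|_{L^2}^2$ (independent of $R$) and $\mathds{1}_{|\vec{y}|>d/2}\le(2/d)^{2(1-\gamma)}|\vec{y}|^{2(1-\gamma)}$ on $\{|\vec{y}|>d/2\}$, the right-hand side is $\mathcal{O}(d^{-2+2\gamma})$ as soon as $C_\gamma:=\sup_{|\vec{p}|\le\nu/2}\langle\psi(\vec{p}),\mathrm{d}\Gamma(|\vec{y}|^{2(1-\gamma)})\psi(\vec{p})\rangle^{1/2}<\infty$ for each $0<\gamma\le1$. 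Combining the two errors and using $d>R\ge1$ (so $R/d$ and $d^{-1+\gamma}$ are both $\le(d/R)^{-1+\gamma}$) then gives the lemma.

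The substantive step is $C_\gamma<\infty$, i.e.\ a bound, uniform in $\vec{p}\in\bar{B}_{\nu/2}$, on the spread of the photon cloud of $\psi(\vec{p})$ in the photon position variable. I would get it from the pull-through formula: applying $a_\lambda(\vec{k})$ to $H(\vec{p})\psi(\vec{p})=E(\vec{p})\psi(\vec{p})$ and commuting,
\[
a_\lambda(\vec{k})\psi(\vec{p})=-\big[H(\vec{p}-\vec{k})-E(\vec{p})+|\vec{k}|\big]^{-1}h_{0,\lambda}(\vec{k})\psi(\vec{p}),
\]
where $h_{0,\lambda}$ is the fiber coupling, of size $\|h_{0,\lambda}(\vec{k})\|\lesssim\chi(\vec{k})|\vec{k}|^{1/2}$. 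Because the ground state energy has group velocity $|\vec{\nabla}E|<1$ on $\bar{B}_\nu$ (see \cite{FaFrSc13_01} and Theorem~\ref{thm:anal}), one has $H(\vec{p}-\vec{k})-E(\vec{p})+|\vec{k}|\ge(1-\theta)|\vec{k}|$ with $\theta<1$, so the resolvent is $\le C/|\vec{k}|$ uniformly in $\vec{p}$; with the analogous estimate on its $\vec{k}$-derivative (where $\psi(\vec{p})\in\D(e^{\delta N})$, Lemma~\ref{lm:a2}, absorbs the unbounded operators produced by $\vec{\nabla}_{\vec{k}}$), this shows that $\vec{k}\mapsto a_\lambda(\vec{k})\psi(\vec{p})$ has precisely the $\vec{k}$-regularity of $\chi(\vec{k})|\vec{k}|^{-1/2}$, which lies in $H^{\sigma}(\R^3)$ for every $\sigma<1$ but not for $\sigma=1$ --- which is exactly why $1-\gamma$, and not $1$, is the natural exponent here. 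Since $\vec{y}=i\vec{\nabla}_{\vec{k}}$, Plancherel gives $\langle\psi(\vec{p}),\mathrm{d}\Gamma(|\vec{y}|^{2\sigma})\psi(\vec{p})\rangle=\sum_\lambda\|(-\Delta_{\vec{k}})^{\sigma/2}(a_\lambda(\cdot)\psi(\vec{p}))\|_{L^2_{\vec{k}}}^2$, and taking $\sigma=1-\gamma$ yields $C_\gamma<\infty$. The hard part is this last estimate --- the uniform-in-$\vec{p}$ resolvent bounds pushing the $\vec{k}$-regularity of $a_\lambda(\cdot)\psi(\vec{p})$ up to, but not past, $H^1$; the only routine subtlety elsewhere is the Fubini/Plancherel manipulation behind the collapse of the $\vec{x}$-integral.
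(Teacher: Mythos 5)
Your argument is correct and reaches the stated bound, but it takes a slightly different route than the paper's. You peel off a spatial cutoff $\mathds{1}_{|\vec{x}|\le d/2}$ first, pay a separate $\mathcal{O}(R/d)$ price via Lemma~\ref{lm:a1}, and then exploit the geometric observation that on that support $\mathds{1}_{|\vec{y}|>d}\le\mathds{1}_{|\vec{y}-\vec{x}|>d/2}$ as one-particle operators; after conjugating by $e^{\pm i\vec{x}\cdot\vec{P}_E}$ the cutoff becomes $\vec{x}$-independent, and the $\vec{x}$-integral collapses by Plancherel to a clean integral of $|\hat u(\vec p)|^2$ against $\|\mathrm{d}\Gamma(\mathds{1}_{|\vec y|>d/2})^{1/2}\psi(\vec p)\|^2$. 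The paper instead skips any spatial cutoff: it keeps $\mathcal{J}(u)$, conjugates to $\mathrm{d}\Gamma(|\vec y+\vec x|^{2-2\gamma})$, and then uses the scalar inequality $|\vec y+\vec x|^{2-2\gamma}\le c_\gamma(|\vec y|^{2-2\gamma}+|\vec x|^{2-2\gamma})$ plus Cauchy--Schwarz to split into a $|\vec y|$-piece (handled as you handle it) and an $|\vec x|$-piece controlled again by Lemma~\ref{lm:a1}. Your version yields the two contributions $\mathcal{O}(R/d)$ and $\mathcal{O}(d^{-1+\gamma})$ and uses $R\ge 1$ to absorb both into $(d/R)^{-1+\gamma}$, whereas the paper's grouping produces $(d/R)^{-1+\gamma}$ directly; the final orders agree. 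Both routes rest on the same substantive ingredient, the uniform-in-$\vec p$ bound $\sup_{\vec p\in\bar B_{\nu/2}}\|\mathrm{d}\Gamma(|\vec y|^{2-2\gamma})^{1/2}\psi(\vec p)\|<\infty$, proved (identically in both) by the pull-through formula, the positive lower bound of Lemma~\ref{app:GS}, the $\vec k$-derivative estimate, and interpolation between $|\vec k|^{-1/2}$ and $|\vec k|^{-3/2}$ decay — your explanation of why the exponent lands just below $H^1$ in $\vec k$ is exactly what governs the loss of a $\gamma$ in the end.

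Two minor points worth tightening if you write this out fully. First, the step ``$0\le\mathds{1}_{|\vec{x}|\le d/2}\,B\,\mathds{1}_{|\vec{x}|\le d/2}\le B$'' uses that the spatial characteristic function commutes with $\mathrm{d}\Gamma(\mathds{1}_{|\vec y-\vec x|>d/2})$ (both are fibered over $\vec x$), which you invoke implicitly; state it. Second, when you differentiate the resolvent $[H(\vec p-\vec k)-E(\vec p)+|\vec k|]^{-1}$ the term $\nabla_{\vec k}H(\vec p-\vec k)$ contains $\vec P_E$, which is unbounded; the paper controls it by applying the resolvent bound again together with $\psi(\vec p)\in\mathcal{D}(N)\cap\mathcal{D}(H(\vec p))$, so you should explicitly note that Lemma~\ref{lm:a2} (and the domain properties of $\psi(\vec p)$) is what makes this derivative estimate rigorous rather than saying it merely ``absorbs the unbounded operators.''
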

The next lemma  is another, related consequence of Lemmas \ref{lm:a1} and \ref{lm:a2}. Considering the dynamics generated by $H_{P \vee E}^\varepsilon$  and the initial state $\mathcal{J}( u )$, with $u$ as in Hypothesis \textbf{(B1)},  Lemma \ref{lm:f1}  shows that, after times $t$ such that $0 \le t \ll d/R$, most photons remain localized in the ball $\{ \vec{y} \in \mathbb{R}^3 , | \vec{y} | \le d \}$. This is a consequence of  the fact that the propagation velocity of photons is finite.
\begin{lemma}\label{lm:f1}
Let $u \in L^2( \mathbb{R}^3 )$ be as in Hypothesis \textbf{(B1)}. Then, for all $d > R \ge 1$ and $t \ge 0$,
\begin{equation}\label{eq:f2}
\big \| \big ( \mathds{1} - \Gamma( \mathds{1}_{ | \vec{y} | \le d } ) \big ) e^{ - i t H_{P \vee E}^\varepsilon } \mathcal{J} ( u ) \big \| = \mathcal{O}\big ( t^{\frac54} d^{-\frac12} \big ) + \mathcal{O} \big ( \langle t \rangle^{\frac34} ( d / R )^{-\frac12} \big ),
\end{equation}
where we recall the notation $\langle t \rangle := ( 1 + t^2 )^{1/2}$.
\end{lemma}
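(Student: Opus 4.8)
The plan is to prove Lemma~\ref{lm:f1} by comparing the interacting dynamics $e^{-itH_{P\vee E}^\varepsilon}\mathcal J(u)$ with a reference situation in which the photon localization is controlled directly. The basic idea is that the photon number is subconserved by $H_{P\vee E}^\varepsilon$ in a suitable sense (since $H_{P,E}$ is linear in creation/annihilation operators with an ultraviolet cutoff), and that, because photons propagate with finite speed, a photon that starts inside the ball $\{|\vec y|\le R\}$ cannot reach $\{|\vec y|>d\}$ before time of order $d-R$. First I would use Lemma~\ref{lm:a3} to replace $\mathcal J(u)$ by $\Gamma(\mathds 1_{|\vec y|\le R})\mathcal J(u)$ (or more precisely $\Gamma(\mathds 1_{|\vec y|\le d})\mathcal J(u)$), paying a price $\mathcal O((d/R)^{-1+\gamma})$; actually, since we want the cleaner statement, I would first observe $\|\Gamma(\mathds 1_{|\vec y|\le R})\mathcal J(u)-\mathcal J(u)\|=\mathcal O((d/R)^{-1+\gamma})$ holds even with $R$ in place of $d$ after rescaling, but to keep to what is proven I will use Lemma~\ref{lm:a3} as stated and absorb the difference into the second error term.

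Next I would estimate $\big\|\big(\mathds 1-\Gamma(\mathds 1_{|\vec y|\le d})\big)e^{-itH_{P\vee E}^\varepsilon}\Gamma(\mathds 1_{|\vec y|\le d})\mathcal J(u)\big\|$ by a Cook-type argument. Introduce a smooth version $\chi_t(\vec y)=\chi\big((|\vec y|-ct')/d\big)$ interpolating between the two localizations as $t'$ runs over $[0,t]$, or more simply use the Heisenberg evolution of the second-quantized cutoff: writing $f(t'):=\langle e^{-it'H_{P\vee E}^\varepsilon}\mathcal J(u),\ \mathrm d\Gamma(g_{t'})\,e^{-it'H_{P\vee E}^\varepsilon}\mathcal J(u)\rangle$ for a family of smooth multipliers $g_{t'}(\vec y)$ supported in $\{|\vec y|\gtrsim d\}$ at $t'=t$ and vanishing near the initial support, one differentiates in $t'$. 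The derivative produces two kinds of terms: (i) the explicit $\partial_{t'}g_{t'}$ term, which is negative (or $\mathcal O(1/d)$ in norm) if $g_{t'}$ is transported outward at speed $>1$, and (ii) the commutator $[H_{P\vee E}^\varepsilon,\mathrm d\Gamma(g_{t'})]$. The free part $[H_E,\mathrm d\Gamma(g_{t'})]=\mathrm d\Gamma(i[|\vec k|,g_{t'}(\vec y)])$ is handled by the finite-propagation-speed estimate: since $|\vec k|$ has gradient bounded by $1$, this commutator is $\mathcal O(1/d)$ provided $g_{t'}$ varies on scale $d$, giving after integration a contribution $\mathcal O(t/d)$, but the sharper $t^{5/4}d^{-1/2}$ and $\langle t\rangle^{3/4}(d/R)^{-1/2}$ bounds will come from optimizing the width of the transition region and from using Lemma~\ref{lm:a1}/\ref{lm:a2} to control the photon number $\|N^{1/2}\mathcal J(u)\|\le \mathrm C$, $\|\,|\vec x|^\mu\mathcal J(u)\|\le \mathrm C R^\mu$. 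The interaction part $[H_{P,E},\mathrm d\Gamma(g_{t'})]$ is linear in $a^\sharp$ with form factor $h_x$ multiplied by $(g_{t'}-$translate$)$; since $\chi$ is ultraviolet-cutoff and $|\vec x|\lesssim d$ on the relevant support (using Lemma~\ref{lm:f1}'s own conclusion bootstrapped, or rather using the atomic localization which is preserved), this is also $\mathcal O(d^{-1/2})$ after integrating against the well-localized state, contributing $\mathcal O(td^{-1/2})$ — hence the $t^{5/4}d^{-1/2}$ term (the extra $t^{1/4}$ coming from a Cauchy–Schwarz/interpolation step in $\mathrm d\Gamma$).

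Concretely, the steps in order: (1) reduce to a well-localized initial state via Lemma~\ref{lm:a3}, introducing the $\mathcal O((d/R)^{-1+\gamma})$ which we upgrade to $\mathcal O(\langle t\rangle^{3/4}(d/R)^{-1/2})$ by also invoking the number bound and choosing $\gamma$; (2) set up a smooth, outward-moving partition function $g_{t'}(\vec y)$ with $g_0\equiv 0$ on the support of $\Gamma(\mathds 1_{|\vec y|\le R})\mathcal J(u)$'s photons and $g_t\equiv 1$ on $\{|\vec y|>d\}$, with transition region of width $w$ to be optimized; (3) differentiate $\langle\psi_{t'},\mathrm d\Gamma(g_{t'})\psi_{t'}\rangle$ along the flow, bound the $\partial_{t'}$-term and the $[H_E,\cdot]$, $[H_{P,E},\cdot]$ commutator terms using $\|\vec\nabla g_{t'}\|_\infty\lesssim 1/w$, the ultraviolet cutoff in $h_x$, Lemma~\ref{lm:a1} (atomic decay controlling $|\vec x|\lesssim d$ with small error), and Lemma~\ref{lm:a2} (number control, so $\mathrm d\Gamma$ is effectively bounded by $\mathrm C N$); (4) integrate from $0$ to $t$ and optimize $w$ in terms of $t$ and $d$ to land on $t^{5/4}d^{-1/2}$; (5) collect all errors. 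The main obstacle is step (3): because $N$ is not $H_E$-bounded (photons are massless), one cannot simply bound $\|\mathrm d\Gamma(g_{t'})^{1/2}e^{-it'H_{P\vee E}^\varepsilon}\mathcal J(u)\|$ uniformly without first establishing propagation/number bounds for the atom-field dynamics $H_{P\vee E}^\varepsilon$ itself; this requires a separate commutator estimate showing $\|N^{1/2}e^{-it'H_{P\vee E}^\varepsilon}\mathcal J(u)\|\le \mathrm C\langle t'\rangle^{1/2}$ (or similar), which in turn uses that $H_{P,E}$ is linear with cutoff form factor. Getting the exponents $5/4$ and $3/4$ exactly right — rather than a cruder $td^{-1/2}$ and $t(d/R)^{-1/2}$ — is the delicate bookkeeping part, and I would expect to need the $H^2$-type regularity and the explicit $R$-scaling of $u$ from Hypothesis~\textbf{(B1)} to push it through.
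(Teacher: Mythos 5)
Your overall strategy is close in spirit to the paper's, and you correctly identify the two crucial preliminaries: a Cook-type argument for the photon cutoff and an a priori bound on the photon number along the $H_{P\vee E}^\varepsilon$-flow. However, there are two genuine gaps. First, you propose to control $|\vec{x}|$ on the relevant support either by ``bootstrapping Lemma~\ref{lm:f1}'s own conclusion'' --- which is circular --- or by claiming ``the atomic localization is preserved,'' which is false: the atom disperses, and the paper has to prove a separate a priori estimate
\begin{equation*}
\big\| \,|\vec{x}|\, e^{-i s H_{P\vee E}^\varepsilon}\mathcal{J}(u)\big\| = \mathcal{O}(s) + \mathcal{O}(R),
\end{equation*}
by a second Cook argument on $|\vec{x}|$ (using $[|\vec{x}|, -\Delta_{\vec{x}}]$ and Hardy's inequality). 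Without this linear-growth bound the commutator term with $H_{P,E}$ cannot be closed.

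Second, the mechanism producing the exponents $5/4$ and $3/4$ is not an ``interpolation step in $\mathrm{d}\Gamma$'' nor an optimization of a moving-cutoff width $w$. The paper works with the \emph{fixed} smooth cutoff $\Gamma(\chi_{|\vec{y}|\le d})$ and estimates
\begin{equation*}
\big\| (\mathds{1}-\Gamma(\chi_{|\vec{y}|\le d})) e^{-it H_{P\vee E}^\varepsilon}\mathcal{J}(u)\big\|^2 \le \big\| (\mathds{1}-\Gamma(\chi))\mathcal{J}(u)\big\|^2 + i\int_0^t \big\langle \psi_s, [\Gamma(\chi), H_{P\vee E}^\varepsilon]\psi_s\big\rangle\, ds,
\end{equation*}
where $\psi_s := e^{-isH_{P\vee E}^\varepsilon}\mathcal{J}(u)$. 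Then the commutator bounds $\|[\Gamma(\chi),H_E](N+\mathds{1})^{-1}\|=\mathcal{O}(d^{-1})$ and $\|\langle\vec{x}\rangle^{-1}[\Gamma(\chi),H_{P,E}](N+\mathds{1})^{-1/2}\|=\mathcal{O}(d^{-1})$, combined with $\|(N+\mathds{1})\psi_s\|=\mathcal{O}(\langle s\rangle)$ and $\|\langle\vec{x}\rangle\psi_s\|=\mathcal{O}(s)+\mathcal{O}(R)$, give an integrand of order $d^{-1}(s+R)\langle s\rangle^{1/2}$; integrating to $t^{5/2}d^{-1} + R\langle t\rangle^{3/2}d^{-1}$ and taking the square root produces exactly $\mathcal{O}(t^{5/4}d^{-1/2}) + \mathcal{O}(\langle t\rangle^{3/4}(d/R)^{-1/2})$. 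Your time-dependent propagation observable $\mathrm{d}\Gamma(g_{t'})$ with outward-moving transition region is a legitimate alternative device for maximal-velocity estimates, but you would still need the two a priori bounds above, and the width optimization is unnecessary for the stated exponents; in the paper the ``finite propagation speed'' enters only through the $\mathcal{O}(d^{-1})$ commutator estimates of Lemmas~\ref{lm:B8} and~\ref{lm:B9}, with the fixed cutoff doing all the work.
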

\begin{proof}
We begin with establishing two preliminary estimates.

\vspace{0,2cm}

\noindent \textbf{Step 1}. We have that
\begin{equation}\label{eq:r1}
\big \| N e^{ - i t H_{P \vee E}^\varepsilon } \mathcal{J} ( u ) \big \| = \mathcal{O} ( \langle t \rangle ).
\end{equation}
By Lemma \ref{lm:a2}, we know that $\mathcal{J}( u ) \in \mathcal{D}( N )$. To see that $e^{ - i t H_{P \vee E}^\varepsilon } \mathcal{J} ( u ) \in \mathcal{D}( N )$ for all $t \in \mathbb{R}$, we observe that $\D(H_{P}) \otimes \mathcal{F}_{ \mathrm{fin} }( \mathcal{D}( |\vec{k}| )  )$ is a common core for $H_{P \vee E}^\varepsilon$ and $N$, and that $( N + \mathds{1} )^{-1}$ preserves $\D(H_{P}) \otimes \mathcal{F}_{ \mathrm{fin} }(  \mathcal{D}( |\vec{k}| )  )$. Here we use the notation $$\mathcal{F}_{ \mathrm{fin} }( V ) = \{ \Phi = \Phi^{(n)} \in \mathcal{H}_{E} , \, \Phi^{(n)} \in \otimes_s^n V \text{ for all } n, \text{ } \Phi^{(n)} = 0 \text{ for all but finitely many } n \}.$$ Since the commutator
\begin{align}\label{eq:r2}
\big [  N , H_{P \vee E}^\varepsilon \big ] = \big [  N , H_{P,E} \big ] = - i \Phi( i h_x ) , 
\end{align}
is both $H_{P \vee E}^\varepsilon$- and $N$-relatively bounded, we deduce from \cite[Lemma 2]{GeGe99_01} that $e^{ - i t H_{P \vee E}^\varepsilon } \mathcal{D}( N ) \subset \mathcal{D}( N )$ and hence in particular that $e^{ - i t H_{P \vee E}^\varepsilon } \mathcal{J} ( u ) \in \mathcal{D}( N )$. Here, as in \eqref{eq:def_hx1}--\eqref{eq:def_hx2}, $\Phi( i h_x ) := a^*( i h_x ) + a( i h_x )$.

To prove \eqref{eq:r1}, we use that
\begin{align*}
\big \| N e^{ - i t H_{P \vee E}^\varepsilon } \mathcal{J} ( u ) \big \| &= \big \| e^{ i t H_{P \vee E}^\varepsilon } N e^{ - i t H_{P \vee E}^\varepsilon } \mathcal{J} ( u ) \big \| \\
& \le \big \| N \mathcal{J} ( u ) \big \| + \int_0^t \big \| \big [  N , H_{P \vee E}^\varepsilon \big ] e^{ - i s H_{P \vee E}^\varepsilon } \mathcal{J} ( u ) \big \| ds.
\end{align*}
It follows from Lemma \ref{lm:a2} that $\| N \mathcal{J} ( u ) \|Ê= \mathcal{O}( 1 )$. Since $\mathcal{D}( H_{P \vee E}^\varepsilon ) \subset \mathcal{D}( H_E) \subset \mathcal{D}( H_E^{1/2} )$ and since $\Phi ( i h_x )$ is $H_E^{1/2}$-relatively bounded, we deduce from \eqref{eq:r2} that there are positive constants $a$ and $b$ such that 
\begin{align*}
\big \| \big [  N , H_{P \vee E}^\varepsilon \big ] e^{ - i s H_{P \vee E}^\varepsilon } \mathcal{J} ( u ) \big \| \le a\big \| H_{P \vee E}^\varepsilon \mathcal{J} ( u ) \big \| + b \| \mathcal{J}(u) \| .
\end{align*}
Since $\mathcal{J}( u ) \in \mathcal{D}( H_{P \vee E} ) = \mathcal{D}( H_{P \vee E}^\varepsilon )$, this concludes the proof of Step 1.

\vspace{0,2cm}

\noindent \textbf{Step 2}. We have that
\begin{equation*}
\big \| | \vec{x} | e^{ - i t H_{P \vee E}^\varepsilon } \mathcal{J} ( u ) \big \| = \mathcal{O} ( t ) + \mathcal{O}( R ) .
\end{equation*}
The proof of this estimate is  similar to Step 1. The only differences are that $\| | \vec{x} | \mathcal{J} ( u ) \| = \mathcal{O}( R )$ by Lemma \ref{lm:a1}, and that the commutator $[ | \vec{x} | , H_{P \vee E}^\varepsilon ] = [ | \vec{x} | , - \Delta_{ \vec{x} } ]/2$ is relatively bounded with respect to $( - \Delta_{\vec{x} } + \mathds{1} )^{1/2}$. The latter property follows from the computation
\begin{align*}
[ | \vec{x} | , - \Delta_{ \vec{x} } ] = \vec{\nabla}_x \cdot \frac{ \vec{x} }{ | \vec{x} | } + \frac{ \vec{x} }{ | \vec{x} | } \cdot \vec{\nabla}_x = 2 \frac{ \vec{x} }{ | \vec{x} | } \cdot \vec{\nabla}_x  + \frac{ 2 }{ | \vec{x} | } ,
\end{align*}
and the fact that $| \vec{x} |^{-1}$ is relatively bounded with respect to $( - \Delta_{\vec{x} } + \mathds{1} )^{1/2}$ by Hardy's inequality in $\mathbb{R}^3$.

\vspace{0,2cm}

\noindent \textbf{Step 3}. Now we proceed to the proof of Lemma \ref{lm:f1}. We introduce a smooth function $\chi_{ \cdot \le d } \in \mathrm{C}_0^\infty ( [0 , \infty ) ; [ 0 , 1] )$ satisfying $\chi_{ r \le d } \mathds{1}_{ r \le d /2 } = \mathds{1}_{ r \le d /2 }$ and $\chi_{ r \le d } \mathds{1}_{ r \ge d } = 0$. We observe that, to prove \eqref{eq:f2}, it is sufficient to establish that
\begin{equation}\label{eq:f3}
\big \| \big ( \mathds{1} - \Gamma( \chi_{ | \vec{y} | \le d } ) \big ) e^{ - i t H_{P \vee E}^\varepsilon } \mathcal{J} ( u ) \big \| =  \mathcal{O}\big ( t^{\frac54} d^{-\frac12} \big ) + \mathcal{O} \big ( \langle t \rangle^{\frac34} ( d / R )^{-\frac12} \big ) .
\end{equation}
The proof of \eqref{eq:f3} relies   on the following argument:
\begin{align*}
& \big \| \big ( \mathds{1} - \Gamma( \chi_{ | \vec{y} | \le d } ) \big ) e^{ - i t H_{P \vee E}^\varepsilon } \mathcal{J} ( u ) \big \|^2 \\
& \leq  \big \langle \mathcal{J}( u ) , e^{ i t H_{P \vee E}^\varepsilon } \big ( \mathds{1} - \Gamma( \chi_{ | \vec{y} | \le d } ) \big ) e^{ - i t H_{P \vee E}^\varepsilon } \mathcal{J} ( u ) \big \rangle \\
& = \big \| \big ( \mathds{1} - \Gamma( \chi_{ | \vec{y} | \le d } ) \big ) \mathcal{J} ( u ) \big \|^2 + i \int_0^t \big \langle \mathcal{J} ( u ) , e^{ i s H_{P \vee E}^\varepsilon } \big [ \Gamma( \chi_{ | \vec{y} | \le d } ) , H_{P \vee E}^\varepsilon \big ] e^{ - i s H_{P \vee E}^\varepsilon } \mathcal{J} ( u ) \big \rangle ds .
\end{align*}
The first term on  the right side of this inequality is of order $\mathcal{O} ( (d/R)^{-2+2\gamma} )$ for all $0 < \gamma \le 1$, by Lemma \ref{lm:a3}. Here we choose $\gamma = 1/2$. To estimate the second term, we compute the commutator
\begin{align*}
\big [ \Gamma( \chi_{ | \vec{y} | \le d } ) , H_{P \vee E}^\varepsilon \big ] = \big [ \Gamma( \chi_{ | \vec{y} | \le d } ) , H_E\big ] + \big [ \Gamma( \chi_{ | \vec{y} | \le d } ) , H_{P,E} \big ] ,
\end{align*}
and estimate each term separately. In Appendix \ref{app:comm} (see Lemma \ref{lm:B9}), we verify that
\begin{equation*}
\big \| \big [ \Gamma( \chi_{ | \vec{y} | \le d } ) , H_E\big ] ( N + \mathds{1} )^{-1} \big \| = \mathcal{O}( d^{-1} ).
\end{equation*}
Together with Step 1, this shows that
\begin{align}
\int_0^t \big \langle \mathcal{J} ( u ) , e^{ i s H_{P \vee E}^\varepsilon } \big [ \Gamma( \chi_{ | \vec{y} | \le d } ) , H_E\big ] e^{ - i s H_{P \vee E}^\varepsilon } \mathcal{J} ( u ) \big \rangle ds = \mathcal{O}\big ( t^2 d^{-1} \big ) . \label{eq:f4}
\end{align}
Using again Appendix \ref{app:comm} (see Lemma \ref{lm:B8}), we have that
\begin{equation*}
\big \| \langle \vec{x} \rangle^{-1} \big [ \Gamma( \chi_{ | \vec{y} | \le d } ) , H_{P,E} \big ] ( N + \mathds{1} )^{-\frac12} \big \| = \mathcal{O}( d^{-1} ).
\end{equation*}
Combined with Step 1 and Step 2, this implies  that
\begin{align*}
\int_0^t \big \langle \mathcal{J} ( u ) , e^{ i s H_{P \vee E}^\varepsilon } \big [ \Gamma( \chi_{ | \vec{y} | \le d } ) , H_{P,E} \big ] e^{ - i s H_{P \vee E}^\varepsilon } \mathcal{J} ( u ) \big \rangle ds = \mathcal{O}\big ( t^{\frac52} d^{-1} \big ) + \mathcal{O} \big ( t^{\frac32} ( d / R )^{-1} \big ) .
\end{align*}
This estimate and  \eqref{eq:f4} imply \eqref{eq:f3}, which concludes the proof of the lemma.
\end{proof}

 \vspace{2mm}
We conclude this subsection with another localization lemma that will be useful in the proof of Theorem \ref{thm:isolated}. In spirit, Lemma \ref{lm:g1} is similar to Lemma \ref{lm:f1} and  follows from the fact that the propagation velocity of photons is finite. For the dynamics generated by $H_{Q \vee E}$, it shows that, if in the initial state $\varphi$ all photons are localized in the region $\{ \vec{y} \in \mathbb{R}^3 , | \vec{y} | \ge 3d \}$, then in the evolved state $e^{-it H_{Q \vee E} } \varphi$, with $t \ll d$, most  photons are localized in $\{ \vec{y} \in \mathbb{R}^3 , | \vec{y} | \ge 2d \}$.
 \vspace{2mm}
 
\begin{lemma}\label{lm:g1}
Let $\varphi \in \mathcal{H}_\infty = \mathcal{H}_Q \otimes \mathcal{H}_{E}$ be as in Hypothesis \textbf{(B2)} and suppose that Hypotheses \textbf{(B4)} and \textbf{(B5)} hold. For all $d > 0$ and $t \ge 0$,
\begin{equation}\label{eq:g2}
\big \| \big ( \mathds{1} - \Gamma( \mathds{1}_{ | \vec{y} | \ge 2 d } ) \big ) e^{ - i t H_{Q \vee E} } \varphi \big \| = \mathcal{O} \big ( t^2 d^{-1} \big ) .
\end{equation}
\end{lemma}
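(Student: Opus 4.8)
The plan is to replace the sharp Fock-space projection $\Gamma(\mathds{1}_{|\vec{y}|\ge 2d})$ by a smooth, $d$-dilated cut-off — for which the initial term will vanish identically thanks to Hypothesis~\textbf{(B2)} — and then to run Cook's method. The decisive point will be that, by Hypothesis~\textbf{(B4)} (and trivially for $H_Q$), the interaction $H_{Q,E}$ commutes with such a cut-off, so that the only mechanism driving photons into the ball $\{|\vec{y}|<2d\}$ is the free field evolution, whose generator has a commutator with the cut-off of size $\mathcal{O}(d^{-1})$.

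\emph{Step 1: reduction to a smooth cut-off.} I would first fix $\theta_1\in C^\infty([0,\infty);[0,1])$ with $\theta_1\equiv 1$ on $[0,2]$ and $\theta_1\equiv 0$ on $[3,\infty)$, set $\theta(\vec{y}):=\theta_1(|\vec{y}|/d)$ (so $\|\nabla\theta\|_\infty=\mathcal{O}(d^{-1})$), and observe that on the $n$-photon sector the projections $\mathds{1}_{|\vec{y}_i|<2d}$, $i=1,\dots,n$, pairwise commute, so $\mathds{1}-\Gamma(\mathds{1}_{|\vec{y}|\ge 2d})$ restricts to the join $\bigvee_i\mathds{1}_{|\vec{y}_i|<2d}$, which is dominated by $\sum_i\mathds{1}_{|\vec{y}_i|<2d}$. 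This gives
\[
0\le\mathds{1}-\Gamma(\mathds{1}_{|\vec{y}|\ge 2d})\le\d\Gamma(\mathds{1}_{|\vec{y}|<2d})\le\d\Gamma(\theta(\vec{y})),
\]
and since all operators involved are functions of mutually commuting photon-position multipliers these operator bounds pass to norm bounds; hence it will suffice to prove $\|\d\Gamma(\theta(\vec{y}))\,e^{-itH_{Q\vee E}}\varphi\|=\mathcal{O}(t^2d^{-1})$. I expect this reduction to be the subtle point: a naive quadratic-form estimate $\|(\mathds{1}-\Gamma(\chi))\Psi\|^2\le\langle\Psi,(\mathds{1}-\Gamma(\chi))\Psi\rangle$ would only yield $\mathcal{O}(\langle t\rangle d^{-1/2})$, so one must exploit the ``join~$\le$~sum'' inequality and commutativity as above.

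\emph{Step 2: Cook's method.} Since all photons of $\varphi$ are localized in $\{|\vec{y}|\ge 3d\}$ by~\textbf{(B2)}, where $\theta\equiv 0$, one has $\d\Gamma(\theta(\vec{y}))\varphi=0$, and Cook's method (as explained in Section~\ref{proofs}), using $\varphi\in\mathcal{D}(H_{Q\vee E})$ and $e^{-isH_{Q\vee E}}\varphi\in\mathcal{D}(N)$ from~\textbf{(B2)} and~\textbf{(B5)}, gives
\[
\big\|\d\Gamma(\theta(\vec{y}))\,e^{-itH_{Q\vee E}}\varphi\big\|\le\int_0^t\big\|\big[H_{Q\vee E},\d\Gamma(\theta(\vec{y}))\big]\,e^{-isH_{Q\vee E}}\varphi\big\|\,ds.
\]
Then I would note that $[H_Q,\d\Gamma(\theta(\vec{y}))]=0$ trivially and, since $\theta$ vanishes on $\{|\vec{y}|\ge 3d\}$ so that each $e^{is\theta(\vec{y})}$ is the identity there, that \eqref{com} of~\textbf{(B4)} gives $[H_{Q,E},\Gamma(e^{is\theta(\vec{y})})]=0$ for all $s$ and hence $[H_{Q,E},\d\Gamma(\theta(\vec{y}))]=0$. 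Therefore $[H_{Q\vee E},\d\Gamma(\theta(\vec{y}))]=[H_E,\d\Gamma(\theta(\vec{y}))]=\d\Gamma([|\vec{k}|,\theta(\vec{y})])$, and a standard pseudodifferential commutator estimate (as used for Lemma~\ref{lm:B9}) bounds $\|[|\vec{k}|,\theta(\vec{y})]\|$ by $\mathcal{O}(\|\nabla\theta\|_\infty)=\mathcal{O}(d^{-1})$, whence $\|[H_E,\d\Gamma(\theta(\vec{y}))](N+\mathds{1})^{-1}\|=\mathcal{O}(d^{-1})$.

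\emph{Step 3: photon-number control and conclusion.} It then remains to bound $\|(N+\mathds{1})e^{-isH_{Q\vee E}}\varphi\|$, which I would do by splitting $N=\d\Gamma(\mathds{1}_{|\vec{y}|<cs})+\d\Gamma(\mathds{1}_{|\vec{y}|\ge cs})$: the first summand is $\mathcal{O}(s)$ exactly as in the Remark following~\textbf{(B5)} (Hardy's inequality, $\mathcal{D}(H_{Q\vee E})\subset\mathcal{D}(H_E)$, and $\|H_E e^{-isH_{Q\vee E}}\varphi\|=\mathcal{O}(1)$, the latter because $H_{Q\vee E}$ commutes with its own propagator and $H_E$ is $H_{Q\vee E}$-bounded), while the second is $\mathcal{O}(\langle s\rangle)$ by~\eqref{2.20}. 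Combining this with the commutator bound of Step~2 and integrating,
\[
\big\|\d\Gamma(\theta(\vec{y}))\,e^{-itH_{Q\vee E}}\varphi\big\|\le\mathcal{O}(d^{-1})\int_0^t\langle s\rangle\,ds=\mathcal{O}(t^2d^{-1}),
\]
which together with Step~1 would complete the proof. Apart from the routine domain and quadratic-form justifications for the commutators and the Duhamel formula (carried out in the spirit of Section~\ref{proofs}), the two points that genuinely require care are the reduction of Step~1 (which must cost no power of $d$) and the bound $\|N e^{-isH_{Q\vee E}}\varphi\|=\mathcal{O}(\langle s\rangle)$ — the latter is not automatic precisely because photons are massless, so that $N$ is not $H_E$-bounded and Assumption~\textbf{(B5)} is indispensable.
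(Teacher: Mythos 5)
Your argument is correct and rests on the same two pillars as the paper's proof (Cook's method, with the $[H_E,\cdot]$ commutator supplying the $\mathcal{O}(d^{-1})$ factor, and Hypothesis \textbf{(B5)} giving $\|Ne^{-isH_{Q\vee E}}\varphi\|=\mathcal{O}(\langle s\rangle)$), but your Step~1 travels a more elaborate road than necessary. The paper simply introduces $\chi_{r\ge 2d}$ with $\chi\equiv 0$ on $[0,2d]$ and $\chi\equiv 1$ on $[3d,\infty)$ and observes that $\mathds{1}-\Gamma(\mathds{1}_{|\vec{y}|\ge 2d})$ and $\mathds{1}-\Gamma(\chi_{|\vec{y}|\ge 2d})$ are both multiplication operators (in photon-position representation) satisfying $0\le\mathds{1}-\Gamma(\mathds{1}_{\ge 2d})\le\mathds{1}-\Gamma(\chi)\le\mathds{1}$ pointwise on each $n$-photon sector, which immediately passes to a norm domination. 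There is thus no need for the ``join $\le$ sum'' step or for replacing a bounded cutoff by the unbounded $\d\Gamma(\theta(\vec y))$. Your concern about losing a square root is legitimate only for the quadratic-form trick used in Lemma~\ref{lm:f1}; it does not arise here because the initial term $(\mathds{1}-\Gamma(\chi))\varphi$ vanishes, so one applies Cook (Duhamel) directly to the bounded operator $\mathds{1}-\Gamma(\chi)$, yielding $\int_0^t\|[\Gamma(\chi),H_{Q\vee E}]e^{-isH_{Q\vee E}}\varphi\|\,ds$ without any squaring. In exchange, your $\d\Gamma(\theta)$ formulation makes the commutator with $H_E$ slightly more elementary, $[H_E,\d\Gamma(\theta)]=\d\Gamma([|\vec{k}|,\theta])$, in place of the mixed second-quantized commutator $\d\Gamma(\chi,[|\vec k|,\chi])$ handled in Lemma~\ref{lm:B9} — but at the cost of working with an unbounded operator and the domain checks you deferred. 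The remaining steps — $[H_Q,\cdot]=0$, the deduction $[H_{Q,E},\d\Gamma(\theta)]=0$ from \textbf{(B4)} applied to $e^{is\theta(\vec y)}$, and the $\langle s\rangle$ photon-number bound split between the Hardy estimate and \eqref{2.20} — match the paper's reasoning and are sound.
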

 \vspace{2mm}

\begin{proof}
Let $\chi_{ r \ge 2d } \in \mathrm{C}_0^\infty ( [0 , \infty ) ; [ 0 , 1] )$ be a smooth function satisfying $\chi_{ r \ge 2d } \mathds{1}_{ r \ge 3d } = \mathds{1}_{ r \ge 3d }$ and $\chi_{ r \ge 2d } \mathds{1}_{ r \le 2d } = 0$. With this definition, we see that, in order to prove \eqref{eq:g2}, it is sufficient to show that 
\begin{equation}\label{eq:g3}
\big \| \big ( \mathds{1} - \Gamma( \chi_{ | \vec{y} | \ge 2 d } ) \big ) e^{ - i t H_{Q \vee E} } \varphi \big \| = \mathcal{O} \big ( t^2 d^{-1} \big ) .
\end{equation}
As in Lemma \ref{lm:f1}, we use  that 
\begin{equation*}
\big \| \big ( \mathds{1} - \Gamma( \chi_{ | \vec{y} | \ge 2 d } ) \big ) e^{ - i t H_{Q \vee E} } \varphi \big \| \le \int_0^t \big \| \big [ \Gamma( \chi_{ | \vec{y} | \ge 2 d } ) , H_{Q \vee E} \big ] e^{ - i s H_{Q \vee E} } \varphi \big \| ds.
\end{equation*}
We have that $[ \Gamma( \chi_{ | \vec{y} | \ge 2 d } ) , H_{Q} ] = 0$, and it follows from Hypothesis \textbf{(B4)} that $[ \Gamma( \chi_{ | \vec{y} | \ge 2 d } ) , H_{Q, E}] = 0$. Therefore
\begin{equation}
\big \|Ê[ \Gamma( \chi_{ | \vec{y} | \ge 2 d } ) , H_{Q \vee E} ] ( N + \mathds{1} )^{-1} \big \|Ê= \big \|Ê[ \Gamma( \chi_{ | \vec{y} | \ge 2 d } ) , H_{E} ] ( N + \mathds{1} )^{-1} \big \| = \mathcal{O}( d^{-1} ) , \label{eq:g4}
\end{equation}
the last estimate being proven in Appendix \ref{app:comm} (see Lemma \ref{lm:B9}). By Hypothesis \textbf{(B5)} together with the assumption that $\varphi \in \mathcal{D}( N )$, we have that
\begin{equation}
\big \|Ê( N + \mathds{1} ) e^{- is  H_{Q \vee E} } \varphi \big \| = \mathcal{O}( \langle s  \rangle) . \label{eq:g5}
\end{equation}
Equations \eqref{eq:g4} and \eqref{eq:g5} imply \eqref{eq:g3}, which concludes the proof of the lemma.
\end{proof}
 \vspace{2mm}

\subsection{Factorization of Fock space}\label{sec:partition}

We introduce a factorization of  Fock space (see \cite{DeGe99_01} or \cite{FrGrSc02_01} for more details), which will be used to factorize $e^{-itH^{\epsilon}}$ into a tensor product of the form $e^{-itH_{P \vee E}^{\epsilon}} \otimes e^{-itH_{Q \vee E}}$ plus an error term. This factorization is carried out in Section \ref{section:main_proof} and is one of the main ingredients of our proof.  Let $j_0 \in \mathrm{C}^{\infty}_{0}( [ 0 , \infty ) ; [ 0 , 1 ] )$ be such that $j_0 \equiv 1$ on $[ 0 , 1 ]$ and $j_0 \equiv 0$ on $[ 2 , \infty )$, and let $j_\infty$ be defined by the relation $j_0^2 + j_\infty^2 \equiv 1$. Recall that $\vec{y} := i \vec{\nabla}_k$ denote the ``photon position variable''. Given $d > 0$, we introduce the bounded operators $\mathbf{j_0} := j_0 ( | \vec{y} | / d )$ and $\mathbf{j_\infty} := j_\infty( | \vec{y} | / d )$ on $L^2( \underline{ \mathbb{R} }^3 )$. We set
\begin{align*}
\mathbf{j} : L^2( \underline{ \mathbb{R} }^3 ) &\to L^2( \underline{ \mathbb{R} }^3 ) \oplus L^2( \underline{ \mathbb{R} }^3 ) \\
u & \mapsto ( \mathbf{j_0} u , \mathbf{j_\infty} u ).
\end{align*}
Next we lift the operator $\mathbf{j}$ to the Fock space $\mathcal{H}_{E} = \mathcal{F}_+( L^2( \underline{ \mathbb{R} }^3 ) )$ defining a map
\begin{equation*}
\Gamma( \mathbf{j} ) : \mathcal{H}_{E} \to \mathcal{F}_+( L^2( \underline{ \mathbb{R} }^3 ) \oplus L^2( \underline{ \mathbb{R} }^3 ) ) ,
\end{equation*}
with $\Gamma( \mathbf{j} )$ defined as in \eqref{eq:defGamma}.  Let
\begin{equation*}
\mathcal{U} : \mathcal{F}_+ ( L^2( \underline{ \mathbb{R} }^3 ) \oplus L^2( \underline{ \mathbb{R} }^3 ) ) \to \mathcal{H}_{E} \otimes \mathcal{H}_{E} ,
\end{equation*}
be the unitary operator defined by
\begin{align}
& \mathcal{U} \Omega := \Omega \otimes \Omega \label{defU_1} \\
& \mathcal{U} a^*( u_1 \oplus u_2 ) = ( a^*( u_1 ) \otimes \mathds{1} + \mathds{1} \otimes a^*( u_2 ) ) \mathcal{U} . \label{defU_2} 
\end{align}
The  factorization of  Fock space that we consider is defined by
\begin{equation*}
\check{\Gamma}( \mathbf{j} ) : \mathcal{H}_{E} \to \mathcal{H}_{E} \otimes \mathcal{H}_{E} , \qquad \check{\Gamma}( \mathbf{j} ) = \mathcal{U} \Gamma( \mathbf{j} ).
\end{equation*}
Using the relation $j_0^2 + j_\infty^2 \equiv 1$, one can verify that $\check{\Gamma}( \mathbf{j} )$ is a partial isometry. The adjoint of $\check{\Gamma}( \mathbf{j} )$ can be represented as
\begin{align}
\check{\Gamma}( \mathbf{j} )^* = I (\Gamma( \mathbf{j_0} ) \otimes \Gamma( \mathbf{j_\infty} )), \label{eq:b1}
\end{align}
where $I$ denotes the identification operator defined in \eqref{eq:defI}.

 On the total Hilbert space $\mathcal{H} = \mathcal{H}_{P} \otimes \mathcal{H}_{Q} \otimes \mathcal{H}_{E}$, we denote by the same symbol the operator 
\begin{equation*}
\check{\Gamma}( \mathbf{j} ) : \mathcal{H} \to \mathcal{H}_0 \otimes \mathcal{H}_\infty ,
\end{equation*}
where, recall, $\mathcal{H}_0 = \mathcal{H}_{P} \otimes \mathcal{H}_{E}$ and $\mathcal{H}_\infty = \mathcal{H}_{Q} \otimes \mathcal{H}_{E}$. We introduce the bounded operator
\begin{equation}\label{eq:def-chi}
\chi_{ | \vec{y} | \le d } := j_0( 2 | \vec{y} | / d ) ,
\end{equation}
on $L^2( \underline{ \mathbb{R} }^3 )$. As in Section \ref{section:loc}, it corresponds to a smooth version of the projection $\mathds{1}_{ | \vec{y} | \le d }$ satisfying $\chi_{ \vert \vec{y} \vert \le d } \in \mathrm{C}_0^\infty ( [0 , \infty ) ; [ 0 , 1] )$, $\chi_{ | \vec{y} | \le d } \mathds{1}_{ | \vec{y} | \le d /2 } = \mathds{1}_{ | \vec{y} | \le d /2 }$ and $\chi_{ | \vec{y} | \le d } \mathds{1}_{ | \vec{y} | \ge d } = 0$.

We begin with a localization lemma for the initial state $I \mathcal{J}( u ) \otimes \varphi$, which will be useful in the sequel. For the convenience of the reader, the proof of Lemma \ref{lm:b1} is deferred to Appendix \ref{B}.

\vspace{2mm}
\begin{lemma}\label{lm:b1}
Let $u \in L^2( \mathbb{R}^3 )$ be as in Hypothesis \textbf{(B1)} and $\varphi \in \mathcal{H}_{Q} \otimes \mathcal{H}_{E}$ be as in Hypothesis \textbf{(B2)}, with $d > R \ge 1$.  Then $\mathcal{J}( u ) \otimes \varphi \in \mathcal{D}( I )$ and,  for all $0 < \gamma \le 1$, we have that
\begin{equation*}
I \mathcal{J}( u ) \otimes \varphi = I \big ( \Gamma( \chi_{ | \vec{y} | \le d } ) \mathcal{J}( u ) \big ) \otimes \big ( \Gamma( \mathds{1}_{ | \vec{y} | \ge 3d } ) \varphi \big ) + \mathcal{O}( (d/R)^{ \frac{-1+\gamma}{2} } ).
\end{equation*}
\end{lemma}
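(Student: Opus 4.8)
\emph{Strategy and reduction.} The plan is to use Hypothesis \textbf{(B2)} to reduce the assertion to a bound on the unbounded identification operator $I$ applied to a tensor product one factor of which is small in norm but still has finite exponential photon-number moments, and then to convert that smallness into the stated estimate via Cauchy--Schwarz. By \textbf{(B2)} one has $\Gamma(\mathds{1}_{|\vec{y}|\ge 3d})\varphi=\varphi$, so the right-hand side of the claimed identity equals $I\big(\Gamma(\chi_{|\vec{y}|\le d})\mathcal{J}(u)\otimes\varphi\big)$. Since $\chi_{|\vec{y}|\le d}=j_0(2|\vec{y}|/d)$ is supported in $\{|\vec{y}|\le d\}$, on which $\mathbf{j_0}=j_0(|\vec{y}|/d)=1$, we have $\mathbf{j_0}\,\chi_{|\vec{y}|\le d}=\chi_{|\vec{y}|\le d}$, and likewise $\mathbf{j_\infty}\,\mathds{1}_{|\vec{y}|\ge 3d}=\mathds{1}_{|\vec{y}|\ge 3d}$; together with \eqref{eq:b1} this gives
\[
I\big(\Gamma(\chi_{|\vec{y}|\le d})\mathcal{J}(u)\otimes\varphi\big)=\check{\Gamma}(\mathbf{j})^{*}\big(\Gamma(\chi_{|\vec{y}|\le d})\mathcal{J}(u)\otimes\varphi\big),
\]
which is well defined, with norm at most $\|\mathcal{J}(u)\|\,\|\varphi\|$, since $\check{\Gamma}(\mathbf{j})$ is a partial isometry. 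Setting $\psi:=\big(\mathds{1}-\Gamma(\chi_{|\vec{y}|\le d})\big)\mathcal{J}(u)$, it thus suffices to show that $\mathcal{J}(u)\otimes\varphi\in\mathcal{D}(I)$ and that $\|I(\psi\otimes\varphi)\|=\mathcal{O}\big((d/R)^{(-1+\gamma)/2}\big)$, for then by linearity the difference of the two sides of the lemma equals $I(\psi\otimes\varphi)$.

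\emph{A bound on $I$.} I would start from the standard combinatorial estimate $\big\|I\big(\Phi_1^{(m)}\otimes\Phi_2^{(n)}\big)\big\|^{2}\le\binom{m+n}{m}\big\|\Phi_1^{(m)}\big\|^{2}\big\|\Phi_2^{(n)}\big\|^{2}$ for vectors supported in the $m$- resp.\ $n$-photon sector of the respective $\mathcal{H}_E$-factors (see \cite{HuSp95_01,DeGe99_01}). Since $I\big(\Phi_1^{(m)}\otimes\Phi_2^{(n)}\big)$ lives in the $(m+n)$-photon sector, summing over sectors, bounding $\|\Phi_2^{(n)}\|^{2}\le e^{-2\delta n}\|e^{\delta N}\Phi_2\|^{2}$ for $\delta>0$, absorbing the factor $(m+n+1)$ into $e^{\delta(m+n)}$, and invoking the negative binomial identity $\sum_{n\ge0}\binom{m+n}{n}x^{n}=(1-x)^{-(m+1)}$ produces constants $C(\delta)<\infty$ and $\sigma=\sigma(\delta)\in(1,\infty)$, independent of $d$ and $R$, with
\[
\big\|I(\Phi_1\otimes\Phi_2)\big\|^{2}\le C(\delta)\,\big\|e^{\delta N}\Phi_2\big\|^{2}\,\big\langle\Phi_1,\sigma^{N}\Phi_1\big\rangle
\]
whenever the right-hand side is finite. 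In particular $I$ extends continuously to such tensor products, and $\mathcal{J}(u)\otimes\varphi\in\mathcal{D}(I)$: by Lemma \ref{lm:a2} together with the explicit form \eqref{eq:defJ} of $\mathcal{J}(u)$ (a Plancherel computation) one has $\mathcal{J}(u)\in\mathcal{D}(e^{\mu N})$ with $\|e^{\mu N}\mathcal{J}(u)\|=\mathcal{O}(1)$ for every $\mu>0$, while $\varphi\in\mathcal{D}(e^{\delta N})$ by \textbf{(B2)}.

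\emph{The error term.} Apply the bound above with $\Phi_1=\psi$, $\Phi_2=\varphi$. Since $\sigma^{N}$ commutes with $\Gamma(\chi_{|\vec{y}|\le d})$ and $0\le\mathds{1}-\Gamma(\chi_{|\vec{y}|\le d})\le\mathds{1}$, Cauchy--Schwarz gives $\big\langle\psi,\sigma^{N}\psi\big\rangle\le\|\psi\|\,\big\|\sigma^{N}\psi\big\|\le\|\psi\|\,\big\|e^{(\ln\sigma)N}\mathcal{J}(u)\big\|$, and $\big\|e^{(\ln\sigma)N}\mathcal{J}(u)\big\|=\mathcal{O}(1)$ by Lemma \ref{lm:a2}. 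Moreover $\chi_{|\vec{y}|\le d}\ge\mathds{1}_{|\vec{y}|\le d/2}$, so $\mathds{1}-\Gamma(\chi_{|\vec{y}|\le d})\le\mathds{1}-\Gamma(\mathds{1}_{|\vec{y}|\le d/2})$, the latter being a projection, whence
\[
\|\psi\|^{2}\le\big\|\big(\mathds{1}-\Gamma(\mathds{1}_{|\vec{y}|\le d/2})\big)\mathcal{J}(u)\big\|^{2}=\mathcal{O}\big((d/R)^{-2+2\gamma}\big)
\]
by Lemma \ref{lm:a3} applied with $d$ replaced by $d/2$. As $\|e^{\delta N}\varphi\|<\infty$ by \textbf{(B2)}, these estimates combine to $\|I(\psi\otimes\varphi)\|^{2}=\mathcal{O}\big((d/R)^{-1+\gamma}\big)$, i.e.\ $\|I(\psi\otimes\varphi)\|=\mathcal{O}\big((d/R)^{(-1+\gamma)/2}\big)$, as claimed.

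\emph{Main difficulty.} The delicate point is the bound on $I$: this operator is unbounded, and the combinatorial factors $\binom{m+n}{m}$ can be controlled only because \emph{both} tensor factors have finite exponential photon-number moments (Lemma \ref{lm:a2} for $\mathcal{J}(u)$, Hypothesis \textbf{(B2)} for $\varphi$), the admissible moment rate $\ln\sigma$ of the first factor being dictated by the decay rate $\delta$ of the second. The loss of a square root relative to the $\mathcal{O}((d/R)^{-1+\gamma})$ decay of $\|\psi\|$ is exactly the price of the Cauchy--Schwarz step, in which the $\mathcal{O}(1)$ moment bound $\|e^{(\ln\sigma)N}\mathcal{J}(u)\|$ is traded against $\|\psi\|$.
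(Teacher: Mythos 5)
Your proof is correct and follows essentially the same plan as the paper's: reduce via $\Gamma(\mathds{1}_{|\vec{y}|\ge 3d})\varphi=\varphi$, control the unbounded $I$ on tensor products using exponential photon-number moments (Lemma~\ref{lm:a2} for $\mathcal{J}(u)$, Hypothesis~\textbf{(B2)} for $\varphi$), and then trade a factor $\|e^{cN}\mathcal{J}(u)\|=\mathcal{O}(1)$ against $\|(\mathds{1}-\Gamma(\chi_{|\vec{y}|\le d}))\mathcal{J}(u)\|$ by Cauchy--Schwarz, which accounts for the half-power in the final exponent. The only difference is cosmetic: where you rederive the $I$-bound from the sectorwise combinatorial estimate plus the negative binomial identity, the paper packages exactly the same fact as the statement that $I\,\Gamma(a\mathds{1})\otimes\Gamma(b\mathds{1})$ is a contraction whenever $a^2+b^2\le 1$, and then inserts $\Gamma(e^{-\delta'}\mathds{1})\otimes\Gamma(e^{-\delta}\mathds{1})$ with $e^{-2\delta'}+e^{-2\delta}\le 1$. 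You are also slightly more careful than the paper in one place: the paper cites Lemma~\ref{lm:a3} directly for the bound on $\|(\mathds{1}-\Gamma(\chi_{|\vec{y}|\le d}))\mathcal{J}(u)\|$, whereas Lemma~\ref{lm:a3} is stated for the sharp cutoff $\mathds{1}_{|\vec{y}|\le d}$; your monotonicity step via $\chi_{|\vec{y}|\le d}\ge\mathds{1}_{|\vec{y}|\le d/2}$ supplies the missing (routine) adaptation.
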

\vspace{2mm}

A few remarks concerning the statement of Lemma \ref{lm:b1} are in order. In more precise terms, the lemma means that
\begin{equation*}
\big \| I \mathcal{J}( u ) \otimes \varphi - I \big ( \Gamma( \chi_{ | \vec{y} | \le d } ) \mathcal{J}( u ) \big ) \otimes \big ( \Gamma( \mathds{1}_{ | \vec{y} | \ge 3d } ) \varphi \big ) \big \| \le \mathrm{C} (d/R)^{ \frac{-1+\gamma}{2} } ,
\end{equation*}
where $\mathrm{C}$ is a positive constant depending on $\gamma$, $v$ ($v$ being the function of Hypothesis \textbf{(B1)}) and $\varphi$, but not on $R$ and $d$ such that $d > R$. We mention that the exponent $(-1+\gamma)/2$ is presumably not sharp. We do not make any attempt to optimize it. We also observe that the fact that $\mathcal{J}(u) \otimes \varphi \in \mathcal{D}( I )$ follows from Lemma \ref{lm:a2} and Hypothesis \textbf{(B2)}. The fact that $\big ( \Gamma( \chi_{ | \vec{y} | \le d } ) \mathcal{J}( u ) \big ) \otimes \big ( \Gamma( \mathds{1}_{ | \vec{y} | \ge 3d } ) \varphi \big )$ also belongs to  $\mathcal{D}(I)$ is a consequence of \eqref{eq:b1}. More precisely, since $j_0 \equiv 1$ on $[ 0 , 1 ]$ and $j_\infty \equiv 1$ on $[ 3 , \infty )$, \eqref{eq:b1} implies that
\begin{align*}
& I \big ( \Gamma( \chi_{ | \vec{y} | \le d } ) \mathcal{J}( u ) \big ) \otimes \big ( \Gamma( \mathds{1}_{ | \vec{y} | \ge 3d } ) \varphi \big ) \\
&= \check{\Gamma}( \mathbf{j} )^* \big ( \Gamma( \chi_{ | \vec{y} | \le d } ) \mathcal{J}( u ) \big ) \otimes \big ( \Gamma( \mathds{1}_{ | \vec{y} | \ge 3d } ) \varphi \big ) ,
\end{align*}
and therefore the boundedness of $\check{\Gamma}( \mathbf{j} )^*$ yields
\begin{equation*}
\big \|ÊI \big ( \Gamma( \chi_{ | \vec{y} | \le d } ) \mathcal{J}( u ) \big ) \otimes \big ( \Gamma( \mathds{1}_{ | \vec{y} | \ge 3d } ) \varphi \big ) \big \| \le \big \| \mathcal{J}(u) \big \| \| \varphi \|. 
\end{equation*}

\vspace{2mm}

In what follows, we denote by $\tilde H^\varepsilon$ the total Hamiltonian on $\mathcal{H}$ where the interaction between the atom $P$ and the system $Q$ has been removed, that is
\begin{equation}\label{eq:def-tilde-Heps}
\tilde H^\varepsilon := H^\varepsilon - H_{P,Q}.
\end{equation}
Moreover, to shorten notations, we introduce the definition
\begin{equation}\label{eq:def-psi-loc}
\psi_{ \mathrm{loc} } := I \big ( \Gamma( \chi_{ | \vec{y} | \le d } ) \mathcal{J}( u ) \big ) \otimes \big ( \Gamma( \mathds{1}_{ | \vec{y} | \ge 3d } ) \varphi \big ) = \check{\Gamma}( \mathbf{j} )^* \big ( \Gamma( \chi_{ | \vec{y} | \le d } ) \mathcal{J}( u ) \big ) \otimes \big ( \Gamma( \mathds{1}_{ | \vec{y} | \ge 3d } ) \varphi \big ) .
\end{equation}
We prove in Lemma \ref{lm:e1} that, if the system is initially in the (non-normalized) state $\psi_{ \mathrm{loc} }$, the contribution of the interaction between the atom and the subsystem $Q$, $H_{P,Q}$, to the dynamics, remains small for times not too large.  
\vspace{2mm}

\begin{lemma}\label{lm:e1}
Let $u \in L^2( \mathbb{R}^3 )$ be as in Hypothesis \textbf{(B1)} and $\varphi \in \mathcal{H}_{Q} \otimes \mathcal{H}_{E}$ be as in Hypothesis \textbf{(B2)}, with $d > R \ge 1$. Assume Hypothesis \textbf{(B3)}. For all times $t \ge 0$, we have that
\begin{equation*}
e^{ - i t H^\varepsilon } \psi_{ \mathrm{loc} } = e^{ - i t \tilde{H}^\varepsilon } \psi_{ \mathrm{loc} } + \mathcal{O} ( t d^{-1} ) + \mathcal{O}( ( d / R )^{- \infty} ) + \mathcal{O}( t d^{-\beta}).
\end{equation*}
\end{lemma}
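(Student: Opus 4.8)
The plan is to use Cook's method to compare the dynamics generated by $H^\varepsilon$ and by $\tilde H^\varepsilon = H^\varepsilon - H_{P,Q}$, applied to the initial state $\psi_{\mathrm{loc}}$. By the standard estimate recalled in the ``Notations and conventions'' subsection,
\begin{equation*}
\big \| e^{ - i t H^\varepsilon } \psi_{\mathrm{loc}} - e^{ - i t \tilde H^\varepsilon } \psi_{\mathrm{loc}} \big \| \le \int_0^t \big \| H_{P,Q} \, e^{ - i s \tilde H^\varepsilon } \psi_{\mathrm{loc}} \big \| \, ds ,
\end{equation*}
so the whole problem reduces to showing that $\| H_{P,Q} e^{ - i s \tilde H^\varepsilon } \psi_{\mathrm{loc}} \| = \mathcal{O}( d^{-1} ) + \mathcal{O}( (d/R)^{-\infty} ) + \mathcal{O}( d^{-\beta} )$ uniformly for $s$ in the relevant range, after which integrating in $s$ produces the three error terms $\mathcal{O}( t d^{-1} )$, $\mathcal{O}( (d/R)^{-\infty} )$ and $\mathcal{O}( t d^{-\beta} )$. (Here I am implicitly using $t d^{-1} \lesssim (d/R)^{-\infty}$-type bookkeeping only where harmless; more honestly, the $(d/R)^{-\infty}$ term will come from a piece that is already $s$-independent and small, so no extra factor of $t$ appears.)

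The key point is that under $\tilde H^\varepsilon = H_{P\vee E}^\varepsilon + H_{Q\vee E}$ (plus the decoupled $H_Q$ bookkeeping), the two ``halves'' of $\psi_{\mathrm{loc}}$ evolve independently in the factorized picture: writing $\psi_{\mathrm{loc}} = \check\Gamma(\mathbf j)^* \big( \Gamma(\chi_{|\vec y|\le d}) \mathcal J(u) \big) \otimes \big( \Gamma(\mathds 1_{|\vec y|\ge 3d}) \varphi \big)$, the atom-photon part evolves by $e^{-is H_{P\vee E}^\varepsilon}$ and by Lemma \ref{lm:f1} its photons stay in $\{|\vec y|\le d\}$ up to an error $\mathcal{O}(s^{5/4} d^{-1/2}) + \mathcal{O}(\langle s\rangle^{3/4}(d/R)^{-1/2})$, while the $Q$-photon part evolves by $e^{-is H_{Q\vee E}}$ and by Lemma \ref{lm:g1} its photons stay in $\{|\vec y|\ge 2d\}$ up to $\mathcal{O}(s^2 d^{-1})$. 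Thus, up to these (integrable, hence absorbed into the stated errors) corrections, $e^{-is\tilde H^\varepsilon}\psi_{\mathrm{loc}}$ is essentially supported on configurations in which the atom's orbital wave function is still localized near the origin — more precisely, by Step 2 of the proof of Lemma \ref{lm:f1}, $\| |\vec x| e^{-is H_{P\vee E}^\varepsilon}\mathcal J(u)\| = \mathcal{O}(s) + \mathcal{O}(R)$, so for $s$ not too large (and using $d > R^2$) the atom is confined to $\{|\vec x|\le d\}$ up to error $\mathcal{O}(s d^{-1}) + \mathcal{O}(R d^{-1})$. On that region, Hypothesis \textbf{(B3)}, i.e. $\| H_{P,Q} \mathds 1_{|\vec x|\le d}\| \le C d^{-\beta}$, gives the bound $\mathcal{O}(d^{-\beta})$ for the main contribution, and the tails coming from the atom leaking out of $\{|\vec x|\le d\}$ are controlled, after inserting $\langle\vec x\rangle^{-1}\langle\vec x\rangle$ and using that $H_{P,Q}$ is $H$-bounded, by the $\mathcal{O}(d^{-1})$ and $\mathcal{O}((d/R)^{-\infty})$ factors.

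Concretely I would: first replace $\psi_{\mathrm{loc}}$ inside $H_{P,Q} e^{-is\tilde H^\varepsilon}(\cdot)$ by its factorized form and commute $H_{P,Q}$ (which acts only on $\mathcal H_P\otimes\mathcal H_Q$, trivially on the photon Fock space) past the photon localizations; second, insert the atomic spatial cutoff $\mathds 1_{|\vec x|\le d} + \mathds 1_{|\vec x|> d}$, estimate the first piece by \textbf{(B3)} and the second by $\| H_{P,Q}\langle\vec x\rangle \langle\vec x\rangle^{-1} \mathds 1_{|\vec x|>d} e^{-is H_{P\vee E}^\varepsilon}\mathcal J(u)\|$ together with Step 2 of Lemma \ref{lm:f1} and the fact that $H_{P,Q}\langle\vec x\rangle$ is still relatively bounded (using $[H_{P,Q},\vec x]$ controlled as in \textbf{(A1)}, or simply the $H$-boundedness plus a commutator estimate); third, collect the photon-localization errors from Lemmas \ref{lm:f1} and \ref{lm:g1}, which are $s$-dependent but integrate to the claimed orders once one uses $d > R^2 \ge 1$ to simplify powers; and finally integrate over $s\in[0,t]$. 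The main obstacle I anticipate is the bookkeeping of the various mixed error terms — keeping track of which powers of $s$, $d$ and $R$ appear and verifying that, under the running assumption $d > R^2$, they all collapse into $\mathcal{O}(td^{-1}) + \mathcal{O}((d/R)^{-\infty}) + \mathcal{O}(td^{-\beta})$ — together with the technical point that $H_{P,Q}$ composed with an unbounded spatial weight $\langle\vec x\rangle$ remains controllable on the range of $e^{-isH_{P\vee E}^\varepsilon}\mathcal J(u)$, which is where one must be careful not to lose uniformity in the number of degrees of freedom of $Q$.
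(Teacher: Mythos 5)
Your plan has a genuine gap at the step where you insert $\mathds{1}_{|\vec x|\le d}+\mathds{1}_{|\vec x|> d}$ inside the Cook integrand $\| H_{P,Q}\, e^{-is\tilde H^\varepsilon}\psi_{\mathrm{loc}}\|$. Hypothesis \textbf{(B3)} controls only $\| H_{P,Q}\mathds{1}_{|\vec x|\le d}\|$; it says nothing about $H_{P,Q}$ on $\{|\vec x|>d\}$, where it may well be unbounded. The remedies you sketch do not repair this: $[H_{P,Q},\vec x]=0$ is a Model~1 assumption (\textbf{(A1)}), not assumed here; and $H$-relative boundedness alone does not let you pull $\langle\vec x\rangle$ or a sharp cutoff through $H_{P,Q}$ without an extra commutator hypothesis that the paper never makes. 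Moreover, even if $\| H_{P,Q}\mathds{1}_{|\vec x|>d}\,e^{-is\tilde H^\varepsilon}\psi_{\mathrm{loc}}\|$ could be reduced to a tail estimate $\mathcal O(sd^{-1})+\mathcal O((d/R)^{-\infty})$, the subsequent integration in $s$ would produce $\mathcal O(t^2 d^{-1})+\mathcal O(t(d/R)^{-\infty})$, which is strictly weaker than the asserted $\mathcal O(td^{-1})+\mathcal O((d/R)^{-\infty})$. In the stated result the $(d/R)^{-\infty}$ term must \emph{not} carry a factor of $t$, and a naive Cook argument cannot produce that.

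The paper sidesteps this by not doing Cook directly on $e^{-itH^\varepsilon}-e^{-it\tilde H^\varepsilon}$. Instead one writes
\begin{equation*}
e^{-itH^\varepsilon}\psi_{\mathrm{loc}}-e^{-it\tilde H^\varepsilon}\psi_{\mathrm{loc}}
=\bigl(e^{-itH^\varepsilon}\chi_{|\vec x|\le d}-\chi_{|\vec x|\le d}\,e^{-it\tilde H^\varepsilon}\bigr)\psi_{\mathrm{loc}}
+(\chi_{|\vec x|\le d}-\mathds{1})\,e^{-it\tilde H^\varepsilon}\psi_{\mathrm{loc}}
+e^{-itH^\varepsilon}(\mathds{1}-\chi_{|\vec x|\le d})\psi_{\mathrm{loc}},
\end{equation*}
with a smooth cutoff $\chi_{|\vec x|\le d}$. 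Cook's method applied to the \emph{first} bracket produces $H_{P,Q}\chi_{|\vec x|\le d}$ (directly $\mathcal O(d^{-\beta})$ by \textbf{(B3)}, giving $\mathcal O(td^{-\beta})$ after integration) plus the commutator $[\tilde H^\varepsilon,\chi_{|\vec x|\le d}]$, which equals $[-\Delta_{\vec x}/2,\chi_{|\vec x|\le d}]$ since every other piece of $\tilde H^\varepsilon$ commutes with a function of $\vec x$; this gives $\mathcal O(td^{-1})$. The second and third pieces never see $H_{P,Q}$ at all; the third is a one-time quantity $\|(\mathds{1}-\chi_{|\vec x|\le d})\psi_{\mathrm{loc}}\|=\mathcal O((d/R)^{-\infty})$ by Lemma~\ref{lm:a1}, and the second is the same initial estimate plus one more integral of the same $\mathcal O(d^{-1})$ commutator. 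Notice that only $\vec x$-localization (Lemma~\ref{lm:a1}) is used; the photon-localization lemmas~\ref{lm:f1} and~\ref{lm:g1} you invoke are not needed for this lemma and enter only later in the proof of Theorem~\ref{thm:isolated}.
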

\vspace{2mm}

\begin{proof}
We estimate the norm of 
\begin{align}
 e^{ - i t H^\varepsilon } \psi_{ \mathrm{loc} } - e^{ - i t \tilde{H}^\varepsilon } \psi_{ \mathrm{loc} } &=   (e^{-itH^\varepsilon} \chi_{ | \vec{x} | \le d } - \chi_{ | \vec{x} | \le d }  e^{-it\tilde{H}^\varepsilon})\psi_{ \mathrm{loc} }\\
 &+ (\chi_{ | \vec{x} | \le d } - \mathds{1} ) e^{-it\tilde{H}^\varepsilon}\psi_{ \mathrm{loc} } \notag 
 + e^{-itH^\varepsilon} ( \mathds{1} -\chi_{ | \vec{x} | \le d }) \psi_{ \mathrm{loc}}.  \label{eq:vn2}
\end{align}
Using unitarity of $e^{-it H^\varepsilon}$, we compute
\begin{align}
\big \| & (e^{-itH^\varepsilon} \chi_{ | \vec{x} | \le d } - \chi_{ | \vec{x} | \le d }  e^{-it\tilde{H}^\varepsilon})\psi_{ \mathrm{loc} } \big \| \\
& =\Big \| \int_0^t e^{ -i s H^\varepsilon } ( - H^\varepsilon \chi_{ | \vec{x} | \le d } + \chi_{ | \vec{x} | \le d }  \tilde H^\varepsilon ) e^{ - i (t-s) \tilde H^\varepsilon } \psi_{ \mathrm{loc} } \, ds \, \Big \| \notag \\
& \le \int_0^t \big \| H_{P,Q}  \chi_{ | \vec{x} | \le d }  e^{ - i (t-s) \tilde H^\varepsilon } \psi_{ \mathrm{loc} } \big \| \, ds +  \int_0^t \big \| [\tilde{H}^{\varepsilon}, \chi_{ | \vec{x} | \le d }] e^{ - i (t-s) \tilde H^\varepsilon } \psi_{ \mathrm{loc} } \big \| \, ds. \label{eq:c3}
\end{align}
By Hypothesis \textbf{(B3)},  we have  that
\begin{equation}
\big \| H_{P,Q} \chi_{ | \vec{x} | \le d }  e^{ - i s \tilde H^\varepsilon } \psi_{ \mathrm{loc} } \big \| \le C d^{-\beta} . \label{eq:c4}
\end{equation}
To estimate the second term on the right side of \eqref{eq:c3}, we compute 
\begin{align}
\big \|Ê\big [ \tilde H^\varepsilon , \chi_{ | \vec{x} | \ge d } \big ] ( - \Delta_{\vec{x}} + \mathds{1} )^{-\frac12} \big \| = \frac12 \big \| \big [ - \Delta_{\vec{x}} , \chi_{ | \vec{x} | \ge d } \big ] ( - \Delta_{\vec{x}} + \mathds{1} )^{-\frac12} \big \| = \mathcal{O}( d^{-1} ). \label{eq:c2}
\end{align}
Since $\mathcal{D}( \tilde H^\varepsilon ) = \mathcal{D} ( H^\varepsilon ) \subset \mathcal{D} ( -\Delta_{\vec{x}} ) \subset \mathcal{D} ( ( - \Delta_{\vec{x}} + \mathds{1} )^{\frac12} )$, there exist positive constant $a$, $b$ such that
\begin{equation*}
\|Ê( - \Delta_{\vec{x}} + \mathds{1} )^{\frac12} u \| \le a \| \tilde H^\varepsilon u \| + b \| u \| ,
\end{equation*}
for all $u \in  \mathcal{D}( \tilde H^\varepsilon )$, which, combined with \eqref{eq:c2}, yields
\begin{equation}
\big \| \big [ \tilde H^\varepsilon , \chi_{ | \vec{x} | \ge d } \big ] e^{ - i (t-s) \tilde H^\varepsilon } \psi_{ \mathrm{loc} } \big \| = \mathcal{O}( d^{-1} ). \label{eq:c6}
\end{equation}
Here we used that $\psi_{ \mathrm{loc} }$ belongs to $\mathcal{D}( \tilde H^\varepsilon )$. Indeed, by Hypothesis \textbf{(B3)}, $H_{P,Q}$ is $H^\varepsilon$-relatively bounded, and hence $\tilde H^\varepsilon$ is also $H^\varepsilon$-relatively bounded. Moreover $\mathcal{D}( H^\varepsilon ) = \mathcal{D}( H_P + H_Q + H_E )$ by assumption, and it is not difficult to verify that $\psi_{ \mathrm{loc} } \in \mathcal{D}( H_P + H_Q + H_E )$.

We  now introduce $\chi_{ | \vec{x} | \ge d } := \mathds{1} - \chi_{ | \vec{x} | \le d }$. We estimate the second term in the right side of \eqref{eq:vn2} by using
\begin{align}
\big \| \chi_{ | \vec{x} | \ge d } e^{ - i t \tilde H^\varepsilon } \psi_{ \mathrm{loc} } \big \| &\le \big \| \chi_{ | \vec{x} | \ge d } \psi_{ \mathrm{loc} } \big \| + \Big \| \int_0^t e^{ i s \tilde H^\varepsilon } \big [ \tilde H^\varepsilon , \chi_{ | \vec{x} | \ge d } \big ] e^{ - i s \tilde H^\varepsilon } \psi_{ \mathrm{loc} } \, d s \Big \| \notag \\
&\le \big \| \chi_{ | \vec{x} | \ge d } \psi_{ \mathrm{loc} } \big \| + \int_0^t \big \| \big [ \tilde H^\varepsilon , \chi_{ | \vec{x} | \ge d } \big ] e^{ - i s \tilde H^\varepsilon } \psi_{ \mathrm{loc} } \big \| \, d s . \label{eq:c1}
\end{align}
The first term on the right side of \eqref{eq:c1} is estimated as follows. The definition \eqref{eq:def-psi-loc} of $\psi_{ \mathrm{loc} }$ gives
\begin{align*}
\big \| \chi_{ | \vec{x} | \ge d } \psi_{ \mathrm{loc} } \big \| &= \big \| \chi_{ | \vec{x} | \ge d } \check{\Gamma}( \mathbf{j} )^* \big ( \Gamma( \chi_{ | \vec{y} | \le d } ) \mathcal{J}( u ) \big ) \otimes \big ( \Gamma( \mathds{1}_{ | \vec{y} | \ge 2d } ) \varphi \big ) \big \| \\
& \le \big \| \chi_{ | \vec{x} | \ge d } \mathcal{J}( u ) \big \| \| \varphi \| .
\end{align*}
Applying Lemma \ref{lm:a1} we then find that 
\begin{align}
\big \| \chi_{ | \vec{x} | \ge d } \psi_{ \mathrm{loc} } \big \| &= \mathcal{O}( (d/R)^{-\infty} ). \label{eq:c5}
\end{align}

The second term in the right  side of \eqref{eq:c1} has been already estimated in \eqref{eq:c6} above and the first term in the right side of \eqref{eq:vn2} is estimated by \eqref{eq:c5}. Equations \eqref{eq:c3}, \eqref{eq:c4},  \eqref{eq:c6}, \eqref{eq:c1} and  \eqref{eq:c5} prove the statement of the lemma.
\end{proof}
\vspace{2mm}

On the Hilbert space $\mathcal{H}_0 \otimes \mathcal{H}_\infty$, we abbreviate
\begin{equation*}
N_0 := N \otimes \mathds{1}_{ \mathcal{H}_\infty } , \qquad N_\infty := \mathds{1}_{ \mathcal{H}_0 } \otimes N ,
\end{equation*}
where, recall, $N$ is the  photon-number operator on Fock space. In the next lemma, we rewrite the Hamiltonian $\tilde H^\varepsilon$ defined in \eqref{eq:def-tilde-Heps} (total Hamiltonian without the interaction between the atom and the system $Q$) in the representation  corresponding to the factorization of Fock space given by  $\check{\Gamma}( \mathbf{j} )$. Combined with Lemma \ref{lm:e1}, Lemma \ref{lm:b2} will allow us to compare the dynamics $e^{- i t H^\varepsilon }$ on $\mathcal{H}$ with the tensor product $e^{ - i t H_{P \vee E}^\varepsilon } \otimes e^{ - i t H_{Q \vee E} }$ on $\mathcal{H}_0 \otimes \mathcal{H}_\infty$. The proof of Lemma \ref{lm:b2} is somewhat technical. It will be given in Appendix \ref{sec:lmb2}.
\vspace{2mm}
\begin{lemma}\label{lm:b2}
Assume Hypothesis \textbf{(B4)}. On $ \mathcal{D}( H_{P \vee E} ) \otimes \mathcal{D}( H_{Q \vee E} ) $, the following relation holds:
\begin{equation*}
\tilde H^\varepsilon \check{\Gamma}( \mathbf{j} )^* = \check{\Gamma}( \mathbf{j} )^* \big ( H_{P \vee E}^\varepsilon \otimes \mathds{1}_{ \mathcal{H}_\infty } + \mathds{1}_{ \mathcal{H}_0 } \otimes H_{Q \vee E} \big ) + \mathrm{Rem}_1 + \mathrm{Rem}_2  ,
\end{equation*}
with
\begin{equation*}
\big \| \mathrm{Rem}_1 ( N_0 + N_\infty + \mathds{1} )^{-1} \big \| = \mathcal{O}( d^{-1} ) ,
\end{equation*}
and
\begin{equation*}
\big \| \mathrm{Rem}_2 \big ( N_0 + N_\infty + \langle \vec{x} \rangle^{4 - 2 \delta}  \big )^{-1} \big \| = \mathcal{O}( d^{-2+\delta} ) ,
\end{equation*} 
for all $0 < \delta \le 2$, where we used the usual notation $\langle \vec{x} \rangle := \sqrt{1+ \vec{x}^2}$.
\end{lemma}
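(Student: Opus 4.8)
## Proof plan for Lemma \ref{lm:b2}

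The plan is to conjugate $\tilde H^\varepsilon = H_{P\vee E}^\varepsilon + H_{Q\vee E} + H_{Q,E}^{\text{cross}}$—wait, more precisely $\tilde H^\varepsilon = H_{P\vee E}^\varepsilon + H_{Q} + H_{E} + H_{Q,E}$, where $H_E$ is a \emph{single} copy of the field energy shared between the two sectors—by the partial isometry $\check\Gamma(\mathbf{j})$ and to track how each summand transforms. The three pieces to analyze are: (a) the atomic part $H_P^\varepsilon = -\Delta_{\vec x}/2 + V_\varepsilon + \text{diag}(\omega_0,0)$, which commutes with everything in the field sector and is harmless; (b) the free field energy $H_E = \mathrm{d}\Gamma(|\vec k|)$; and (c) the interaction terms $H_{P,E} = \Phi(h_x)$ and $H_{Q,E}$. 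The whole point is that $\check\Gamma(\mathbf{j})$ does \emph{not} commute with $H_E$ or with the interactions exactly, but the commutators are controlled by the standard DG-type estimates: lifting $\mathbf{j}$ to Fock space produces error terms governed by $[\,|\vec k|, j_{0/\infty}(|\vec y|/d)\,]$, which is $\mathcal O(d^{-1})$ in the appropriate relative sense.

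The key steps, in order. First I would recall the DG ``intertwining'' identities for $\check\Gamma(\mathbf{j})$ acting on second-quantized operators: for a one-particle operator $b$ one has $\mathrm{d}\Gamma(b)\check\Gamma(\mathbf{j})^* = \check\Gamma(\mathbf{j})^*\big(\mathrm{d}\Gamma(b)\otimes\mathds 1 + \mathds 1\otimes\mathrm{d}\Gamma(b)\big) + \check{\mathrm{d}\Gamma}(\mathbf{j},[\,|\vec k|,\mathbf{j}\,])$-type remainder, where the remainder is an $N$-bounded operator whose norm, relative to $(N_0+N_\infty+\mathds 1)$, is $\mathcal O(\|[\,|\vec k|,j_0(|\vec y|/d)]\|)=\mathcal O(d^{-1})$; this handles $H_E$ and contributes to $\mathrm{Rem}_1$. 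Second, for the atom-field interaction $H_{P,E}=a^*(h_x)+a(h_x)$: applying $\check\Gamma(\mathbf{j})$ turns $a^*(h_x)$ into $a^*(\mathbf{j_0}h_x)\otimes\mathds 1 + \mathds 1\otimes a^*(\mathbf{j_\infty}h_x)$ up to a commutator term. Because $h_x(\underline k)$ carries the factor $e^{-i\vec k\cdot\vec x}$ and $\chi(\vec k)$ is smooth with bounded support, the far piece $\mathbf{j_\infty}h_x = j_\infty(|\vec y|/d)h_x$ is small: the position-space tail of $h_x$ at distance $\gtrsim d$ from the atom decays, but only polynomially, and one picks up a factor $\langle\vec x\rangle$ from the $e^{-i\vec k\cdot\vec x}$ dependence — this is exactly why $\mathrm{Rem}_2$ is weighted by $\langle\vec x\rangle^{4-2\delta}$ and only $\mathcal O(d^{-2+\delta})$. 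The near piece $\mathbf{j_0}h_x$ differs from $h_x$ by $\mathcal O(d^{-1})$ (again a commutator of $|\vec k|$-type symbol with the cutoff), feeding $\mathrm{Rem}_1$. Third, for $H_{Q,E}$: here Hypothesis \textbf{(B4)}, specifically the commutation relations \eqref{com} with $a^\sharp(\mathds 1_{|\vec y|\le 3d}h)$ and with $\Gamma(j(\vec y))$ for multipliers supported in $\{|\vec y|\ge 3d\}$, is used to show that $H_{Q,E}$ only ``sees'' the far sector after factorization, i.e. $H_{Q,E}\check\Gamma(\mathbf{j})^* = \check\Gamma(\mathbf{j})^*(\mathds 1_{\mathcal H_0}\otimes H_{Q,E})$ up to an $\mathcal O(d^{-1})$-remainder coming from the overlap region $d\lesssim|\vec y|\lesssim 3d$ where $j_\infty\neq 1$. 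Fourth, I would collect the $H_P^\varepsilon$, $H_Q$ pieces (which commute with $\check\Gamma(\mathbf j)$ outright since they don't touch the photon Fock space in a $\vec y$-dependent way), add the field pieces, and regroup the main terms as $H_{P\vee E}^\varepsilon\otimes\mathds 1 + \mathds 1\otimes H_{Q\vee E}$, absorbing all commutator errors into $\mathrm{Rem}_1$ (the $d^{-1}$, weight $N_0+N_\infty+\mathds 1$ ones) and $\mathrm{Rem}_2$ (the $d^{-2+\delta}$, weight $N_0+N_\infty+\langle\vec x\rangle^{4-2\delta}$ one from $H_{P,E}$'s far tail).

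The main obstacle will be the bookkeeping for the atom-field interaction term $H_{P,E}=\Phi(h_x)$: unlike a genuine one-particle field operator, $h_x$ depends on the atomic position $\vec x$ through $e^{-i\vec k\cdot\vec x}$, so the ``far'' component $j_\infty(|\vec y|/d)h_x$ is not simply small in $d$ — one must trade the photon-position cutoff against the atomic position, and the precise power of $\langle\vec x\rangle$ that comes out (and matches the $\langle\vec x\rangle^{4-2\delta}$ in the statement) requires care. Concretely, I expect to write $j_\infty(|\vec y|/d)h_x$ as a kernel, exploit that on its support $|\vec y|\gtrsim d$ while the effective ``width'' of $h_x$ in $\vec y$ around $\vec x$ is $\mathcal O(\langle\vec x\rangle)$ (from smoothness of $\chi(\vec k)|\vec k|^{1/2}$), and extract polynomial decay $\mathcal O((d/\langle\vec x\rangle)^{-(2-\delta)})$ by nonstationary phase / repeated integration by parts in $\vec k$ — which is also where the freedom in $0<\delta\le 2$ enters, reflecting how many derivatives one is willing to spend. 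Everything else reduces to the standard commutator estimates of Appendix \ref{app:comm} (Lemmas \ref{lm:B8}, \ref{lm:B9}) already quoted in the proof of Lemma \ref{lm:f1}, together with relative-boundedness of $\Phi(i h_x)$ with respect to $H_E^{1/2}$.
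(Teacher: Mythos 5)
Your overall plan---conjugate term by term through the DG intertwining identities and control the remainders---matches the paper's proof in spirit, and the closing paragraph correctly identifies that the far piece $\mathbf{j}_\infty h_x$ needs to be traded against $\langle\vec{x}\rangle$. But there are two concrete misallocations that would derail the argument as you have sketched it.

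First, you treat Hypothesis \textbf{(B4)} as giving only approximate commutation with a leftover from the overlap region $d\lesssim|\vec{y}|\lesssim 3d$. In fact \textbf{(B4)} is designed to make the corresponding identity \emph{exact}: since $\mathbf{j}_\infty = j_\infty(|\vec{y}|/d)$ equals $\mathds{1}$ on $\{|\vec{y}|\ge 2d\}\supset\{|\vec{y}|\ge 3d\}$, the second relation in \eqref{com} gives $[H_{Q,E},\Gamma(\mathbf{j}_\infty)]=0$, and the first relation (commutation with $a^\sharp(\mathds{1}_{|\vec{y}|\le 3d}h)$) ensures $H_{Q,E}$ carries no photon operators that could be felt by the $\Gamma(\mathbf{j}_0)$ sector, since $\mathbf{j}_0$ is supported in $\{|\vec{y}|\le 2d\}$. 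Hence $H_{Q,E}\check{\Gamma}(\mathbf{j})^* = \check{\Gamma}(\mathbf{j})^*(\mathds{1}_{\mathcal{H}_0}\otimes H_{Q,E})$ with no error term at all; budgeting an $\mathcal{O}(d^{-1})$ remainder here is not wrong in magnitude but misses the whole point of the hypothesis.

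Second, and more seriously: you claim the near piece ``$\mathbf{j}_0 h_x$ differs from $h_x$ by $\mathcal{O}(d^{-1})$'' via a commutator of $|\vec{k}|$ with the cutoff, and route this into $\mathrm{Rem}_1$. This step would fail. The identity $\Phi(h_x)\check{\Gamma}(\mathbf{j})^*=\check{\Gamma}(\mathbf{j})^*\big(\Phi(\mathbf{j}_0 h_x)\otimes\mathds{1}+\mathds{1}\otimes\Phi(\mathbf{j}_\infty h_x)\big)$ is already exact (there is no commutator term for field operators linear in $a^\sharp$), so the only error to control is $\Phi\big((\mathds{1}-\mathbf{j}_0)h_x\big)$ when one replaces $\Phi(\mathbf{j}_0 h_x)$ by $H_{P,E}=\Phi(h_x)$ in the main term. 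But $(\mathds{1}-\mathbf{j}_0)h_x$ is a \emph{tail}, supported in $\{|\vec{y}|\gtrsim d\}$, exactly like $\mathbf{j}_\infty h_x$: for $|\vec{x}|$ comparable to $d$ its $L^2$-norm is order one, not $\mathcal{O}(d^{-1})$, so an $(N_0+N_\infty+\mathds{1})^{-1}$-relative bound cannot give $\mathcal{O}(d^{-1})$. It must be estimated with the $\langle\vec{x}\rangle^{2-\delta}$ weight via the stationary-phase/Lemma~\ref{lm:B3} argument, yielding $\mathcal{O}(d^{-2+\delta})$, and it therefore belongs in $\mathrm{Rem}_2$ alongside $\Phi(\mathbf{j}_\infty h_x)$. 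In the paper, $\mathrm{Rem}_1$ comes solely from the free field commutator $\check{\mathrm{ad}}(|k|,\mathbf{j}^*)$ arising from $H_E$, and $\mathrm{Rem}_2$ collects both tail pieces of $H_{P,E}$; your split puts one of the weight-dependent pieces into the weight-free remainder, where the claimed estimate does not hold.
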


\subsection{Proof of Theorem \ref{thm:isolated}}\label{section:main_proof}

In this section, we prove our main result, Theorem \ref{thm:isolated}. 

 \vspace{2mm}

\begin{proof}[Proof of Theorem \ref{thm:isolated}]
To simplify the exposition, we assume that the  initial (non-normalized) state $\psi$ is  given by  $\psi = I \mathcal{J}( u ) \otimes \varphi$. The more general  initial condition  presented in the statement of Theorem \ref{thm:isolated} can be  directly  deduced from this special  case. We begin by  applying Lemma \ref{lm:b1}. Using unitarity of $e^{-itH^\varepsilon}$, this gives
\begin{align*}
& e^{ - i t H^\varepsilon } \psi = e^{ - i t H^\varepsilon } \psi_{ \mathrm{loc} } + \mathcal{O} \big ( (d/R)^{ \frac{-1+\gamma}{2} } \big ) , 
\end{align*}
for all $0 < \gamma \le 1$, with $\psi_{ \mathrm{loc} }$ defined in \eqref{eq:def-psi-loc}. Lemma \ref{lm:e1} then implies that
\begin{align}
& e^{ - i t H^\varepsilon } \psi = e^{ - i t \tilde{H}^\varepsilon } \psi_{ \mathrm{loc} } + \mathcal{O} \big ( (d/R)^{ \frac{-1+\gamma}{2} } \big ) + \mathcal{O} ( t d^{-1} ) +\mathcal{O} (td^{-\beta}) ,  \label{eq:k1}
\end{align}
where $\tilde{H}_\varepsilon$ is defined in \eqref{eq:def-tilde-Heps}.

Next, we show that
\begin{align}
\check{\Gamma}( \mathbf{j} ) e^{ - i t \tilde{H}^\varepsilon } \psi_{ \mathrm{loc} }  = \big ( e^{-i t H_{P \vee E}^\varepsilon } \mathcal{J}( u ) \big ) \otimes \big ( e^{- i t H_{Q \vee E} } \varphi \big ) + \mathcal{O} \big ( \langle t \rangle (d / R^2 )^{ - \frac{1}{2} } \big ) + \mathcal{O} ( t^2 d^{-\frac12} ) . \label{eq:i1}
\end{align}
We begin  by using the localization lemmas established above in order to rewrite the tensor product $ ( e^{-i t H_{P \vee E}^\varepsilon } \mathcal{J}( u ) ) \otimes ( e^{- i t H_{Q \vee E} } \varphi )$. Applying Lemma \ref{lm:f1} and Lemma \ref{lm:g1}, we obtain that
\begin{align}
 \big ( e^{-i t H_{P \vee E}^\varepsilon } \mathcal{J}( u ) \big ) \otimes \big ( e^{- i t H_{Q \vee E} } \varphi \big ) =& \big ( \Gamma( \mathds{1}_{ | \vec{y} | \le d } ) e^{-i t H_{P \vee E}^\varepsilon } \mathcal{J}( u ) \big ) \otimes \big ( \Gamma( \mathds{1}_{ | \vec{y} | \ge 2 d } ) e^{- i t H_{Q \vee E} } \varphi \big ) \notag \\
 & + \mathcal{O}\big ( t^{\frac54} d^{-\frac12} \big ) + \mathcal{O} \big ( \langle t \rangle^{\frac34} ( d / R )^{-\frac12} \big ) + \mathcal{O} ( t^2 d^{-1} ) . \label{eq:h1}
\end{align}
We observe that, since $\check{\Gamma}( \mathbf{j} )$ is a partial isometry, we have the relation
\begin{equation}
\check{\Gamma}( \mathbf{j} ) \check{\Gamma}( \mathbf{j} )^* = \mathds{1}_{Ê\mathrm{Ran}( \check{\Gamma}( \mathbf{j} ) ) } , \label{eq:f1}
\end{equation}
where $\mathds{1}_{Ê\mathrm{Ran}( \check{\Gamma}( \mathbf{j} ) ) }$ stands for the projection onto the (closed) subspace $\mathrm{Ran}( \check{\Gamma}( \mathbf{j} ) )$ of $\mathcal{H}_0 \otimes \mathcal{H}_\infty$. Moreover it is not difficult to verify that $\mathrm{Ran} ( \Gamma( \mathds{1}_{ | \vec{y} | \le d } ) \otimes \Gamma( \mathds{1}_{ | \vec{y} | \ge 2 d } ) ) \subset \mathrm{Ran}( \check{\Gamma}( \mathbf{j} ) )$. Thus we deduce from \eqref{eq:h1} and \eqref{eq:f1} that
\begin{align}
 \big ( & e^{-i t H_{P \vee E}^\varepsilon } \mathcal{J}( u ) \big ) \otimes \big ( e^{- i t H_{Q \vee E} } \varphi \big ) \\
 &= \check{\Gamma}( \mathbf{j} ) \check{\Gamma}( \mathbf{j} )^* \big ( \Gamma( \mathds{1}_{ | \vec{y} | \le d } ) e^{-i t H_{P \vee E}^\varepsilon } \mathcal{J}( u ) \big ) \otimes \big ( \Gamma( \mathds{1}_{ | \vec{y} | \ge 2 d } ) e^{- i t H_{Q \vee E} } \varphi \big ) \notag \\
 & + \mathcal{O}\big ( t^{\frac54} d^{-\frac12} \big ) + \mathcal{O} \big ( \langle t \rangle^{\frac34} ( d / R )^{-\frac12} \big ) + \mathcal{O} ( t^2 d^{-1} ). \label{eq:h2}
\end{align}
Next we rewrite
\begin{align}
& \big ( e^{-i t H_{P \vee E}^\varepsilon } \mathcal{J}( u ) \big ) \otimes \big ( e^{- i t H_{Q \vee E} } \varphi \big ) \notag \\
&= e^{-i t ( H_{P \vee E}^\varepsilon \otimes \mathds{1}_{ \mathcal{H}_\infty } + \mathds{1}_{ \mathcal{H}_0Ê} \otimes  H_{Q \vee E} ) } \big ( \mathcal{J}( u ) \otimes \varphi \big ) \notag \\
&= e^{-i t ( H_{P \vee E}^\varepsilon \otimes \mathds{1}_{ \mathcal{H}_\infty } + \mathds{1}_{ \mathcal{H}_0Ê} \otimes  H_{Q \vee E} ) }  \big ( \big ( \Gamma( \chi_{ | \vec{y} | \le d } ) \mathcal{J}( u ) \big ) \otimes \big ( \Gamma( \mathds{1}_{ | \vec{y} | \ge 2d } ) \varphi \big ) \big ) + \mathcal{O} ( (d/R)^{-1+\gamma} ) , \label{eq:h3}
\end{align}
for all $0 < \gamma \le 1$, the last equality being a consequence of Lemma \ref{lm:a3} and Hypothesis \textbf{(B2)}. To shorten notations, we set
\begin{equation*}
\tilde{\psi}_{ \mathrm{loc} } := \big ( \Gamma( \chi_{ | \vec{y} | \le d } ) \mathcal{J}( u ) \big ) \otimes \big ( \Gamma( \mathds{1}_{ | \vec{y} | \ge 2d } ) \varphi \big ),
\end{equation*}
so that $\psi_{ \mathrm{loc} } = \check{\Gamma}( \mathbf{j} )^* \tilde{\psi}_{ \mathrm{loc} }$ according to \eqref{eq:def-psi-loc}.  Remark that $\tilde{\psi}_{ \mathrm{loc} } \in \mathcal{D}( H_{P \vee E}) \otimes  \mathcal{D}(H_{Q \vee E})$ because $\chi_{\vert \cdot \vert \leq d}$ is a smooth function with compact support. Combining \eqref{eq:h2} and \eqref{eq:h3} (with $0 < \gamma \le 1/2$), we obtain that
\begin{align}
 \big ( e^{-i t H_{P \vee E}^\varepsilon } \mathcal{J}( u ) \big ) \otimes \big (  &e^{- i t H_{Q \vee E} } \varphi \big ) = \check{\Gamma}( \mathbf{j} ) \check{\Gamma}( \mathbf{j} )^* e^{-i t ( H_{P \vee E}^\varepsilon \otimes \mathds{1}_{ \mathcal{H}_\infty } + \mathds{1}_{ \mathcal{H}_0Ê} \otimes  H_{Q \vee E} ) } \tilde{\psi}_{ \mathrm{loc} } \notag \\
 & + \mathcal{O}\big ( t^{\frac54} d^{-\frac12} \big ) + \mathcal{O} \big ( \langle t \rangle^{\frac34} ( d / R )^{-\frac12} \big ) + \mathcal{O} ( t^2 d^{-1} ). \label{eq:h4}
\end{align}

Now we prove \eqref{eq:i1}. It follows from \eqref{eq:h4} and $\psi_{ \mathrm{loc} } = \check{\Gamma}( \mathbf{j} )^* \tilde{\psi}_{ \mathrm{loc} }$ that 
\begin{align}
& \big \|Ê\check{\Gamma}( \mathbf{j} ) e^{ - i t \tilde{H}^\varepsilon } \psi_{ \mathrm{loc} }  - \big ( e^{-i t H_{P \vee E}^\varepsilon } \mathcal{J}( u ) \big ) \otimes \big ( e^{- i t H_{Q \vee E} } \varphi \big ) \big \| \notag \\
& = \big \| \check{\Gamma}( \mathbf{j} ) e^{ - i t \tilde{H}^\varepsilon } \check{\Gamma}( \mathbf{j} )^* \tilde{\psi}_{ \mathrm{loc} } - \check{\Gamma}( \mathbf{j} ) \check{\Gamma}( \mathbf{j} )^* e^{-i t ( H_{P \vee E}^\varepsilon \otimes \mathds{1}_{ \mathcal{H}_\infty } + \mathds{1}_{ \mathcal{H}_0Ê} \otimes  H_{Q \vee E} ) } \tilde{\psi}_{ \mathrm{loc} }  \big \|   \notag \\
&\quad + \mathcal{O}\big ( t^{\frac54} d^{-\frac12} \big ) + \mathcal{O} \big ( \langle t \rangle^{\frac34} ( d / R )^{-\frac12} \big ) + \mathcal{O} ( t^2 d^{-1} ) \notag \\
& = \big \| e^{ - i t \check{\Gamma}( \mathbf{j} ) \tilde{H}^\varepsilon \check{\Gamma}( \mathbf{j} )^* } \tilde{\psi}_{ \mathrm{loc} } - \check{\Gamma}( \mathbf{j} ) \check{\Gamma}( \mathbf{j} )^* e^{-i t ( H_{P \vee E}^\varepsilon \otimes \mathds{1}_{ \mathcal{H}_\infty } + \mathds{1}_{ \mathcal{H}_0Ê} \otimes  H_{Q \vee E} ) } \tilde{\psi}_{ \mathrm{loc} }  \big \| \notag \\
&\quad + \mathcal{O}\big ( t^{\frac54} d^{-\frac12} \big ) + \mathcal{O} \big ( \langle t \rangle^{\frac34} ( d / R )^{-\frac12} \big ) + \mathcal{O} ( t^2 d^{-1} ) . \label{eq:k2}
\end{align}
 We  compute 
 \begin{align}
&  \big \| e^{ - i t \check{\Gamma}( \mathbf{j} ) \tilde{H}^\varepsilon \check{\Gamma}( \mathbf{j} )^* } \tilde{\psi}_{ \mathrm{loc} } - \check{\Gamma}( \mathbf{j} ) \check{\Gamma}( \mathbf{j} )^* e^{-i t ( H_{P \vee E}^\varepsilon \otimes \mathds{1}_{ \mathcal{H}_\infty } + \mathds{1}_{ \mathcal{H}_0Ê} \otimes  H_{Q \vee E} ) } \tilde{\psi}_{ \mathrm{loc} }  \big \| \notag \\
&  \le \int_0^t \big \| \check{\Gamma}( \mathbf{j} )\big (  \tilde{H}^\varepsilon \check{\Gamma}( \mathbf{j} )^* - \check{ \Gamma }( \mathbf{j} )^* ( H_{P \vee E}^\varepsilon \otimes \mathds{1}_{ \mathcal{H}_\infty } + \mathds{1}_{ \mathcal{H}_0Ê} \otimes  H_{Q \vee E} ) \big ) \notag \\
&\qquad \qquad  e^{-i s ( H_{P \vee E}^\varepsilon \otimes \mathds{1}_{ \mathcal{H}_\infty } + \mathds{1}_{ \mathcal{H}_0Ê} \otimes  H_{Q \vee E} ) } \tilde{\psi}_{ \mathrm{loc} }  \big \| ds , \label{eq:j4}
\end{align}
where we used that $\check{\Gamma}( \mathbf{j} ) \check{\Gamma}( \mathbf{j} )^*  \tilde{\psi}_{ \mathrm{loc} }=\tilde{\psi}_{ \mathrm{loc} }  $ and that $\check{\Gamma}( \mathbf{j} )^* \check{\Gamma}( \mathbf{j} ) = \mathds{1}_{\mathcal{H}}$. Applying Lemma \ref{lm:b2} with $\delta = 3/2$, we obtain that
\begin{equation}
\big \| \big (Ê\tilde H^\varepsilon \check{\Gamma}( \mathbf{j} )^* - \check{\Gamma}( \mathbf{j} )^* \big ( H_{P \vee E}^\varepsilon \otimes \mathds{1}_{ \mathcal{H}_\infty } + \mathds{1}_{ \mathcal{H}_0 } \otimes H_{Q \vee E} \big )\big ) \big ( N_0 + N_\infty + \langle \vec{x} \rangle \big )^{-1} \big \| = \mathcal{O}( d^{-\frac12} ) . \label{eq:j1}
\end{equation}
Adapting in a straightforward way the proof of Lemma \ref{lm:f1} and using that $\Gamma( \chi_{ | \vec{y} | \le d } ) \mathcal{J}( u ) \in \mathcal{D}(N) \cap \mathcal{D}(H_{P \vee E})$, we deduce that 
\begin{equation}
\big \| ( N + \langle x \rangle ) e^{-i s H_{P \vee E}^\varepsilon }  \Gamma( \chi_{ | \vec{y} | \le d } ) \mathcal{J}( u ) \big \| = \mathcal{O} ( s ) + \mathcal{O} ( R ).  \label{eq:j2}
\end{equation}
By Hypothesis \textbf{(B5)}, we also have that
\begin{equation}
\big \| N e^{ -i s H_{Q \vee E} } \varphi \big \| = \mathcal{O}( \langle s \rangle ) .  \label{eq:j3}
\end{equation}
Equations \eqref{eq:j1}, \eqref{eq:j2} and \eqref{eq:j3} yield that
\begin{align*}
& \big \| \big (  \tilde{H}^\varepsilon \check{\Gamma}( \mathbf{j} )^* - \check{ \Gamma }( \mathbf{j} )^* ( H_{P \vee E}^\varepsilon \otimes \mathds{1}_{ \mathcal{H}_\infty } + \mathds{1}_{ \mathcal{H}_0Ê} \otimes  H_{Q \vee E} ) \big )   e^{-i s ( H_{P \vee E}^\varepsilon \otimes \mathds{1}_{ \mathcal{H}_\infty } + \mathds{1}_{ \mathcal{H}_0Ê} \otimes  H_{Q \vee E} ) } \tilde{\psi}_{ \mathrm{loc} }  \big \|  \\
& =  \mathcal{O}( s d^{-\frac12} ) + \mathcal{ O }( R d^{ - \frac12 } ).
\end{align*}
Integrating this estimate and using again that $\check{\Gamma}( \mathbf{j} )$ is isometric, we deduce from \eqref{eq:j4} that
\begin{align}
&  \big \| e^{ - i t \check{\Gamma}( \mathbf{j} ) \tilde{H}^\varepsilon \check{\Gamma}( \mathbf{j} )^* } \tilde{\psi}_{ \mathrm{loc} } - \check{\Gamma}( \mathbf{j} ) \check{\Gamma}( \mathbf{j} )^* e^{-i t ( H_{P \vee E}^\varepsilon \otimes \mathds{1}_{ \mathcal{H}_\infty } + \mathds{1}_{ \mathcal{H}_0Ê} \otimes  H_{Q \vee E} ) } \tilde{\psi}_{ \mathrm{loc} }  \big \| \notag \\
& =  \mathcal{O}( t^2 d^{ -\frac12 } ) + \mathcal{ O }( t R d^{ - \frac12  } ).  \label{eq:k3}
\end{align}
Putting together \eqref{eq:k2} and \eqref{eq:k3}, we obtain \eqref{eq:i1}.

To conclude the proof, it suffices to combine \eqref{eq:k1} and \eqref{eq:i1}, which gives
\begin{align}
\check{\Gamma}( \mathbf{j} ) e^{ - i t H^\varepsilon } \psi_{ \mathrm{loc} }  =& \big ( e^{-i t H_{P \vee E}^\varepsilon } \mathcal{J}( u ) \big ) \otimes \big ( e^{- i t H_{Q \vee E} } \varphi \big ) \notag \\
&+ \mathcal{O} \big ( (d/R)^{ \frac{-1+\gamma}{2} } \big ) + \mathcal{O} \big ( \langle t \rangle (d / R^{2 }  )^{ - \frac{1}{2} } \big ) + \mathcal{O} ( t^2 d^{ - \frac12}  )+  \mathcal{O} ( t  d^{ -  \beta}  )  , \label{eq:l1}
\end{align}
for all $0 < \gamma \le 1$. Since $\check{\Gamma}( \mathbf{j} )$ is an isometry commuting with any bounded operator $O_{P}$ on $\mathcal{H}_{P}$, the last equation directly implies the statement of the theorem.
\end{proof}
 \vspace{4mm}

\appendix 
\section{Appendix for Section \ref{proofs}}
\label{AppA}
In this Appendix, we gather the proofs of several technical lemmas that were used in the proof of our main results. In Section \ref{B}, we prove Lemmas \ref{lm:a1}, \ref{lm:a2}, \ref{lm:a3} and \ref{lm:b1}. In Section \ref{sec:lmb2} we prove Lemma \ref{lm:b2}. Section \ref{section:eff-dyn} is devoted to the proof of Corollary \ref{cor:effective}. Finally, in Section \ref{app:comm}, we recall a few well-known relative bounds for operators on Fock space, and we estimate some commutators. 

\subsection{Proofs of the localization lemmas}\label{B}

We begin with recalling the expression of the fibre Hamiltonian $H( \vec{p} )$. The unitary map
\begin{equation*}
U : \mathcal{H}_{P} \otimes \mathcal{H}_E \rightarrow \int_{ \mathbb{R}^3 }^\oplus \mathbb{C}^{2 } \otimes \mathcal{H}_E \, d^3 p ,
\end{equation*}
such that
\begin{equation*}
U H_{P \vee E} U^{-1} = \int_{ \mathbb{R}^3 }^\oplus H( \vec{p} ) d^3 p ,
\end{equation*}
is the ``generalized Fourier transform'', defined by  
\begin{equation}
\label{cal_U}
(U \varphi) ( \vec{p}) = \frac{1}{(2 \pi)^{3/2}}\int_{\mathbb{R}^{3}}  e^{-i(\vec{p} -\vec{P}_E) \cdot \vec{x}} \varphi(\vec{x}) d^3 x 
\end{equation} 
for all $\varphi \in \mathcal{H}_P \otimes \mathcal{H}_E$ such that each $\varphi^{(n)}$ decays sufficiently rapidly at infinity. Introducing the notations 
\begin{equation}
b  (\underline{k}):= U e^{   i \vec{k} \cdot \vec{x}} a (\underline{k}) U^{-1} , \qquad \qquad b^{*} (\underline{k}):= U e^{ - i \vec{k} \cdot \vec{x}} a^{*} (\underline{k}) U^{-1} ,
\end{equation}
one verifies that
\begin{align}
H(\vec{p}) =& \frac12 \left(\vec{p}-\vec{P}_E \right)^2 +  \left(\begin{array}{cc} \omega_0 & 0 \\ 0& 0 \end{array} \right) \notag \\
& + i \lambda_0 \int_{ \underline{\mathbb{R} }^3 } \chi( \vec{k} ) \vert \vec{k} \vert^{\frac12} \vec{\varepsilon} (\underline{k}) \cdot \vec{\sigma}     \left( b (\underline{k})  - b^{*} (\underline{k}) \right) d \underline{k} + H_E , \label{Hp}
\end{align}
where $H_E =  \int_{ \underline{ \mathbb{R} }^3 }  \vert \vec{k}  \vert b ^{*}(\underline{k}) b (\underline{k}) d \underline{k}$. It follows from the Kato-Rellich theorem that the fiber Hamiltonians $H(\vec{p})$ are self-adjoint operators on $\mathcal{D}(H_E + \vec{P}_E^2)$.

We recall the main result of \cite{FaFrSc13_01}, which is used in our proofs.
 \vspace{2mm}
\begin{theorem}[Real analyticity of $\vec{p} \mapsto E( \vec{p} )$ \cite{FaFrSc13_01}] \label{thm:anal}
Let $0 < \nu < 1$. There exists  a constant $\lambda_c ( \nu )>0$ such that, for any coupling constant $\lambda_0 \in \mathbb{R}$ satisfying $\vert \lambda_0 \vert < \lambda_c( \nu )$,  the ground state energy $E( \vec{p} )$ of $H(\vec{p})$ is a non-degenerate eigenvalue of $H( \vec{p} )$, and the map $\vec{p} \mapsto E(\vec{p})$ and its associated eigenprojection $\vec{p} \mapsto \pi( \vec{p})$ are  real analytic on $B_\nu := \{ \vec{p} \in \mathbb{R}^3 , | \vec{p} | < \nu \}$. 
\end{theorem}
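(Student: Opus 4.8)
The statement to be proven is a result about the fiber Hamiltonian $H(\vec p)$ of \eqref{Hp}, so I describe how I would establish real analyticity of $\vec p\mapsto E(\vec p)$ and of the ground-state projection.

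\textbf{Setup and the source of the difficulty.} First I would record that $\vec p\mapsto H(\vec p)$ is an analytic family of type A on all of $\mathbb C^3$: writing $\tfrac12(\vec p-\vec P_E)^2=\tfrac12\vec P_E^2-\vec p\cdot\vec P_E+\tfrac12\vec p^2$ and using that $\mathcal D(H(0))=\mathcal D(H_E+\vec P_E^2)$ together with the fact that $\vec P_E$ is $H(0)$-bounded with relative bound $0$, one gets $H(\vec p)=H(0)-\vec p\cdot\vec P_E+\tfrac12\vec p^2$ with all operator coefficients defined on the fixed domain $\mathcal D(H_E+\vec P_E^2)$; for real $\vec p$ each $H(\vec p)$ is self-adjoint there. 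At $\lambda_0=0$ the bottom of the spectrum is the eigenvalue $\tfrac12\vec p^2$ with eigenvector $\binom01\otimes\Omega$, and for $|\vec p|<1$, $|\lambda_0|$ small it stays the bottom of the spectrum; the point is that it is a \emph{threshold} eigenvalue, $\inf\sigma_{\mathrm{ess}}(H(\vec p))=E(\vec p)$, so there is no spectral gap and Kato's analytic perturbation theory cannot be applied directly. Removing this obstruction is the heart of the proof.

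\textbf{Infrared regularization.} For $\sigma\in(0,1)$ I would introduce the Hamiltonian $H_\sigma(\vec p)$ on $\mathbb C^2\otimes\mathcal F_+(L^2(\{\underline k:|\vec k|\ge\sigma\}))$ obtained from $H(\vec p)$ by restricting all photon momenta to $|\vec k|\ge\sigma$. A direct computation shows that on the $n$-photon sectors with $n\ge1$ the non-interacting part of $H_\sigma(\vec p)$ is bounded below by $\tfrac12\vec p^2+\sigma(1-|\vec p|)$ (because $-\vec p\cdot\vec k+\tfrac12\vec k^2+|\vec k|\ge|\vec k|(1-|\vec p|)$), while the interaction $\lambda_0\Phi_\sigma$ is $H(0)^{1/2}$-bounded with relative bound $0$ uniformly in $\sigma$; hence for $|\vec p|\le\nu$ and $|\lambda_0|$ small, $H_\sigma(\vec p)$ has a simple ground state $\psi_\sigma(\vec p)$ with energy $E_\sigma(\vec p)$ separated from the rest of its spectrum by a gap of order $\sigma$. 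Since $\vec p\mapsto H_\sigma(\vec p)$ is again an analytic family of type A, Kato's theory gives that $E_\sigma(\vec p)$ and the rank-one projection $\pi_\sigma(\vec p):=\mathds 1_{\{E_\sigma(\vec p)\}}(H_\sigma(\vec p))$ extend to analytic functions on a complex ball around $B_\nu$ whose radius depends only on the gap and on $\lambda_0$.

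\textbf{Removing the cutoff.} The main work is then the limit $\sigma\to0$. Embedding $\mathcal F_+(L^2(\{|\vec k|\ge\sigma\}))\hookrightarrow\mathcal H_E$ and viewing $E_\sigma(\vec p),\psi_\sigma(\vec p)$ in the full space, I would use the \emph{infrared regularity} of the model — the form factor behaves like $|\vec k|^{1/2}$, so that $\chi(\vec k)|\vec k|^{1/2}\big(|\vec k|(1-\nu)\big)^{-1}$ is square-integrable near $\vec k=0$ — together with the pull-through formula to control the soft-photon cloud, obtaining telescoping bounds of the form $|E_\sigma(\vec p)-E_{\sigma'}(\vec p)|+\|\psi_\sigma(\vec p)-\psi_{\sigma'}(\vec p)\|\le C\sigma^{a}$ for $0<\sigma'<\sigma$, with $a>0$ and $C$ uniform for $\vec p$ in a fixed complex neighborhood of $B_\nu$ and $|\lambda_0|<\lambda_c(\nu)$. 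This is exactly the step where the exponent $+\tfrac12$ (rather than the infrared-singular $-\tfrac12$) is decisive; it is also what lies behind the bound $\psi(\vec p)\in\mathcal D(e^{\delta N})$ of Lemma \ref{lm:a2}. It follows that $E_\sigma(\vec p)\to E(\vec p)$ and $\pi_\sigma(\vec p)\to\pi(\vec p)$ locally uniformly in the complex variable $\vec p$, so by the Weierstrass convergence theorem the limits are analytic; a standard argument (stability of the spectral bottom under the uniformly small addition of soft photons) identifies $E(\vec p)$ with the non-degenerate ground-state energy of $H(\vec p)$ and $\pi(\vec p)$ with the associated eigenprojection. This establishes Theorem \ref{thm:anal}, and the details are carried out in \cite{FaFrSc13_01}. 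An alternative route, which avoids the $\sigma$-iteration and any (non-analytic) choice of ultraviolet cutoff, is to construct the ground state directly via the smooth Feshbach--Schur map / operator-theoretic renormalization group, preserving analyticity in $\vec p$ at each step; the technical burden is the same, namely uniform-in-$\vec p$ control of the renormalization flow.
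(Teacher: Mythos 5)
The paper does not prove this statement: Theorem~\ref{thm:anal} is quoted verbatim from \cite{FaFrSc13_01}, and the only addition in the present text is the remark that replacing the sharp ultraviolet cutoff of \cite{FaFrSc13_01} by the smooth cutoff $\chi$ used here does not affect the argument. Your sketch --- analytic family of type~A, the threshold nature of $E(\vec p)$ as the core obstruction, infrared regularization at scale $\sigma$ producing a gap $\mathcal O(\sigma)$, Kato theory applied at each scale, and removal of the cutoff via telescoping bounds that exploit the infrared regularity of the coupling $|\vec k|^{1/2}\chi(\vec k)$ --- is a correct high-level description of the iterative analytic perturbation method used in that reference, and your closing mention of the smooth Feshbach--Schur/renormalization-group alternative is also apt.

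Two small imprecisions are worth flagging. First, the bound $\psi(\vec p)\in\mathcal D(e^{\delta N})$ of Lemma~\ref{lm:a2} is obtained in this paper directly from the pull-through formula together with the spectral estimate of Lemma~\ref{app:GS}; it is not derived via the $\sigma$-iteration, so the claim that the telescoping bounds are ``what lies behind'' it overstates the connection. Second, the step you summarize as ``a standard argument (stability of the spectral bottom \dots)'' --- namely, showing that $\lim_{\sigma\to 0}E_\sigma(\vec p)$ is a genuine non-degenerate eigenvalue of $H(\vec p)$ with eigenprojection $\lim_{\sigma\to 0}\pi_\sigma(\vec p)$ --- is in fact one of the delicate points; since $E(\vec p)=\inf\sigma_{\mathrm{ess}}(H(\vec p))$, existence and uniqueness of the ground state do not follow from abstract perturbation theory and must be established from the same infrared estimates (e.g.\ compactness of the soft-photon cloud, overlap with $\binom{0}{1}\otimes\Omega$). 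These are, however, precisely the points carried out in detail in \cite{FaFrSc13_01}, to which both the paper and your sketch ultimately defer.
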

 \vspace{2mm}

We mention that, in \cite{FaFrSc13_01}, for simplicity, we have used a sharp ultraviolet cutoff $\mathds{1}_{ | \cdot | \le 1 }( \vec{k} )$ instead of the smooth ultraviolet cutoff $\chi( \vec{k} )$ used in the present paper. This modification, however, does not affect  the proof given in \cite{FaFrSc13_01}.

We also observe that the uncoupled Hamiltonian
\begin{align*}
H_0(\vec{p}) :=  \frac12 \left(\vec{p}-\vec{P}_E \right)^2 +  \left(\begin{array}{cc} \omega_0 & 0 \\ 0& 0 \end{array} \right) + H_E
\end{align*}
has a unique ground state (up to a phase) associated with the eigenvalue $\vec{p}\,^2 / 2$, given by 
\begin{equation*}
\psi_0 :=  \left(\begin{array}{cc} 0 \\ 1 \end{array} \right) \otimes \Omega ,
\end{equation*}
where $\Omega$ denotes the vacuum  Fock state. It is not difficult to verify that the ground state of $H( \vec{p} )$ overlaps with the ground state of $H_0( \vec{p} )$, in the sense that
\begin{equation*}
\big \| \pi( \vec{p} ) \psi_0 \big \| =  1 - \mathcal{O}( \vert \lambda_0 \vert ) ,
\end{equation*}
for small enough $\vert \lambda_0 \vert$. Therefore,
\begin{equation*}
\psi( \vec{p} ) := \pi( \vec{p} ) \psi_0
\end{equation*}
is a (non-normalized) ground state of $H( \vec{p} )$, and the map $\vec{p} \mapsto \psi( \vec{p} )$ is real analytic on $B_\nu$. In what follows, we keep the notation $\psi( \vec{p} )$ for $\pi( \vec{p} ) \psi_0$.

We now prove Lemma  \ref{lm:a1}.

\subsubsection{Proof of Lemma \ref{lm:a1}}
\begin{proof} 
By an interpolation argument, we see that it suffices to establish the statement of the lemma for $\mu = N \in \mathbb{N} \cup \{ 0 \}$. Recall from Hypothesis \textbf{(B1)} that $u( \vec{x} ) = R^{-3/2} v ( \vec{x} / R )$ where $v$ is a function independent of $R$ such that $ \mathrm{Supp}( v ) \subset \{ \vec{x} \in \mathbb{R}^3 , | \vec{x} | \le 1 \}$. Using \eqref{eq:defJ}, we compute
\begin{align*}
& \big \| ( 1 + | \vec{x} | )^{N} Ê\mathcal{J}( u ) ( \vec{x} ) \big \|_{ \mathcal{H}_{P} \otimes \mathcal{H}_E } \\
& = \frac{1}{(2 \pi)^{\frac32}} \Big \| ( 1 + | \vec{x} | )^{N } \int_{ \mathbb{R}^3 } \hat{u}(\vec{p})  e^{i\vec{x} \cdot \vec{p}} \chi_{\bar{B}_{\nu/2}} (\vec{p}) \psi ( {\vec{p}} ) \, d^3 p \, \Big \|_{ \mathcal{H}_{P} \otimes \mathcal{H}_E } \\
& = \frac{R^{\frac32}}{(2 \pi)^{\frac32}} \Big \| ( 1 + | \vec{x} | )^{N } \int_{ \mathbb{R}^3 } \hat{v}( R \vec{p} )  e^{i\vec{x} \cdot \vec{p}} \chi_{\bar{B}_{\nu/2}} (\vec{p}) \psi ( {\vec{p}} ) \, d^3 p \, \Big \|_{ \mathcal{H}_{P} \otimes \mathcal{H}_E } .
\end{align*}
Since $v$ is compactly supported, $\hat v \in \mathrm{C}^\infty( \mathbb{R}^3 )$, and hence, since in addition $\chi_{\bar{B}_{\nu/2}} \in \mathrm{C}_0^\infty( \mathbb{R}^3 )$ and since $\vec{p} \mapsto \psi( \vec{p} )$ is smooth on the support of $\chi_{\bar{B}_{\nu/2}}$ by Theorem \ref{thm:anal}, we deduce that
\begin{equation*}
\vec{p} \mapsto \hat{v}( R \vec{p}) \chi_{\bar{B}_{\nu/2}} (\vec{p}) \psi ( {\vec{p}} ) \in \mathrm{C}_0^\infty( \mathbb{R}^3 ; \mathbb{C}^2 \otimes \mathcal{H}_E ).
\end{equation*}
The result then follows from standard properties of the Fourier transform, using in particular that $| \partial^\alpha_{\vec{p}} \hat v ( R \vec{p} ) | \le \mathrm{C}_\alpha R^{| \alpha |}$ for all multi-index $\alpha$.
\end{proof}

To prove Lemma \ref{lm:a2}, we need to establish a preliminary lemma. Let $0 < \nu < 1$. It is shown in  \cite{FaFrSc13_01}, Section 5.1, that there is  a critical coupling constant $\lambda_{c}(\nu)>0$ such that  the map $\vec{k} \mapsto E(\vec{k})$ is analytic on the open set 
$$U [\vec{p}]:= \big \lbrace \vec{k} \in \mathbb{C}^3 \mid \vert \vec{p}- \vec{k}  \vert < \frac{1-\nu}{2} \big \rbrace$$
for all   $ \vert \vec{p}  \vert < \nu$ and all $ \lambda_0 \in \mathbb{R}$ with $ \vert \lambda_0 \vert < \lambda_c(\nu)$. Moreover, in the proof of Lemma 4.4. in  \cite{FaFrSc13_01}, we show that the choice made for  $\lambda_c(\nu)$ implies that 
\begin{equation}
\label{inene}
\Big \vert E(\vec{k})- \frac{\vert \vec{k} \vert^2}{2} \Big \vert < \left(\frac{1-\nu}{6} \right)^2 
\end{equation}
for all $\vert \vec{p} \vert < \nu$ and for all $\vec{k} \in U[\vec{p}]$, if $\vert \lambda_0 \vert < \lambda_c(\nu)$.
 \vspace{2mm}
\begin{lemma}
\label{app:GS} 
 Let $0<\nu<1$.  We assume that $ \vert  \lambda_0  \vert < \lambda_c( \nu )$. Then
  \begin{align}\label{eq:GSE}
E( \vec{p} - \vec{k} ) - E( \vec{p} ) + | \vec{k} | \ge \frac{1-\nu}{2} \big | \vec{k}  \big|
\end{align}
for all  $\vec{p} \in B_{\nu}$ and for all $\vec{k} \in B_{(1-\nu)/6}$.
\end{lemma}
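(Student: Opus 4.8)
The plan is to obtain the inequality from the real-analyticity of the ground-state energy, in its complex-analytic version on the domains $U[\vec p]$ (Theorem \ref{thm:anal}), combined with the quantitative bound \eqref{inene}. It is worth noting at the outset why the crude estimate does not suffice: controlling $E(\vec p - \vec k)$ and $E(\vec p)$ \emph{separately} through \eqref{inene} gives only
\begin{equation*}
E(\vec p - \vec k) - E(\vec p) + |\vec k| \ge \tfrac12 \big( |\vec p - \vec k|^2 - |\vec p|^2 \big) + |\vec k| - 2 \big( \tfrac{1-\nu}{6} \big)^2 \ge (1-\nu) |\vec k| - 2 \big( \tfrac{1-\nu}{6} \big)^2 ,
\end{equation*}
whose right-hand side is negative when $|\vec k|$ is small, because the kinetic gain $\tfrac12 |\vec k|^2$ is only quadratic in $|\vec k|$ while the error coming from \eqref{inene} is a fixed constant. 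The remedy, which is the only non-routine step, is to use that the \emph{difference} $E(\vec p - \vec k) - E(\vec p)$ is controlled to first order in $|\vec k|$, with a slope strictly smaller than $1$; obtaining such a gradient bound is where analyticity enters.

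First I would put $\rho := (1-\nu)/6$ and introduce the renormalised energy $r(\vec k) := E(\vec k) - \tfrac12 |\vec k|^2$, with $|\vec k|^2 := \sum_{j=1}^3 k_j^2$ the holomorphic square appearing in \eqref{inene}. By Theorem \ref{thm:anal} the map $\vec k \mapsto E(\vec k)$, hence also $r$, is holomorphic on $U[\vec p] = \{ \vec k \in \mathbb{C}^3 : |\vec k - \vec p| < 3\rho \}$ for every $\vec p \in B_\nu$, and by \eqref{inene} one has $\sup_{U[\vec p]} |r| \le \rho^2$. Next I would invoke the Cauchy estimates: if $\vec q \in \mathbb{R}^3$ satisfies $|\vec q - \vec p| < \rho$, then the ball $B_{\mathbb{C}^3}(\vec q, 2\rho)$ is contained in $U[\vec p]$, so the multivariable Cauchy inequality on that ball (via the polydiscs inscribed in it) yields
\begin{equation*}
\big| \nabla r(\vec q) \big| \le \frac{3}{2\rho} \sup_{U[\vec p]} |r| \le \frac{3\rho}{2} = \frac{1-\nu}{4} ,
\end{equation*}
where on the left $\nabla r(\vec q)$ denotes the gradient of the restriction $r|_{\mathbb{R}^3}$, which at the real point $\vec q$ agrees with the vector of holomorphic derivatives $(\partial_{k_1} r, \partial_{k_2} r, \partial_{k_3} r)(\vec q)$.

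Finally, to conclude, fix $\vec p \in B_\nu$ and $\vec k \in B_{(1-\nu)/6} = B_\rho$. The segment $\{ \vec p - t \vec k : 0 \le t \le 1 \}$ lies entirely in $\{ \vec q \in \mathbb{R}^3 : |\vec q - \vec p| \le |\vec k| < \rho \}$, so the fundamental theorem of calculus together with the gradient bound gives
\begin{equation*}
\big| r(\vec p - \vec k) - r(\vec p) \big| = \Big| \int_0^1 \vec k \cdot \nabla r(\vec p - t \vec k)\, dt \Big| \le \frac{1-\nu}{4} |\vec k| .
\end{equation*}
Since moreover $\tfrac12 \big( |\vec p - \vec k|^2 - |\vec p|^2 \big) = \tfrac12 |\vec k|^2 - \vec p \cdot \vec k \ge - |\vec p|\, |\vec k| \ge - \nu |\vec k|$ by $|\vec p| < \nu$, adding the two estimates gives
\begin{equation*}
E(\vec p - \vec k) - E(\vec p) + |\vec k| = \big( r(\vec p - \vec k) - r(\vec p) \big) + \tfrac12 \big( |\vec p - \vec k|^2 - |\vec p|^2 \big) + |\vec k| \ge \Big( 1 - \nu - \frac{1-\nu}{4} \Big) |\vec k| = \frac{3(1-\nu)}{4} |\vec k| ,
\end{equation*}
which is even stronger than the claimed bound $\tfrac{1-\nu}{2} |\vec k|$ since $\tfrac34 > \tfrac12$. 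The only place where more than a line of bookkeeping is needed is the Cauchy-estimate step; I expect that to be the main obstacle, essentially because the statement must hold uniformly down to $\vec k = 0$, where only analyticity — not \eqref{inene} used pointwise — provides the required control on $E(\vec p - \vec k) - E(\vec p)$.
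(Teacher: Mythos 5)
Your proof is correct and follows essentially the same route as the paper's: renormalise $E$ by the holomorphic quadratic $|\vec{k}|^2/2$, use \eqref{inene} as a uniform bound on the residual over $U[\vec{p}]$, extract a gradient bound from a Cauchy estimate, then integrate along the segment and absorb the linear part of the kinetic term using $|\vec{p}|<\nu$. The only substantive variation is the choice of disc radius in the Cauchy step: the paper estimates $\partial_{z_j}\tilde{E}$ on a disc of radius $(1-\nu)/6$, obtaining $|\partial_{z_j}\tilde{E}|\le(1-\nu)/6$ and then the slightly crude bound $|\tilde{E}(\vec{p}-\vec{k})-\tilde{E}(\vec{p})|\le\frac{1-\nu}{2}|\vec{k}|$, while you inscribe a polydisc in a ball of radius $2\rho$ and land on the gradient bound $(1-\nu)/4$, giving the slightly stronger conclusion $\frac{3(1-\nu)}{4}|\vec{k}|$. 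Both yield \eqref{eq:GSE}; yours is a marginal sharpening of the same argument rather than a different method.
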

 \vspace{2mm}
\begin{proof}
\noindent 
Let $  \vec{p}  \in  B_{\nu}= \{\vec{p} \in \mathbb{R}^3 \mid \vert \vec{p} \vert < \nu \}$. We set
$$ \tilde{E}(\vec{k}):=E(\vec{k})-\frac{\vert  \vec{k} \vert^2}{2}.$$
Since $\tilde{E}$  is analytic on $U [\vec{p}]$, we have that
\begin{equation*}
\vert \tilde{E}( \vec{p} - \vec{k} ) - \tilde{E}( \vec{p} ) \vert  \leq \underset{\vec{l} \in  U[\vec{p}] }{ \sup }   \vert \vec{\nabla} \tilde{E}(\vec{l})  \vert  \text{ } \vert \vec{k} \vert,
\end{equation*}
for all complex vectors $\vec{k}$ with  $\vert \vec{k} \vert < (1-\nu)/2$.  Let  now $\vec{\xi}=(\xi_1,\xi_2,\xi_3) \in \mathbb{C}^3$ with $\vert   \vec{p}-\vec{\xi} \vert <(1-\nu)/6$.   Using Cauchy formula for holomorphic functions of several complex variables and Eq.\eqref{inene},  we get that 
\begin{equation}
\Big  \vert (\partial_{z_1} \tilde{E})(\vec{\xi})  \Big  \vert \leq  \frac{1}{2 \pi} \Big \vert  \int_{\partial  D_{\frac{1-\nu}{6}} (\xi_1) } \frac{\tilde{E}(z,\xi_2,\xi_3)}{( z- \xi_1 )^2} dz \Big \vert \leq \frac{1- \nu}{6}, 
\end{equation}
where $D_{(1-\nu)/6}(\xi_1)$ is the complex open disk of radius $(1-\nu)/6$ centered at $\xi_1 \in \mathbb{C}$ and  $  \partial_{z_1} \tilde{E}$ denotes the partial derivative of $\tilde{E}$ with respect to the first component $z_1$. 
 Similar bounds hold for the partial derivatives with respect to $z_2$ and $z_3$, which implies that 
\begin{equation} \label{A17}
\vert \tilde{E}( \vec{p} - \vec{k} ) - \tilde{E}( \vec{p} ) \vert  \leq  \frac{1- \nu}{2}  \vert \vec{k} \vert
\end{equation}
for all $\vec{k} \in \mathbb{C}^3$ with  $\vert \vec{k} \vert <(1-\nu)/6$. The right side of \eqref{A17} is independent of $\vec{p}$ for all $ \vert \vec{p} \vert< \nu$.  Therefore,
\begin{equation}
 E( \vec{p} - \vec{k} ) - E( \vec{p}) + \vert \vec{k} \vert = \frac{\vert \vec{k} \vert^2}{2} - \vec{k} \cdot \vec{p}  +  \tilde{E}( \vec{p} - \vec{k} ) - \tilde{E}( \vec{p}) + \vert \vec{k} \vert \geq \frac{1-\nu}{2}  \vert \vec{k} \vert
\end{equation}
for all  $\vec{p} \in B_{\nu}$ and for all $\vec{k} \in B_{(1-\nu)/6}$.
\end{proof}
\vspace{2mm}

We now proceed to the proof of  Lemma \ref{lm:a2}. The proof follows \cite[Lemma 1.5]{juerg}. It relies on Lemma \ref{app:GS} and the pull-through formula
\begin{equation}\label{eq:pull-through}
b(\underline{k}) g(H_E ,\vec{P}_E)= g(H_E + \vert \vec{k} \vert, \vec{P_E}+ \vec{k}) b(\underline{k}),
\end{equation}
for any measurable function $g : \mathbb{R}^4 \rightarrow \mathbb{R}$. We do not present all the details.

 \vspace{2mm}
\subsubsection{Proof of  Lemma \ref{lm:a2}}
\begin{proof} 
Let $\vec{p} \in \mathbb{R}^3$, $| \vec{p} | < \nu$. Using Lemma \ref{lm:a2}, \eqref{eq:pull-through}, and adapting the proof of \cite[Lemma 1.5]{juerg} in a straightforward way, we deduce that there exists a constant $D( \vec{p} ) > 0$ such that, for any $n \in \mathbb{N}$,
\begin{equation*}
\Big \| \prod_{i=1}^n b( \underline{k}_i ) \psi( \vec{p} ) \Big \| \le D( \vec{p} )^n \vert \lambda_0 \vert^n \Big (  \prod_{i=1}^n \chi( \vec{k}_i ) | \vec{k}_i |^{-\frac12}  \Big ) \| \psi( \vec{p} ) \|.
\end{equation*}
The projection of $\psi( \vec{p} )$ onto the $n$-photons sector in Fock space is given by
\begin{equation*}
\psi^{(n)}( \underline{k}_1 , \dots , \underline{k}_n ) ( \vec{p} ) = \frac{1}{ \sqrt{ n! } } \langle \Omega, \prod_{i=1}^n b( \underline{k}_i ) \psi( \vec{p} ) \rangle ,
\end{equation*}
from which we obtain that
\begin{equation*}
e^{ \delta n } \big \| \psi^{(n)}( \underline{k}_1 , \dots , \underline{k}_n ) ( \vec{p} ) \big \| \le \frac{ e^{ \delta n } \big ( D( \vec{p} )   \vert \lambda_0 \vert  \big )^n }{ \sqrt{ n ! } } \frac{ \chi( k_1 ) \cdots \chi( k_n ) }{ | k_1 |^{\frac12} \cdots | k_n |^{ \frac12 } } \| \psi( \vec{p} ) \|.
\end{equation*}
Taking the square and integrating over $\underline{\mathbb{R}}^{3n}$, a direct computation then gives
\begin{equation*}
e^{ 2 \delta n } \int_{ \underline{ \mathbb{R} }^{3n} } \big \| \psi^{(n)}( \underline{k}_1 , \dots , \underline{k}_n ) ( \vec{p} ) \big \|^2 d \underline{k}_1 \cdots d \underline{k}_n  \le \frac{ e^{2 \delta n } ( 4 \pi )^n ( D( \vec{p} )  \vert \lambda_0 \vert  )^{2n} }{ n ! } \| \psi( \vec{p} ) \|^2,
\end{equation*}
and therefore
\begin{equation*}
\big \| e^{ \delta N } \psi( \vec{p} ) \big \| \le e^{ 2 \pi e^{ 2 \delta } ( D( \vec{p} )  \vert \lambda_0 \vert )^2 } \| \psi( \vec{p} ) \|.
\end{equation*}
This shows that $\psi( \vec{p} ) \in \mathcal{D} ( e^{ \delta N } )$. Moreover, one can verify that the constant $D( \vec{p} )$ can be chosen to be uniformly bounded on $\bar{B}_{\nu/2}:=\{ \vec{p} \in \mathbb{R}^3 , | \vec{p} | \le \nu/2 \}$, and hence
\begin{equation*}
\sup_{ \vec{p} \, \in \bar{B}_{ \nu/2 } } \| e^{ \delta N } \psi( \vec{p} ) \| \le \mathrm{C}_\delta .
\end{equation*}
This concludes the assertion of the proof of the lemma. 
\end{proof}

Next we prove Lemma \ref{lm:a3}.
 \vspace{2mm}
\subsubsection{Proof of Lemma \ref{lm:a3}}
\begin{proof}
Since $\Gamma( \mathds{1}_{ | \vec{y} | \le d } )$ is a projection, we can write
\begin{align}
\big \| \big ( \mathds{1} - \Gamma( \mathds{1}_{ | \vec{y} | \le d } ) \big ) \mathcal{J}( u ) \big \|^2 &= \big \langle \mathcal{J}( u ) , \big ( \mathds{1} - \Gamma( \mathds{1}_{ | \vec{y} | \le d } ) \big ) \mathcal{J}( u ) \big \rangle \notag \\
& \le \big \langle \mathcal{J}( u ) , \mathrm{d} \Gamma( \mathds{1}_{ | \vec{y} | \ge d } ) \mathcal{J}( u ) \big \rangle \notag \\
& \le d^{-2 + 2 \gamma} \big \langle \mathcal{J}( u ) , \mathrm{d} \Gamma( |\vec{y}|^{2- 2 \gamma} ) \mathcal{J}( u ) \big \rangle. \label{eq:a10}
\end{align}
It remains to show that $\mathcal{J}( u ) \in \mathcal{D} ( \mathrm{d} \Gamma( |\vec{y}|^{2- 2 \gamma} )^{1/2} )$ and that
\begin{equation*}
\big \langle \mathcal{J}( u ) , \mathrm{d} \Gamma( |\vec{y}|^{2- 2 \gamma} ) \mathcal{J}( u ) \big \rangle \le \mathrm{C}_\gamma R^{2-2\gamma}.
\end{equation*}

\noindent Using \eqref{eq:defJ} and the fact that 
\begin{equation*}
e^{i\vec{x} \cdot\vec{P}_E} \mathrm{d} \Gamma( |\vec{y}|^{2- 2 \gamma} ) e^{ - i \vec{x} \cdot \vec{P}_E} = \mathrm{d} \Gamma( |\vec{y} + \vec{x} |^{2- 2 \gamma} ) ,
\end{equation*}
we obtain
\begin{align*}
& \big \langle \mathcal{J}( u ) , \mathrm{d} \Gamma( |\vec{y}|^{2- 2 \gamma} ) \mathcal{J}( u ) \big \rangle = \frac{1}{(2 \pi)^{3}}  \Big \| \mathrm{d} \Gamma( |\vec{y} + \vec{x} |^{2- 2 \gamma} )^{\frac12} \int_{ \mathbb{R}^3 }  \hat{u}(\vec{p}) e^{i\vec{x} \cdot \vec{p}} \chi_{\bar{B}_{\nu/2}} (\vec{p}) \psi ( {\vec{p} }  )  d^3 p \, \Big \|^2 .
\end{align*}
The inequality $| \vec{y} + \vec{x} |^{2- 2 \gamma} \le \mathrm{c}_\gamma ( | \vec{y} |^{2 - 2\gamma} + | \vec{x} |^{2 - 2\gamma} )$ then gives
\begin{align}
 \big \langle \mathcal{J}( u ) , \mathrm{d} \Gamma( |\vec{y}|^{2- 2 \gamma} ) \mathcal{J}( u ) \big \rangle & \le \mathrm{C}_\gamma \Big ( \int_{ \mathbb{R}^3 } | \hat{u}(\vec{p}) | \chi_{\bar{B}_{\nu/2}} (\vec{p}) \big \| \mathrm{d} \Gamma( |\vec{y}|^{2- 2 \gamma} )^{\frac12} \psi ( {\vec{p} } ) \big \|  d^3 p \Big )^2 \notag \\
& \quad + \mathrm{C}_\gamma \Big \| \mathrm{d} \Gamma( | \vec{x} |^{2- 2 \gamma} )^{\frac12} \int_{ \mathbb{R}^3 }  \hat{u}(\vec{p}) e^{i\vec{x} \cdot \vec{p}} \chi_{\bar{B}_{\nu/2}} (\vec{p}) \psi ( {\vec{p} }  )  \text{ }d^3 p \, \Big \|^2 .  \label{eq:a5}
\end{align}
The two terms appearing on the right  side of the previous inequality are estimated separately. For the second one, we use that $\mathrm{d} \Gamma( | \vec{x} |^{2- 2 \gamma} ) = | \vec{x} |^{2- 2 \gamma} N$ and estimate, thanks to the Cauchy-Schwarz inequality,
\begin{align*}
& \Big \| \mathrm{d} \Gamma( | \vec{x} |^{2- 2 \gamma} )^{\frac12} \int_{ \mathbb{R}^3 }  \hat{u}(\vec{p}) e^{i\vec{x} \cdot \vec{p}} \chi_{\bar{B}_{\nu/2}} (\vec{p}) \psi ( {\vec{p} }  )  \text{ } d^3 p \, \Big \|^2 \\
& \le \int_{ \mathbb{R}^3 } | \hat{u}(\vec{p}) | \chi_{\bar{B}_{\nu/2}} (\vec{p}) \big \| N \psi ( {\vec{p} } ) \big \|  d^3 p \times \Big \| | \vec{x} |^{2- 2 \gamma} \int_{ \mathbb{R}^3 }  \hat{u}(\vec{p}) e^{i\vec{x} \cdot \vec{p}} \chi_{\bar{B}_{\nu/2}} (\vec{p}) \psi ( {\vec{p} }  )  \text{ } d^3 p \, \Big \| \\
& = \int_{ \mathbb{R}^3 } | \hat{u}(\vec{p}) | \chi_{\bar{B}_{\nu/2}} (\vec{p}) \big \| N \psi ( {\vec{p} } ) \big \|  d^3 p \times \big \| | \vec{x} |^{2- 2 \gamma} \mathcal{J}(u)(x) \big \|.
\end{align*}
By Lemma \ref{lm:a2}, $\sup_{ \vec{p} \, \in \bar{B}_{ \nu/2 } } \| N \psi( \vec{p} ) \| < \infty$ and, by Lemma \ref{lm:a1}, $\| | \vec{x} |^{2- 2 \gamma} \mathcal{J}(u)(x) \| \le \mathrm{C}_\gamma R^{2-2\gamma}$. This proves that
\begin{align}
& \Big \| \mathrm{d} \Gamma( | \vec{x} |^{2- 2 \gamma} )^{\frac12} \int_{ \mathbb{R}^3 }  \hat{u}(\vec{p}) e^{i\vec{x} \cdot \vec{p}} \chi_{\bar{B}_{\nu/2}} (\vec{p}) \psi ( {\vec{p} }  )  \text{ }  d^3 p \, \Big \| \le \mathrm{C}_\gamma R^{2-2\gamma} . \label{eq:a9}
\end{align}

It remains to estimate the first term on the right side of \eqref{eq:a5}. Using the pull-through formula \eqref{eq:pull-through} and setting $f( \underline{k}):=- i \lambda_0 \chi( \vec{k} ) | \vec{k} |^{\frac12}\vec{ \varepsilon }( \underline{k} ) \cdot \vec{ \sigma }$, we obtain that
\begin{equation}
\label{pupu}
b( \underline{k}) (H(\vec{p}) + \xi)=(H(\vec{p}-\vec{k}) + \vert \vec{k} \vert +  \xi) b(\underline{k}) + f( \underline{k}).
\end{equation}
A direct application of \eqref{pupu} then yields
\begin{align}
\label{psij}
b( \underline{k} ) \psi( \vec{p} ) = - \big ( H( \vec{p} - \vec{k} ) - E( \vec{ p } ) + | \vec{k} | \big )^{-1}    f( \underline{k})  \psi( \vec{p} ) ,
\end{align}
and by Lemma \ref{app:GS}, this implies
\begin{equation}\label{eq:a6}
\big \| b( \underline{k} ) \psi( \vec{p} ) \big \| \le \mathrm{C} \, \mathds{1}_{ | \cdot | \le 1 }( \vec{k} ) | \vec{k} |^{-\frac12} \| \psi( \vec{p} ) \| ,
\end{equation}
where $\mathrm{C}$ is a positive constant.

Differentiating \eqref{psij} with respect to $\vec{k}$, we obtain  that
\begin{equation}\label{eq:a7}
\big \| | i \vec{\nabla}_k | b( \underline{k} ) \psi( \vec{p} ) \big \| = \big \| \vec{\nabla}_k b( \underline{k} ) \psi( \vec{p} ) \big \| \le \mathrm{C} \, \mathds{1}_{ | \cdot | \le 1 }( \vec{k} ) | \vec{k} |^{-\frac32} \| \psi( \vec{p} ) \| .
\end{equation}
Equations \eqref{eq:a6} and \eqref{eq:a7}, together with an interpolation argument, yield
\begin{equation}\label{eq:a8}
\big \| |i \vec{\nabla}_k |^{1-\gamma} b( \underline{k} ) \psi( \vec{p} ) \big \| \le \mathrm{C} \, \mathds{1}_{ | \cdot | \le 1 }( \vec{k} ) | \vec{k} |^{-\frac32 + \gamma} \| \psi( \vec{p} ) \| ,
\end{equation}
for all $0 \le \gamma \le 1$. This shows that $\underline{k} \mapsto | i \vec{\nabla}_k |^{1 - \gamma} b( \underline{k} ) \psi( \vec{p} ) \in L^2 ( \underline{ \mathbb{R} }^3 ; \mathbb{C}^2 \otimes \mathcal{H}_E )$ for $0 < \gamma \le 1$ and, more precisely, that
\begin{equation*}
\sup_{ \vec{p} \in \bar{B}_{ \nu / 2 } } \big \| \mathrm{d} \Gamma( |\vec{y}|^{2- 2 \gamma} )^{\frac12} \psi ( {\vec{p} } ) \big \| < \infty.
\end{equation*}
Therefore we have proven that
\begin{equation*}
\int_{ \mathbb{R}^3 } | \hat{u}(\vec{p}) | \chi_{\bar{B}_{\nu/2}} (\vec{p}) \big \| \mathrm{d} \Gamma( |\vec{y}|^{2- 2 \gamma} )^{\frac12} \psi ( {\vec{p} } ) \big \|  d^3 p < \infty .
\end{equation*}
Together with \eqref{eq:a10}, \eqref{eq:a5} and \eqref{eq:a9}, this concludes the proof.
\end{proof}
\vspace{2mm}

We conclude this paragraph with the proof of Lemma \ref{lm:b1}.
\vspace{2mm}
\subsubsection{Proof of Lemma \ref{lm:b1}}
\begin{proof}
We begin with justifying that $\mathcal{J}( u ) \otimes \varphi \in \mathcal{D}( I )$. It is not difficult to verify that, for any $0 \le a,  b \le 1$ such that $a^2 + b^2 \le 1$, the operator $I \Gamma( a \mathds{1} ) \otimes \Gamma( b \mathds{1} )$ extends to a bounded operator satisfying
\begin{equation*}
\big \| I \Gamma( a \mathds{1} ) \otimes \Gamma( b \mathds{1} ) \big \| \le 1.
\end{equation*}
Since $\varphi$ satisfies Hypothesis   \textbf{(B2)}, there is $\delta > 0$ such that $\varphi \in \mathcal{D}( e^{ \delta N } )$. Choosing $\delta' > 0$ such that $e^{- 2\delta'} + e^{ - 2 \delta} \le 1$, we deduce that
\begin{align*}
\big \|ÊI \mathcal{J}( u ) \otimes \varphi \big \| &= \big \| I \big ( \Gamma ( e^{ - \delta' } \mathds{1} ) \otimes \Gamma( e^{ - \delta } \mathds{1} ) \big ) ( e^{\delta' N } \mathcal{J}( u ) ) \otimes ( e^{\delta N} \varphi ) \big \| \\
& \le \big \|Êe^{\delta' N } \mathcal{J}( u ) \big \| \big \| e^{\delta N} \varphi \big \| < \infty ,
\end{align*}
the fact that $\big \|Êe^{\delta' N } \mathcal{J}( u ) \big \| < \infty$ being a consequence of Lemma \ref{lm:a2}. Hence $\mathcal{J}( u ) \otimes \varphi \in \mathcal{D}( I )$.
Now we prove that
\begin{equation*}
I \mathcal{J}( u ) \otimes \varphi = I \big ( \Gamma( \chi_{ | \vec{y} | \le d } ) \mathcal{J}( u ) \big ) \otimes \big ( \Gamma( \mathds{1}_{ | \vec{y} | \ge 3d } ) \varphi \big ) + \mathcal{O}( (d/R)^{ \frac{-1+\gamma}{2} } ),
\end{equation*}
for $0 < \gamma \le 1$.
Since $\Gamma( \mathds{1}_{ | \vec{y} | \ge 3d } ) \varphi = \varphi$ by Hypothesis \textbf{(B2)}, we have that
\begin{align*}
I \mathcal{J}( u ) \otimes \varphi &= I \big ( \Gamma( \chi_{ | \vec{y} | \le d } ) \mathcal{J}( u ) \big ) \otimes \big ( \Gamma( \mathds{1}_{ | \vec{y} | \ge 3d } ) \varphi \big ) + I \big ( ( \mathds{1} - \Gamma( \chi_{ | \vec{y} | \le d } ) ) \mathcal{J}( u ) \big ) \otimes \varphi .
\end{align*}
We observe that all the terms of the previous equations are well-defined, as follows from  the fact that $\mathcal{J}( u ) \otimes \varphi \in \mathcal{D}( I )$  and  the remark after the statement of Lemma \ref{lm:b1}. Therefore we have to prove that
\begin{equation}
I \big ( ( \mathds{1} - \Gamma( \chi_{ | \vec{y} | \le d } ) ) \mathcal{J}( u ) \big ) \otimes \varphi = \mathcal{O}( (d/R)^{ \frac{ -1+\gamma }{ 2 } } ). \label{eq:b2}
\end{equation}
Proceeding as above, we estimate
\begin{align*}
\big \|ÊI & \big ( ( \mathds{1} - \Gamma( \chi_{ | \vec{y} | \le d } ) )  \mathcal{J}( u ) \big ) \otimes \varphi \big \| \\
&= \big \| I \big ( \Gamma ( e^{ - \delta' } \mathds{1} ) \otimes \Gamma( e^{ - \delta } \mathds{1} ) \big ) ( e^{\delta' N } ( \mathds{1} - \Gamma( \chi_{ | \vec{y} | \le d } ) ) \mathcal{J}( u ) ) \otimes ( e^{\delta N} \varphi ) \big \| \\
& \le \big \|Êe^{\delta' N } ( \mathds{1} - \Gamma( \chi_{ | \vec{y} | \le d } ) ) \mathcal{J}( u ) \big \| \big \| e^{\delta N} \varphi \big \| .
\end{align*}
Next, since $e^{\delta' N }$ commutes with $\Gamma( \chi_{ | \vec{y} | \le d } )$, we deduce that
\begin{align*}
\big \|Êe^{\delta' N } ( \mathds{1} - \Gamma( \chi_{ | \vec{y} | \le d } ) ) \mathcal{J}( u ) \big \|^2 & \le \big \|Êe^{2 \delta' N } \mathcal{J}( u ) \big \| \big \| ( \mathds{1} - \Gamma( \chi_{ | \vec{y} | \le d } ) )^2 \mathcal{J}( u ) \big \| \\
& \le \big \|Êe^{2 \delta' N } \mathcal{J}( u ) \big \| \big \| ( \mathds{1} - \Gamma( \chi_{ | \vec{y} | \le d } ) ) \mathcal{J}( u ) \big \| .
\end{align*}
It follows from Lemma \ref{lm:a2} that $\big \|Êe^{ 2 \delta' N } \mathcal{J}( u ) \big \| < \infty$, and by Lemma \ref{lm:a3}, we have that 
\begin{equation} 
\big \| ( \mathds{1} - \Gamma( \chi_{ | \vec{y} | \le d } ) ) \mathcal{J}( u ) \big \| = \mathcal{O}( (d/R)^{-1 + \gamma} ) , \label{eq:b7}
\end{equation}
for all $0 < \gamma \le 1$. The last three estimates prove \eqref{eq:b2}, which concludes the proof of the lemma.
\end{proof}
\vspace{2mm}

\subsection{Proof of Lemma \ref{lm:b2}}\label{sec:lmb2}

\begin{proof}[Proof of Lemma \ref{lm:b2}]
Since $\check{\Gamma}( \mathbf{j} )$ only acts on the photon Fock space, we obviously have that
\begin{equation}
 \big ( H_{P} + H_Q \big ) \check{\Gamma}( \mathbf{j} )^* = \check{\Gamma}( \mathbf{j} )^* \big ( H_{P} \otimes \mathds{1}_{ \mathcal{H}_\infty } + \mathds{1}_{ \mathcal{H}_0 } \otimes H_Q  \big ). \label{eq:e1}
\end{equation}
Moreover, it follows from Hypothesis \textbf{(B4)} that
\begin{equation}
 H_{Q,E} \check{\Gamma}( \mathbf{j} )^* = \check{\Gamma}( \mathbf{j} )^* \big ( \mathds{1}_{ \mathcal{H}_0 } \otimes H_{Q,E} \big ) .  \label{eq:e2}
\end{equation}

It remains to consider $ H_E \check{\Gamma}( \mathbf{j} )^*$ and $H_{P,E} \check{\Gamma}( \mathbf{j} )^*$. A direct computation (see e.g. \cite[Lemma 2.16]{DeGe99_01}) gives
\begin{align}
H_E \check{\Gamma}( \mathbf{j} )^* &= \check{\Gamma}( \mathbf{j} )^* \big ( H_E \otimes \mathds{1}_{ \mathcal{H}_\infty } + \mathds{1}_{ \mathcal{H}_0 } \otimes H_E \big ) - \d \Gamma( \mathbf{j}^* , \check{\mathrm{ad}}( |k| , \mathbf{j}^* ) ) \mathcal{U}^* , \label{eq:d1}
 \end{align}
where $\mathcal{U}$ is the unitary operator defined in \eqref{defU_1}--\eqref{defU_2} and, given $a,b : L^2( \underline{\mathbb{R}}^3 ) \oplus L^2( \underline{\mathbb{R}}^3 ) \to L^2( \underline{\mathbb{R}}^3 )$, the operator $\d \Gamma( a , b ) : \mathcal{F}_+( L^2( \underline{\mathbb{R}}^3 ) \oplus L^2( \underline{\mathbb{R}}^3 ) ) \to  \mathcal{H}_E $ is defined by its restriction to ${\otimes_s^n} (L^2( \underline{\mathbb{R}}^3 ) \oplus L^2( \underline{\mathbb{R}}^3 ) )$ as
\begin{align}
&\d \Gamma( a, b )\vert_\mathbb{C} := 0 , \label{eq:checkdGamma} \\
& \d \Gamma( a, b )\vert_{  {\otimes_s^n} L^2( \underline{\mathbb{R}}^3 ) \oplus L^2( \underline{\mathbb{R}}^3 ) } := \sum_{j=1}^{n} \underbrace{a \otimes \cdots \otimes a }_{j-1} \otimes b \otimes \underbrace{a \otimes \cdots \otimes a }_{n - j} . \label{eq:checkdGamma2}
\end{align}
The operators $\mathbf{j}^*$ and $\check{\mathrm{ad}}( |k| , \mathbf{j}^* ) : L^2( \underline{\mathbb{R}}^3 ) \oplus L^2( \underline{\mathbb{R}}^3 ) \to L^2( \underline{\mathbb{R}}^3 )$ in \eqref{eq:d1} are defined by
\begin{align*}
& \mathbf{j}^* ( h_0 , h_\infty ) = \mathbf{j}_0 h_0 + \mathbf{j}_\infty h_\infty , \\
& \check{\mathrm{ad}}( |k| , \mathbf{j}^* ) ( h_0 , h_\infty ) := [ |k| , \mathbf{j}_0 ] h_0 + [ |k| , \mathbf{j}_\infty ] h_\infty ,
\end{align*}
for all $(h_0 , h_\infty ) \in L^2( \underline{\mathbb{R}}^3 ) \oplus L^2( \underline{\mathbb{R}}^3 )$. By Lemma \ref{lm:B5} of Appendix \ref{app:comm}, we have that $\| [ |k| , \mathbf{j}_0 ] \| = \mathcal{O}( d^{-1} )$ and $\| [ |k| , \mathbf{j}_\infty ] \| = \mathcal{O}( d^{-1} )$. This yields
\begin{equation}
\big \| \d \Gamma( \mathbf{j}^* , \check{\mathrm{ad}}( |k| , \mathbf{j}^* ) ) \mathcal{U}^* ( N_0 + N_\infty + \mathds{1} )^{ - 1} \big \| = \mathcal{O}( d^{-1} ). \label{eq:d6}
\end{equation}
Equations \eqref{eq:d1} and \eqref{eq:d6} yield
\begin{align}
H_E \check{\Gamma}( \mathbf{j} )^* &= \check{\Gamma}( \mathbf{j} )^* \big ( H_E \otimes \mathds{1}_{ \mathcal{H}_\infty } + \mathds{1}_{ \mathcal{H}_0 } \otimes H_E \big ) + \mathrm{Rem}_1 , \label{eq:e3}
 \end{align}
 with
 \begin{equation}
\big \| \mathrm{Rem}_1 ( N_0 + N_\infty + \mathds{1} )^{ - 1 } \big \| = \mathcal{O}( d^{-1} ). \label{eq:e4}
\end{equation}

Now we treat the interaction Hamiltonian $H_{P,E}$. Using the notations \eqref{eq:def_hx1}--\eqref{eq:def_hx2}, we have that (see e.g. \cite[Lemma 2.15]{DeGe99_01}),
\begin{align}
H_{P,E} \check{\Gamma}( \mathbf{j} )^* = \Phi( h_x ) \check{\Gamma}( \mathbf{j} )^* &= \check{\Gamma}( \mathbf{j} )^* \big ( \Phi( \mathbf{j}_0 h_x ) \otimes \mathds{1}_{ \mathcal{H}_\infty } + \mathds{1}_{ \mathcal{H}_0 } \otimes \Phi( \mathbf{j}_\infty h_x ) \big ) . \label{eq:d2}
 \end{align}
Here it should be understood that the operators $\mathbf{j}_0$, $\mathbf{j}_\infty$ are applied to the $L^2( \underline{\mathbb{R}}^3 ; \mathcal{H}_{P} )$ functions $\underline{k} \mapsto h_x( \underline{k} )$ defined in \eqref{eq:def_hx2}. By Lemma \ref{lm:B4} of Appendix \ref{app:comm}, we have that
\begin{equation*}
\big \| \Phi( \mathbf{j}_\infty h_x ) \langle \vec{x} \rangle^{-2 + \delta } ( N + \mathds{1} )^{-\frac12} \big \| = \mathcal{O}( d^{-2+\delta} ) ,
\end{equation*}
for all $0 < \delta \le 2$ , and likewise that
\begin{equation*}
\big \| \big ( \Phi( h_x ) - \Phi( \mathbf{j}_0 h_x ) \big ) \langle \vec{x} \rangle^{-2 + \delta } ( N + \mathds{1} )^{-\frac12} \big \| = \mathcal{O}( d^{-2+\delta} ) .
\end{equation*}
Therefore we can conclude that
\begin{align}
 H_{P,E} \check{\Gamma}( \mathbf{j} )^* &= \check{\Gamma}( \mathbf{j} )^* \big ( H_{P,E} \otimes \mathds{1}_{ \mathcal{H}_\infty } \big )  + \mathrm{Rem}_2,  \label{eq:e5}
 \end{align}
 with 
\begin{equation}
\big \| \mathrm{Rem}_2 \big ( N_0 + N_\infty + \langle \vec{x} \rangle^{4 - 2 \delta } \big )^{-1} \big \| = \mathcal{O}( d^{-2+\delta} ) .  \label{eq:e6}
\end{equation} 
Equations \eqref{eq:e1}, \eqref{eq:e2}, \eqref{eq:e3}, \eqref{eq:e4}, \eqref{eq:e5} and \eqref{eq:e6} prove the statement of the lemma.
\end{proof}

\subsection{Proof of Corollary \ref{cor:effective}}  \label{section:eff-dyn}
\noindent Corollary \ref{cor:effective} is a direct consequence of Theorem \ref{thm:isolated}  and the following lemma. The proof of Lemma \ref{thm:eff-dyn} uses Theorem \ref{thm:anal} and follows the lines of \cite{BaChFaFrSi13_01}.

\begin{lemma}\label{thm:eff-dyn}
Let $u \in H^{2}(\mathbb{R}^3)$ with  $\| u \|_{L^2(\mathbb{R}^3)} =1$. Suppose that $V \in L^\infty( \mathbb{R}^3 ; \mathbb{R} )$ satisfies $\mathrm{supp} (\hat{V}) \subset B_{1} = \{ \vec{x} \in \mathbb{R}^3 , | \vec{x} | < 1 \}$. Then there exists a  constant $C>0$ such that
\begin{equation*}
\big \| e^{- i t H_{P \vee E}^{\varepsilon}} \mathcal{J}(u) - \mathcal{J} (e^{ - i t H_{P , \mathrm{eff} }^{\varepsilon} } u ) \big \| \le C t \varepsilon
\end{equation*}
for all $0 < \varepsilon < 1$ and for all  $t \ge 0$.
\end{lemma}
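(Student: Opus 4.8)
The plan is to observe that, away from the external potential, the dressing transformation $\mathcal{J}$ intertwines the full atom--field dynamics with the effective free dynamics \emph{exactly}, and then to run Cook's method, the only error coming from the failure of $\mathcal{J}$ to commute with multiplication by the slowly varying function $V_\varepsilon(\vec x) = V(\varepsilon\vec x)$.

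First I would record the intertwining relation. Writing $\mathcal{J}(w) = U^{-1}\big[\hat w(\vec p)\,\chi_{\bar{B}_{\nu/2}}(\vec p)\,\psi(\vec p)\big]$ for $w \in L^2(\mathbb{R}^3)$, where $U$ is the generalized Fourier transform \eqref{cal_U} that fibers $H_{P \vee E}$ as $\int^\oplus H(\vec p)\,d^3p$, and using that $\psi(\vec p)$ is a ground state of $H(\vec p)$ (so $H(\vec p)\psi(\vec p) = E(\vec p)\psi(\vec p)$) together with the fact that the momentum cutoff $\chi_{\bar{B}_{\nu/2}}$ is left invariant by the fibered dynamics, one checks that $\mathcal{J}(w) \in \mathcal{D}(H_{P \vee E})$ and
\begin{equation*}
H_{P \vee E}\,\mathcal{J}(w) = \mathcal{J}\big( E(-i\vec\nabla_x)\,w \big), \qquad \text{hence} \qquad e^{-it H_{P \vee E}}\,\mathcal{J}(w) = \mathcal{J}\big( e^{-it E(-i\vec\nabla_x)}\,w \big).
\end{equation*}
Only $E$ restricted to $\bar{B}_{\nu/2}$ enters, where it is smooth by Theorem \ref{thm:anal}, so the way $E$ is continued off $B_\nu$ is immaterial. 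Since $H_{P \vee E}^\varepsilon = H_{P \vee E} + V_\varepsilon(\vec x)$ and $H_{P,\mathrm{eff}}^\varepsilon = E(-i\vec\nabla_x) + V_\varepsilon(\vec x)$, this yields the crucial identity
\begin{equation*}
H_{P \vee E}^\varepsilon\,\mathcal{J} - \mathcal{J}\,H_{P,\mathrm{eff}}^\varepsilon = V_\varepsilon(\vec x)\,\mathcal{J} - \mathcal{J}\,V_\varepsilon(\vec x) =: [V_\varepsilon,\mathcal{J}].
\end{equation*}

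Next I would apply Cook's method: for $u \in H^2(\mathbb{R}^3) \subseteq \mathcal{D}(H_{P,\mathrm{eff}}^\varepsilon)$ the curve $s \mapsto e^{-i(t-s) H_{P \vee E}^\varepsilon}\mathcal{J}\big( e^{-is H_{P,\mathrm{eff}}^\varepsilon}u \big)$ is $C^1$ and takes values in $\mathcal{D}(H_{P \vee E}^\varepsilon)$, and integrating its derivative between $0$ and $t$ gives
\begin{equation*}
e^{-it H_{P \vee E}^\varepsilon}\mathcal{J}(u) - \mathcal{J}\big( e^{-it H_{P,\mathrm{eff}}^\varepsilon}u \big) = i\int_0^t e^{-i(t-s) H_{P \vee E}^\varepsilon}\,[V_\varepsilon,\mathcal{J}]\, e^{-is H_{P,\mathrm{eff}}^\varepsilon}u\,ds,
\end{equation*}
so that, by unitarity and $\|u\|_{L^2} = 1$, the left-hand side is bounded by $t\,\|[V_\varepsilon,\mathcal{J}]\|$. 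Everything thus reduces to proving $\|[V_\varepsilon,\mathcal{J}]\| = \mathcal{O}(\varepsilon)$, and this is the step I expect to be the main obstacle.

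To carry it out I would write $\mathcal{J} = U^{-1}M_g\mathcal{F}$, where $\mathcal{F}$ is the Fourier transform on $L^2(\mathbb{R}^3)$, $M_g$ is multiplication by $g(\vec p) := \chi_{\bar{B}_{\nu/2}}(\vec p)\psi(\vec p)$, and $g \in C_0^\infty(\mathbb{R}^3;\mathbb{C}^2 \otimes \mathcal{H}_E)$ by the analyticity of $\vec p \mapsto \psi(\vec p)$ (Theorem \ref{thm:anal}); since $M_{V_\varepsilon}U^{-1} = U^{-1}K_\varepsilon$ with $K_\varepsilon$ the convolution by $\widehat{V_\varepsilon}$ acting on the momentum variable, one gets $[V_\varepsilon,\mathcal{J}] = U^{-1}[K_\varepsilon,M_g]\mathcal{F}$, hence $\|[V_\varepsilon,\mathcal{J}]\| = \|[K_\varepsilon,M_g]\|$. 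Conjugating $[K_\varepsilon,M_g]$ by the Fourier transform in $\vec p$ turns $K_\varepsilon$ into multiplication by $V_\varepsilon$ (in the dual variable $\vec y$, up to a reflection) and $M_g$ into convolution by $\check g \in \mathcal{S}(\mathbb{R}^3;\mathbb{C}^2 \otimes \mathcal{H}_E)$, producing an operator with integral kernel $\check g(\vec y - \vec y')\big( V_\varepsilon(\vec y) - V_\varepsilon(\vec y') \big)$. Here the hypothesis $\mathrm{supp}(\hat V) \subset B_1$ is used: $V$ is band-limited, so by Bernstein's inequality $\|\vec\nabla V\|_{L^\infty} \le C\|V\|_{L^\infty} < \infty$, whence $| V_\varepsilon(\vec y) - V_\varepsilon(\vec y') | \le \varepsilon\,\|\vec\nabla V\|_{L^\infty}\,|\vec y - \vec y'|$; the kernel is therefore dominated by $\varepsilon\,\|\vec\nabla V\|_{L^\infty}\,|\vec y - \vec y'|\,\|\check g(\vec y - \vec y')\|$, and Young's inequality, using that $\vec q \mapsto |\vec q|\,\|\check g(\vec q)\| \in L^1(\mathbb{R}^3)$ because $\check g$ is Schwartz, gives $\|[V_\varepsilon,\mathcal{J}]\| \le C\varepsilon$. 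Combining this with the Cook estimate proves the lemma. The real content is this last commutator bound: the momentum transfer carried by $V_\varepsilon$ is confined to a ball of radius $\varepsilon$, and this has to be paired against the smoothness in $\vec p$ of the dressed one-atom profile $\chi_{\bar{B}_{\nu/2}}\psi$, which in turn rests on the real-analyticity of $\vec p \mapsto \psi(\vec p)$; the intertwining identity and Cook's method are then routine.
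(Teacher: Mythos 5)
Your proof is correct, and the outer structure --- intertwining $\mathcal{J}$ with the dynamics so that the error is exactly $[V_\varepsilon,\mathcal{J}]$, then Cook's method --- is the same as the paper's. Where you differ is in how the commutator bound $\|[V_\varepsilon,\mathcal{J}]\| = \mathcal{O}(\varepsilon)$ is extracted. The paper works on the momentum side: after applying $U$, the commutator has integral kernel $\hat V_\varepsilon(\vec p - \vec q)\big(\Psi(\vec q) - \Psi(\vec p)\big)$ with $\Psi = \chi_{\bar{B}_{\nu/2}}\psi$, and the estimate pairs the Lipschitz continuity of $\Psi$ (from the analyticity of $\vec p \mapsto \psi(\vec p)$) against the smallness of $\|\hat V_\varepsilon\,(\varepsilon + |\cdot|)\|_{L^1} = \mathcal{O}(\varepsilon)$, which uses $\mathrm{supp}(\hat V_\varepsilon)\subset B_\varepsilon$ directly. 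You conjugate once more by the Fourier transform and work with the kernel $\check g(\vec y - \vec y')\big(V_\varepsilon(\vec y) - V_\varepsilon(\vec y')\big)$, pairing the Lipschitz continuity of $V_\varepsilon$ (from band-limitedness via Bernstein, $\|\nabla V_\varepsilon\|_\infty = \varepsilon\|\nabla V\|_\infty \le C\varepsilon$) against the integrability of $|\vec q|\,\|\check g(\vec q)\|$ (which holds because $g = \chi_{\bar{B}_{\nu/2}}\psi$ is $\mathrm{C}_0^\infty$, so $\check g$ is Schwartz). These are exactly Fourier-dual implementations: you move the $|\vec q|$-weight from $\hat V_\varepsilon$ onto $\check g$ and the Lipschitz bound from $\Psi$ onto $V_\varepsilon$. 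Both use the same two inputs --- smoothness of the dressed profile on $\bar{B}_{\nu/2}$ and compact support of $\hat V$ --- and both land on $\mathcal{O}(\varepsilon)$. Your version is arguably a bit cleaner in form (a single operator-norm bound on $[V_\varepsilon,\mathcal{J}]$ rather than an $s$-by-$s$ estimate inside the Cook integral), and the Bernstein route avoids any implicit use of $\hat V\in L^1$, but it is the same mechanism.
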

\begin{proof}
We define
\begin{equation}
A_t u:=e^{-it H_{P \vee E}^{\varepsilon}} \mathcal{J}(u)  - \mathcal{J}(e^{-itH_{P ,\text{eff}}^{\varepsilon}} u)
\end{equation}
for all  $u \in H^{2}(\mathbb{R}^3)$.  Since $u \in \mathcal{D}(H^{\varepsilon}_{P ,\text{eff}})$ and $ \mathcal{J}(u) \in \mathcal{D}(H_{P \vee E}^{\varepsilon})$,  $e^{-i(t-s)H^{\varepsilon}_{P \vee E}} \mathcal{J} (e^{-is H_{P ,\text{eff}}^{\varepsilon}} u  )$ is  differentiable with respect to $s$, and  we find that \begin{equation}
\label{Atder}
A_t u =  -i e^{-it H^{\varepsilon}_{P \vee E}}    \int_{0}^{t}  e^{is H^{\varepsilon}_{P \vee E}} \left(H_{P \vee E}^{\varepsilon} \mathcal{J}(u_s) - \mathcal{J}(H_{P ,\text{eff}}^{\varepsilon} u_s) \right) ds
\end{equation}
for all  $u  \in H^{2}(\mathbb{R}^3)$, where $u_s:=e^{-is H_{P ,\text{eff}}^{\varepsilon}} u$. We remind the reader that  $\psi(\vec{p})$ is the ground state of the Hamiltonian $H(\vec{p})=UH_{P \vee E}U^{-1}$ with corresponding eigenvalue $E(\vec{p})$, where $U: \mathcal{H}_{P} \otimes \mathcal{H}_E \rightarrow \int_{\mathbb{R}^3}^{\oplus}  \mathbb{C}^2 \otimes \mathcal{H}_E \text{ } d^{3}p$ is the generalized Fourier transform defined by
\begin{equation}
\label{UU}
(U \varphi) ( \vec{p}) = \frac{1}{(2 \pi)^{3/2}}\int_{\mathbb{R}^{3}}  e^{-i(\vec{p} -\vec{P}_E) \cdot \vec{y}} \varphi(\vec{y}) \text{ } d^3 y,
\end{equation} 
for all $\varphi \in L^{1}(\mathbb{R}^3; \mathbb{C}^2 \otimes \mathcal{H}_E)$ and extended to $L^{2}(\mathbb{R}^3; \mathbb{C}^2 \otimes \mathcal{H}_E)$ by a density argument. It follows that 
\begin{equation*}
\mathcal{J}(u_s)(x)=U^{-1} \left( \chi_{\bar{B}_{\nu/2}} \psi \hat{u}_s \right) (x),
\end{equation*}
and we have that 
\begin{equation}
\label{ii0}
H_{P \vee E}(\mathcal{J}(u))=H_{P \vee E} U^{-1} \left( \chi_{\bar{B}_{\nu/2}} \psi \hat{u}_s \right)=U^{-1}(E \psi  \chi_{\bar{B}_{\nu/2}}   \hat{u}_s),
\end{equation}
where $E(\vec{p})$ is the self-energy of the atom.  Furthermore,
\begin{equation}
\label{ii1}
\mathcal{J}( (H_{P ,\text{eff}}^{\varepsilon}-V_{\varepsilon})u_s)=\frac{1}{(2\pi)^{3/2}} \int_{\mathbb{R}^3}   E(\vec{p})  \hat{u}_s(\vec{p})   e^{i \vec{x} \cdot ( \vec{p}- \vec{P}_E)} \chi_{\bar{B}_{\nu/2}}(\vec{p})    \psi(\vec{p}) \text{ } d^{3}p.
\end{equation}
We observe that  \eqref{ii0} and \eqref{ii1} imply  that \eqref{Atder} is equal to 
\begin{equation}
\label{At}
A_t u = -i e^{-it H_{P \vee E}^{\varepsilon}}    \int_{0}^{t}    e^{is H_{P \vee E}^{\varepsilon}} \left(V_{\varepsilon} \mathcal{J}(u_s) - \mathcal{J}(V_{\varepsilon} u_s) \right) ds.
\end{equation}
Since $e^{-it H_{P \vee E}^{\varepsilon}} $ is an isometry, it is sufficient to bound the norm of 
\begin{equation}
\phi_s:=V_{\varepsilon} \mathcal{J}(u_s) - \mathcal{J}(V_{\varepsilon} u_s).
\end{equation}
We have that
\begin{eqnarray*}
(U \phi_s)(\vec{p}) &=& U  \left[ V_{\varepsilon} U^{-1} (\hat{u}_s \chi_{\bar{B}_{\nu/2}} \psi) - U^{-1} (\widehat{ V_{\varepsilon} u_s } \chi_{\bar{B}_{\nu/2}} \psi ) \right](\vec{p})\\
&=& \left[ \hat{V}_{\varepsilon}  * (\hat{u}_s \chi_{\bar{B}_{\nu/2}} \psi)- \widehat{ V_{\varepsilon} u_s } \chi_{\bar{B}_{\nu/2}} \psi \right](\vec{p})
\end{eqnarray*}
which can be rewritten as 
\begin{equation}
\label{Uphi}
(U \phi_s)(\vec{p}) = \frac{1}{ (2 \pi)^{3/2}} \int_{\mathbb{R}^3}   \hat{V}_{\varepsilon}(\vec{p}- \vec{q})  \hat{u}_s(\vec{q})  \left( \chi_{\bar{B}_{\nu/2}} (\vec{q}) \psi(\vec{q}) - \chi_{\bar{B}_{\nu/2}} (\vec{p}) \psi(\vec{p}) \right)d^{3}q.
\end{equation}
 
\noindent Since $U$ is an isometry,
\begin{equation}
\label{At2}
\| A_t u \| \leq \int_{0}^{t}ds  \| U \phi_s \|.
\end{equation}

\noindent We set
\begin{equation*}
\label{bigph}
\Psi(\vec{p}):=\chi_{\bar{B}_{\nu/2}} (\vec{p}) \psi(\vec{p})
\end{equation*}
for all $\vec{p} \in \mathbb{R}^3$.  As $\chi$ is smooth and $\psi$ is real analytic, $\Psi$ is  smooth with compact support and is consequently   continuously differentiable in $\mathbb{R}^{3}$:  There exists a constant $M_{\nu}>0$ such that
\begin{equation}
\label{bm}
\| \Psi(\vec{q})-\Psi(\vec{p}) \|_{\mathbb{C}^2 \otimes \mathcal{H}_E} \leq M_{\nu}  \vert \vec{p}- \vec{q} \vert
\end{equation}
for all $\vec{p},\vec{q} \in \mathbb{R}^3$.
Introducing  the function
\begin{equation}
\label{gggg}
g_{\varepsilon}(\vec{p} ):=  \varepsilon+    \vert \vec{p}  \vert,
\end{equation}
we get  that 
\begin{align*}
\| A_t u_s \| &\leq \int_{0}^{t}  ( 2 \pi)^{-3/2}  \Big \|  \int_{\mathbb{R}^3}  \hat{V}_{\varepsilon}(\vec{p}- \vec{q})  \hat{u}_s(\vec{q})  \left(\Psi(\vec{q})-\Psi(\vec{p})  \right) d^{3}q  \Big \| ds   \\
& \leq   \int_{0}^{t}    ( 2 \pi)^{-3/2} \Big \|  \int_{\mathbb{R}^3}  \hat{V}_{\varepsilon}(\vec{p}- \vec{q})  g_{\varepsilon}(\vec{p}- \vec{q}) \hat{u}_s(\vec{q})  \frac{1}{g_{\varepsilon}(\vec{p}- \vec{q})}   \left( \Psi(\vec{q})-\Psi(\vec{p}) \right) d^{3}q  \Big \| ds\\
&\leq   M_{\nu}  \int_{0}^{t}   ( 2 \pi)^{-3/2} \|  \hat{V}_{\varepsilon}   g_{\varepsilon} *  \hat{u}_s \|_{L^{2}( \mathbb{R} ^3 )} ds \leq  ( 2 \pi)^{-3/2}  M_{\nu}  \int_{0}^{t}    \|  \hat{V}_{\varepsilon}   g_{\varepsilon} \|_{L^{1}(   \mathbb{R}^3 )}  \|   \hat{u}_s \|_{L^{2}(  \mathbb{R}^3 ) } ds , 
\end{align*}
where we have used  Young's inequality in the last line. Since $\text{Supp}(\hat{V}) \subset B_1(0)$, 
\begin{equation}
\|  \hat{V}_{\varepsilon}   g_{\varepsilon}   \|_{L^{1}(  \mathbb{R} ^3 )} \leq \varepsilon  \| \hat{V}_{\varepsilon}  \|_{L^{1}(  \mathbb{R} ^3 ) }+   \varepsilon \| \hat{V} \|_{L^{1}(  \mathbb{R} ^3 )}.
\end{equation}
Together with  $ \|  \hat{u}_s \|_{L^2(  \mathbb{R} ^3 )}=1$,  $\hat{V}_{\varepsilon}(\vec{p})= \frac{1}{\varepsilon^3}  \hat{V}  \left(\frac{\vec{p}}{\varepsilon} \right) $, and $\| \hat{V}_{\varepsilon}  \|_{L^{1}(  \mathbb{R} ^3 )}=\| \hat{V}  \|_{L^{1}(  \mathbb{R} ^3 )} $,  this finally implies that
\begin{equation}
\| A_t u_s \| = \mathcal{O} (t \varepsilon).
\end{equation}

\end{proof}

\vspace{2mm}

\subsection{Relative bounds in Fock space and commutator estimates}\label{app:comm}
We begin this appendix with some useful estimates concerning creation and annihilation operators on Fock space and second quantized operators. 

We introduce the notation
\begin{equation}\label{eq:defh_0}
\mathfrak{h}_0 := \Big \{ h \in L^2( \underline{ \mathbb{R} }^3 ) , \| h \|_{ \mathfrak{h}_0 } := \int_{ \underline{ \mathbb{R} }^3 } ( 1 + |k|^{-1} ) | h ( \underline{k} ) |^2 d \underline k < \infty \Big \}.
\end{equation}
We recall the following standard result (see e.g. \cite[Lemma 17]{FrGrSc01_01}).
\begin{lemma}\label{lm:B1}
Let $f_i \in L^2( \underline{ \mathbb{R} }^3 )$ for $i=1,\dots,n$. Then
\begin{align*}
\big \| a^\#(f_1) \cdots a^\#(f_n) ( N+\mathds{1} )^{- \frac{n}{2} } \big\| \le C_n \| f_1 \|_{ L^2( \underline{ \mathbb{R} }^3 ) } \dots \| f_n \|_{ L^2( \underline{ \mathbb{R} }^3 ) } ,   
\end{align*}
where $a^\#$ stands for $a$ or $a^*$. If in addition $f_i \in \mathfrak{h}_0$ for $i=1,\dots,n$ (where $\mathfrak{h}_0$ is defined in \eqref{eq:defh_0}), then
\begin{align*}
\big \| a^\#(f_1) \cdots a^\#(f_n) ( H_E + \mathds{1} )^{- \frac{n}{2} } \big\| \le C_n  \| f_1 \|_{ \mathfrak{h}_0 } \dots \| f_n \|_{ \mathfrak{h}_0} .
\end{align*}
\end{lemma}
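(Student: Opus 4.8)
\medskip

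\noindent\textbf{Proof proposal.} Since both inequalities are classical (see \cite[Lemma 17]{FrGrSc01_01}), the plan is to record the structure of the argument rather than every estimate, carrying it out first for $n=1$ and then for arbitrary $n$ by an induction on the length of the product of field operators. For $n=1$ I would argue directly on the $m$-photon sectors: for $\Psi = ( \Psi^{(m)} ) \in \mathcal{F}_{ \mathrm{fin} }$ one has $( a( f ) \Psi )^{(m)}( \underline{k}_1 , \dots , \underline{k}_m ) = \sqrt{ m + 1 } \int_{ \underline{ \mathbb{R} }^3 } \overline{ f( \underline{l} ) } \, \Psi^{(m+1)}( \underline{l} , \underline{k}_1 , \dots , \underline{k}_m ) \, d \underline{l}$, and a single application of the Cauchy--Schwarz inequality in $\underline{l}$, followed by the total symmetry of $\Psi^{(m+1)}$ (which turns $\sum_m ( m + 1 ) \int_{ \underline{ \mathbb{R} }^3 } | \Psi^{(m+1)}( \underline{l} , \cdot ) |^2 d \underline{l}$ into $\langle \Psi , N \Psi \rangle$), gives
\begin{equation*}
\big \| a( f ) \Psi \big \| \le \| f \|_{ L^2( \underline{ \mathbb{R} }^3 ) } \big \| ( N + \mathds{1} )^{ 1/2 } \Psi \big \| .
\end{equation*}
Writing instead $\overline{ f( \underline{l} ) } = | \vec{l} |^{ 1/2 } \, \overline{ | \vec{l} |^{ - 1/2 } f( \underline{l} ) }$ before the same Cauchy--Schwarz step and using symmetry to rewrite $( m + 1 ) \int_{ \underline{ \mathbb{R} }^3 } | \vec{l} | \, | \Psi^{(m+1)}( \underline{l} , \cdot ) |^2 d \underline{l}$ as $\langle \Psi^{(m+1)} , H_E \Psi^{(m+1)} \rangle$ yields
\begin{equation*}
\big \| a( f ) \Psi \big \| \le \| f \|_{ \mathfrak{h}_0 } \big \| ( H_E + \mathds{1} )^{ 1/2 } \Psi \big \| .
\end{equation*}
The bounds for $a^*( f )$ then follow from the identity $\| a^*( f ) \Psi \|^2 = \| a( f ) \Psi \|^2 + \| f \|_{ L^2 }^2 \| \Psi \|^2$, which is the canonical commutation relation.

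For general $n$ I would treat the two inequalities separately. For the $N$-bound, the relation $[ N , a^\#( f ) ] = \mp a^\#( f )$ gives $a^\#( f ) g( N ) = g( N \pm \mathds{1} ) a^\#( f )$ for bounded Borel $g$, so inserting resolvent factors one can write
\begin{equation*}
a^\#( f_1 ) \cdots a^\#( f_n ) ( N + \mathds{1} )^{ - n/2 } = \prod_{ j = 1 }^{ n } \big ( ( N + c_{ j - 1 } )^{ 1/2 } a^\#( f_j ) ( N + c_j )^{ - 1/2 } \big ) ,
\end{equation*}
with $c_0 = 0$, the leftmost prefactor absorbed, and the $c_j \ge 1$ chosen so that, after commuting each $( N + c_j )^{ 1/2 }$ to the right, one is left with $n$ factors of the form $a^\#( f_j ) ( N + \tilde{c}_j )^{ - 1/2 }$, each bounded by $\mathrm{C} \| f_j \|_{ L^2 }$ by the case $n = 1$. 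For the $H_E$-bound I would instead first normal-order the product --- each contraction $[ a( f_i ) , a^*( f_j ) ] = \langle f_i , f_j \rangle$ costs only a scalar $\le \| f_i \|_{ L^2 } \| f_j \|_{ L^2 } \le \| f_i \|_{ \mathfrak{h}_0 } \| f_j \|_{ \mathfrak{h}_0 }$ and produces at most $\mathrm{C}_n$ normally ordered terms --- and then estimate a single monomial $a^*( g_1 ) \cdots a^*( g_p ) a( h_1 ) \cdots a( h_q )$ on the $m$-photon sectors exactly as in the case $n = 1$: attaching a weight $| \vec{l} |^{ 1/2 }$ to each photon acted on, applying Minkowski's and then Cauchy--Schwarz's inequalities once per variable, and using the symmetry of $\Psi^{(m)}$ together with Maclaurin's inequality for elementary symmetric polynomials to bound the combinatorial prefactor $m ( m - 1 ) \cdots ( m - q + 1 ) \int | \vec{l}_1 | \cdots | \vec{l}_q | \, | \Psi^{(m)} |^2$ by $\| H_E^{ q/2 } \Psi^{(m)} \|^2$.

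I expect the $H_E$-bound to be the only genuinely delicate point, and the reason the weighted space $\mathfrak{h}_0$ rather than $L^2$ is the natural one: because the photon field is massless, $N$ is \emph{not} $H_E$-bounded, so the second inequality cannot be reduced to the first, and it is precisely the infrared weight $| \vec{k} |^{ - 1 }$ in the $\mathfrak{h}_0$-norm that absorbs those combinatorial factors against the energy of the photons that have been removed. The remaining points are routine: all the operator identities above are used on the core $\mathcal{F}_{ \mathrm{fin} }( \mathcal{D}( | \vec{k} | ) )$, which is common to every power of $N$ and of $H_E$ occurring, so the inequalities extend to the full domains by density.
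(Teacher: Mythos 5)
The paper gives no proof of Lemma~\ref{lm:B1}: it simply invokes the standard reference \cite[Lemma 17]{FrGrSc01_01}. So there is no in-text argument to compare against, and the question is just whether your sketch of the classical proof is sound. It essentially is. The $n=1$ computation on photon sectors is exactly right (and the dichotomy you draw between $\|f\|_{L^2}$ for the $N$-bound and $\|f\|_{\mathfrak{h}_0}$ for the $H_E$-bound, tied to the non-$H_E$-boundedness of $N$ for massless photons, is the correct reason the weighted norm enters). For general $n$, the $H_E$-route via normal ordering plus a sector-by-sector estimate using the multinomial expansion of $(\sum_j \omega_j)^q$ (you call it Maclaurin, but the needed inequality $q!\,e_q \le (\sum\omega_j)^q$ is just the multinomial theorem) is a standard and complete scheme, once one notes that after contractions the surviving monomial has degree $k\le n$ and $(H_E+\mathds{1})^{-n/2}\le(H_E+\mathds{1})^{-k/2}$.

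One display, though, is not correct as written: the telescoping identity
\begin{equation*}
a^\#( f_1 ) \cdots a^\#( f_n ) ( N + \mathds{1} )^{ - n/2 } = \prod_{ j = 1 }^{ n } \big ( ( N + c_{ j - 1 } )^{ 1/2 } a^\#( f_j ) ( N + c_j )^{ - 1/2 } \big )
\end{equation*}
collapses, after cancellation, to $(N+c_0)^{1/2}\,a^\#(f_1)\cdots a^\#(f_n)\,(N+c_n)^{-1/2}$, which has a single half-power resolvent on the right, not the required $(N+\mathds{1})^{-n/2}$, and an unbounded $N^{1/2}$ dangling on the left if $c_0=0$. The fix is bookkeeping rather than a new idea: either insert $(N+\mathds{1})^{-j/2}(N+\mathds{1})^{j/2}$ between $a^\#(f_j)$ and $a^\#(f_{j+1})$ and then pull each excess $(N+\mathds{1})^{j/2}$ through with the shift $g(N)a^\#(f)=a^\#(f)g(N\mp 1)$, absorbing the resulting bounded ratios $(N+\mathds{1}\mp 1)^{j/2}(N+\mathds{1})^{-j/2}$ into the constant $C_n$; or, simpler still, prove $\|a^\#(f_1)\cdots a^\#(f_n)\Psi^{(m)}\|\le C_n(m+n)^{n/2}\prod_j\|f_j\|_{L^2}\|\Psi^{(m)}\|$ sector by sector and note that $(m+n)^{n/2}(m+1)^{-n/2}\le n^{n/2}$ uniformly in $m$. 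Either way the argument closes; the sketch is otherwise a faithful outline of the cited result.
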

The following lemma was used (sometimes implicitly) several times in the main text. Its proof can be found in \cite[Section 3]{GeGeMo04_01}.
\begin{lemma}\label{lm:B2bis}
Let $\omega,\omega'$ be two self-adjoint operators on $L^2( \underline{ \mathbb{R} }^3 )$ with $\omega' \ge 0$, $\mathcal{D} (\omega') \subset \mathcal{D} (\omega)$ and $\| \omega \varphi \|_{ L^2( \underline{ \mathbb{R} }^3 ) } \le \| \omega'  \varphi \|_{ L^2( \underline{ \mathbb{R} }^3 ) }$ for all $\varphi \in \mathcal{D} (\omega')$. Then $\mathcal{D} ( \d\Gamma(\omega') ) \subset \mathcal{D}( \d\Gamma(\omega) )$ and $\| \d \Gamma(\omega) \Phi \| \le \| \d \Gamma(\omega') \Phi \|$ for all $\Phi \in \mathcal{D} ( \d\Gamma(\omega') )$.
\end{lemma}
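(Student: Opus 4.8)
The plan is to reduce the statement, via the particle-number grading of Fock space, to a quadratic-form inequality on each $n$-particle sector, and there to expand the square of $\d\Gamma(\cdot)$. First I would record the only two consequences of the hypothesis that will be used. Since $\omega$ and $\omega'$ are self-adjoint, the assumption $\|\omega\varphi\|\le\|\omega'\varphi\|$ for $\varphi\in\mathcal{D}(\omega')$ is exactly the form inequality $\omega^2\le\omega'^2$ on $\mathcal{D}(\omega')$; and, by operator monotonicity of $t\mapsto t^{1/2}$, this yields $|\omega|\le\omega'$ as forms on $\mathcal{D}(\omega')$, where $|\omega|:=(\omega^2)^{1/2}$.

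Next, on the $n$-particle sector write $\d\Gamma(\omega)|_n=\sum_{j=1}^n\omega_j$ with $\omega_j:=\mathds{1}^{\otimes(j-1)}\otimes\omega\otimes\mathds{1}^{\otimes(n-j)}$, and similarly $\d\Gamma(\omega')|_n=\sum_j\omega'_j$; for fixed $n$ the operators $\{\omega_j\}_j$ commute among themselves, as do the $\{\omega'_j\}_j$. Working on a core (finite linear combinations of algebraic tensor products of vectors in $\bigcap_m\mathcal{D}(\omega'^m)$, which also lie in $\mathcal{D}(\omega)$ and in the domains of the operators appearing below), I expand
\[
\big\|\d\Gamma(\omega)|_n\Phi\big\|^2=\Big\langle\Phi,\Big(\textstyle\sum_j\omega_j\Big)^2\Phi\Big\rangle=\sum_j\langle\Phi,\omega_j^2\Phi\rangle+\sum_{j\ne k}\langle\Phi,\omega_j\omega_k\Phi\rangle,
\]
and compare term by term with the analogous expansion of $\|\d\Gamma(\omega')|_n\Phi\|^2$. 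For a diagonal term, $\langle\Phi,\omega_j^2\Phi\rangle=\|\omega_j\Phi\|^2\le\|\omega'_j\Phi\|^2=\langle\Phi,\omega'^2_j\Phi\rangle$, by decomposing $\Phi$ along an orthonormal basis of the factors other than the $j$-th and applying $\|\omega\varphi\|\le\|\omega'\varphi\|$ in each summand (equivalently, tensoring $\omega^2\le\omega'^2$ with identities). For an off-diagonal term $j\ne k$, the operator $\omega_j\omega_k$ is self-adjoint, so $\langle\Phi,\omega_j\omega_k\Phi\rangle$ is real and $\langle\Phi,\omega_j\omega_k\Phi\rangle\le\langle\Phi,|\omega_j\omega_k|\Phi\rangle$; since $\omega_j$ and $\omega_k$ commute, $|\omega_j\omega_k|=|\omega_j|\,|\omega_k|=|\omega|_j\,|\omega|_k$; and because $|\omega|_j$ commutes with $|\omega|_k$ and $|\omega|\le\omega'$, repeated use of ``the product of two commuting positive operators is positive'' gives $|\omega|_j|\omega|_k\le\omega'_j\omega'_k$ as forms. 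Summing these estimates over $j$ and over the pairs $j\ne k$ yields $\|\d\Gamma(\omega)|_n\Phi\|\le\|\d\Gamma(\omega')|_n\Phi\|$ on the core, hence, by a standard density/closedness argument, $\mathcal{D}(\d\Gamma(\omega')|_n)\subset\mathcal{D}(\d\Gamma(\omega)|_n)$ with the same bound on all of $\mathcal{D}(\d\Gamma(\omega')|_n)$.

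Finally I assemble the sectors: for $\Phi=(\Phi^{(n)})_n\in\mathcal{D}(\d\Gamma(\omega'))$ one has $\sum_n\|\d\Gamma(\omega')|_n\Phi^{(n)}\|^2=\|\d\Gamma(\omega')\Phi\|^2<\infty$, so each $\Phi^{(n)}\in\mathcal{D}(\d\Gamma(\omega')|_n)\subset\mathcal{D}(\d\Gamma(\omega)|_n)$ and $\sum_n\|\d\Gamma(\omega)|_n\Phi^{(n)}\|^2\le\sum_n\|\d\Gamma(\omega')|_n\Phi^{(n)}\|^2<\infty$, which is exactly the assertion $\Phi\in\mathcal{D}(\d\Gamma(\omega))$ and $\|\d\Gamma(\omega)\Phi\|\le\|\d\Gamma(\omega')\Phi\|$. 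I expect the off-diagonal term to be the only real obstacle: the naive bound $|\langle\Phi,\omega_j\omega_k\Phi\rangle|\le\|\omega_j\Phi\|\,\|\omega_k\Phi\|\le\|\omega'_j\Phi\|\,\|\omega'_k\Phi\|$ is useless, since $\|\omega'_j\Phi\|\,\|\omega'_k\Phi\|$ is in general \emph{not} dominated by $\langle\Phi,\omega'_j\omega'_k\Phi\rangle$, so one must route through the absolute value $|\omega_j\omega_k|=|\omega|_j|\omega|_k$, exploiting that the ``sign'' of the self-adjoint operator $\omega_j\omega_k$ commutes with its modulus — this is precisely where self-adjointness of $\omega$ and the implication $\omega^2\le\omega'^2\Rightarrow|\omega|\le\omega'$ are needed. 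The rest (choosing the core, justifying the form manipulations, essential self-adjointness of $\d\Gamma(\omega')|_n$ on the core) is routine.
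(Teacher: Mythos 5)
The paper does not prove this lemma itself; it cites \cite{GeGeMo04_01}, Section 3, so there is no in-paper argument to compare against, and your proof has to stand on its own. It does. The crux is, as you correctly isolate, the cross terms. Your chain $\langle\Phi,\omega_j\omega_k\Phi\rangle \le \langle\Phi,|\omega|_j|\omega|_k\Phi\rangle \le \langle\Phi,\omega'_j\omega'_k\Phi\rangle$ is sound: the first step uses $T\le|T|$ for the self-adjoint $T=\omega_j\omega_k$ together with $|\omega_j\omega_k|=|\omega|_j|\omega|_k$, valid because $\omega_j$ and $\omega_k$ strongly commute; the second uses $|\omega|\le\omega'$ (L\"owner--Heinz applied to $\omega^2\le\omega'^2$, legitimate here since $\mathcal{D}(\omega')$, the form domain of $\omega'^2$, is contained in $\mathcal{D}(\omega)$, the form domain of $\omega^2$) via the identity $\omega'_j\omega'_k-|\omega|_j|\omega|_k=(\omega'_j-|\omega|_j)\omega'_k+|\omega|_j(\omega'_k-|\omega|_k)$, each summand a product of two commuting positive operators and hence positive. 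That explicit decomposition is the content behind your phrase ``repeated use of the product of two commuting positive operators is positive,'' and it works because the two factors in each summand live on different tensor slots. The diagonal comparison, the sector-by-sector assembly, and the core/closure bookkeeping you defer (algebraic tensor products of $\omega'$-smooth vectors form a core for $\d\Gamma(\omega')|_n$, essential self-adjointness there, density passage to the full domain) are all routine, as you say. Your diagnosis that Cauchy--Schwarz on the cross term is useless is also correct and shows the real obstacle was understood.
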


Now we turn to a few localization estimates that were used in the main text. The next lemma is a particular case of \cite[Lemma 3.1]{BoFaSi12_01}. We do not  present the proof.
\begin{lemma}\label{lm:B3}
Let $F \in C^\infty( \mathbb{R}_+ ; [0,1] )$ be a smooth function such that $\mathrm{Supp}( F ) \subset [ 1 , \infty )$. Let $a \in [ 0 , 3/2 )$, $b \in \mathbb{R}$, $\chi \in C^{\infty}_{0} ( \R^{3} )$ and $h^{b}_{x} (\vec{k})$ be such that, for all $\alpha \in \mathbb{N}^3$, $| \partial_{\vec{k}}^\alpha h^{b}_{x} ( \vec{k} ) |  \leq C_{\alpha}  | \vec{k} |^{b-|\alpha|} \langle \vec{x} \rangle^{|\alpha|}$ with $ C_{\alpha} \geq 0$. Assume that $b > a - 3/2$. Then, for all $c \in [ 0 , b - a + 3/2 )$ and $d > 0$,
\begin{equation*}
\forall \vec{x} \in \mathbb{R}^{3}, \quad \big \| | \vec{k} |^{- a} F ( | i \vec{\nabla}_k | / d ) \chi (\vec{k}) h^{b}_{x} (\vec{k}) \big \|_{L^{2} ( \mathbb{R}^{3}_{k} )} \le C d^{- c} \langle \vec{x} \rangle^{a + c} .
\end{equation*}
\end{lemma}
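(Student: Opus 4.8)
The plan is to remove the singular prefactor $|\vec{k}|^{-a}$ by the fractional Hardy inequality on $\R^{3}$ — which is available exactly because $0\le a<3/2$ — and thereby reduce the whole statement to a Sobolev estimate for $\chi h^{b}_{x}$. First I would apply Hardy's inequality $\big\| |\vec{k}|^{-a}g\big\|_{L^{2}(\R^{3}_{k})}\le C_{a}\big\| |i\vec{\nabla}_{k}|^{a}g\big\|_{L^{2}(\R^{3}_{k})}$ to $g=F(|i\vec{\nabla}_{k}|/d)\chi h^{b}_{x}$ and use that $|i\vec{\nabla}_{k}|^{a}$ commutes with $F(|i\vec{\nabla}_{k}|/d)$, to get
\begin{equation*}
\big\| |\vec{k}|^{-a}F(|i\vec{\nabla}_{k}|/d)\chi h^{b}_{x}\big\|_{L^{2}}\le C_{a}\big\| F(|i\vec{\nabla}_{k}|/d)\,|i\vec{\nabla}_{k}|^{a}\chi h^{b}_{x}\big\|_{L^{2}} .
\end{equation*}
Since $0\le F\le1$ and $\mathrm{Supp}(F)\subset[1,\infty)$, we have $F(|\vec{y}|/d)^{2}\le(|\vec{y}|/d)^{2c}$ for every $c\ge0$ (both sides vanish for $|\vec{y}|<d$), so Plancherel's theorem in $\vec{k}$ gives
\begin{equation*}
\big\| F(|i\vec{\nabla}_{k}|/d)\,|i\vec{\nabla}_{k}|^{a}\chi h^{b}_{x}\big\|_{L^{2}}\le d^{-c}\big\| |i\vec{\nabla}_{k}|^{a+c}\chi h^{b}_{x}\big\|_{L^{2}}=d^{-c}\,\|\chi h^{b}_{x}\|_{\dot{H}^{a+c}(\R^{3})} .
\end{equation*}
Hence everything reduces to proving the homogeneous Sobolev bound
\begin{equation}\label{eq:sobgoal}
\|\chi h^{b}_{x}\|_{\dot{H}^{s}(\R^{3})}\le C_{s}\,\langle\vec{x}\rangle^{s}\qquad\text{for }0\le s<b+\tfrac{3}{2} ,
\end{equation}
which then yields the lemma upon taking $s=a+c$; note that $a+c<b+3/2$ is precisely the hypothesis $c<b-a+3/2$, and that the range of exponents in \eqref{eq:sobgoal} is forced by the $|\vec{k}|^{b}$-type singularity of $h^{b}_{x}$ at the origin that the bounds $|\partial^{\alpha}_{\vec{k}}h^{b}_{x}|\le C_{\alpha}|\vec{k}|^{b-|\alpha|}\langle\vec{x}\rangle^{|\alpha|}$ encode.

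To prove \eqref{eq:sobgoal} I would run a Littlewood--Paley decomposition in the variable $\vec{y}$ dual to $\vec{k}$: write $\|\chi h^{b}_{x}\|_{\dot{H}^{s}}^{2}\simeq\sum_{\ell\in\Z}2^{2\ell s}\|P_{\ell}(\chi h^{b}_{x})\|_{L^{2}}^{2}$, with $P_{\ell}$ a smooth projection onto $\{|\vec{y}|\sim2^{\ell}\}$, and estimate $\|P_{\ell}(\chi h^{b}_{x})\|_{L^{2}}$ separately in the regimes $2^{\ell}\lesssim\langle\vec{x}\rangle$ and $2^{\ell}\gg\langle\vec{x}\rangle$. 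In the first regime one bounds crudely $\|P_{\ell}(\chi h^{b}_{x})\|_{L^{2}}\le\|\chi h^{b}_{x}\|_{L^{2}}\lesssim1$ (the last bound because $b>-3/2$), which summed gives the contribution $\sum_{2^{\ell}\lesssim\langle\vec{x}\rangle}2^{2\ell s}\lesssim\langle\vec{x}\rangle^{2s}$. In the second regime the key input is the decay estimate $|\widehat{\chi h^{b}_{x}}(\vec{y})|\lesssim|\vec{y}|^{-(b+3)}$, \emph{uniform in $\vec{x}$}, valid for $|\vec{y}|\gtrsim\langle\vec{x}\rangle$; granting it, $\|P_{\ell}(\chi h^{b}_{x})\|_{L^{2}}\lesssim2^{-\ell(b+3/2)}$, and since $s<b+3/2$ the sum $\sum_{2^{\ell}\gg\langle\vec{x}\rangle}2^{2\ell s}2^{-2\ell(b+3/2)}$ is geometric and dominated by its bottom term, contributing $\lesssim\langle\vec{x}\rangle^{2(s-b-3/2)}\le\langle\vec{x}\rangle^{2s}$. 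Summing the two regimes gives \eqref{eq:sobgoal}.

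The hard part will be the uniform tail bound $|\widehat{\chi h^{b}_{x}}(\vec{y})|\lesssim|\vec{y}|^{-(b+3)}$ for $|\vec{y}|\gtrsim\langle\vec{x}\rangle$. A plain iterated integration by parts only yields $|\widehat{\chi h^{b}_{x}}(\vec{y})|\lesssim\langle\vec{x}\rangle^{M}|\vec{y}|^{-M}$ for $M<b+3$, which loses powers of $\langle\vec{x}\rangle$ and would degrade \eqref{eq:sobgoal} to $\langle\vec{x}\rangle^{s+3/2}$ — too weak for the stated lemma. The resolution is to exploit that the hypotheses on $h^{b}_{x}$ say that, up to genuinely lower-order terms, $h^{b}_{x}$ is a modulation $\vec{k}\mapsto e^{-i\vec{x}\cdot\vec{k}}$ of a function homogeneous of degree $b$ near $\vec{k}=0$: the modulation merely translates the $\vec{y}$-spectrum by $\vec{x}$, so after factoring out $e^{-i\vec{x}\cdot\vec{k}}$ the remaining amplitude obeys $\vec{x}$-independent symbol bounds of order $b$, and ordinary integration by parts in $\vec{k}$ (equivalently, non-stationary phase on $\{|\vec{y}|\gtrsim2\langle\vec{x}\rangle\}$, where $e^{-i\vec{y}\cdot\vec{k}}$ dominates) then gives the clean decay with no $\langle\vec{x}\rangle$ loss. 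Carrying this out from the sole assumption on the derivatives of $h^{b}_{x}$ — splitting off the ``modulated homogeneous'' part and controlling the remainder by the same scheme — is the only delicate point; it is precisely \cite[Lemma~3.1]{BoFaSi12_01}, to which we refer for the remaining details.
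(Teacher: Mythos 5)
The paper itself offers no argument for this lemma -- it is quoted as a particular case of \cite[Lemma 3.1]{BoFaSi12_01} with the comment ``we do not present the proof'' -- so there is no internal proof to compare you against. Taken on its own merits, your reduction is a nice one and the first half is correct: the fractional Hardy inequality (valid since $0\le a<3/2$) removes $|\vec{k}|^{-a}$; because $F$ is a Fourier multiplier in the variable $\vec{y}=i\vec{\nabla}_{k}$ it commutes with $|i\vec{\nabla}_{k}|^{a}$; and since $0\le F\le 1$ with support in $[1,\infty)$ one has $F(|\vec{y}|/d)\,|\vec{y}|^{a}\le d^{-c}|\vec{y}|^{a+c}$ pointwise, so Plancherel reduces everything to $\|\chi h^{b}_{x}\|_{\dot H^{a+c}}\lesssim\langle\vec{x}\rangle^{a+c}$. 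This is a clean reformulation and the constraint $a+c<b+3/2$ falls out exactly as it should.

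The gap is where you justify the tail bound $|\widehat{\chi h^{b}_{x}}(\vec{y})|\lesssim|\vec{y}|^{-(b+3)}$ for $|\vec{y}|\gtrsim\langle\vec{x}\rangle$. You assert that ``the hypotheses on $h^{b}_{x}$ say that, up to lower-order terms, $h^{b}_{x}$ is a modulation $e^{-i\vec{x}\cdot\vec{k}}$ of a function homogeneous of degree $b$.'' That is not implied by the hypotheses: the symbol bounds $|\partial^{\alpha}_{\vec{k}}h^{b}_{x}|\le C_{\alpha}|\vec{k}|^{b-|\alpha|}\langle\vec{x}\rangle^{|\alpha|}$ are also satisfied, for instance, by $|\vec{k}|^{b}\cos(\vec{k}\cdot\vec{x})$ (a superposition of opposite modulations) or by $|\vec{k}|^{b}m(\langle\vec{x}\rangle|\vec{k}|)$ with $m$ a bounded cutoff (no modulation at all, just localization at scale $|\vec{k}|\sim\langle\vec{x}\rangle^{-1}$), and in general conjugating by $e^{i\vec{x}\cdot\vec{k}}$ does \emph{not} produce $\vec{x}$-independent symbol bounds. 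The decay estimate is in fact true, but the route that actually extracts it from the stated hypotheses is a rescaling: set $g^{b}_{x}(\vec{k}'):=\langle\vec{x}\rangle^{b}h^{b}_{x}(\vec{k}'/\langle\vec{x}\rangle)$, which satisfies $\vec{x}$-\emph{independent} symbol bounds $|\partial^{\alpha}g^{b}_{x}|\le C_{\alpha}|\vec{k}'|^{b-|\alpha|}$, and for which $\chi(\vec{k}'/\langle\vec{x}\rangle)$ contributes the very mild derivative cost $\langle\vec{x}\rangle^{-1}$; then $\widehat{\chi h^{b}_{x}}(\vec{y})=\langle\vec{x}\rangle^{-b-3}\widehat{\chi(\cdot/\langle\vec{x}\rangle)g^{b}_{x}}(\vec{y}/\langle\vec{x}\rangle)$, and a standard split near/away from $\vec{k}'=0$ plus integration by parts gives $|\widehat{\chi(\cdot/\langle\vec{x}\rangle)g^{b}_{x}}(\vec{z})|\lesssim|\vec{z}|^{-(b+3)}$ for $|\vec{z}|\ge 1$, i.e.\ exactly the decay you need. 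Without this rescaling step your argument is not a proof -- a naive two-scale split only yields $|\widehat{\chi h^{b}_{x}}(\vec{y})|\lesssim(\langle\vec{x}\rangle/|\vec{y}|)^{b+3}$, which loses an additional factor $\langle\vec{x}\rangle^{3}$ in the final estimate. You are candid about deferring the hard part to \cite[Lemma 3.1]{BoFaSi12_01}, which is what the paper does as well, but the route you sketch for that hard part would not succeed as written.
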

Combining Lemmas \ref{lm:B1} and \ref{lm:B3}, we obtain the following estimates that have been used in the proof of Lemma \ref{lm:b2}. Recall that the operators $\mathbf{j_0}$, $\mathbf{j_\infty}$ are defined at the beginning of Section \ref{sec:partition} and that the coupling function $h_x$ was defined in \eqref{eq:def_hx2}.
\begin{lemma}\label{lm:B4}
For all $0 < \delta \le 2$, we have that
\begin{align*}
\big \| \Phi( \mathbf{j}_\infty h_x ) \langle \vec{x} \rangle^{-2 + \delta } ( N + \mathds{1} )^{-\frac12} \big \| = \mathcal{O}( d^{-2+\delta} ) , \\
\big \| \Phi( ( \mathds{1} - \mathbf{j}_0 ) h_x ) \langle \vec{x} \rangle^{-2 + \delta } ( N + \mathds{1} )^{-\frac12} \big \| = \mathcal{O}( d^{-2+\delta} ) .
\end{align*}
\end{lemma}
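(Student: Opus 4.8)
\textbf{Proof of Lemma \ref{lm:B4}.} The plan is to combine the momentum-space localization estimate of Lemma \ref{lm:B3} with the elementary relative bound
\[
\big \| a^\#( g ) \Phi \big \| \le \| g \|_{ L^2( \underline{ \mathbb{R} }^3 ) } \big \| ( N + \mathds{1} )^{\frac12} \Phi \big \| ,
\]
which is the content of Lemma \ref{lm:B1} for $n = 1$ (combined with $\| a( g ) \Phi \|^2 \le \| g \|^2 \| N^{1/2} \Phi \|^2$ and $\| a^*( g ) \Phi \|^2 = \| g \|^2 \| \Phi \|^2 + \| a( g ) \Phi \|^2$), applied fibrewise in the atomic position variable $\vec{x}$.

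First I would record the structure of the coupling function. From \eqref{eq:def_hx2} one has $h_x( \underline{k} ) = \chi( \vec{k} ) h^{\frac12}_x( \vec{k} )$ with $h^{\frac12}_x( \vec{k} ) := - i \lambda_0 | \vec{k} |^{\frac12} \vec{\varepsilon}( \underline{k} ) \cdot \vec{\sigma} \, e^{ - i \vec{k} \cdot \vec{x} }$, and the latter satisfies, component-wise in the spin indices, the bound $| \partial^\alpha_{\vec{k}} h^{\frac12}_x( \vec{k} ) | \le C_\alpha | \vec{k} |^{\frac12 - |\alpha|} \langle \vec{x} \rangle^{|\alpha|}$ for every multi-index $\alpha$: the phase $e^{ - i \vec{k} \cdot \vec{x} }$ produces the powers of $\langle \vec{x} \rangle$, the factor $| \vec{k} |^{\frac12}$ produces the allowed singularity at the origin, and the polarization vectors $\vec{\varepsilon}_\lambda( \vec{k} )$ are homogeneous of degree zero. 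Thus $h^{\frac12}_x$ is of the form $h^b_x$ in the notation of Lemma \ref{lm:B3} with $b = 1/2$ (matrix-valued, which is harmless since the estimates are taken component-wise and the sum over $\lambda \in \{1,2\}$ is finite). Next I would observe that $\mathbf{j_\infty} = j_\infty( | i \vec{\nabla}_k | / d )$ and $\mathds{1} - \mathbf{j_0} = ( 1 - j_0 )( | i \vec{\nabla}_k | / d )$ are both of the form $F( | i \vec{\nabla}_k | / d )$ with $F \in C^\infty( \mathbb{R}_+ ; [0,1] )$ and $\mathrm{Supp}( F ) \subset [ 1 , \infty )$: since $j_0 \equiv 1$ on $[0,1]$, both $j_\infty$ and $1 - j_0$ vanish on $[0,1]$, and the relation $j_0^2 + j_\infty^2 \equiv 1$ keeps $j_\infty$ in $[0,1]$.

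Then I would apply Lemma \ref{lm:B3} with $a = 0$, $b = 1/2$ and $c = 2 - \delta$; the requirement $c \in [ 0 , b - a + 3/2 ) = [ 0 , 2 )$ holds precisely for $0 < \delta \le 2$. This yields, for every $\vec{x} \in \mathbb{R}^3$,
\[
\big \| \mathbf{j_\infty} h_x \big \|_{ L^2( \underline{ \mathbb{R} }^3 ) } = \big \| F( | i \vec{\nabla}_k | / d ) \chi( \vec{k} ) h^{\frac12}_x( \vec{k} ) \big \|_{ L^2( \underline{ \mathbb{R} }^3 ) } \le C d^{ -2 + \delta } \langle \vec{x} \rangle^{ 2 - \delta } ,
\]
and the same estimate with $\mathbf{j_\infty}$ replaced by $\mathds{1} - \mathbf{j_0}$.

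To conclude, write $\Psi \in \mathcal{H}_{P} \otimes \mathcal{H}_E \simeq L^2( \mathbb{R}^3_x ; \mathbb{C}^2 \otimes \mathcal{H}_E )$ and use that $\langle \vec{x} \rangle$, being multiplication by a scalar function of $\vec{x}$, commutes with $N$; hence $( N + \mathds{1} )^{\frac12} \langle \vec{x} \rangle^{ -2 + \delta } ( N + \mathds{1} )^{ -\frac12 } \Psi( \vec{x} ) = \langle \vec{x} \rangle^{ -2 + \delta } \Psi( \vec{x} )$. Applying the $n = 1$ bound at fixed $\vec{x}$ to $\Phi( \mathbf{j_\infty} h_x ) = a^*( \mathbf{j_\infty} h_x ) + a( \mathbf{j_\infty} h_x )$ gives
\[
\big \| \Phi( \mathbf{j_\infty} h_x ) \langle \vec{x} \rangle^{ -2 + \delta } ( N + \mathds{1} )^{ -\frac12 } \Psi \big \|^2 \le 4 \int_{ \mathbb{R}^3 } \big \| \mathbf{j_\infty} h_{ \vec{x} } \big \|_{ L^2( \underline{ \mathbb{R} }^3 ) }^2 \langle \vec{x} \rangle^{ -4 + 2 \delta } \| \Psi( \vec{x} ) \|^2 \, d \vec{x} ,
\]
and the previous display bounds the integrand by $4 C^2 d^{ -4 + 2 \delta } \| \Psi( \vec{x} ) \|^2$, which proves the first estimate. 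The second estimate follows verbatim with $\mathbf{j_\infty}$ replaced by $\mathds{1} - \mathbf{j_0}$. The only point requiring a little care is the verification that $h_x$ meets the hypotheses of Lemma \ref{lm:B3} at $\vec{k} = 0$, but this is immediate because that lemma explicitly allows a $| \vec{k} |^{ b - |\alpha| }$ blow-up with $b = 1/2 > 0$; the remainder is bookkeeping with the elementary creation/annihilation bound. $\square$
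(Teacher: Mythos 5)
Your proof is correct and follows the same route as the paper's: both reduce, via the $n=1$ case of Lemma \ref{lm:B1} applied fibrewise in $\vec{x}$, to an $L^2(\underline{\mathbb{R}}^3)$ bound on $\mathbf{j}_\infty h_x$ (resp.\ $(\mathds{1}-\mathbf{j}_0) h_x$), and then invoke Lemma \ref{lm:B3} with $a=0$, $b=1/2$, $c=2-\delta$. Your version spells out the verification that $h_x$ and the cutoffs $j_\infty$, $1-j_0$ satisfy the hypotheses of Lemma \ref{lm:B3}, which the paper leaves implicit, but the argument is the same.
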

\begin{proof}
The proofs of the two stated estimates being the same, we only consider the first one. Applying Lemma \ref{lm:B1}, we obtain that, for all $\varphi \in \mathcal{H}_{P} \otimes \mathcal{H}_E \simeq L^2 ( \mathbb{R}^3 ; \mathbb{C}^2 \otimes \mathcal{H}_E )$,
\begin{align}
& \Big \| \Phi( \mathbf{j}_\infty h_x ) \langle \vec{x} \rangle^{-2 + \delta } ( N + \mathds{1} )^{-\frac12} \varphi \Big \|^2 \notag \\
&\le C \int_{\mathbb{R}^3}  \langle \vec{x} \rangle^{-4 + 2 \delta} \big\Vert \mathbf{j}_\infty h_x ( \underline{k} ) \big \|^2_{ L^2( \underline{ \mathbb{R} }^3 ) } \| \varphi(\vec{x}) \|^2_{ \mathbb{C}^2 \otimes \mathcal{H}_E } \, d ^3 x .
\end{align}
Lemma \ref{lm:B3} (applied with $a=0$ and $b=1/2$) then yields that
\begin{align*}
\Big \| \Phi( \mathbf{j}_\infty h_x ) \langle \vec{x} \rangle^{-2 + \delta } ( N + \mathds{1} )^{-\frac12} \varphi \Big \|^2 \le C d^{-4+2\delta}  \| \varphi \|^2 ,
\end{align*}
for all $0 < \delta \le 2$.
\end{proof}
Another related consequence of Lemmas \ref{lm:B1} and \ref{lm:B3} is given by the following commutator estimate used in the proof of  Lemma \ref{lm:f1}. Recall that $\chi_{ | \vec{y} | \le d } = j_0( 2 | \vec{y} | / d )$ and that $H_{P,E} = \Phi( h_x ) = a^*( h_x ) + a( h_x )$.
\begin{lemma}\label{lm:B8}
For all $0 < \delta \le 2$, we have that
\begin{equation*}
\big \| \langle \vec{x} \rangle^{-2 + \delta} \big [ \Gamma( \chi_{ | \vec{y} | \le d } ) , H_{P,E} \big ] ( N + \mathds{1} )^{-\frac12} \big \| = \mathcal{O}( d^{-2+\delta} ).
\end{equation*}
\end{lemma}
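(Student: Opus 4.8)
The plan is to reduce the commutator estimate to a pointwise-in-$\vec{x}$ bound on a one-photon form factor and then feed it into the relative bound of Lemma \ref{lm:B1}. First I would compute the commutator explicitly. Since $\Gamma(\chi_{|\vec{y}|\le d})$ is a second quantization and $H_{P,E}=a^*(h_x)+a(h_x)$ is linear in the creation/annihilation operators, the standard identity $\Gamma(b)\,a^*(h)=a^*(bh)\,\Gamma(b)$ and $\Gamma(b)\,a(h)=a(b^*h)\,\Gamma(b)$ (valid here because $\chi_{|\vec{y}|\le d}=j_0(2|\vec{y}|/d)$ is bounded and self-adjoint) gives
\begin{equation*}
\big[\Gamma(\chi_{|\vec{y}|\le d}),H_{P,E}\big] = a^*\big((\chi_{|\vec{y}|\le d}-\mathds{1})h_x\big)\Gamma(\chi_{|\vec{y}|\le d}) + \Gamma(\chi_{|\vec{y}|\le d})\,a\big((\chi_{|\vec{y}|\le d}-\mathds{1})h_x\big),
\end{equation*}
so everything is controlled by the function $g_x := (\mathds{1}-\chi_{|\vec{y}|\le d})h_x$, where $\chi_{|\vec{y}|\le d}$ acts on the $\vec{k}$-variable as the Fourier multiplier $j_0(2|i\vec\nabla_k|/d)$.

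Next I would apply Lemma \ref{lm:B1} to each of the two terms. Using that $\Gamma(\chi_{|\vec{y}|\le d})$ is a contraction commuting with $N$, and that $\|a^\#(f)(N+\mathds{1})^{-1/2}\|\le C\|f\|_{L^2}$, the operator norm of $\langle\vec{x}\rangle^{-2+\delta}[\Gamma(\chi_{|\vec{y}|\le d}),H_{P,E}](N+\mathds{1})^{-1/2}$ is bounded, after passing to the identification $\mathcal{H}_P\otimes\mathcal{H}_E\simeq L^2(\mathbb{R}^3;\mathbb{C}^2\otimes\mathcal{H}_E)$, by $C\sup_{\vec{x}}\langle\vec{x}\rangle^{-2+\delta}\|g_x\|_{L^2(\underline{\mathbb{R}}^3)}$ (being slightly careful: one takes the norm inside the $\vec{x}$-integral as in the proof of Lemma \ref{lm:B4}, so really one needs $\langle\vec{x}\rangle^{-2+\delta}\|g_x(\underline k)\|_{L^2_k}$ pointwise in $\vec x$). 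Thus it suffices to prove
\begin{equation*}
\big\|\, (\mathds{1}-j_0(2|i\vec\nabla_k|/d))\,h_x(\underline k)\,\big\|_{L^2(\underline{\mathbb{R}}^3)} = \mathcal{O}\big(d^{-2+\delta}\langle\vec{x}\rangle^{2-\delta}\big),\qquad 0<\delta\le 2 .
\end{equation*}

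This last estimate is where Lemma \ref{lm:B3} does the work. Write $\mathds{1}-j_0(2|i\vec\nabla_k|/d) = F(|i\vec\nabla_k|/d)$ with $F\in C^\infty(\mathbb{R}_+;[0,1])$ supported in $[1,\infty)$ (indeed $j_0\equiv 1$ on $[0,1]$ forces $1-j_0(2r)$ to vanish for $r\le 1/2$, so after the rescaling $F$ is supported away from the origin). Recall from \eqref{eq:def_hx2} that $h_x(\underline k) = -i\lambda_0\chi(\vec{k})|\vec{k}|^{1/2}\vec\varepsilon(\underline k)\cdot\vec\sigma\, e^{-i\vec{k}\cdot\vec{x}}$; the factor $e^{-i\vec{k}\cdot\vec{x}}$ is exactly what produces the $\langle\vec{x}\rangle^{|\alpha|}$ growth in the derivative bounds $|\partial_{\vec{k}}^\alpha h^b_x(\vec{k})|\le C_\alpha|\vec{k}|^{b-|\alpha|}\langle\vec{x}\rangle^{|\alpha|}$ with $b=1/2$ (the polarization vectors $\vec\varepsilon_\lambda(\vec{k})$ are smooth and bounded away from $\vec{k}=0$, and the cutoff $\chi$ is smooth and compactly supported, so they do not spoil these bounds on the region $|\vec k|\le 1$ that matters). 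Applying Lemma \ref{lm:B3} with $a=0$, $b=1/2$, and $c=2-\delta\in[0,2)\subset[0,b-a+3/2)=[0,2)$ yields precisely $\|\,|\vec{k}|^{0}F(|i\vec\nabla_k|/d)\chi(\vec{k})h_x(\vec{k})\,\|_{L^2_k}\le Cd^{-(2-\delta)}\langle\vec{x}\rangle^{2-\delta}$, which is the claim (note $c=2-\delta$ ranges over $[0,2)$ as $\delta$ ranges over $(0,2]$, and the endpoint $\delta=2$, $c=0$ is trivially included). Combining this with the reduction above gives the lemma.

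The main obstacle is a bookkeeping one rather than a conceptual one: one must be careful that the relative bound is applied with the $L^2_k$-norm kept \emph{inside} the $\vec{x}$-integral (so that the $\vec{x}$-dependence of $g_x$ is handled by the weight $\langle\vec{x}\rangle^{-2+\delta}$ acting as a multiplication operator, exactly as in the proof of Lemma \ref{lm:B4}), and that the admissibility condition $c<b-a+3/2$ in Lemma \ref{lm:B3} is met over the whole range $0<\delta\le 2$. A secondary point requiring a line of justification is that $\Gamma(\chi_{|\vec{y}|\le d})$ genuinely satisfies the intertwining relations with $a^\#$ used above and preserves $\mathcal{D}(N^{1/2})$, so that the factorization of the commutator into the two displayed terms is valid as an operator identity on a suitable core; this is routine given that $\chi_{|\vec{y}|\le d}$ is a bounded self-adjoint contraction.
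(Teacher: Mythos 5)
Your proposal is correct and follows essentially the same route as the paper: compute the commutator using the intertwining relations of second quantization to isolate the form factor $(\mathds{1}-\chi_{|\vec{y}|\le d})h_x$, apply Lemma \ref{lm:B3} with $a=0$, $b=1/2$, $c=2-\delta$ to get the pointwise-in-$\vec{x}$ bound $\mathcal{O}(d^{-2+\delta}\langle\vec{x}\rangle^{2-\delta})$, and conclude via Lemma \ref{lm:B1} as in the proof of Lemma \ref{lm:B4}. The only cosmetic caveat is that $1-j_0(2r)$ is supported on $[1/2,\infty)$ rather than $[1,\infty)$, so strictly one applies Lemma \ref{lm:B3} with $d$ replaced by $d/2$ (or notes that the support condition there only affects constants), but this does not change the conclusion.
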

\begin{proof}
A direct computation shows that
\begin{align*}
\big [ \Gamma( \chi_{ | \vec{y} | \le d } ) , a( h_x ) \big ] = \Gamma( \chi_{ | \vec{y} | \le d } ) a ( ( \mathds{1} - \chi_{ | \vec{y} | \le d } )	 h_x ) .
\end{align*}
Applying Lemma \ref{lm:B3} (with $a=0$ and $b=1/2$), we conclude as in Lemma \ref{lm:B4} that
\begin{align*}
\big \| \langle \vec{x} \rangle^{-2 + \delta} \big [ \Gamma( \chi_{ | \vec{y} | \le d } ) , a( h_x ) \big ] ( N + \mathds{1} )^{-\frac12} \big \|  = \mathcal{O}( d^{-2+\delta} ).
\end{align*}
Since the estimate for $a^*(h_x)$ instead of $a(h_x)$ follows in the same way, the lemma is proven.
\end{proof}
The next lemma is similar to \cite[Lemma 5.2]{BoFaSi12_01} and relies on Helffer-Sj{\"o}strand functional calculus. We refer the reader to \cite{BoFaSi12_01} for the proof.
\begin{lemma}\label{lm:B5}
Let $f \in C_0^\infty( [ 0 , + \infty ) ; \mathbb{R} )$ be a smooth function satisfying the estimates $| \partial_s^m f(s) | \le C_m \langle s \rangle^{-m}$ for all $m \ge 0$. For all $d > 0$, we have that 
\begin{equation}
\big [ |\vec{k}| , f (  \vec{y}^{\, 2} / d^2  ) \big ] = \mathcal{O}( d^{-1} ) . \label{eq:s5}
\end{equation}
\end{lemma}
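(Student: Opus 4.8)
The plan is to use the Helffer--Sj\"ostrand functional calculus. Write $A := \vec{y}^{\,2}/d^2 = -\Delta_k/d^2$, a non-negative self-adjoint operator on $L^2( \underline{\mathbb{R}}^3 )$ (the polarization index $\lambda$ plays no role and is carried along trivially, tensoring with the identity). Since $f \in C_0^\infty( [0,\infty) ; \mathbb{R} )$ does not depend on $d$, I can pick an almost analytic extension $\tilde f \in C_0^\infty( \mathbb{C} )$ with $\tilde f|_{ \mathbb{R} } = f$, $\mathrm{supp}\, \tilde f$ contained in a fixed compact set $K \subset \mathbb{C}$ depending only on $\mathrm{supp}\, f$, and $| \bar\partial \tilde f( z ) | \le C_m | \mathrm{Im}\, z |^m$ for every $m \ge 0$. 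Then $f( A ) = \pi^{-1} \int_{ \mathbb{C} } \bar\partial \tilde f( z ) ( z - A )^{-1}\, dx\, dy$ (with $z = x + iy$), and, using the resolvent identity $[ | \vec{k} | , ( z - A )^{-1} ] = ( z - A )^{-1} [ | \vec{k} | , A ] ( z - A )^{-1}$, one obtains
\begin{equation*}
\big [ | \vec{k} | , f( A ) \big ] = \frac{1}{\pi} \int_{ \mathbb{C} } \bar\partial \tilde f( z ) \, ( z - A )^{-1} [ | \vec{k} | , A ] ( z - A )^{-1}\, dx\, dy .
\end{equation*}

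Next I would make the commutator explicit. Exactly as in Step~2 of the proof of Lemma~\ref{lm:f1}, $[ | \vec{k} | , - \Delta_k ] = 2 \frac{ \vec{k} }{ | \vec{k} | } \cdot \vec{\nabla}_k + 2 | \vec{k} |^{-1} = - 2 i \frac{ \vec{k} }{ | \vec{k} | } \cdot \vec{y} + 2 | \vec{k} |^{-1}$, so that $[ | \vec{k} | , A ] = d^{-2} ( - 2 i \frac{ \vec{k} }{ | \vec{k} | } \cdot \vec{y} + 2 | \vec{k} |^{-1} )$. Since $A \ge 0$ and $\mathrm{Im}\, z \ne 0$ on $K$, the spectral theorem gives $\| ( z - A )^{-1} \| \le | \mathrm{Im}\, z |^{-1}$ and, by the elementary estimate on $\sup_{ \lambda \ge 0 } \sqrt{\lambda} / | z - \lambda |$ (using that $K$ is compact), $\| A^{1/2} ( z - A )^{-1} \| \le C_K | \mathrm{Im}\, z |^{-1}$. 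Because $| \vec{y} | = d A^{1/2}$, this yields $\| | \vec{y} | ( z - A )^{-1} \| \le C_K d | \mathrm{Im}\, z |^{-1}$, and by Hardy's inequality in $\mathbb{R}^3$ (i.e.\ $\| | \vec{k} |^{-1} \varphi \| \le 2 \| | \vec{y} | \varphi \|$) also $\| | \vec{k} |^{-1} ( z - A )^{-1} \| \le 2 C_K d | \mathrm{Im}\, z |^{-1}$; moreover $\frac{ \vec{k} }{ | \vec{k} | }$ is a bounded multiplication operator of norm $\le 1$. Combining these bounds,
\begin{equation*}
\big \| ( z - A )^{-1} [ | \vec{k} | , A ] ( z - A )^{-1} \big \| \le \frac{ C }{ d \, | \mathrm{Im}\, z |^2 } .
\end{equation*}

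Inserting this into the Helffer--Sj\"ostrand representation and using $| \bar\partial \tilde f( z ) | \le C_2 | \mathrm{Im}\, z |^2$ together with $\mathrm{supp}\, \bar\partial \tilde f \subset K$ (finite measure), I obtain $\| [ | \vec{k} | , f( A ) ] \| \le \pi^{-1} C_2 C\, |K|\, d^{-1} = \mathcal{O}( d^{-1} )$, the implicit constant depending only on $f$ (through $K$, $C_2$, $C_K$) and not on $d$, which is \eqref{eq:s5}. (If instead of compact support one only assumes the decay $| \partial_s^m f( s ) | \le C_m \langle s \rangle^{-m}$, the same argument works after also using the decay of $\bar\partial \tilde f$ at infinity to make the $z$-integral converge.)

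The main obstacle is the justification of the resolvent identity for the commutator: since $[ | \vec{k} | , A ]$ is an \emph{unbounded} operator (through the term $\frac{ \vec{k} }{ | \vec{k} | } \cdot \vec{y}$), one must first verify that the commutator, defined initially as a quadratic form on the core $\mathcal{D}( | \vec{k} | ) \cap \mathcal{D}( A )$, extends to an operator that is $A^{1/2}$-bounded (equivalently $| \vec{y} |$-bounded), so that $( z - A )^{-1} [ | \vec{k} | , A ] ( z - A )^{-1}$ is a genuine bounded operator and $[ | \vec{k} | , ( z - A )^{-1} ] = ( z - A )^{-1} [ | \vec{k} | , A ] ( z - A )^{-1}$ holds; this is precisely the content of \cite[Lemma 5.2]{BoFaSi12_01}, and the above is a minor adaptation of its proof. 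A second, purely bookkeeping point is that $\tilde f$ must have support independent of $d$, which is automatic here since the argument of $f$ is the dimensionless operator $\vec{y}^{\,2}/d^2$ while $f$ itself is fixed.
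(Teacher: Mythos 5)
Your proof is correct and implements exactly the Helffer--Sj\"ostrand strategy that the paper itself points to: the paper does not give a proof of Lemma \ref{lm:B5} but simply notes that it ``relies on Helffer--Sj\"ostrand functional calculus'' and refers to \cite{BoFaSi12_01}, and your argument is a careful unpacking of that method (first-order commutator expansion inside the Helffer--Sj\"ostrand formula, the explicit commutator $[\,|\vec{k}|,-\Delta_k\,]=2\tfrac{\vec{k}}{|\vec{k}|}\cdot\vec{\nabla}_k+2|\vec{k}|^{-1}$, the resolvent bound $\|A^{1/2}(z-A)^{-1}\|\lesssim|\mathrm{Im}\,z|^{-1}$ on the compact support of $\bar\partial\tilde{f}$, and Hardy's inequality). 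You also correctly flag the one genuine technical point --- that the form commutator $[\,|\vec{k}|,A\,]$ must be shown to extend to an $A^{1/2}$-bounded operator so that the resolvent identity $[\,|\vec{k}|,(z-A)^{-1}\,]=(z-A)^{-1}[\,|\vec{k}|,A\,](z-A)^{-1}$ is justified --- which is precisely the content supplied by the cited reference.
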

As a consequence of Lemma \ref{lm:B5}, we prove the following.
\begin{lemma}\label{lm:B9}
\begin{align*}
& \big \| \big [ H_E  , \Gamma( \chi_{ | \vec{y} | \le d } ) ) \big ] ( N + \mathds{1} )^{-1}  \big \| = \mathcal{O}( d^{-1} ) \\
& \big \| \big [ H_E ,  \Gamma( \chi_{ | \vec{y} | \ge 2 d } ) ) \big ] ( N + \mathds{1} )^{-1}  \big \| = \mathcal{O}( d^{-1} ).
\end{align*}
\end{lemma}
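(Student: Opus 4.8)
The plan is to reduce the statement to the single-mode commutator estimate already available as Lemma \ref{lm:B5}. Recall $H_E=\d\Gamma(|\vec k|)$, and that for second quantizations of one-particle operators one has the general identity
\begin{equation*}
\big[\d\Gamma(\omega),\Gamma(b)\big]=\d\Gamma\big(b,[\omega,b]\big)
\end{equation*}
in the sense of \eqref{eq:checkdGamma}--\eqref{eq:checkdGamma2}, valid on the finite-particle domain whenever $b$ and $[\omega,b]$ are bounded. Here I would take $\omega=|\vec k|$ and $b=\chi_{|\vec y|\le d}=j_0(2|\vec y|/d)$ (respectively $b=\chi_{|\vec y|\ge 2d}$, which is of the same form $f(\vec y^{\,2}/d^2)$ with $f$ smooth and bounded together with its derivatives). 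Thus
\begin{equation*}
\big[H_E,\Gamma(\chi_{|\vec y|\le d})\big]=\d\Gamma\big(\chi_{|\vec y|\le d},[\,|\vec k|,\chi_{|\vec y|\le d}]\big),
\end{equation*}
and similarly for the second commutator.

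Next I would apply Lemma \ref{lm:B5} to get $\big\|[\,|\vec k|,\chi_{|\vec y|\le d}]\big\|=\mathcal O(d^{-1})$ and $\big\|[\,|\vec k|,\chi_{|\vec y|\ge 2d}]\big\|=\mathcal O(d^{-1})$; this is where the gain in $d$ comes from. The operator $b$ itself is bounded uniformly (by $1$). It then remains to control $\d\Gamma(b,c)$ when $\|b\|\le 1$ and $\|c\|=\mathcal O(d^{-1})$: on the $n$-particle sector, $\d\Gamma(b,c)$ is a sum of $n$ terms each of norm $\le\|b\|^{n-1}\|c\|\le\|c\|$, so $\|\d\Gamma(b,c)\restriction_{\otimes_s^n}\|\le n\|c\|$, and hence $\|\d\Gamma(b,c)(N+\mathds 1)^{-1}\|\le\|c\|=\mathcal O(d^{-1})$. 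Combining this with the bound on $c$ from Lemma \ref{lm:B5} yields both estimates of Lemma \ref{lm:B9}.

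The only genuinely non-routine point is the justification that these commutators, a priori defined only as quadratic forms on the finite-particle/$\mathcal D(H_E)$ domain, extend to bounded operators after multiplication by $(N+\mathds 1)^{-1}$, so that the displayed operator-norm estimates make sense; this is handled by working on the common core $\mathcal F_{\mathrm{fin}}(\mathcal D(|\vec k|))$ of $H_E$ and $N$, where the algebraic identity $[\d\Gamma(\omega),\Gamma(b)]=\d\Gamma(b,[\omega,b])$ holds verbatim, and then observing that the right-hand side, being $N$-bounded with the stated bound, extends by density. I expect no real obstacle here beyond this bookkeeping, since the structure is entirely standard (it is exactly the single-Fock-space analogue of the computation \eqref{eq:d1} in the proof of Lemma \ref{lm:b2}).
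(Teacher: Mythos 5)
Your proof is correct and follows essentially the same route as the paper: reduce to the identity $[H_E,\Gamma(b)]=\d\Gamma(b,[|\vec k|,b])$ on the $n$-photon sector, invoke Lemma \ref{lm:B5} for the $\mathcal O(d^{-1})$ bound on the one-particle commutator, and absorb the linear growth in $n$ with $(N+\mathds 1)^{-1}$. The only small technicality you glossed over is that $\chi_{|\vec y|\ge 2d}$ is not itself in $C_0^\infty$, so one should write $[\,|\vec k|,\chi_{|\vec y|\ge 2d}]=-[\,|\vec k|,1-\chi_{|\vec y|\ge 2d}]$ and apply Lemma \ref{lm:B5} to the compactly supported function $1-\chi_{|\vec y|\ge 2d}$; this is harmless and the paper does not spell it out either.
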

\begin{proof}
The two estimates are proven in the same way, we only establish the first one. On the $n$-photons sector, a direct computation gives
\begin{align*}
\big [ H_E , \Gamma( \chi_{ | \vec{y} | \le d } ) ) \big ] = \d\Gamma( \chi_{ | \vec{y} | \le d } , [ |k| , \chi_{ | \vec{y} | \le d } ] ) ,
\end{align*}
where $\d\Gamma( a , b )$ is defined by \eqref{eq:checkdGamma}--\eqref{eq:checkdGamma2}. Applying Lemma \ref{lm:B5}, we immediately deduce that
\begin{align*}
\big \|Ê\big [ H_E , \Gamma( \chi_{ | \vec{y} | \le d } ) ) \big ] |_{ \mathcal{H}_E^{(n)} } \big \| \le C n d^{- 1}  ,
\end{align*}
where $\mathcal{H}_E^{(n)}$ denotes the $n$-photons subspace. Since the constant $C$ in the previous estimate is uniform in $n \in \mathbb{N}$, the lemma follows.
\end{proof}
\vspace{3mm}

\section{Proof of Lemma \ref{1.1}}
\label{D}
 In this section, we use the  symbol $ a \apprle b$ if $a \leq  Cb$ for a positive constant $C$ independent of the problem parameters  $v$ and $d$. The proof relies on a ``Cook argument''. The sets
\begin{equation}
\label{Bt}
D_{t}:= \{  \vec{k} t \mid  \vec{k} \in \mathcal{C}_{ 2\theta_0;v}\}
\end{equation}  
satisfy
\begin{equation}
\underset{t \geq 0}{ \bigcup} D_{t}=\mathcal{C}_{2\theta_0}.
\end{equation} 
Let  $\Omega' \subset \mathbb{R}^3$   be  defined by 
\begin{equation}
\Omega':=\big \{ \vec{x} + \vec{k} \mid \vec{x} \in B_{d/4}, \vec{k} \in \mathcal{C}_{ 2\theta_0}\big \}.
\end{equation}
Remark that  
$\text{dist}(\mathcal{C}_{2\theta_0}, \partial \Omega')=d/4$.
 Let $\chi_{\Omega'}$ be a smooth function with  support  in $\Omega^{c}$, such that $\chi_{\Omega'}(x)=1$ for all $x \in \Omega'$.  We introduce $$H_0:= H_{P} \otimes \mathds{1}_{\mathcal{H}_Q} +  \mathds{1}_{\mathcal{H}_P} \otimes H_Q.$$ 
 We have that 
\begin{equation}
\label{1.10}
(e^{-itH}- e^{-itH_0})\Psi_0 = (e^{-itH} \chi_{\Omega'}-  \chi_{\Omega'} e^{-itH_0})\Psi_0 + (\chi_{\Omega'}-1 ) e^{-itH_0}\Psi_0 + e^{-itH} (1-\chi_{\Omega'}) \Psi_0.  
\end{equation}
We estimate successively the  2 first  terms on the right side of  \eqref{1.10}.
\subsection{Free evolution}
We control the free evolution with a stationary phase argument in Lemma \ref{B.1}.
\begin{lemma} \label{B.1} Let $p \in \mathbb{N}$, $p\geq 4$, and $\Psi_0 \in \mathcal{H}$ be as in \textbf{(A2)}. Then
\begin{equation}
\label{upp}
 \| (e^{-itH_{0}}  \Psi_{0})(\vec{y})  \|_{\mathbb{C}^2 \otimes \mathcal{H}_Q} \leq   \frac{K}{(\vert  \vec{y} \vert + vt)^p}
\end{equation}
for all $\vec{y}  \notin \mathcal{C}_{2\theta_0}$,
where
\begin{equation}
 K:=   C_1  \max \left( 1 ,   \frac{1}{v^p}\right) \frac{1}{[\sin(\theta_0)]^{2p}}  \sum_{ \vert \underline{\beta} \vert \leq p+1}   \|  \partial_{\underline{\beta}}  \hat{\Psi}_{0}  \|_{L^1(\mathbb{R}^3;\mathbb{C}^2 \otimes \mathcal{H}_Q)}.
\end{equation} 
$C_1$ is a positive constant that does not depend on $v$, $d$ and $\theta_0$, but does depend on $p$.
\end{lemma}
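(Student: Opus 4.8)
The plan is to prove the pointwise bound \eqref{upp} by a standard non-stationary phase (integration by parts) argument applied to the free propagator, exploiting that the support of $\widehat\Psi_0$ lies in the cone $\mathcal C_{\theta_0;v}$ while the observation point $\vec y$ lies outside the wider cone $\mathcal C_{2\theta_0}$, so that the gradient of the phase $\vec y\cdot\vec k - t|\vec k|^2/2$ never vanishes and is in fact bounded below. First I would write, using that $H_0 = H_P\otimes\mathds 1 + \mathds 1\otimes H_Q$ and that $H_P = -\Delta/2$ acts only on the orbital variable while $H_Q$ acts only on $\mathcal H_Q$,
\begin{equation*}
(e^{-itH_0}\Psi_0)(\vec y) = \frac{1}{(2\pi)^{3/2}}\int_{\mathbb R^3} e^{i\vec y\cdot\vec k}\, e^{-it|\vec k|^2/2}\, (e^{-itH_Q}\widehat\Psi_0)(\vec k)\, d^3k,
\end{equation*}
where the $\mathbb C^2\otimes\mathcal H_Q$-valued function $(e^{-itH_Q}\widehat\Psi_0)(\vec k)$ still has support in $\mathcal C_{\theta_0;v}$ (since $e^{-itH_Q}$ commutes with multiplication in $\vec k$) and the same Schwartz-type bounds on its derivatives as $\widehat\Psi_0$, because $e^{-itH_Q}$ is unitary on $\mathcal H_Q$; in particular $\|\partial_{\underline\beta}(e^{-itH_Q}\widehat\Psi_0)\|_{L^1(\mathbb R^3;\mathbb C^2\otimes\mathcal H_Q)} = \|\partial_{\underline\beta}\widehat\Psi_0\|_{L^1(\mathbb R^3;\mathbb C^2\otimes\mathcal H_Q)}$.

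Next I would set up the integration by parts. The phase is $\phi(\vec k) = \vec y\cdot\vec k - t|\vec k|^2/2$, with $\nabla_{\vec k}\phi = \vec y - t\vec k$. For $\vec k$ in the support, $\vec k\in\mathcal C_{\theta_0;v}$, so $\vec k$ makes an angle at most $\theta_0$ with $\vec e_x$ and $|\vec k|>v$; for the observation point, $\vec y\notin\mathcal C_{2\theta_0}$ means $\vec y$ makes an angle strictly greater than $2\theta_0$ with $\vec e_x$ (or $\vec y=0$). Hence $\vec y$ and $t\vec k$ are separated by an angle at least $\theta_0$, and an elementary planar geometry estimate (the distance from a point to a ray emanating in a direction differing by angle $\ge\theta_0$) gives
\begin{equation*}
|\vec y - t\vec k| \ge \sin(\theta_0)\,\max(|\vec y|,\, t|\vec k|) \ge \sin(\theta_0)\,\max(|\vec y|,\, vt) \ge \tfrac{\sin(\theta_0)}{2}\big(|\vec y| + vt\big).
\end{equation*}
Then I introduce the first-order differential operator $L := \frac{\nabla_{\vec k}\phi}{i|\nabla_{\vec k}\phi|^2}\cdot\nabla_{\vec k}$, which satisfies $L e^{i\phi} = e^{i\phi}$, and integrate by parts $p+1$ times, moving $(L^{t})^{p+1}$ onto the amplitude $\chi_{\mathcal C_{\theta_0;v}}(\vec k)(e^{-itH_Q}\widehat\Psi_0)(\vec k)$ (no boundary terms, since the amplitude is Schwartz and, restricted to the closed cone, compactly supported away from problematic directions — strictly, one uses that $\widehat\Psi_0\in\mathcal S$ decays, so the boundary contributions at infinity vanish). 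Counting derivatives, each application of $L^{t}$ produces at most one factor $|\nabla\phi|^{-1}\le \frac{2}{\sin\theta_0}(|\vec y|+vt)^{-1}$ together with lower-order terms involving derivatives of $|\nabla\phi|^{-2}$, all of which are again bounded by negative powers of $|\nabla\phi|$; collecting the worst case yields a factor $\big(\tfrac{2}{\sin\theta_0}\big)^{p+1}(|\vec y|+vt)^{-(p+1)}$ times a sum over multi-indices $|\underline\beta|\le p+1$ of $\partial_{\underline\beta}$ of the amplitude.

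Taking $\mathbb C^2\otimes\mathcal H_Q$-norms under the integral and using the triangle inequality in Bochner form, together with the fact that the $\vec k$-integration is over a set on which, after the change of variables $\vec k\mapsto\vec k/v$ if one wants to make the $v$-dependence explicit, one picks up at most $\max(1,v^{-p})$ (from the lowest power of $v$ appearing when $|\vec k|>v$ is used to bound $(t|\vec k|)^{-j}$ by $(tv)^{-j}$), I obtain
\begin{equation*}
\|(e^{-itH_0}\Psi_0)(\vec y)\|_{\mathbb C^2\otimes\mathcal H_Q} \le \frac{C_1\,\max(1,v^{-p})}{[\sin\theta_0]^{2p}\,(|\vec y|+vt)^p}\sum_{|\underline\beta|\le p+1}\|\partial_{\underline\beta}\widehat\Psi_0\|_{L^1(\mathbb R^3;\mathbb C^2\otimes\mathcal H_Q)},
\end{equation*}
with $C_1$ depending only on $p$. (I keep the exponent $2p$ rather than $p+1$ on $\sin\theta_0$ as a harmless over-estimate, matching the statement.) The main obstacle is the bookkeeping in the repeated integration by parts: one must verify that all the terms generated by differentiating $\nabla\phi/|\nabla\phi|^2$ are controlled by the single lower bound on $|\nabla\phi|$ and do not introduce extra $t$-growth — this works precisely because $\nabla\phi$ is affine in $\vec k$, so its $\vec k$-derivatives are constant ($=-t$) and each such factor of $t$ is compensated by an extra factor $|\nabla\phi|^{-1}\le C(\vec y,t)^{-1}$ with $|\nabla\phi|\gtrsim t|\vec k|\gtrsim tv$. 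A secondary point requiring a little care is the justification that no boundary terms arise, which follows from the rapid decay of $\widehat\Psi_0$ and an approximation argument, or alternatively by first smoothing the sharp cutoff $\chi_{\mathcal C_{\theta_0;v}}$ and noting that $\widehat\Psi_0$ itself already vanishes near the cone boundary and for $|\vec k|\le v$ only in the sense of support, so one may replace $\chi_{\mathcal C_{\theta_0;v}}$ by a smooth function equal to $1$ on $\operatorname{supp}\widehat\Psi_0$.
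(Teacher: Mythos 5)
Your argument is correct in substance and matches the paper's proof in all essentials: the same Fourier-integral representation, the same observation that $e^{-itH_Q}$ is unitary and preserves the support and $L^1$-derivative bounds of $\widehat{\Psi}_0$, the same geometric lower bound $|\vec{y} - t\vec{k}| \geq \tfrac{1}{2}\sin(\theta_0)(|\vec{y}| + vt)$ coming from the angular gap between $\mathcal{C}_{\theta_0;v}$ and the complement of $\mathcal{C}_{2\theta_0}$, and the same compensation of the factors of $t$ produced by differentiating $\nabla\phi = \vec{y} - t\vec{k}$ via $|\nabla\phi| \gtrsim vt$; the paper's operator $d_{\vec{y},t}$ is your $L^{t}$ up to a factor of $\pm i$. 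The one slip is that you should integrate by parts exactly $p$ times, not $p+1$: otherwise you arrive at $(|\vec{y}|+vt)^{-(p+1)}$, which does not dominate $(|\vec{y}|+vt)^{-p}$ when $|\vec{y}|+vt < 1$ (and with $t$ small and $\vec{y}\notin\mathcal{C}_{2\theta_0}$ close to the origin this regime is not excluded), so the extra integration is not the harmless over-estimate your write-up suggests.
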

\vspace{2mm}

\begin{proof}
Let $t >0$ and  let $\vec{y} \notin D_t$. We introduce the linear differential operator $d_{\vec{y},t}: \mathcal{S}( \mathcal{C}_{ \theta_0;v}; \mathbb{C}^2 \otimes \mathcal{H}_Q) \rightarrow  \mathcal{S}( \mathcal{C}_{ \theta_0;v}; \mathbb{C}^2 \otimes \mathcal{H}_Q)$, defined by 
\begin{equation}
(d_{\vec{y},t} \Psi)(\vec{k}): =\sum_{j=1}^{3} \partial_{k_j} \left(\frac{y_j- k_jt }{\vert \vec{y}- \vec{k} t \vert^2}  \Psi(\vec{k})  \right)
\end{equation}
for all $ \Psi \in \mathcal{S}( \mathcal{C}_{ \theta_0;v},\mathbb{C}^2 \otimes \mathcal{H}_Q)$ and for all $\vec{k} \in  \mathcal{C}_{ \theta_0;v}$.
An easy calculation shows that 
\begin{equation}
\label{1.12}
(d_{\vec{y},t}  \Psi )(\vec{k}) =  \sum_{j=1}^{3}  \frac{y_j- k_j t}{\vert \vec{y}- \vec{k} t \vert^2}    \partial_{k_j} \Psi (\vec{k})  - \frac{t}{\vert \vec{y}- \vec{k} t \vert^2} \Psi (\vec{k}).
\end{equation}
 Iterating \eqref{1.12},  we get that
\begin{equation}
\| (d_{\vec{y},t}^{p}  \Psi )(\vec{k})  \|_{\mathbb{C}^2 \otimes \mathcal{H}_Q} \leq f^{(p)}(\vec{k},\vec{y},t) \Big ( \sum_{\vert \underline{\beta} \vert \leq p}  \| \partial_{\underline{\beta}}  \Psi (\vec{k})\|_{\mathbb{C}^2 \otimes \mathcal{H}_Q} \Big),
\end{equation}
where $\underline{\beta}$ is a multi index, and $f^{(p)}(\vec{k},\vec{y},t)$ is a positive function of the form
\begin{equation}
f^{(p)}(\vec{k},\vec{y},t) =\sum_{l=0}^{p} a_{l}^{(p)} t^{l} \frac{1}{\vert \vec{y}- \vec{k} t  \vert^{l+p}}.
\end{equation}
The coefficients $ a_{l}^{(p)} $ are positive constants independent of $t$ and $\vec{y}$. Integrating by parts, we  find that
\begin{align*}
(2 \pi)^{3/2} (e^{-itH_{0}}  \Psi_{0})(\vec{y}) &=  \int_{\mathcal{C}_{ \theta_0;v}}   e^{-itk^2/2 +i \vec{k} \cdot \vec{y}} \text{ } e^{-it H_Q}  \hat{\Psi}_{0} ( \vec{k}) \text{ }  d^{3}k \\
&=  -i  \int_{\mathcal{C}_{ \theta_0;v}}    \sum_{j=1}^{3} \frac{y_j- k_j t}{\vert \vec{y}- \vec{k} t \vert^2} \partial_{k_j}  [ e^{-itk^2/2 +i \vec{k} \cdot \vec{y}}] \text{ } e^{-it  H_Q} \hat{\Psi}_{0}( \vec{k})  \text{ }  d^{3}k\\ 
&= i \int_{\mathcal{C}_{ \theta_0;v}}    e^{-itk^2/2 +i \vec{k} \cdot \vec{y}} \text{ }  e^{-it H_Q}  (d_{\vec{y},t} \hat{\Psi}_{0}) ( \vec{k})  \text{ }  d^{3}k \\
&= i^{p} \int_{\mathcal{C}_{ \theta_0;v}}   e^{-itk^2/2 +i \vec{k} \cdot \vec{y}} \text{ } e^{-it H_Q} (d_{\vec{y},t}^{p} \hat{\Psi}_{0}) ( \vec{k}) \text{ }  d^{3}k
\end{align*}
 for all $\vec{y} \notin D_t$.   We deduce that
\begin{align*}
(2 \pi)^{3/2} \|( e^{-itH_{0}}  \Psi_{0})(\vec{y})  \|_{\mathbb{C}^2 \otimes \mathcal{H}_Q} & \leq     \int_{\mathcal{C}_{ \theta_0;v}}  \vert f^{(p)}(\vec{k},\vec{y},t)  \vert  \Big( \sum_{\vert \underline{\beta} \vert \leq p}  \| \partial_{\underline{\beta}}  \hat{\Psi}_{0} (\vec{k}) \|_{\mathbb{C}^2 \otimes \mathcal{H}_Q} \Big) d^{3}k   \\
 &\leq  \sum_{ \vert \underline{\beta} \vert \leq p}   \|  \partial_{\underline{\beta}} \hat{\Psi}_{0}  \|_{L^1(\mathbb{R}^3;\mathbb{C}^2 \otimes \mathcal{H}_Q)} \text{ } \underset{\vec{k} \in \mathcal{C}_{ \theta_0;v}}{\sup}    \vert f^{(p)}(\vec{k},\vec{y},t)  \vert 
\end{align*}
for all  $\vec{y} \notin D_t$. For a fixed set of variables $(\vec{y},t)$ with $\vec{y} \notin \mathcal{C}_{2\theta_0}$,  
\begin{equation*}
\vert \vec{y}- \vec{k} t \vert^2 = \big( \vert \vec{y} \vert - \vert \vec{k} \vert t \cos( \theta_{\vec{y},\vec{k}}) \big)^2 + \big( \vert \vec{k} \vert t  \sin( \theta_{\vec{y},\vec{k}}) \big)^2 =  \big( \vert \vec{y} \vert   \cos( \theta_{\vec{y},\vec{k}}) - \vert \vec{k} \vert t  \big)^2 +  \big( \vert \vec{y} \vert   \sin( \theta_{\vec{y},\vec{k}}) \big)^2
 \end{equation*}
 where $\theta_{\vec{y},\vec{k}}$ is the angle between $\vec{y}$ and $\vec{k}$. We deduce that
 \begin{equation*}
 \vert \vec{y}- \vec{k} t  \vert \geq  \frac{\vert  \vec{y} \vert + vt}{2} \sin(\theta_0),
 \end{equation*}
for all $\vec{k} \in  \mathcal{C}_{ \theta_0;v}$. This implies that
\begin{equation*}
 \underset{\vec{k} \in \mathcal{C}_{ \theta_0;v}}{\sup}   t^{l} \frac{1}{\vert \vec{y}- \vec{k} t \vert^{l+p}} \leq  \frac{4^p}{v^l}  \frac{1}{[ \sin(\theta_0)]^{2p}}  \frac{1}{(\vert  \vec{y} \vert + vt)^p}
\end{equation*}
for all $\vec{y} \notin   \mathcal{C}_{2\theta_0}$  and for all $ 0 \leq l \leq p$.   Consequently,
\begin{equation}
\label{upp}
\| ( e^{-itH_{0}}  \Psi_{0})(\vec{y}) \|_{\mathbb{C}^2 \otimes \mathcal{H}_Q} \leq   \frac{K}{(\vert  \vec{y} \vert + vt)^p}
\end{equation}
for all  $\vec{y}  \notin \mathcal{C}_{2\theta_0}$,
where
\begin{equation}
 K:=   C_1  \max \left( 1 ,   \frac{1}{v^p}\right) \frac{1}{[\sin(\theta_0)]^{2p}}  \sum_{ \vert \underline{\beta} \vert \leq p+1}   \|  \partial_{\underline{\beta}} \hat{ \Psi}_{0}  \|_{L^1(\mathbb{R}^3;\mathbb{C}^2 \otimes \mathcal{H}_Q)}.
\end{equation} 
\end{proof}
\vspace{2mm}

\subsection{Bound for the norm of $(e^{-itH} \chi_{\Omega'}-  \chi_{\Omega'} e^{-itH_0})\Psi_0 $}
\begin{lemma} We require \textbf{(A1)}-\textbf{(A3)}. Then
$$ \| (e^{-itH} \chi_{\Omega'}-  \chi_{\Omega'} e^{-itH_0})  \Psi_0 \| \apprle     \frac{K}{ v d^{ p-3}}  +  \frac{1}{d^{\beta- \frac{1}{2} + \frac{ \alpha}{2}}} +   \frac{1}{d^{\beta^{*}}}   $$
for all $t \geq 0$, where 
$ \beta^{*}= \beta$ if $\alpha<2$ and $ \beta^{*}= \beta+ 1/2$ if $\alpha \geq 2$.
\end{lemma}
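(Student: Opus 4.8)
The plan is a Cook-type comparison of the two evolutions. Writing $H=H_0+H_{P,Q}$ and differentiating $s\mapsto e^{-i(t-s)H}\chi_{\Omega'}e^{-isH_0}\Psi_0$ one obtains
\begin{equation*}
(e^{-itH}\chi_{\Omega'}-\chi_{\Omega'}e^{-itH_0})\Psi_0=-i\int_0^t e^{-i(t-s)H}\big([H_0,\chi_{\Omega'}]+H_{P,Q}\chi_{\Omega'}\big)e^{-isH_0}\Psi_0\,ds,
\end{equation*}
so that, by unitarity,
\begin{equation*}
\big\|(e^{-itH}\chi_{\Omega'}-\chi_{\Omega'}e^{-itH_0})\Psi_0\big\|\le\int_0^t\big\|[H_0,\chi_{\Omega'}]e^{-isH_0}\Psi_0\big\|\,ds+\int_0^t\big\|H_{P,Q}\chi_{\Omega'}e^{-isH_0}\Psi_0\big\|\,ds.
\end{equation*}
Since $H_Q$ and $\mathds{1}_{\mathbb{C}^2}$ commute with multiplication by $\chi_{\Omega'}(\vec{x})$, we have $[H_0,\chi_{\Omega'}]=-\tfrac12(\Delta\chi_{\Omega'})-(\vec{\nabla}\chi_{\Omega'})\cdot\vec{\nabla}$, an operator whose coefficients are supported in $\Omega^c\setminus\Omega'$, hence at distance $\ge d/4$ from $\mathcal{C}_{2\theta_0}$ and, the apex being the origin, at $|\vec{x}|\ge d/4$ with $\vec{x}\notin\mathcal{C}_{2\theta_0}$. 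I would fix $\chi_{\Omega'}$ once and for all with uniformly bounded derivatives.

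For the first integral I would apply Lemma~\ref{B.1} twice: to $e^{-isH_0}\Psi_0$ and to $e^{-isH_0}\Psi_1$, where $\widehat{\Psi}_1(\vec{k}):=i\vec{k}\,\widehat{\Psi}_0(\vec{k})$ is again in $\mathcal{S}$ with Fourier support in $\mathcal{C}_{\theta_0;v}$ and $e^{-isH_0}\Psi_1=\vec{\nabla}\big(e^{-isH_0}\Psi_0\big)$. On the support of the coefficients of $[H_0,\chi_{\Omega'}]$ both are bounded by $K(|\vec{y}|+vs)^{-p}$ with $p\ge 4$ at our disposal; squaring, integrating over $\{|\vec{x}|\ge d/4\}$ and using $r^2(r+vs)^{-2p}\le(r+vs)^{-2p+2}$ gives $\|[H_0,\chi_{\Omega'}]e^{-isH_0}\Psi_0\|\lesssim K(d/4+vs)^{-(p-3/2)}$, and the $s$-integration then yields the contribution $\lesssim K/(vd^{p-3})$ (in fact a slightly better power).

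The second integral is the substantial part. Using the strongly convergent decomposition $H_{P,Q}=\sum_{n\in I}H_{P,Q_n}$, the locality $[H_{P,Q_n},\vec{x}]=0$, and $\mathrm{supp}\,\chi_{\Omega'}\subset\Omega^c$, the bound \eqref{2.6} gives
\begin{equation*}
\big\|(H_{P,Q_n}\chi_{\Omega'}e^{-isH_0}\Psi_0)(\vec{x})\big\|\le\frac{|\chi_{\Omega'}(\vec{x})|\,\big\|(\mathds{1}_{\mathcal{H}_P}\otimes N_n)(e^{-isH_0}\Psi_0)(\vec{x})\big\|}{[\mathrm{dist}(\Omega_n,\vec{x})]^{\alpha}},
\end{equation*}
and $(\mathds{1}_{\mathcal{H}_P}\otimes N_n)e^{-isH_0}\Psi_0=e^{-isH_P}\Psi_0^{(n,s)}$ with $\Psi_0^{(n,s)}:=(\mathds{1}_{\mathcal{H}_P}\otimes N_n)e^{-isH_Q}\Psi_0$, whose $L^1$ and $L^2(\mathbb{R}^3;\mathbb{C}^2\otimes\mathcal{H}_Q)$ norms are bounded uniformly in $n,s$ by Assumption \textbf{(A3)}. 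I would then combine two estimates for $\|H_{P,Q_n}\chi_{\Omega'}e^{-isH_0}\Psi_0\|_{L^2}$: a crude one, uniform in $s$, coming from $\mathrm{dist}(\Omega_n,\vec{x})\gtrsim d_n$ on $\mathrm{supp}\,\chi_{\Omega'}$ together with $\|\Psi_0^{(n,s)}\|_{L^2}\le C$; and a time-decaying one, coming from the free dispersive estimate $\|e^{-isH_P}\Psi_0^{(n,s)}\|_{L^\infty}\lesssim s^{-3/2}\|\Psi_0^{(n,s)}\|_{L^1}$ combined with control of the weight $[\mathrm{dist}(\Omega_n,\cdot)]^{-2\alpha}$ on $\mathrm{supp}\,\chi_{\Omega'}$. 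Interpolating between the two with a crossover time adapted to $d$ and $d_n$, integrating in $s$, and summing over $n$ by means of \eqref{2.8} (i.e.\ $\sum_n d_n^{(1-\alpha)/2}\le Cd^{-\beta}$, with $d_n\ge d$) produces the stated bound; the dichotomy $\alpha<2$ versus $\alpha\ge 2$ (and hence the exponent $\beta^*$) records which of the two competing estimates dominates once $s$ has been integrated. The main obstacle is precisely this last step: extracting a \emph{time-integrable} bound for $\|H_{P,Q_n}\chi_{\Omega'}e^{-isH_0}\Psi_0\|$ that is uniform over the arbitrarily large index set $I$ while still keeping the sharp power of $d$ — this is the only place where the dispersive spreading of the particle, rather than mere unitarity, must be exploited, and the bookkeeping of the exponents in the combination of \eqref{2.6}, \textbf{(A3)} and \eqref{2.8} is delicate.
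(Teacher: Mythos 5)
Your proposal is essentially the same argument as the paper's proof. You apply Cook's method to the same comparison dynamics, split the integrand into the commutator $[H_0,\chi_{\Omega'}]$ (handled via the stationary-phase estimate of Lemma~\ref{B.1}, applied both to $\Psi_0$ and to its gradient) and the term $H_{P,Q}\chi_{\Omega'}$ (handled via the pointwise bound \eqref{2.6}, the $L^1$/$L^2$ control of \textbf{(A3)}, the free dispersive $L^1\to L^\infty$ decay, and the summation \eqref{2.8}); this is exactly the paper's decomposition into $(A_t\Psi_0)_1$ and $(A_t\Psi_0)_2$. The one place where you stop short of the paper is the final bookkeeping for $(A_t\Psi_0)_1$: the paper implements the crude-versus-dispersive trade-off by a fixed crossover at $t-s=d$ (not $d_n$) combined with a H\"older interpolation $\|f\|_p\le\|f\|_1^{1/p}\|f\|_\infty^{(p-1)/p}$ applied to $f=\|(e^{-i(t-s)H_P}N_n e^{-i(t-s)H_Q}\Psi_0)(\vec{x})\|^2$ with exponents tuned to make the $d_n$-weight summable via \eqref{2.8}, which is precisely how the exponents $\beta-\tfrac12+\tfrac{\alpha}{2}$ and $\beta^*$ emerge; you correctly flag this as the delicate step but leave it open.
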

\vspace{2mm}

\begin{proof}
 We define $A_t :=e^{-itH} \chi_{\Omega'}-  \chi_{\Omega'} e^{-itH_0}.$  Since $\mathcal{D}(H_{P})\otimes \mathcal{D}(H_{Q}) \subset \mathcal{D}(H)$, $$s \mapsto  e^{-isH} \chi_{\Omega'}e^{-i(t-s) H_0} $$ is strongly differentiable on $\mathcal{D}(H_{P})  \otimes \mathcal{D}(H_{Q})$ and
\begin{align*}
A_t  \Psi_0&=  i \int_{0}^{t}  e^{-isH} (- H  \chi_{\Omega'} +  \chi_{\Omega'} H_0 ) e^{-i(t-s)H_0} \Psi_0\text{ }ds\\
&=i  \int_{0}^{t} e^{-isH}  \Big[\frac{ \Delta (\chi_{\Omega'}) }{2} +   \vec{\nabla}(\chi_{\Omega'}) \cdot \vec{\nabla}    - H_{P,Q} \chi_{\Omega'} \Big] e^{-i(t-s)H_0} \Psi_0 \text{ }ds.
\end{align*}
  $\Delta \chi_{\Omega'}$ and $\vec{\nabla} \chi_{\Omega'}$ vanish on $\Omega' \cup \Omega$. We  split the  formula above into two terms,
\begin{align*}
A_t  \Psi_0&=(A_t  \Psi_0)_1 + (A_t  \Psi_0)_2,
\end{align*}
where 
\begin{equation}
\label{S}
 (A_t  \Psi_0)_1 :=-i \int_{0}^{t} e^{-isH}   H_{P,Q} \chi_{\Omega'}  e^{-i(t-s)H_0}   \Psi_0 \text{ }ds,
 \end{equation}
 and 
 \begin{equation}
 \label{A2}
 (A_t  \Psi_0)_2 :=i \int_{0}^{t} e^{-isH}  \Big[\frac{ \Delta (\chi_{\Omega'}) }{2} +    \vec{\nabla}(\chi_{\Omega'}) \cdot \vec{\nabla}     \Big]   e^{-i(t-s)H_0}    \Psi_0 \text{ }ds.
 \end{equation}
\vspace{1mm}

\noindent We  first estimate  the norm of  $ (A_t  \Psi_0)_1$. 
We have that 
\begin{align*}
\|H_{P,Q} \chi_{\Omega'}  e^{-i(t-s)H_0}   \Psi_0\|& \leq \sum_{n \in I} \|H_{P,Q_n} \chi_{\Omega'} e^{-i(t-s)H_P}  (e^{-i(t-s)H_Q}   \Psi_0) \|.
\end{align*}
Using Assumption \textbf{(A1)}, we find that
\begin{align*}
\|&(H_{P,Q_n} \chi_{\Omega'} e^{-i(t-s)H_P} (e^{-i(t-s)H_Q}   \Psi_0)) (\vec{x})\|^{2} \leq \frac{\| (\chi_{\Omega'} e^{-i(t-s)H_P} (N_n  e^{-i(t-s)H_Q}    \Psi_0))(\vec{x})\|^{2}}{\text{dist}(\vec{x},Q_n)^{2\alpha}}\\
&\leq  \chi_{\Omega'}^2 (\vec{x}) \Big(  \mathds{1}_{t-s\leq d}   \frac{  \| N_n ( e^{-i(t-s)H_0}    \Psi_0)(\vec{x})\|^{2}}{d_{n}^{2\alpha}}  + \mathds{1}_{t-s >d}  \frac{  \| (e^{-i(t-s)H_P}N_n  e^{-i(t-s)H_Q}    \Psi_0)(\vec{x})\|^{2} }{\text{dist}(\vec{x},Q_n)^{2\alpha}}\Big)
\end{align*}
for all $\vec{x} \in \mathbb{R}^3$. Next we use H\"{o}lder's inequality. We set 
\begin{align}
f(\vec{x}):&=  \| (e^{-i(t-s)H_P}N_n  e^{-i(t-s)H_Q}    \Psi_0)(\vec{x})\|^{2} ,\\
g(\vec{x}):&= \chi_{\Omega'}^2 (\vec{x}) \frac{1}{\text{dist}(\vec{x},Q_n)^{2\alpha}}.
\end{align}
Using the integral representation
\begin{equation}
e^{-i(t-s) \Delta/2} \varphi(\vec{x})=\frac{1}{(2i\pi(t-s))^{3/2}} \int_{\mathbb{R}^3} e^{\frac{i \vert \vec{x}- \vec{y} \vert^2}{2(t-s)}} \varphi(\vec{y}) \text{ }d^3 y  , \qquad \text{for a.e. } \vec{x} \in \mathbb{R}^3,
\end{equation}
for all $\varphi \in L^{1}(\mathbb{R}^3) \cap  L^{2}(\mathbb{R}^3)$, and  Eq. \eqref{2.7}, we remark that 
\begin{equation}
\|f \|_{\infty} \leq \frac{C^2}{(t-s)^{3}}, \qquad \|f \|_1 \leq C. 
\end{equation}
Therefore, $f \in L^{p}$ for all $p \geq 1$, and we have the estimate
\begin{equation}
\| f \|_{p} = \left( \int_{\mathbb{R}^3} |f |^{p-1 }   |f | \right)^{1/p} \leq  \| f \|_{1}^{1/p} \|f \|^{\frac{p-1}{p}}_{\infty}.
\end{equation}
Using H\"{o}lder's inequality, we deduce that 
\begin{equation}
\int_{\mathbb{R}^3} f g \leq \| f \|_{p} \|g \|_q
\end{equation}
for all $p, q \geq 1$ (such that $g  \in L^{q}$) with $p^{-1} + q^{-1}=1$.  Let $\varepsilon \in (0,\alpha-1)$. We only treat the case where $\alpha<2$, the other possibility being trivial. We choose $p=3/(1- \varepsilon)$. Then $q=3/(2+ \varepsilon)$, and we deduce that 
\begin{equation}
\int_{\mathbb{R}^3} f g \apprle  \frac{1}{(t-s)^{2+ \varepsilon}} \|g \|_{3/(2+ \varepsilon)}.
\end{equation}
Moreover,
\begin{equation}
\|g \|_{3/(2+ \varepsilon)} = \left( \int  \chi_{\Omega'}^{\frac{6}{2+ \varepsilon}} (\vec{x}) \frac{1}{\text{dist}(\vec{x},Q_n)^{\frac{6\alpha}{2 + \varepsilon} }} d^{3}x \right)^{ \frac{2+ \varepsilon}{3}} \apprle  \frac{1}{d_n^{2 \alpha -2- \varepsilon} }.
\end{equation}

\noindent  Therefore, we have  that 
\begin{equation*}
\|H_{P,Q_n} \chi_{\Omega'} e^{-i(t-s)H_P}  (e^{-i(t-s)H_Q}   \Psi_0) \|^2 \apprle \Big( \mathds{1}_{t-s\leq d}  \frac{1}{d_n^{2\alpha}} + \mathds{1}_{t-s >d}  \frac{1}{ (t-s)^{2 + \varepsilon} d_{n}^{  \alpha-1}} \Big).
\end{equation*}
 Summing over $n$, we  get that 
\begin{equation}
\|H_{P,Q} \chi_{\Omega'}  e^{-i(t-s)H_0}   \Psi_0\| \apprle \Big( \mathds{1}_{t-s\leq d}  \frac{1}{d^{\beta+\frac{1}{2} + \frac{ \alpha}{2}}} + \mathds{1}_{t-s >d}    \frac{1}{(t-s)^{1+ \varepsilon/2} d^{\beta}} \Big).
\end{equation}
Integrating over $s$,   we obtain
\begin{equation}\label{esttt}
\int_{0}^t \|H_{P,Q} \chi_{\Omega'}  e^{-i(t-s)H_0}   \Psi_0\| ds  \apprle     \frac{1}{  d^{\beta- \frac{1}{2} +\frac{  \alpha}{2}}} +  \frac{1}{d^{\beta}}.
\end{equation}
\vspace{1mm}

\noindent We now estimate  $\|(A_t  \Psi_0)_2\|$.  Eq. \eqref{upp} implies that 
 \begin{align*}
\big \| \frac{ \Delta (\chi_{\Omega'}) }{2}  e^{-itH_{0}}  \Psi_{0} \big \|^2 &  \apprle         \int_{\Omega^{c} \setminus  \Omega'} d^3 y   \text{ } \frac{K^2}{( \vert \vec{y} \vert + vt)^{2p}} \apprle    \frac{K^2}{(d/4 + vt)^{2(p-2)}},
 \end{align*}
where we have used that $\vert \vec{y} \vert \geq d/4$ if $\vec{y} \in \Omega^c \setminus \Omega'$.  A similar inequality  is satisfied by the term with $  \frac{ \vec{\nabla}(\chi_{\Omega'}) \cdot \vec{\nabla}  }{2}$ in \eqref{A2}, as  $\vec{\nabla}$ and $H_{P}$ commute.   We conclude that 
\begin{equation}
 \|(A_t  \Psi_0)_2\| \apprle  \int_{0}^{t}  ds \text{ }  \frac{K}{(d/4 + v(t-s))^{p-2}} \apprle  \frac{K}{v d^{p-3}}.
\end{equation}
\end{proof}
\vspace{1mm}

 \subsection{Final step of the proof}
The estimation of the norm of $$ (\chi_{\Omega'}-1 ) e^{-itH_0}\Psi_0$$  is similar as what we did above and it is easy to show that   
\begin{equation}
\label{estt}
 \|(\chi_{\Omega'}-1 ) e^{-itH_0}\Psi_0\| \apprle   \frac{K}{ v d^{p-3}}.
\end{equation}
Collecting the estimates  \eqref{esttt} and \eqref{estt}, we find that 
\begin{align*}
 \big \vert  \langle &  e^{-itH} \Psi_0 , (O_{P} \otimes \mathds{1}_{\mathcal{H}_Q})   e^{-itH} \Psi_0  \rangle -    \langle    e^{-itH_0}  \Psi_0, O_{P} e^{-itH}\Psi_0 \rangle \big \vert \\
 &=   \big \vert \langle   A_t \Psi_0  +   (\chi_{\Omega'}-1 ) e^{-itH_0}\Psi_0 + e^{-itH} (1-\chi_{\Omega'}) \Psi_0,  (O_{P} \otimes \mathds{1}_{\mathcal{H}_Q}) e^{-itH} \Psi_0 \rangle  \big \vert \\
 & \apprle    \left( \| (1-\chi_{\Omega'})  \Psi_{0} \|  +      \frac{K}{vd^{p-3}} +    \frac{1}{d^{\beta- \frac{1}{2} +\frac{  \alpha}{2}}} +   \frac{1}{d^{\beta}}    \right) \|O_P\|.
\end{align*}

\nocite{*}
\bibliographystyle{plain}
\bibliography{dipole-effective-dynamics}

\end{document}